\titleformat*{\section}{\bf\large\center}
\newcommand{\GG}[1]{}
\theoremstyle{definition}
\newtheorem*{theorem*}{Theorem}
\newtheorem{theorem}{Theorem}
\newtheorem*{rmk*}{Remark}
\newtheorem{proposition}{Proposition}
\newtheorem{lemma}{Lemma}
\newtheorem{condition}{Condition}
\newtheorem{remark}{Remark}
\newtheorem{corollary}{Corollary}
\newtheorem*{corollary*}{Corollary}
\def\deri{\textup{d}}
\def\cC{\mathcal{C}}
\def\pr{\mathbb{P}}
\apptocmd{\sloppy}{\hbadness 10000\relax}{}{} %
\DeclareMathOperator*{\argmin}{arg\,min}
\def\ind{\begin{picture}(9,8)
         \put(0,0){\line(1,0){9}}
         \put(3,0){\line(0,1){8}}
         \put(6,0){\line(0,1){8}}
         \end{picture}
        }
\def\Pr{\mathbb{P}}
\def\Var{\text{Var}}
\def\Cov{\text{Cov}}
\def\I{\mathbbm{1}}
\def\E{\mathbb{E}}
\def\cE{\mathcal{E}}
\def\R{\mathbb{R}}
\newcommand{\one}{\mathbbm{1}}
\def\bs{\boldsymbol}
\def\limsup{\overline{\lim}}
\def\liminf{\underline{\lim}}
\def\rev{\color{black}}
\let\@fnsymbol\@alph
\begin{document}

\onehalfspacing

\allowdisplaybreaks

\title{\bf 
\Large
Asymptotic Theory of the Best-Choice Rerandomization using the Mahalanobis Distance}
\author{
	Yuhao Wang \textsuperscript{a} \ 
    and Xinran Li \textsuperscript{b,*,\textdagger}	 
}

\date{}
\maketitle

\vspace{-1.7em}

\begin{abstract}
\singlespacing
    Rerandomization, a design that utilizes pretreatment covariates and improves their balance between different treatment groups, has received attention recently in both theory and practice.  
    From a survey by \citet{Bruhn:2009}, there are at least two types of rerandomization that are used in practice: the first rerandomizes the treatment assignment until covariate imbalance is below a prespecified threshold; the second randomizes the treatment assignment multiple times and chooses the one with the best covariate balance. 
    In this paper we will consider the second type of rerandomization, namely the best-choice rerandomization, 
    whose theory and inference are still lacking in the literature. 
    In particular, we will focus on the best-choice rerandomization that uses the Mahalanobis distance to measure covariate imbalance, which is one of the most commonly used imbalance measure for multivariate covariates and is invariant to affine transformations of covariates. 
    We will study the large-sample repeatedly sampling properties of the best-choice rerandomization, allowing both the number of covariates and the number of tried complete randomizations to increase with the sample size. 
    We show that the asymptotic distribution of the difference-in-means estimator is more concentrated around the true average treatment effect under rerandomization than under the complete randomization, 
    and propose large-sample accurate confidence intervals for rerandomization that are shorter than that for the completely randomized experiment. 
    We further demonstrate that, with moderate number of covariates and with the number of tried randomizations increasing polynomially with the sample size, 
    the best-choice rerandomization can achieve the ideally optimal precision that one can expect even with perfectly balanced covariates. 
    The developed theory and methods for rerandomization are also illustrated using real field experiments. 
\end{abstract}

{\bf Keywords}: 
potential outcome; {\rev design-based inference}; optimal rerandomization; diverging number of covariates; Berry–Esseen bound

{\let\thefootnote\relax\footnotetext{
\textsuperscript{a} Institute for Interdisciplinary Information Sciences, Tsinghua University, and Shanghai Qi Zhi Institute, China.}}

{\let\thefootnote\relax\footnotetext{
\textsuperscript{b} Department of Statistics, University of Chicago, United States of America.}}

{\let\thefootnote\relax\footnotetext{
* Corresponding author.
}}

{\let\thefootnote\relax\footnotetext{
\textsuperscript{\color{white} 1} 
E-mail address: \href{mailto:yuhaow@tsinghua.edu.cn}{yuhaow@tsinghua.edu.cn} (Y. Wang), \href{mailto:xinranli@uchicago.edu}{xinranli@uchicago.edu} (X. Li).}}

{\let\thefootnote\relax\footnotetext{
\textsuperscript{\textdagger} 
Li was partially funded by the U.S. National Science Foundation (grant 2400961). 
}}

\newpage

\onehalfspacing

\section{Introduction}

\citet{fisher1925statistical} advocated randomization in experimental design since it can eliminate bias and permit valid test of significance \citep{Hall2007}. 
Since then, randomized experiments have become the gold standard
for studying causal effects in many research areas\footnote{\rev Note that this ``gold standard'' is not without critics; see, e.g., \citet{deaton2018understanding} and  references therein for more related discussion.}, 
such as randomized clinical trials in medical research \citep{rosenberger2015randomization}, randomized field experiments in social sciences \citep{Greenbook}, and online experiments in technology companies \citep{Bojinov2022Online}. 
The completely randomized experiment (CRE){\rev, along with its stratified counterpart,} has become one of the most popular designs due to its simplicity in both implementation and analysis. 
In addition,
the CRE can balance all potential confounding factors, no matter observed or unobserved on average, 
and can justify simple and intuitive comparison between different treatment groups. 
For example, 
the difference between outcome means in two treatment groups, often called the difference-in-means estimator, is unbiased for the true average treatment effect under the CRE \citep{Neyman:1923}. 
However, as commented by \citet{fisher1926}, 
most experimenters carrying out random assignments of plots will be shocked to find out how far from equally the plots distribute themselves. 
More recently,  \citet{morgan2012rerandomization} commented that, with $10$ mutually independent covariates and at $5\%$ significance level, the usual covariate balance test will be significant for at least one covariate with probability about $40\%$. 
Note that the covariate balance test has become a common practice when reporting randomized experiments nowadays. 
When chance imbalances are observed, researchers may worry about the results from the experiment, since the difference between the treatment groups in comparison may be to due to the difference in pretreatment covariates. 
Technically speaking, this is related to the variability of the treatment effect estimation, and, as
discussed shortly, 
we can reduce the variability of the treatment effect estimator or equivalently enhance its precision by improving the balance of pretreatment covariates.

The classical solution to avoiding chance imbalance of pretreatment covariates is blocking {\rev or stratification} \citep{fisher1926, BHH2005}.
Through a survey of leading researchers carrying out randomized experiments in developing countries, 
\citet{Bruhn:2009} discovered several rerandomization methods that are used in practice to improve covariate balance but are not well discussed in print.  
Rerandomization turns out to provide a general solution to the covariate balance issue, which can easily accommodate many covariates of various types. 
Although its idea has existed for a long time in the literature tracing back to Fisher \cite[][Page 88]{savage:1962}, \citet{student:1938}, \citet{cox:1982} and etc.,
the rerandomization design is formally proposed recently by \citet{morgan2012rerandomization}, who also adopted and advocated the Fisher randomization test to analyze such a design. 
As discussed in \citet{Bruhn:2009}, 
there are at least two types of rerandomization: 
the first specifies a certain covariate balance criterion and keeps drawing treatment assignments until getting an acceptable one, 
and the second draws, say, $1000$, randomizations and chooses the one with the best covariance balance based on a certain covariate imbalance measure. 
Both of them are intuitive designs and have been commonly used in practice, but their analysis is not straightforward compared to the classical and well-studied CRE. 
Recently \citet{LDR18} studied the large-sample theory for the first type of rerandomization, revealing a general non-Gaussian asymptotic distribution for the usual difference-in-means estimator; 
see, e.g., \citet{LD20reg, LDR20factorial,YQL2021, ZD2021rerand, WWL2021, lu2022design, CF22pivot, branson2022power, wang2022rerandomization} for related extensions.

In this paper we will focus on the second type of rerandomization, which randomizes the treatment assignment multiple times and chooses the one with the best covariate balance.
To distinguish it from the first type, 
we will call it the \textit{best-choice rerandomization}. 
The best-choice rerandomization has received less attention in theory, despite its popularity in practice. 
Our goal is to address this theoretical gap by developing the large-sample theory and inference for the best-choice rerandomization.  
Specifically, 
we will consider the best-choice rerandomization design that draws $T\ge 1$ complete randomizations and chooses the one with the smallest covariate imbalance measured by the Mahalanobis distance, which is one of the most popular imbalance measure for multivariate covariates. 
A general procedure of a best-choice rerandomization is illustrated using the diagram in Figure \ref{fig:diagram}, in parallel with \citet[][Figure 1]{morgan2012rerandomization} for the first type of rerandomization. 
Specifically, 
we first randomly and independently draw treatment assignments $T$ times, 
then calculate the covariate balance for each of these assignments based on some prespecified measure, 
and finally choose the one with the best balance and use that to conduct the actual experiment. 

\begin{figure}
    \centering
    \includegraphics[width=0.8\textwidth]{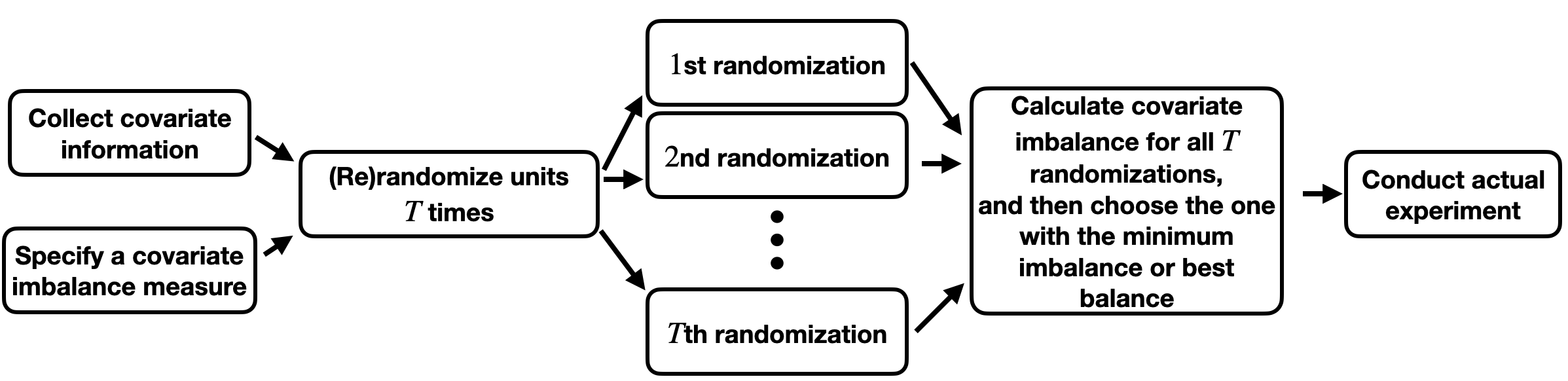}
    \caption{A general procedure of the best-choice rerandomization design.}
    \label{fig:diagram}
\end{figure}

{\rev Different from the first type of rerandomization that discards assignments with bad covariate balance, the best-choice rerandomization specifies the number of tried randomizations 
rather than a covariate balance criterion, which, if too stringent, may result in no acceptable assignments.
We conduct a survey of recent empirical studies and observe a trend toward better-documented rerandomization schemes following the seminal works of \citet{Bruhn:2009} and \citet{morgan2012rerandomization}.
Examples that explicitly used a best-choice rerandomization scheme include \citet{Brune2021}, \citet{Lowe2021}, \citet{Beaman2023}, \citet{Resnjanskij2024}, \citet{McKenzie2019}, \citet{Lee2021}, \citet{LEE2022276}, \citet{CRONIN2024102877}, \citet{BOYD2023102462}, \citet{Grimm2016}, \citet{Brade2023}, \citet{Yang2024} and \citet{Whitem3797}, ranging from social to biomedical sciences.}
{\rev Despite its wide use in practice,}
it is not clear from the existing literature that how a proper statistical inference can be conducted for the best-choice rerandomization. 
Note that, following \citet{morgan2012rerandomization}, we can still use Fisher randomization test, but it will work only for sharp null hypotheses that generally requires constant-effect-type assumptions or more broadly bounded null hypotheses that typically focus on the extreme individual effect \citep{CDLM21quantile}. 
In this paper, we will instead focus on \citet{Neyman:1923}'s 
design-based
large-sample repeated sampling inference for the average treatment effect, allowing unknown individual effect heterogeneity, 
and demonstrate the advantage of rerandomization over complete randomization. 
The {\rev design-based inference} has recently gained attention in causal inference, 
in particular because it requires no distributional assumptions on potential outcomes and guarantees the inference validity using the physical randomization as the ``reasoned basis'' \citep{Fisher:1935}; 
see, e.g., \citet{fcltxlpd2016} and \citet{Abadie2020} for recent reviews.

Another question that will receive special attention in our paper is the choice of $T$, the number of tried complete randomizations. 
Intuitively, larger $T$ can provide greater covariate balance and seems an attractive option for practitioners. 
However, when $T$ is overly large and in particular is infinite in the extreme case, all possible treatment assignments will be enumerated and the best-choice rerandomization will essentially choose the one with the best balance from all possible assignments.
When some covariates are continuous, this will generally lead to an almost deterministic design where there is no randomness in the treatment assignment. 
This apparently violates Fisher's principle of experimental design. 
A natural question to ask is then: 
how large can and should $T$ be so that 
(i) there is still sufficient randomness in the treatment assignment for robust causal inference
and 
(ii) rerandomization can achieve an ``optimal'' efficiency for treatment effect estimation? 
To the best of our knowledge, 
the choice of $T$ has been theoretically investigated only recently by \citet{Banerjee20}, 
from an ambiguity-averse decision-making perspective. 
Specifically, the authors considered an $\varepsilon$-contamination-type model \citep{Huber1964}, which essentially allows model or prior misspecification, to facilitate the discussion on the trade-off between subjective expected performance and robust performance guarantees. 
They found that the loss in robustness due to rerandomization is 
of order $O(\sqrt{\log (T)/n})$, with $T$ denoting the number of tried complete randomizations and $n$ denoting the number of experimental units, 
and suggested choosing {\rev $T$} less than the sample size $n$, ensuring the loss is on the order of $O(\sqrt{\log (n)/n})$. 
We will also study the same issue on the choice of $T$, but from a different perspective. 
In particular, we will focus on the feasibility of a large-sample 
{\rev design-based}
robust inference for treatment effects. 
In addition, we will also investigate the role of the number of covariates $K$ in rerandomization.

{\rev 
The study on the use of covariates to improve efficiency of randomized experiments dates back at least to \citet{fisher1926}, who proposed blocking or stratification as one of his principles of experimental design. 
Here we give a brief review focusing more on the recent progress. 
First, stratification can also be viewed as a special case of rerandomization with covariates being the stratum indicators \citep{morgan2012rerandomization}. 
With a fixed number of strata and large stratum sizes, ensuring equal proportions of treated units across strata can improve, or at least preserve, the precision of the difference-in-means estimator asymptotically compared to the CRE \citep[see, e.g.,][Chapter 5.3.3]{ding2023first}.
Second, \citet{GLSR2004}, \citet{Bai2022} and \citet{cytrynbaum2021optimal} have suggested finely stratified designs, such as matched pairs, which can be optimal in the sense of minimizing the mean squared error of the difference-in-means estimator; see also \citet{F18pairadj}, \citet{Bai22InfPair}, \citet{bai2023covariate}, \citet{bai2023efficiency}, \citet{cytrynbaum2024covariate}, \citet{bai2024primer} and references therein. 
Third, \citet{Harshaw24} recently proposed the Gram–Schmidt walk design utilizing tools from algorithmic discrepancy.  
Interestingly, with appropriately chosen design parameter (analogously to $T$ in our best-choice rerandomization), 
the Gram-Schmidt walk design and rerandomization achieves the same asymptotic efficiency; see also \citet[][Section 9]{Harshaw24} for more detailed comparison among rerandomization, Gram-Schmidt walk design and the paired randomization. 
Fourth, \citet{chattopadhyay2022balanced} has recently extended the finite selection model \citep{morris1979finite} into a general experimental design tool, where each treatment group, in a randomly determined order, sequentially selects units to optimize a certain assignment criterion.
Lastly, \citet{WWL2021}, \citet{Krieger2023} and \citet{cytrynbaum2024finely} have proposed combination of stratification (including pair matching) and rerandomization; 
that is we first perform stratified randomization and then rerandomizes based on some covariate balance criterion. 
In this paper, we focus on complete randomization in the first step. It will be interesting to extend it to stratified randomization with best-choice rerandomization, as such a scheme has already been implemented in some of the empirical papers we mention before; we leave this for future investigation. 
}

The paper proceeds as follows. 
Section \ref{sec:framework} introduces the framework and notation. 
Section \ref{sec:asym_bcr} studies the asymptotic properties of the best-choice rerandomization. 
Section \ref{sec:optimal} investigates whether the best-choice rerandomization can achieve its ideally optimal precision that one can expect even with perfectly balanced covariates. 
Section \ref{sec:inf} proposes large-sample valid inference for the best-choice rerandomization. 
{\rev Section \ref{sec:regadj} studies regression adjustment under the best-choice rerandomization.}
Section \ref{sec:numeric} conducts simulations to illustrate our theory, 
and Section \ref{sec:discuss} concludes with a short discussion.

\section{Framework and Notation}\label{sec:framework}

\subsection{Potential outcomes, covariates and treatment assignments}

Consider an experiment with $n$ units, where $n_1$ of them will receive some active treatment and the remaining $n_0 = n - n_1$ will receive control. 
We invoke the potential outcome framework to define treatment effects \citep{Neyman:1923, Rubin:1974}. 
For each unit $1\le i\le n$, let $Y_i(1)$ and $Y_i(0)$ denote the treatment and control potential outcomes,
and 
$\tau_i = Y_i(1) - Y_i(0)$ be the corresponding individual treatment effect. 
We are interested in inferring the average treatment effect $\tau = n^{-1} \sum_{i=1}^n \tau_i = \bar{Y}(1) - \bar{Y}(0)$, 
where $\bar{Y}(1) = n^{-1} \sum_{i=1}^n Y_i(1)$ and $\bar{Y}(0) = n^{-1} \sum_{i=1}^n Y_i(0)$ denote the average treatment and control potential outcomes, respectively. 
The fundamental difficulty of causal inference is that we can observe at most one potential outcome for each unit and thus half of the potential outcomes will be missing. 
Specifically, 
for each unit $i$, let $Z_i\in \{0,1\}$ be the treatment assignment indicator, where $Z_i=1$ if the unit receives treatment and $0$ otherwise.
The observed outcome for each unit $i$ is then $Y_i = Z_i Y_i(1)  + (1-Z_i)Y_i(0)$, one of the two potential outcomes.

Throughout the paper, we will conduct the 
{\rev design-based inference}\footnote{\rev This is also often called the finite population inference or randomization-based inference, which uses the randomization of treatment assignment as the ``reasoned basis'' \citep{Fisher:1935}. 
We mainly use the terminology of ``design-based inference'' to better distinguish it from the permutation inference driven by random sampling of units from some population; see \citet{Ernst2004} and \citet{Goeman2021} for more related discussion. 
}
\citep{Neyman:1923, fcltxlpd2016}, 
where all the potential outcomes (as well as the pretreatment covariates introduced shortly) for the $n$ experimental units are viewed as fixed constants or equivalently being conditioned on. 
The 
{\rev design-based}
inference has the advantage of avoiding any model or distributional assumptions on the potential outcomes and covariates (as well as their dependence structure)\footnote{{\rev 
It is worth pointing out that our design-based inference focuses on treatment effects for units in an experiment. 
Generalize the results of an experiment to other or larger populations requires additional assumptions, such as
the representativeness of the experimental units for the population of interest; see, e.g., \citet{Rubin:1974}, \citet{YQL2021}, and references therein.
}}. 
The randomness in the observed data comes solely from the random treatment assignment.  Therefore, the distribution of the treatment assignment vector $\bs{Z} = (Z_1, Z_2, \ldots, Z_n)^\top$, also called the treatment assignment mechanism \citep{Rubin:1978}, governs the data generating process and is crucial for statistical inference. 
In a randomized experiment, the experimenter can generate the treatment assignment vector from a carefully prespecified or designed distribution, based on which units will be allocated into treatment and control groups.

The completely randomized experiment (CRE) is one of the most commonly used treatment assignment mechanism, 
under which the treatment assignment vector $\bs{Z}$ takes a particular value $\bs{z} = (z_1, z_2, \ldots, z_n)^\top \in \{0,1\}^n$ with probability $\binom{n}{n_1}^{-1}$ if $\sum_{i=1}^n z_i = n_1$ and zero otherwise.

\subsection{Covariate imbalance and rerandomization}\label{sec:imbalance}

Let $\bs{x}_i\in \mathbb{R}^K$ denote the available pretreatment covariate vector for each unit $i$, 
$\bar{\bs{x}} = n^{-1} \sum_{i=1}^n \bs{x}_i$ denote the average covariate vector for all units, 
and $\bs{S}^2_{\bs{x}} = (n-1)^{-1} \sum_{i=1}^n (\bs{x}_i - \bar{\bs{x}}) (\bs{x}_i - \bar{\bs{x}})^\top$ denote the finite population covariance matrix of covariates. 
We further introduce 
\begin{align}\label{eq:diff_cov}
    \hat{\bs{\tau}}_{\bs{x}} = \bar{\bs{x}}_1 - \bar{\bs{x}}_0
    = \frac{1}{n_1} \sum_{i=1}^n Z_i \bs{x}_i - \frac{1}{n_0} \sum_{i=1}^n (1-Z_i) \bs{x}_i
\end{align}
to denote the difference-in-means of covariates, 
where $\bar{\bs{x}}_1$ and $\bar{\bs{x}}_0$ denote the average covariates in treated and control groups.
Denote the covariance matrix of $\hat{\bs{\tau}}_{\bs{x}}$ under the CRE by 
$\bs{V}_{\bs{xx}} = \Cov(\hat{\bs{\tau}}_{\bs{x}}) = n/(n_1n_0)\cdot \bs{S}^2_{\bs{x}}$.

In practice, it is often a routine to check the imbalance of the pretreatment covariates when conducting randomized experiments. 
In this paper we will focus on
the Mahalanobis distance imbalance measure, which is one of the most commonly used imbalance measure for multivariate covariates, enjoys the affine invariant property, and has the following form: 
\begin{align}\label{eq:M_dist}
    M
    & = \hat{\bs{\tau}}_{\bs{x}}^\top \bs{V}_{\bs{xx}}^{-1} \hat{\bs{\tau}}_{\bs{x}}
    = 
    \frac{n_1n_0}{n}(\bar{\bs{x}}_1 - \bar{\bs{x}}_0)^\top (\bs{S}_{\bs{x}}^2 )^{-1} (\bar{\bs{x}}_1 - \bar{\bs{x}}_0). 
\end{align}
When the covariates, especially those likely to have strong associations with the potential outcomes, are imbalanced, 
we may worry about the results from the experiment. 
In particular, we may worry that the difference in outcomes between treated and control groups is due to the difference in baseline covariates, instead of the treatment effects. 
Moreover, as discussed earlier,  
covariate imbalance is not rare even under the intuitive and commonly used CRE \citep{morgan2012rerandomization}.
Therefore, a design that can mitigate or avoid unlucky and bad chance covariate imbalance will be highly desirable.

Rerandomization is a general design that can improve the balance of pretreatment covariates, 
by checking covariate balance prior to conducting the actual experiment. 
This is feasible, since the covariate balance depends only on the treatment assignment and pretreatment covariates, without involving any post-treatment variables. 
Throughout the paper, we will focus on the best-choice rerandomization using the Mahalanobis distance. 
Specifically, we first completely randomize the units or equivalently draw treatment assignments from the CRE $T$ times, where $T\ge 1$ is a prespecified integer, 
then calculate the Mahalanobis distance in \eqref{eq:M_dist} for each of these $T$ complete randomizations, 
and finally choose the treatment assignment with the minimum Mahalanobis distance to conduct the actual experiment; see also Figure \ref{fig:diagram} for a general best-choice rerandomization.
In this paper we aim to develop the large-sample theory and inference for the best-choice rerandomization under the {\rev design-based inference} framework.

\subsection{Difference-in-means of the outcome and covariates under the CRE}\label{sec:diff_out_cov_cre}

Throughout the paper we will focus on inference of the average treatment effect $\tau$ under the best-choice rerandomization. 
Moreover, we will focus on the intuitive difference-in-means estimator:
\begin{align}\label{eq:diff_est}
    \hat{\tau} & = \frac{1}{n_1} \sum_{i=1}^n Z_i Y_i - \frac{1}{n_0} \sum_{i=1}^n (1-Z_i) Y_i,  
\end{align}
which is the difference between the average observed outcomes in treated and control groups. 
As discussed shortly, the joint distribution of the difference-in-means of the outcome and covariates in \eqref{eq:diff_est} and \eqref{eq:diff_cov} under the CRE plays an important role in studying the property of the best-choice rerandomization. 
Below we discuss its first two moments, i.e., mean and covariance matrix. 

Recall that $\bs{S}^2_{\bs{xx}}$ denotes the finite population covariance matrix of the covariates. 
For $z=0,1$, 
let $S^2_z = (n-1)^{-1} \sum_{i=1}^n \{Y_i(z) - \bar{Y}(z)\}^2$ be the finite population variance of potential outcomes, 
and 
$\bs{S}_{z\bs{x}} = \bs{S}_{\bs{x}z}^\top = (n-1)^{-1} \sum_{i=1}^n \{Y_i(z) - \bar{Y}(z)\} (\bs{x}_i - \bar{\bs{x}})^\top$ be the finite population covariance between potential outcomes and covariates. 
Define analogously $S^2_{\tau} = (n-1)^{-1} \sum_{i=1}^n (\tau_i - \tau)^2$ as the finite population variance of individual effects and $S_{\tau\bs{x}} = S_{\bs{x}\tau}^\top = (n-1)^{-1} \sum_{i=1}^n (\tau_i - \tau)(\bs{x}_i - \bar{\bs{x}})^\top$ as the finite population covariance between individual effects and covariates. 
From \citet{LDR18}, under the CRE, the difference-in-means of the outcome and covariates
$(\hat{\tau}, \hat{\bs{\tau}}_{\bs{X}}^\top)^\top$
has mean $(\tau, \bs{0}^\top)^\top$, indicating that the difference-in-means estimator is unbiased for the true average treatment effect and the covariates are balanced on average between the two treatment groups, 
and covariance matrix
\begin{align}\label{eq:V}
    \bs{V} 
    \equiv
    \begin{pmatrix}
    V_{\tau\tau} & \bs{V}_{\tau \bs{x}}\\
    \bs{V}_{\bs{x}\tau} & \bs{V}_{\bs{xx}}
    \end{pmatrix}
    = 
    \begin{pmatrix}
    n_1^{-1} S_1^2 + n_0^{-1} S_0^2 - n^{-1} S^2_{\tau} & n_1^{-1} \bs{S}_{1\bs{x}} + n_0^{-1} \bs{S}_{0\bs{x}}\\
    n_1^{-1} \bs{S}_{\bs{x}1} + n_0^{-1} \bs{S}_{\bs{x}0} & n/(n_1 n_0) \cdot \bs{S}^2_{\bs{x}}
    \end{pmatrix}. 
\end{align}

Below we further introduce an important measure for the association between potential outcomes and covariates, which will play an important role in studying the asymptotic properties of the best-choice rerandomization. 
Specifically, we consider the squared multiple correlation between the difference-in-means of the outcome and covariates under the CRE as an $R^2$-type measure for the association between potential outcomes and covariates: 
\begin{align}\label{eq:R2}
    R^2 = \textup{Corr}^2(\hat{\tau}, \hat{\bs{\tau}}_{\bs{x}})
    = 
    \frac{
    \bs{V}_{\tau \bs{x}} \bs{V}_{\bs{xx}}^{-1} \bs{V}_{\bs{x}\tau} 
    }{
    V_{\tau\tau}
    }
    = 
    \frac{
    n_1^{-1} S_{1\mid \bs{x}}^2 + n_0^{-1} S_{0\mid \bs{x}}^2 - n^{-1} S^2_{\tau\mid \bs{x}}
    }{
    n_1^{-1} S_1^2 + n_0^{-1} S_0^2 - n^{-1} S^2_{\tau}
    }, 
\end{align}
where the equivalent forms follow from \citet{LDR18}. 
In \eqref{eq:R2}, 
$S_{z\mid \bs{x}} = \bs{S}_{z\bs{x}} (\bs{S}_{\bs{x}}^2)^{-1} \bs{S}_{\bs{x}z}$ denotes the finite population variance of the linear projections of potential outcomes on covariates, for $z=0,1$, 
and $S_{\tau\mid \bs{x}} = \bs{S}_{\tau\bs{x}} (\bs{S}_{\bs{x}}^2)^{-1} \bs{S}_{\bs{x}\tau}$ analogously denotes the finite population variance of the linear projections of individual effects on covariates. 
When treatment effects are additive, in the sense that $\tau_i$ is constant across all $i$, $R^2$ reduces to $S_{0\mid \bs{x}}^2/S^2_0$, the squared multiple correlation between control potential outcomes and covariates (i.e., 
the proportion of variability in the control potential outcomes that can be linearly explained by the covariates). 

\subsection{Finite population asymptotics and Berry--Esseen-type bounds}\label{sec:fp_asym}

Because the exact distribution of the difference-in-means estimator is generally intractable under the best-choice rerandomization,   
we will invoke large-sample approximations. 
Specifically, we will conduct the finite population asymptotics that embeds the finite population of size $n$ into a sequence of finite populations with increasing sizes; see \citet{fcltxlpd2016} for a review with an emphasize on applications to causal inference. 
Importantly, as pointed out by \citet{Neyman:1923} in his seminal paper, 
under the CRE and when the sample size is large, 
the distribution of 
the difference-in-means of the outcome in \eqref{eq:diff_est} (and analogously of covariates in \eqref{eq:diff_cov}) can be well approximated by a Gaussian distribution; see, for example, \citet{hajek1960limiting} for a rigorous proof and \citet{fcltxlpd2016} for extension to vector outcomes with multivariate Gaussian approximation. 

Furthermore, in our large-sample analysis for the best-choice rerandomization, we will allow both the number of tried complete randomizations $T$ and the number of covariates $K$ to vary (say, increase) with the sample size. 
Specifically, we will view $T$ and $K$ as $T_n$ and $K_n$ in the remainder of the paper; for descriptive convenience, we will keep such dependence on the sample size $n$ implicit. 
In order to deal with the sample size dependent $T$ and $K$, we need a more delicate characterization of the multivariate Gaussian approximation under the CRE. 
In particular, we will consider the following Berry--Esseen-type bound for the Gaussian approximation of the joint distribution of the difference-in-means of the outcome and covariates under the CRE. {\rev Define} 
\begin{align}\label{eq:Delta}
    \Delta_n \equiv \sup_{Q \in \mathcal{C}_{K + 1}} \left|\Pr\left(\bs{V}^{-1/2} \left(
    \begin{matrix}
    \hat{\tau} - \tau \\
    \hat{\bs{\tau}}_{\bs{X}}
    \end{matrix}
    \right) \in Q\right)
     - \Pr(\bs{\varepsilon} \in Q)\right|, 
\end{align}
where $\mathcal{C}_{K + 1}$ denotes the collection of all measurable convex sets in $\mathbb{R}^{K+1}$, 
$\bs{\varepsilon}\sim \mathcal{N}(\bs{0}, \bs{I}_{K+1})$ is a $K+1$ dimensional standard Gaussian random vector, 
and $\bs{V}$ is defined as in \eqref{eq:V}. 
Based on \citet{R15}'s conjecture, 
there exists an absolute constant $C$ such that 
$\Delta_n\le C\gamma_n$ 
with  
\begin{align}\label{eq:gamma}
    \gamma_n & = 
    \frac{(K+1)^{1/4}}{\sqrt{nr_1r_0}} 
    \frac{1}{n} \sum_{i=1}^n \| \bs{S}_{\bs{u}}^{-1} (\bs{u}_i - \bar{\bs{u}}) \|_2^3, 
\end{align}
where $\bs{u}_i \equiv (r_0 Y_i(1) + r_1 Y_i(0), \bs{X}_i^\top)^\top$, 
$\bar{\bs{u}}$ and $\bs{S}_{\bs{u}}^2$ denote the finite population mean and covariance of the $\bs{u}_i$'s, 
and $\bs{S}_{\bs{u}}^{-1}$ denotes the inverse of the positive semidefinite square root of $\bs{S}^2_{\bs{u}}$. 
\citet{wang2022rerandomization} recently proved that 
$\Delta_n \le 174 \gamma_n + 7 \gamma_n^{1/3}$; 
see also \citet[][Theorem 2]{wang2022rerandomization},  \citet{shi2022berry} and \citet{ShiLi2024} for other forms of Berry-Esseen-type bounds on $\Delta_n$. 
We will then assume the following regularity condition along the sequence of finite populations, which can guarantee the Gaussian approximation for the difference-in-means of the outcome and covariates (or equivalently that $\Delta_n$ converges to zero as $n\rightarrow \infty$). 

\begin{condition}\label{cond:gamma}
    As the size of the finite population $n\rightarrow\infty$, $\gamma_n$ in \eqref{eq:gamma} converges to zero.  
\end{condition}

{\rev Condition \ref{cond:gamma} is the same as~\citet[Condition~1]{wang2022rerandomization}.}
It implicitly requires that the potential outcomes and covariates are not too heavy-tailed, 
and that the number of covariates does not increase too fast with the sample size. 
{\rev 
When the number of covariates $K$ is bounded, the proportions of treated and control units are bounded away from zero, the potential outcomes and covariates are bounded, and the minimum eigenvalue of $\bs{S}_{\bs u}^2$ is bounded away from zero (which intuitively requires that the potential outcomes and covariates are not too colinear), then $\gamma_n$ is on the order of $n^{-1/2}$, under which Condition \ref{cond:gamma} must hold. 
When $K$ can diverge with $n$,}
Condition \ref{cond:gamma} 
implies 
that $K = o(n^{2/7})$ \citep{wang2022rerandomization}. 
{\rev In addition, as discussed in \citet{wang2022rerandomization} and also later in Section \ref{sec:opt_bcr_subsec}, when experimental units are random samples from a superpopulation, under some regularity conditions on the superpopulation, 
$\gamma_n$ will converge to zero in probability as long as $K$ is sufficiently smaller than $n$. 
We also refer readers to \citet{wang2022rerandomization} for more detailed discussion about this regularity condition.}

We then impose the following condition that the number of tried complete randomizations does not increase too fast with the sample size. This will be discussed and emphasized in detail later. 

\begin{condition}\label{cond:iterations}
    As $n \to \infty$, $T \Delta_n  \to 0$, or equivalently $T = o(\Delta_n^{-1})$. 
\end{condition}

From the discussion before, 
a sufficient condition for Condition \ref{cond:iterations} is that $T \gamma_n^{1/3} \rightarrow 0$ as $n\rightarrow 0$, or a weaker form of $T \gamma_n \rightarrow 0$ if the conjecture in \citet{R15} holds. 
{\rev Note that, if Condition \ref{cond:gamma} holds, then Condition \ref{cond:iterations} must hold for any fixed $T$ that do not vary with the sample size, say, $T=1000$. 
See also the discussion in Section \ref{sec:opt_bcr_subsec} regarding the acceptable rates of $T$ under various rates for $K$ and consequently $\gamma_n$.} 

\section{Asymptotic theory for the best-choice rerandomization}\label{sec:asym_bcr}

\subsection{The best-choice rerandomization using the Mahalanobis distance}
To formally introduce the best-choice rerandomization design, we first introduce several notations. 
Let $\bs{Z}_{[1]}, \bs{Z}_{[2]}, \ldots,$ and $\bs{Z}_{[T]}$ denote $T$ mutually independent treatment assignment vectors from the CRE with $n_1$ and $n_0$ units receiving treatment and control, respectively. For each $1\le t\le T$, 
let 
$\hat{\bs{\tau}}_{[t]\bs{x}}$ be the difference-in-means of covariates as in \eqref{eq:diff_cov} under the treatment assignment $\bs{Z}_{[t]}$, 
and $M_{[t]} \equiv \hat{\bs{\tau}}_{[t]\bs{x}}^\top \bs{V}_{\bs{xx}}^{-1} \hat{\bs{\tau}}_{[t]\bs{x}}$ be the corresponding Mahalanobis distance for covariate imbalance as in \eqref{eq:M_dist}. 

With a slight abuse of notation, 
we use $M_{(1)} = \min_{1\le t\le T} M_{[t]}$ to denote the minimum 
Mahalnobis distance, with the subscript $(1)$ representing the index in $\{1,2,\ldots, T\}$ that achieves this minimum. If there are multiple treatment assignments achieving the minimum at the same time, we will then randomly choose one from them. 
Consequently, $\bs{Z}_{(1)}$ will be the treatment assignment with the minimum covariate imbalance (measured by the Mahalanobis distance) among all the $T$ complete randomizations. 
Under the best-choice rerandomization, 
as illustrated in Figure \ref{fig:diagram}, 
we will use the ``best'' assignment $\bs{Z}_{(1)}$ to conduct the actual experiment (or more precisely to conduct the actual treatment allocation). 
We emphasize that the best-choice rerandomization depends on the number $T$ of tried complete randomizations; 
for descriptive convenience, we will make such dependence implicit, unless otherwise stated. 

\subsection{Difference-in-means estimator under the best-choice rerandomization}\label{sec:dim_bcr}

We consider the intuitive difference-in-means estimator in \eqref{eq:diff_est} to estimate the average treatment effect $\tau$ under the best-choice rerandomization. 
Specifically, 
recalling that $\bs{Z}_{(1)} = (Z_{(1)1}, \ldots, Z_{(1)n})^\top$ is the treatment assignment actually implemented under the best-choice rerandomization, 
we will denote the corresponding difference-in-means estimator by $\hat{\tau}_{(1)} = n_1^{-1} \sum_{i=1}^n Z_{(1)i} Y_i - n_0^{-1} \sum_{i=1}^n (1-Z_{(1)i}) Y_i$, 
where we use the subscript $(1)$ to emphasize that it is the estimator under the treatment assignment $\bs{Z}_{(1)}$ with the minimum covariate imbalance. 
Below we will study the asymptotic distribution of $\hat{\tau}_{(1)}$ under the best-choice rerandomization.

By the construction of the best-choice rerandomization design, the distribution of $\hat{\tau}_{(1)}$ relies on the joint distribution of the differences in means of the outcome and covariates for the $T$ mutually independent complete randomizations. 
From Section \ref{sec:fp_asym}, 
under certain regularity conditions, 
these differences in means are approximately Gaussian distributed. 
Thus, intuitively, we can approximate the distribution of $\hat{\tau}_{(1)}$ by the corresponding part implied by the multivariate Gaussian approximations. 
As demonstrated below, such an intuition can be made rigorous under Conditions \ref{cond:gamma} and \ref{cond:iterations}. 

Let 
$(\tilde{\tau}_{[t]}, \tilde{\bs{\tau}}_{\bs{x}[t]}^\top)^\top$, $1\le t \le T$, 
be $T$ mutually independent Gaussian random vectors with mean zero and covariance matrix $\bs{V}$ in \eqref{eq:V}, which can be viewed as Gaussian approximations for the differences in means of the outcome and covariates from the $T$ mutually independent complete randomizations. 
Define $\tilde{M}_{[t]} \equiv  \tilde{\bs{\tau}}_{\bs{x}[t]}^\top \bs{V}_{\bs{xx}}^{-1} \tilde{\bs{\tau}}_{\bs{x}[t]}$ analogously as in  \eqref{eq:M_dist} for $1\le t \le T$, and let $\tilde{M}_{(1)} = \min_{1\le t \le T}\tilde{M}_{[t]}$ be the minimum among the  $\tilde{M}_{[t]}$'s. 
With a slight abuse of notation, we use 
the subscript $(1)$ to denote the index in $\{1,2,\ldots, T\}$ achieving this minimum; 
when there are multiple indices (i.e., ties) achieving the minimum at the same time, we randomly choose one from them. 
Consequently, $\tilde{\tau}_{(1)}$ is one of the $\tilde{\tau}_{[t]}$'s that corresponds to the minimum value of the $\tilde{M}_{[t]}$'s. 
By construction of the best-choice rerandomization, $\tilde{\tau}_{(1)}$ 
corresponds to
$\hat{\tau}_{(1)}$ under the Gaussian approximation. 
The theorem below characterizes the difference between the distributions of $\hat{\tau}_{(1)}$ and $\tilde{\tau}_{(1)}$. 

\begin{theorem}\label{thm:asymp}
    Under the best-choice rerandomization using the Mahalanobis distance, 
    \begin{align}\label{eq:bound_diff_tilde}
    \sup_{c\in \mathbb{R}} \Big| \Pr\big\{ V_{\tau\tau}^{-1/2} (\hat{\tau}_{(1)} - \tau) \leq c \big\}   - 
    \Pr
    \big(V_{\tau\tau}^{-1/2} \tilde{\tau}_{(1)} \leq c\big) \Big|
    \le 2 T \Delta_n. 
    \end{align}
    If Conditions \ref{cond:gamma} and \ref{cond:iterations} hold, then the supremum in \eqref{eq:bound_diff_tilde} converges to zero. 
\end{theorem}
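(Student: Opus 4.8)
The plan is to compare the two selection procedures through a hybrid (telescoping) argument that swaps the $T$ mutually independent randomization vectors for their Gaussian counterparts one at a time, each swap being controlled by the convex-set Berry--Esseen quantity $\Delta_n$ of \eqref{eq:Delta}. Concretely, for $0 \le k \le T$ let $P_k$ denote the law of the standardized selected outcome $V_{\tau\tau}^{-1/2}(\hat{\tau}_{(1)} - \tau)$ obtained when the first $k$ of the vectors $(\hat{\tau}_{[t]} - \tau, \hat{\bs{\tau}}_{[t]\bs{x}}^\top)^\top$ are replaced by the Gaussian vectors $(\tilde{\tau}_{[t]}, \tilde{\bs{\tau}}_{\bs{x}[t]}^\top)^\top$ and the remaining $T-k$ are kept as the genuine complete-randomization vectors, with the selection always defined by minimizing the corresponding Mahalanobis distance (ties broken by external randomization). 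Then $P_0$ is the law of $V_{\tau\tau}^{-1/2}(\hat{\tau}_{(1)} - \tau)$ and $P_T$ is the law of $V_{\tau\tau}^{-1/2}\tilde{\tau}_{(1)}$, so by the triangle inequality it suffices to bound each one-step Kolmogorov difference by $2\Delta_n$.

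For a single swap at coordinate $k$, I would condition on the other $T-1$ vectors, whose joint law is identical under $P_{k-1}$ and $P_k$ (Gaussian for indices $<k$, genuine for indices $>k$, in both). This conditioning fixes $m^\ast = \min_{t \ne k} M_{[t]}$, the best competing Mahalanobis distance, and $\tau^\ast$, the outcome attached to that best competitor. Writing the standardized vector $\bs{W}_{[k]} = \bs{V}^{-1/2}(\hat{\tau}_{[k]} - \tau, \hat{\bs{\tau}}_{[k]\bs{x}}^\top)^\top$, the selection event $\{V_{\tau\tau}^{-1/2}(\hat{\tau}_{(1)} - \tau) \le c\}$ becomes a deterministic set in $\bs{W}_{[k]}$: if $V_{\tau\tau}^{-1/2}(\tau^\ast - \tau) > c$ it is $\{M_{[k]} < m^\ast\} \cap \{V_{\tau\tau}^{-1/2}(\hat{\tau}_{[k]} - \tau) \le c\}$, whereas if $V_{\tau\tau}^{-1/2}(\tau^\ast - \tau) \le c$ its probability equals $\Pr(M_{[k]} \ge m^\ast) + \Pr(M_{[k]} < m^\ast, V_{\tau\tau}^{-1/2}(\hat{\tau}_{[k]} - \tau) \le c)$. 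The key geometric observation is that $\{M_{[k]} < m^\ast\}$ is a sublevel set of a positive-semidefinite quadratic form in $\bs{W}_{[k]}$, hence a convex (ellipsoidal) set, and intersecting it with the half-space $\{V_{\tau\tau}^{-1/2}(\hat{\tau}_{[k]} - \tau) \le c\}$ keeps it convex.

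Since the standardized genuine vector $\bs{W}_{[k]}$ and the standardized Gaussian vector $\bs{V}^{-1/2}(\tilde{\tau}_{[k]}, \tilde{\bs{\tau}}_{\bs{x}[k]}^\top)^\top \sim \mathcal{N}(\bs{0}, \bs{I}_{K+1})$ differ over all convex sets by at most $\Delta_n$, the first case is bounded by a single application of $\Delta_n$, while the second case is bounded by two applications, one for the convex set $\{M_{[k]} < m^\ast\}$ (its non-convex complement handled through $\Pr(M_{[k]} \ge m^\ast) = 1 - \Pr(M_{[k]} < m^\ast)$) and one for the convex intersection, giving $\le 2\Delta_n$. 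This rewriting as a sum of at most two convex-set probabilities is exactly what produces the factor of two and circumvents the non-convexity of the ellipsoid exterior. Taking expectations over the conditioning vectors yields a one-step bound $2\Delta_n$ uniformly in $c$ and $k$, and summing the $T$ steps gives \eqref{eq:bound_diff_tilde}. The final claim is then immediate once $2T\Delta_n \to 0$, which holds under Condition \ref{cond:iterations}, with Condition \ref{cond:gamma} ensuring $\Delta_n \le 174\gamma_n + 7\gamma_n^{1/3} \to 0$ so that the underlying Gaussian approximation is valid.

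The main obstacle I anticipate is the geometric reduction of the second paragraph: one must verify that, after conditioning and standardizing, the selected outcome is governed by the single vector $\bs{W}_{[k]}$ through sets assembled from only one ellipsoid and one half-space, and in particular that the troublesome ellipsoid exterior enters only through a complement so that the convex-set bound $\Delta_n$ still applies at the cost of the factor two. A secondary technical point is the treatment of ties in the discrete complete-randomization law, where $\{M_{[k]} = m^\ast\}$ can carry positive mass; this is handled by noting that both the open and the closed versions of the ellipsoid are convex, so any external tie-breaking yields a convex combination that remains within $2\Delta_n$ of the tie-free Gaussian selection.
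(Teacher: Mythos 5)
Your proposal is correct and follows essentially the same route as the paper: a hybrid argument that swaps the $T$ independent randomizations for their Gaussian counterparts one at a time, with each swap costing $2\Delta_n$ after conditioning on the remaining $T-1$ vectors and expressing the selection event through at most two convex sets (an ellipsoid intersected with a half-space, plus the complement of an ellipsoid), exactly as in Lemmas \ref{lemma:berry_psi}--\ref{lemma:lower_bound}. The tie issue is also handled the same way as in the paper, by sandwiching the selection probability between the strict and non-strict union events, which coincide almost surely for the continuous Gaussian surrogates.
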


Theorem \ref{thm:asymp} justifies the asymptotic approximation for the difference-in-means estimator under the best-choice rerandomization. 
Below we simplify the distribution of $\tilde{\tau}_{(1)}$. 
Let 
$\bs{D}_t=(D_{t1}, \ldots, D_{tK})^\top$, for $1\le t \le T$, be independent and identically distributed (i.i.d.) $K$-dimensional standard Gaussian random vectors, i.e., $\bs{D}_1, \ldots, \bs{D}_T  \stackrel{\textup{i.i.d.}}{\sim}  \mathcal{N}(\bs{0}, \bs{I}_K)$. 
We further define the following constrained Gaussian random variable: 
\begin{align}\label{eq:L_KT}
    L_{K, T} \sim D_{11} \mid \|\bs{D}_1\|_2^2 \le \min_{1 \le t \le T} \|\bs{D}_t\|_2^2.
\end{align}
Recall the squared multiple correlation $R^2$ in \eqref{eq:R2}. 
\begin{theorem}\label{thm:asym_equiv}
    The asymptotic distribution in \eqref{eq:bound_diff_tilde}
    for the standardized difference-in-means estimator $V_{\tau\tau}^{-1/2} (\hat{\tau}_{(1)} - \tau)$ under the best-choice rerandomization
    has the following equivalent form:
    \begin{align}\label{eq:asym_equiv}
        V_{\tau\tau}^{-1/2} \tilde{\tau}_{(1)} \sim \sqrt{1-R^2} \ \varepsilon_0 + \sqrt{R^2} \ L_{K, T},
    \end{align}
    where $\varepsilon_0\sim \mathcal{N}(0,1)$, 
    $L_{K,T}$ follows the distribution in \eqref{eq:L_KT}, 
    and they are mutually independent. 
\end{theorem}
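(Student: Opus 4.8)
The plan is to reduce the whole statement to the geometry of i.i.d.\ standard Gaussian vectors via standardization, a Gaussian regression decomposition, and rotational invariance. First I would standardize the covariate imbalances: setting $\bs{D}_t = \bs{V}_{\bs{xx}}^{-1/2}\tilde{\bs{\tau}}_{\bs{x}[t]}$ for $1\le t\le T$ gives $\bs{D}_1,\ldots,\bs{D}_T \stackrel{\textup{i.i.d.}}{\sim}\mathcal{N}(\bs{0},\bs{I}_K)$ with $\tilde{M}_{[t]} = \|\bs{D}_t\|_2^2$, so that the selected index $(1)=\argmin_t \|\bs{D}_t\|_2^2$ depends on the $\bs{D}_t$'s only through their norms. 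Next, applying the usual Gaussian conditional decomposition to the joint law $\mathcal{N}(\bs{0},\bs{V})$ in \eqref{eq:V}, I would write $\tilde{\tau}_{[t]} = \bs{V}_{\tau\bs{x}}\bs{V}_{\bs{xx}}^{-1}\tilde{\bs{\tau}}_{\bs{x}[t]} + e_t$, where the residual $e_t\sim\mathcal{N}(0, V_{\tau\tau}-\bs{V}_{\tau\bs{x}}\bs{V}_{\bs{xx}}^{-1}\bs{V}_{\bs{x}\tau}) = \mathcal{N}(0, V_{\tau\tau}(1-R^2))$ is independent of $\tilde{\bs{\tau}}_{\bs{x}[t]}$ (hence of $\bs{D}_t$), the variance identity following directly from the definition of $R^2$ in \eqref{eq:R2}. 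Since the $T$ blocks are mutually independent, the collection $\{e_t\}$ is independent of $\{\bs{D}_s\}$. Writing $e_t = V_{\tau\tau}^{1/2}\sqrt{1-R^2}\,\varepsilon_{0t}$ with $\varepsilon_{0t}\stackrel{\textup{i.i.d.}}{\sim}\mathcal{N}(0,1)$, and dividing by $V_{\tau\tau}^{1/2}$, the coefficient vector $\bs{b}^\top \equiv V_{\tau\tau}^{-1/2}\bs{V}_{\tau\bs{x}}\bs{V}_{\bs{xx}}^{-1/2}$ multiplying $\bs{D}_t$ satisfies $\|\bs{b}\|_2^2 = R^2$.

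Then I would invoke rotational invariance. Choosing an orthogonal matrix $\bs{Q}$ whose first row is $\bs{b}^\top/R$ (the case $R=0$ being immediate, as then $V_{\tau\tau}^{-1/2}\tilde{\tau}_{(1)}\sim\mathcal{N}(0,1)$ and \eqref{eq:asym_equiv} holds trivially), the rotated vectors $\tilde{\bs{D}}_t = \bs{Q}\bs{D}_t$ remain i.i.d.\ $\mathcal{N}(\bs{0},\bs{I}_K)$, preserve the norms $\|\tilde{\bs{D}}_t\|_2 = \|\bs{D}_t\|_2$ (so the selection is unchanged), and satisfy $\bs{b}^\top\bs{D}_t = R\,\tilde{D}_{t1}$. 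This yields the representation $V_{\tau\tau}^{-1/2}\tilde{\tau}_{[t]} = \sqrt{R^2}\,\tilde{D}_{t1} + \sqrt{1-R^2}\,\varepsilon_{0t}$, and evaluating at the selected index gives $V_{\tau\tau}^{-1/2}\tilde{\tau}_{(1)} = \sqrt{R^2}\,\tilde{D}_{(1)1} + \sqrt{1-R^2}\,\varepsilon_{0(1)}$.

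It remains to identify the two terms. For the residual term I would argue that, because $\{\varepsilon_{0t}\}$ is independent of the entire $\bs{D}$-world and the selected index $(1)$ is a function of $\{\bs{D}_t\}$ together with independent tie-breaking randomness, conditioning on that $\sigma$-field fixes $(1)=t^\ast$ while leaving $\varepsilon_{0t^\ast}\sim\mathcal{N}(0,1)$; hence $\varepsilon_{0(1)}\sim\mathcal{N}(0,1)$ and is independent of $\tilde{D}_{(1)1}$ and of the index itself. For the first term I would exploit exchangeability of the i.i.d.\ vectors $\tilde{\bs{D}}_1,\ldots,\tilde{\bs{D}}_T$: since $\Pr((1)=t) = 1/T$ for every $t$, symmetrizing gives, for any measurable $A$, $\Pr(\tilde{D}_{(1)1}\in A) = T\,\Pr(\tilde{D}_{11}\in A,\,(1)=1) = \Pr(\tilde{D}_{11}\in A \mid \|\tilde{\bs{D}}_1\|_2^2 \le \min_{1\le t\le T}\|\tilde{\bs{D}}_t\|_2^2)$, which is exactly the law of $L_{K,T}$ defined in \eqref{eq:L_KT}. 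Combining these two identifications delivers \eqref{eq:asym_equiv}, with $\varepsilon_0 = \varepsilon_{0(1)}$ and $L_{K,T}$ realized as $\tilde{D}_{(1)1}$.

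The routine pieces are the Gaussian regression decomposition and the computation $\|\bs{b}\|_2^2 = R^2$. The genuinely delicate step is the bookkeeping around the data-dependent random index $(1)$: establishing that the residual remains standard normal and independent \emph{after} selection, and matching the first coordinate of the minimal-norm vector with the conditional law defining $L_{K,T}$. The rotational-invariance reduction is the structural insight that makes both identifications possible, and I would take care that the $\argmin$ and tie-breaking are handled consistently, noting that ties occur with probability zero under the continuous Gaussian law.
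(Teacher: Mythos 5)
Your proposal is correct and follows essentially the same route as the paper's proof: the Gaussian regression decomposition $\tilde{\tau}_{[t]} = \bs{V}_{\tau\bs{x}}\bs{V}_{\bs{xx}}^{-1}\tilde{\bs{\tau}}_{\bs{x}[t]} + e_t$ with residual variance $V_{\tau\tau}(1-R^2)$, standardization to i.i.d.\ $\mathcal{N}(\bs{0},\bs{I}_K)$ vectors, rotational invariance to replace $\bs{b}^\top\bs{D}_t$ by $\|\bs{b}\|_2 \tilde{D}_{t1}$ (the paper's Lemma A4), and the symmetrization identity $\Pr(\tilde{D}_{(1)1}\in A)=T\,\Pr(\tilde{D}_{11}\in A,\,(1)=1)$ to recover the conditional law defining $L_{K,T}$. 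The only cosmetic difference is that the paper symmetrizes first and then decomposes the single conditioned block, whereas you decompose all blocks and symmetrize at the end; the content is identical.
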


\begin{remark}\label{rmk:first_type}
    For the first type of rerandomization using the Mahalanobis distance, 
    \citet{wang2022rerandomization} showed that, under Condition \ref{cond:gamma}, 
    the supremum distance between the distribution functions of the standardized difference-in-means estimator under rerandomization and the corresponding constrained-Gaussian approximation as in \eqref{eq:bound_diff_tilde} is of order $O(\Delta_n/p)$, with $p$ being the approximate acceptance probability under the given imbalance threshold\footnote{In~\citet{wang2022rerandomization}, the approximate acceptance probability $p$ is defined as $p \equiv \pr(\chi_K^2 \le a)$, where $\chi_K^2$ is the chi-squared random variable with degrees of freedom $K$ and $a$ is the given imbalance threshold. This is because under Condition \ref{cond:gamma}, the distribution of the Mahalanobis distance is approximately $\chi_K^2$, so that $\pr(M \le a) \approx \pr(\chi_K^2 \le a)$.}.
    Thus, the first and second types of rerandomization share similar approximation error (at least in terms of the derived upper bounds) when $1/p$ and $T$ are of the same order. 
    This is not surprising from their implementation. 
    Under the first type of rerandomization, in expectation, we will draw about $1/p$ assignments to get an acceptable one; 
    whereas under the second type, we will deterministically draw $T$ assignments to get an acceptable one, which is the one with the best balance. 
    Nevertheless, the technical derivation for these error bounds is considerably different for these two types of rerandomization. 
\end{remark}

\subsection{Representation for the asymptotic distribution under rerandomization}

From Theorems \ref{thm:asymp} and \ref{thm:asym_equiv}, the asymptotic distribution of the difference-in-means estimator under the best-choice rerandomization can be approximated by the distribution in \eqref{eq:asym_equiv}, which involves the constrained Gaussian random variable $L_{K,T}$ in \eqref{eq:L_KT}. 
Below we will give a representation of $L_{K,T}$, which can facilitate its simulation. 

Let $U_K$ be the first coordinate of a $K$-dimensional random vector uniformly distributed on the $(K-1)$-dimensional unit sphere, 
$S$ be a random sign with probability $1/2$ being $1$ and $-1$, $\beta_K\sim \textup{Beta}(1/2, (K-1)/2)$ be a Beta random variable that degenerates to $1$ when $K=1$.  
Let $\chi^2_{K[1]}, \chi^2_{K[2]}, \ldots,$ and $\chi^2_{K[T]}$ be i.i.d.\ chi-squared random variables with degrees of freedom $K$, 
and $\chi^2_{K(1)} = \min_{1\le t \le T} \chi^2_{K[t]}$ be the minimum of these $T$ i.i.d.\ chi-squared random variables. 
Define further the following constrained chi-squared random variable: 
\begin{align}\label{eq:chi2_KT}
    \chi^2_{K,T} 
    \ \sim \ 
    \chi^2_{K[1]} \mid \chi^2_{K[1]} \le \min_{1\le t\le T} \chi^2_{K[t]} 
    \ \sim \ \chi^2_{K(1)} 
    \ \sim \ 
    F_{K}^{-1} (\textup{Beta}(1,T)),
\end{align}
where $F_K^{-1}$ denotes the quantile function for the chi-squared distribution with degrees of freedom $K$, 
and $\textup{Beta}(1,T)$ denotes a Beta random variable with parameters $1$ and $T$; 
see the supplementary material for a proof of the equivalence in \eqref{eq:chi2_KT}. 

\begin{proposition}\label{prop:L_KT}
The constrained Gaussian random variable in \eqref{eq:L_KT} has the following representations:
\begin{align}\label{eq:L_KT_rep}
    L_{K, T} \sim \bs{c}^\top \bs{D}_1 \mid \|\bs{D}_1\|_2^2 \le \min_{1 \le t \le T} \|\bs{D}_t\|_2^2
    \sim \chi_{K,T} U_K
    \sim 
    \chi_{K,T} S \sqrt{\beta_K},
\end{align}
where $\bs{c}$ can be any constant unit vector in $\mathbb{R}^K$, $\bs{D}_1, \ldots, \bs{D}_T \overset{\textup{i.i.d.}}{\sim} \mathcal{N}(\bs{0}, \bs{I}_K)$, 
$\chi_{K,T}$ is the square root of $\chi^2_{K,T}$ in \eqref{eq:chi2_KT}, 
$\chi_{K,T}\ind U_K$, 
and $(\chi_{K,T}, S, \beta_K)$ are mutually independent. 
\end{proposition}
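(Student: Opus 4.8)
The plan is to establish the three distributional identities in \eqref{eq:L_KT_rep} in turn, relying on the rotational invariance of the isotropic Gaussian law and on the polar decomposition of $\bs{D}_1$ into an independent radius and direction.

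First I would show that the defining conditioning in \eqref{eq:L_KT} is unchanged if $D_{11}$ is replaced by $\bs{c}^\top \bs{D}_1$ for an arbitrary unit vector $\bs{c}$. The conditioning event $A \equiv \{\|\bs{D}_1\|_2^2 \le \min_{1\le t\le T}\|\bs{D}_t\|_2^2\}$ is a function of the Euclidean norms only, so applying a common orthogonal matrix $\bs{R}$ to every $\bs{D}_t$ leaves both the event and the joint law invariant (each $\bs{D}_t$ is $\mathcal{N}(\bs{0},\bs{I}_K)$ and hence rotation-invariant). Choosing $\bs{R}$ with $\bs{R}^\top \bs{e}_1 = \bs{c}$ carries $D_{11} = \bs{e}_1^\top \bs{D}_1$ to $\bs{c}^\top\bs{D}_1$, which gives the first equality in \eqref{eq:L_KT_rep}.

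Next I would prove $L_{K,T}\sim \chi_{K,T}U_K$ via the polar decomposition $\bs{D}_1 = \|\bs{D}_1\|_2\,\bs{\Theta}_1$, where $\bs{\Theta}_1$ is uniform on the unit sphere and independent of $\|\bs{D}_1\|_2$ and of $\bs{D}_2,\ldots,\bs{D}_T$. Writing $W_t \equiv \|\bs{D}_t\|_2^2 \sim \chi^2_K$ i.i.d., the event $A$ is measurable with respect to $(W_1,\bs{D}_2,\ldots,\bs{D}_T)$, so conditioning on $A$ leaves $\bs{\Theta}_1$ uniform and independent of the conditioned radius. The one computation requiring care is the conditional law of $W_1$: since $W_1\le\min_t W_t$ holds if and only if $W_1=\min_t W_t$, and the $W_t$ are i.i.d., the conditional density of $W_1$ equals $T f_K(w)\{1-F_K(w)\}^{T-1}$, which is precisely the density of $\min_{1\le t\le T}\chi^2_{K[t]} = \chi^2_{K(1)} = \chi^2_{K,T}$ from \eqref{eq:chi2_KT}. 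Hence $\|\bs{D}_1\|_2\mid A\sim\chi_{K,T}$ independently of $\bs{\Theta}_1$, and $D_{11}=\|\bs{D}_1\|_2\,(\bs{\Theta}_1)_1\sim \chi_{K,T}U_K$ with $U_K = (\bs{\Theta}_1)_1$ and $\chi_{K,T}\ind U_K$.

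Finally, for the third representation I would split $U_K$ into its sign and magnitude. By the reflection symmetry $x_1\mapsto -x_1$ of the uniform law on the sphere, $S=\mathrm{sign}(U_K)$ is a fair sign independent of $|U_K|$, while $U_K^2 = D_1^2/\sum_{j=1}^K D_j^2$ is the ratio of independent $\chi^2_1$ and $\chi^2_1+\chi^2_{K-1}$ variables and thus follows $\textup{Beta}(1/2,(K-1)/2)=\beta_K$ (degenerating to the point mass at $1$ when $K=1$, consistent with $U_1=S$). Therefore $U_K\sim S\sqrt{\beta_K}$ with $S\ind\beta_K$, and since $\chi_{K,T}$ is independent of the whole direction $\bs{\Theta}_1$, the triple $(\chi_{K,T},S,\beta_K)$ is mutually independent, delivering the last identity in \eqref{eq:L_KT_rep}. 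I expect no serious obstacle; the only delicate point is confirming that conditioning on $\{W_1\le\min_t W_t\}$ coincides almost surely with conditioning on $W_1$ being the minimum (ties have probability zero), so that the resulting radial law is exactly that of the minimum of $T$ i.i.d.\ $\chi^2_K$ variables rather than a mere truncation.
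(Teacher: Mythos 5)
Your proposal is correct and follows essentially the same route as the paper: rotational invariance of the isotropic Gaussian for the first identity, and the polar decomposition of $\bs{D}_1$ into an independent radius and uniform direction for the second and third. The only cosmetic differences are that you identify the conditional radial law with the minimum order statistic by computing its density $T f_K(w)\{1-F_K(w)\}^{T-1}$ rather than via the paper's CDF symmetry argument, and you derive $U_K \sim S\sqrt{\beta_K}$ directly from the $\textup{Beta}(1/2,(K-1)/2)$ law of $U_K^2$ instead of citing the corresponding lemma of Li, Ding and Rubin.
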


The representation in Proposition \ref{prop:L_KT} is analogous to that 
in \citet{LDR18} for the first type of rerandomization using the Mahalanobis distance. Both of them have similar forms, except that our representation in \eqref{eq:L_KT_rep} involves the order statistic of chi-squared random variables while that in \citet{LDR18} involves truncated chi-squared random variable. 
This is not surprising given the implementation of the design: 
the best-choice rerandomization chooses the best one among multiple randomizations, while the first-type rerandomization chooses only those assignments with covariate imbalance below a certain threshold. 

More importantly, 
from Proposition \ref{prop:L_KT} and \eqref{eq:chi2_KT}, we can easily simulate the constrained Gaussian random variable $L_{K,T}$ using the multiplication of the three random variables in \eqref{eq:L_KT_rep}, which can be more efficient than using the form in \eqref{eq:L_KT}. 
Consequently, we can also efficiently simulate from the asymptotic distribution of $\hat{\tau}_{(1)}$ in Theorem \ref{thm:asym_equiv}. 
This can be useful when conducting inference for the average treatment effect under the best-choice rerandomization, as discussed in Section \ref{sec:inf}.

\subsection{Improvement from the best-choice rerandomization}\label{sec:improve}

In this subsection we will compare the asymptotic properties of the classical CRE and the best-choice rerandomization. 
Note that the CRE can be viewed as a special case of the best-choice rerandomization with $T=1$, 
under which $L_{K,T}$ reduces to a standard Gaussian random variable and the asymptotic distribution of the standardized difference-in-means estimator $V_{\tau\tau}^{-1/2}(\hat{\tau}-\tau)$ reduces to a standard Gaussian distribution. 
Below we essentially compare the asymptotic distribution in Theorem \ref{thm:asym_equiv} to the standard Gaussian distribution $\mathcal{N}(0,1)$.

First, both the standard and constrained Gaussian random variables $\varepsilon_0$ and $L_{K,T}$ are symmetric and unimodal around zero. 
These properties will also be maintained under scaling and convolution. 
We can then immediately derive the following corollary, 
which implies that the difference-in-means estimator is asymptotically unbiased under both the CRE and the best-choice rerandomization. 

\begin{corollary}\label{cor:sum}
The asymptotic distribution for the standardized difference-in-means estimator in \eqref{eq:asym_equiv} is symmetric and unimodal around zero. 
\end{corollary}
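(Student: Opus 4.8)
The plan is to reduce \eqref{eq:asym_equiv} to two ingredients: first, that each of the two independent summands $\sqrt{1-R^2}\,\varepsilon_0$ and $\sqrt{R^2}\,L_{K,T}$ is symmetric and unimodal about zero; second, that symmetry and unimodality about zero are preserved by the operations that assemble the target variable, namely nonnegative scaling and convolution of independent variables. Symmetry about zero is essentially free under both operations (the sum of independent symmetric variables is symmetric, and $x\mapsto ax$ preserves evenness of a density), and $\varepsilon_0$ is trivially symmetric and unimodal as a standard Gaussian. Hence the real content is to show that $L_{K,T}$ is symmetric and unimodal, and that unimodality survives the convolution.

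Symmetry of $L_{K,T}$ is immediate from the representation $L_{K,T}\sim \chi_{K,T}\,S\sqrt{\beta_K}$ in Proposition \ref{prop:L_KT}, since the independent random sign $S$ makes the law invariant under $x\mapsto -x$; equivalently, the defining event $\|\bs{D}_1\|_2^2\le \min_{1\le t\le T}\|\bs{D}_t\|_2^2$ in \eqref{eq:L_KT} depends on $\bs{D}_1$ only through $\|\bs{D}_1\|_2^2$ and is therefore invariant under flipping the sign of $D_{11}$. Nonnegative scaling clearly preserves both symmetry and unimodality about zero, so it remains to establish the unimodality of $L_{K,T}$, which I expect to be the main obstacle.

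For that, the clean route is to condition on the random threshold $m\equiv \min_{2\le t\le T}\|\bs{D}_t\|_2^2$, which is independent of $\bs{D}_1$ (and infinite when $T=1$), so that $L_{K,T}$ is distributed as $D_{11}$ given $\|\bs{D}_1\|_2^2\le m$. For a fixed threshold $s$, the conditional density of $D_{11}$ given $\|\bs{D}_1\|_2^2\le s$ is proportional to $\phi(x)\,\Pr(\chi^2_{K-1}\le s-x^2)\,\mathbbm{1}(x^2\le s)$, a product of two nonnegative even functions that are each non-increasing in $|x|$, hence symmetric and unimodal about zero; this is exactly the constrained-Gaussian law arising in \citet{LDR18} for the first type of rerandomization. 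Writing out the density of $L_{K,T}$ then exhibits it as a genuine probability mixture over $s$ of these symmetric-unimodal conditional densities: the mixing measure reweights the law of $m$ by the threshold-dependent acceptance probability $\Pr(\chi^2_K\le s)$, but remains a probability measure after normalization. Since a mixture of symmetric densities that are all non-increasing in $|x|$ is again symmetric and non-increasing in $|x|$, the variable $L_{K,T}$ is symmetric and unimodal about zero. I would flag one pitfall: one cannot argue unimodality from the scale-mixture form $L_{K,T}\sim \chi_{K,T}\,U_K$ in \eqref{eq:L_KT_rep}, because the first-coordinate-on-the-sphere variable $U_K$ is itself not unimodal (for $K=2$ it is arcsine-distributed and U-shaped); it is the ball-truncation representation that renders the mixing components unimodal.

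Finally, the convolution of two independent symmetric-unimodal laws is again symmetric and unimodal about zero—this is Wintner's classical theorem (see, e.g., Dharmadhikari and Joag-Dev), the same device underlying the analogous corollary in \citet{LDR18}. Applying it to the independent summands $\sqrt{1-R^2}\,\varepsilon_0$ and $\sqrt{R^2}\,L_{K,T}$ yields the claim, and the degenerate boundary cases $R^2\in\{0,1\}$, $T=1$, and $K=1$ each reduce to one of the two summands and are covered directly.
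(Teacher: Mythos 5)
Your proof is correct and follows essentially the same route as the paper: the paper likewise establishes symmetry and unimodality of $L_{K,T}$ by conditioning on the minimum of the other $T-1$ chi-squared norms, writing its density as a (reweighted, normalized) mixture of the truncated constrained-Gaussian densities $f'_{K,a}$ of \citet{LDR18} --- each symmetric and non-increasing in $|x|$ --- and then invokes Wintner's theorem for the convolution with the independent Gaussian term. Your explicit form $\phi(x)\,\Pr(\chi^2_{K-1}\le s-x^2)\,\mathbbm{1}(x^2\le s)$ for the conditional density and the caveat about the non-unimodality of $U_K$ are both accurate and consistent with the paper's argument.
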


Second, we compare the asymptotic variance of the difference-in-means estimator under the two designs. 
Let $v_{K,T} = \Var(L_{K,T})$ denote the variance of the constrained Gaussian random variable in \eqref{eq:L_KT} and \eqref{eq:L_KT_rep}. 
Note that $v_{K,T} = K^{-1} \E(\chi^2_{K(1)})$ as implied by \eqref{eq:chi2_KT} and \eqref{eq:L_KT_rep}. 
We may use expressions from \citet{Nadarajah08} for moments of chi-squared order statistics. 
However, these expressions involves Lauricella functions that are not available in standard software. 
For simplicity, 
we will mainly consider Monte Carlo approximation for $v_{K,T}$.

\begin{corollary}\label{cor:reduce_var}
Under the best-choice rerandomization, 
the asymptotic variance of the standardized difference-in-means estimator is smaller than or equal to that under the CRE. Specifically, the percentage reduction in asymptotic variance is $(1-v_{K,T})R^2$, which is nonnegative and nondecreasing in both $R^2$ and $T$. 
\end{corollary}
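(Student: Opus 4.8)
The plan is to read off the asymptotic variance under the best-choice rerandomization directly from the equivalent representation in \eqref{eq:asym_equiv}, compare it against the benchmark variance of $1$ under the CRE, and then verify the two monotonicity claims separately. Since standardization by $V_{\tau\tau}^{-1/2}$ uses the same finite-population constant under both designs, it suffices to compare the variances of the two limiting laws.

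First I would invoke Theorem \ref{thm:asym_equiv}, which gives that the limiting law of $V_{\tau\tau}^{-1/2}(\hat{\tau}_{(1)} - \tau)$ is that of $\sqrt{1-R^2}\,\varepsilon_0 + \sqrt{R^2}\,L_{K,T}$ with $\varepsilon_0 \sim \mathcal{N}(0,1)$ independent of $L_{K,T}$. Both summands are centered: $\varepsilon_0$ trivially, and $L_{K,T}$ by the symmetry of its representation $L_{K,T} \sim \chi_{K,T}\, S\, \sqrt{\beta_K}$ in Proposition \ref{prop:L_KT}, where the random sign $S$ is independent of the rest. Independence then gives the asymptotic variance $(1-R^2)\cdot 1 + R^2 v_{K,T} = 1 - (1-v_{K,T})R^2$. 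The CRE corresponds to $T=1$, where $L_{K,1}\sim\mathcal{N}(0,1)$ so $v_{K,1}=1$ and the benchmark variance equals $1$. The percentage reduction relative to this benchmark is therefore $1 - [1 - (1-v_{K,T})R^2] = (1-v_{K,T})R^2$, matching the stated form.

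Next, for nonnegativity I would use the identity $v_{K,T} = K^{-1}\E(\chi^2_{K(1)})$ recorded just before the corollary, together with the fact that the minimum of $T$ i.i.d.\ chi-squared variables never exceeds any single one, so $\E(\chi^2_{K(1)}) \le \E(\chi^2_{K[1]}) = K$ and hence $v_{K,T}\le 1$. Combined with $R^2\ge 0$ (being a squared correlation) this yields $(1-v_{K,T})R^2\ge 0$. Monotonicity in $R^2$ is then immediate, since the reduction is linear in $R^2$ with nonnegative slope $1-v_{K,T}$.

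Finally, for monotonicity in $T$ I would show that $v_{K,T}$ is nonincreasing in $T$, equivalently that $1-v_{K,T}$ is nondecreasing. The cleanest route is a coupling: realizing $\chi^2_{K(1)}=\min_{1\le t\le T}\chi^2_{K[t]}$ on a common probability space, enlarging $T$ adjoins more terms to the minimum and can only decrease it pointwise, so $\E(\chi^2_{K(1)})$ is nonincreasing in $T$; the Beta representation $\chi^2_{K,T}\sim F_K^{-1}(\mathrm{Beta}(1,T))$ in \eqref{eq:chi2_KT} gives the same conclusion, as $\mathrm{Beta}(1,T)$ is stochastically decreasing in $T$ and $F_K^{-1}$ is nondecreasing. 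Hence $(1-v_{K,T})R^2$ is nondecreasing in $T$. I do not anticipate a genuine obstacle here: every step reduces to an elementary moment comparison for the minimum of i.i.d.\ chi-squared variables, and the only point requiring a moment of care is confirming that $L_{K,T}$ is centered so that the cross term vanishes in the variance computation.
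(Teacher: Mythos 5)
Your proposal is correct and follows essentially the same route as the paper: read the limiting variance $1-(1-v_{K,T})R^2$ off the representation in Theorem \ref{thm:asym_equiv}, treat the CRE as the $T=1$ case with benchmark variance $1$, and establish monotonicity in $T$ via the factorization $v_{K,T}=K^{-1}\E(\chi^2_{K(1)})$ together with the observation that the minimum of i.i.d.\ chi-squared variables is pointwise nonincreasing as more terms are adjoined (which is exactly the paper's Lemma \ref{lemma:nonincreasing}). Your extra checks — centering of $L_{K,T}$ via the sign representation and the alternative Beta-quantile argument — are sound but not needed beyond what the paper already records.
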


Intuitively, the covariates can be viewed as potential outcomes that are unaffected by the treatment. 
Thus, by the same logic, the covariates 
will be more balanced (or more precisely have smaller asymptotic variances) under the best-choice rerandomization than under the CRE. 
Moreover, the percentage reduction in asymptotic variance of any linear combination of covariates is $1-v_{K,T}$, enjoying the ``equal percent variance reducing'' property \citep{morgan2012rerandomization}. 

Third, we compare the asymptotic quantile ranges of the difference-in-means estimator under the two designs, 
because the asymptotic distribution under the best-choice rerandomization is generally non-Gaussian and its variability cannot be fully characterized by the variance. 
Moreover, we will focus on the symmetric quantile range, which will be shortest at any given coverage level due to the unimodality in Corollary \ref{cor:sum} \citep{CB2002}.
This is also related to the two-sided confidence intervals discussed later in Section \ref{sec:inf}.
For any $\alpha\in (0,1)$, 
let $z_{\alpha}$ be the $\alpha$th quantile of the standard Gaussian distribution, and 
$\nu_{\alpha, K, T}(R^2)$ be the $\alpha$th quantile of the distribution in \eqref{eq:asym_equiv}. 

\begin{corollary}\label{cor:reduce_qr}
    Under the best-choice rerandomization, 
    for any $\alpha\in (0,1)$, 
    the asymptotic $1-\alpha$ symmetric quantile range is narrower than or equal to that under the CRE. 
    Specifically, the percentage reduction in length of the  asymptotic $1-\alpha$ symmetric quantile range is $1-\nu_{1-\alpha/2, K, T}(R^2)/z_{1-\alpha/2}$, which is nonnegative and nondecreasing in both $R^2$ and $T$. 
\end{corollary}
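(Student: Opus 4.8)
The plan is to reduce the statement to a single peakedness comparison and then settle nonnegativity and the two monotonicities separately. First I would use the symmetry from Corollary~\ref{cor:sum}: since the asymptotic law in \eqref{eq:asym_equiv} is symmetric about $0$, its $\alpha/2$ and $(1-\alpha/2)$ quantiles are $-\nu_{1-\alpha/2,K,T}(R^2)$ and $\nu_{1-\alpha/2,K,T}(R^2)$, so the symmetric $1-\alpha$ quantile range has length $2\nu_{1-\alpha/2,K,T}(R^2)$, whereas under the CRE the law is $\mathcal{N}(0,1)$ and the length is $2z_{1-\alpha/2}$. Dividing gives the claimed percentage reduction $1-\nu_{1-\alpha/2,K,T}(R^2)/z_{1-\alpha/2}$. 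For a symmetric law the $(1-\alpha/2)$ quantile $\nu$ is the smallest $c\ge 0$ with $\Pr(|\cdot|\le c)=1-\alpha$, so ``$\nu$ smaller'' is exactly ``more peaked'' (larger $\Pr(|\cdot|\le c)$ at every $c$). Here I call a symmetric $X$ more peaked than a symmetric $Y$ when $\Pr(|X|\le c)\ge\Pr(|Y|\le c)$ for all $c\ge 0$, equivalently $|X|\le_{\mathrm{st}}|Y|$. It therefore suffices to show (i) the law in \eqref{eq:asym_equiv} is more peaked than $\mathcal{N}(0,1)$, and (ii) its peakedness is nondecreasing in $R^2$ and in $T$.

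\textbf{Peakedness of $L_{K,T}$ and monotonicity in $T$.} Next I would establish the building block that $L_{K,T}$ is more peaked than $\mathcal{N}(0,1)$ and grows more peaked with $T$, using the representation $L_{K,T}\sim\chi_{K,T}U_K$ from Proposition~\ref{prop:L_KT} with $\chi_{K,T}\ind U_K$ and $U_K$ symmetric. For any nonnegative radius $R$ independent of a symmetric $U$, $\Pr(|RU|>c)=\E[\bar{F}_{|U|}(c/R)]$ is the expectation of a nondecreasing function of $R$, so $R\le_{\mathrm{st}}R'$ implies $RU$ is more peaked than $R'U$. Since $\chi^2_{K,T}\sim\chi^2_{K(1)}$ is the minimum of $T$ i.i.d.\ $\chi^2_K$ variables by \eqref{eq:chi2_KT}, it is stochastically nonincreasing in $T$, with $\chi_{K,1}=\chi_K$ and $\chi_K U_K\sim\mathcal{N}(0,1)$; hence $L_{K,T}$ is more peaked than $\mathcal{N}(0,1)$ for every $T$ and is monotone in $T$. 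To lift this to the mixture \eqref{eq:asym_equiv} I would scale by $\sqrt{R^2}$ (common scaling preserves the peakedness order) and then convolve with the independent symmetric unimodal $\sqrt{1-R^2}\,\varepsilon_0$, invoking the classical fact (Birnbaum, 1948) that peakedness is preserved under convolution with an independent symmetric unimodal variable. Comparing against $\mathcal{N}(0,1)=\sqrt{1-R^2}\,\varepsilon_0+\sqrt{R^2}\,\varepsilon_1$ yields nonnegativity (i), and comparing $L_{K,T}$ with $L_{K,T'}$ for $T>T'$ yields monotonicity in $T$.

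\textbf{Monotonicity in $R^2$ (the crux).} The remaining and hardest step is that $W(\rho)\equiv\sqrt{1-\rho}\,\varepsilon_0+\sqrt{\rho}\,L_{K,T}$ gets more peaked as $\rho=R^2$ increases; the convolution lemma does not apply directly, because raising $\rho$ simultaneously rescales the peaked component and the Gaussian component. I would attack this through the densities $p_\rho(x)=\E_{L}[\varphi_{1-\rho}(x-\sqrt{\rho}\,L_{K,T})]$, where $\varphi_{1-\rho}$ is the $\mathcal{N}(0,1-\rho)$ density. Each $p_\rho$ is symmetric, and for $\rho_2>\rho_1$ the peakedness ordering $\Pr(|W(\rho_2)|\le c)\ge\Pr(|W(\rho_1)|\le c)$ for all $c$ follows once one shows the single-crossing property that $p_{\rho_2}-p_{\rho_1}$ is nonnegative on $[0,t_0]$ and nonpositive on $[t_0,\infty)$ for some $t_0$; since both sides integrate to $1/2$ on the half-line, this forces $\int_0^c(p_{\rho_2}-p_{\rho_1})\ge 0$ for every $c$. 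Equivalently, it suffices to prove that $\partial_\rho p_\rho(x)$ changes sign exactly once in $x\ge 0$, from $+$ to $-$. I expect this sign-variation statement to be the main obstacle: it should rest on the variation-diminishing (Pólya-frequency / total-positivity) property of the Gaussian convolution kernel together with the symmetry and unimodality of $L_{K,T}$, controlling the competition between the growing conditional mean $\sqrt{\rho}\,L$ and the shrinking conditional variance $1-\rho$ inside $\Pr(|\mathcal{N}(\sqrt{\rho}\,\ell,1-\rho)|\le c)$. An equivalent route, via a Gaussian-splitting of $\sqrt{1-\rho_1}\,\varepsilon_0$, reduces everything to the single extremal claim that $L_{K,T}$ is more peaked than $\sqrt{\mu}\,L_{K,T}+\sqrt{1-\mu}\,\varepsilon$ for all $\mu\in[0,1]$, i.e.\ that adding Gaussian noise moves $L_{K,T}$ monotonically toward the less peaked Gaussian; this is the form closest to the peakedness arguments of \citet{LDR18}, on which I would model the write-up. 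Combining (i), the $T$-monotonicity, and this $R^2$-monotonicity shows $1-\nu_{1-\alpha/2,K,T}(R^2)/z_{1-\alpha/2}$ is nonnegative and nondecreasing in both $R^2$ and $T$, completing the proof.
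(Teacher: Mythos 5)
Your reduction to a peakedness comparison, your proof of nonnegativity (comparing against the $T=1$ case $\sqrt{1-R^2}\,\varepsilon_0+\sqrt{R^2}\,\varepsilon_1\sim\mathcal{N}(0,1)$), and your proof of monotonicity in $T$ all match the paper's argument: the paper likewise uses the representation $L_{K,T}\sim\chi_{K,T}U_K$ from Proposition~\ref{prop:L_KT}, the stochastic decrease of $\chi^2_{K(1)}=\min_{1\le t\le T}\chi^2_{K[t]}$ in $T$ (its Lemma~\ref{lemma:prob_T}), and the Birnbaum-type preservation of peakedness under convolution with an independent symmetric unimodal variable (its Lemma~\ref{lemma:sum}, quoted from \citet{LDR18}). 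Up to that point your proposal is correct and essentially identical to the paper.

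The gap is exactly where you flag it: the monotonicity in $R^2$. Neither of your two candidate routes is carried out. The single-crossing/total-positivity argument for $\partial_\rho p_\rho$ is left as a conjecture ("I expect this sign-variation statement to be the main obstacle"), and the Gaussian-splitting reduction, while valid, terminates in an unproven extremal claim ($L_{K,T}$ more peaked than $\sqrt{\mu}\,L_{K,T}+\sqrt{1-\mu}\,\varepsilon$) for which you offer no argument beyond an appeal to "model the write-up on \citet{LDR18}" --- but the LDR18 peakedness lemma concerns the truncated variable $L'_{K,a}$ of the first type of rerandomization, not the order-statistic variable $L_{K,T}$, so it does not apply directly. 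The missing idea, which is the one piece of the paper's proof with no counterpart in your proposal, is a conditioning/mixture device: condition on $\min_{2\le t\le T}\chi^2_{K[t]}=a$ to write $\Pr(\sqrt{1-R^2}\,\varepsilon_0+\sqrt{R^2}\,L_{K,T}\ge c)$ as $T\int_0^\infty \Pr(\sqrt{1-R^2}\,\varepsilon_0+\sqrt{R^2}\,L'_{K,a}\ge c)\,\Pr(\chi^2_K\le a)\,g(a)\,\deri a$, where $g$ is the density of that minimum and, crucially, the mixing measure is free of $R^2$. The known monotonicity in $R^2$ for the fixed-threshold variable $L'_{K,a}$ (Lemma~A4 of \citet{LDR18}) then holds pointwise in $a$ and integrates to the desired inequality. (The same device would also prove your extremal claim, since $L_{K,T}$ is this mixture of the $L'_{K,a}$'s.) Without some such bridge to the already-established first-type result, your proposal does not constitute a complete proof of the $R^2$-monotonicity.
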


From Corollaries \ref{cor:reduce_var} and \ref{cor:reduce_qr}, the best-choice rerandomization improves the estimation precision compared to the usual CRE. 
Moreover, the gain from rerandomization increases with the squared multiple correlation $R^2$ in \eqref{eq:R2}, 
which characterizes the strength of the association between the potential outcomes and covariates. 
This is intuitive. 
When the covariates have stronger association with the potential outcomes (or equivalently can explain more variability in the potential outcomes), 
the best-choice rerandomization can provide more precision improvement by balancing these covariates. 

Corollaries \ref{cor:reduce_var} and \ref{cor:reduce_qr} also imply that the gain from rerandomization increases with the number $T$ of tried complete randomizations. 
However, this does not mean that we should use as many complete randomizations as possible for the best-choice rerandomization. 
The is because Condition \ref{cond:iterations} requires that $T$ cannot be too large. 
If $T$ is too large, the asymptotic approximation in Theorem \ref{thm:asymp} may fail, which will further invalidate the results in Corollaries \ref{cor:reduce_var} and \ref{cor:reduce_qr}. 
We will focus on this issue regarding the choice of $T$ in the following section.

\section{Optimal best-choice rerandomization}\label{sec:optimal}

Condition \ref{cond:iterations} and Corollaries \ref{cor:reduce_var} and \ref{cor:reduce_qr} show the trade-off when choosing the number $T$ of tried complete randomizations for the best-choice rerandomization. 
On the one hand, we want $T$ to be small so that the regularity condition is more likely to hold and the asymptotic approximation can be more accurate. 
On the other hand, we want $T$ to be large so that we can gain more improvement in precision from the best-choice rerandomization. 
In particular, 
the asymptotic distribution in \eqref{eq:asym_equiv} becomes most concentrated around zero when $T=\infty$, under which it reduces to the Gaussian distribution $\mathcal{N}(0, 1-R^2)$. 
This is the ideally optimal precision that we can expect from rerandomization, since $1-R^2$ comes from the variability in potential outcomes that cannot be explained by the covariates. 
These then naturally lead to the following question: 
Can we increase $T$ at a proper rate of the sample size $n$ so that the asymptotic theory for the best-choice rerandomization still holds and it achieves the ideally optimal precision? 

Below we first study the asymptotic properties of the constrained Gaussian random variable $L_{K,T}$ when both $T$ and $K$ vary and possibly diverge to infinity. 
We then study the optimal best-choice rerandomization that can achieve the ideally optimal precision. 
We finally discuss some practical guidance for the choice of $T$ as well as the number of covariates $K$.

\subsection{Asymptotic properties of the constrained Gaussian random variable}

We study the asymptotic behavior of the constrained Gaussian random variable $L_{K,T}$ in \eqref{eq:L_KT} and \eqref{eq:L_KT_rep} along a sequence of varying $(K, T)$'s. 
In particular, we will allow both $K$ and $T$ to diverge to infinity along the sequence, 
and consider sufficient and necessary conditions for the constrained Gaussian random variable to be asymptotically ignorable. 
Note that the $L_{K,T}$'s for any set of $(K,T)$'s are uniformly integrable; see the supplementary material for details.
This then implies that $L_{K,T} = o_{\Pr}(1)$ if and only if its variance $v_{K,T} = o(1)$. 
Moreover, 
from Corollary \ref{cor:reduce_var}, 
$v_{K,T}$ is also closely related to the precision gain from the best-choice rerandomization. 
Therefore, in the following, we will consider mainly the asymptotic behavior of $v_{K,T}$, 
which turns out to depend critically on the ratio between $\log(T)$ and $K$. 
We summarize the results in the following theorem. 
We use $\limsup$ and $\liminf$ to denote limit superior and limit inferior, respectively. 

\begin{theorem}\label{thm:v_kt}
	Along any given sequence of $(K,T)$'s,  
	\begin{itemize}
		\item[(i)] if $\log(T) / K \to \infty$, then $v_{K, T} \to 0$;
		\item[(ii)] if $\limsup_{n \to \infty} \log(T) / K < \infty$, then $\liminf_{n\rightarrow \infty} v_{K, T} > 0$;
		\item[(iii)] if $\liminf_{n \to \infty} \log(T) / K > 0$, then $\limsup_{n\rightarrow \infty} v_{K, T} < 1$;
		\item[(iv)] if $\log(T) / K \to 0$, then $v_{K, T} \to 1$. 
	\end{itemize}
\end{theorem}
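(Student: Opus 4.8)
The plan is to work throughout with the identity $v_{K,T} = K^{-1}\E[\chi^2_{K(1)}] = \int_0^\infty \big(\bar F_K(aK)\big)^T\,\deri a$, where $\bar F_K$ denotes the survival function of the $\chi^2_K$ distribution; the second equality follows from $\E[\chi^2_{K(1)}]=\int_0^\infty \Pr(\chi^2_K>x)^T\,\deri x$ and the substitution $x=aK$. The theorem is then a statement about where the profile $a\mapsto(\bar F_K(aK))^T$ transitions from $\approx 1$ to $\approx 0$, and this is governed by the Cram\'er rate function $I(a)=\tfrac12(a-1-\log a)$, which is strictly decreasing on $(0,1)$, vanishes only at $a=1$, and blows up as $a\downarrow 0$ and as $a\uparrow\infty$. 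Writing $r=\log(T)/K$, the heuristic is $v_{K,T}\approx I^{-1}(r)$ along the decreasing branch, and the four regimes correspond to $I^{-1}(r)$ tending to $0$, staying bounded below, staying bounded above, and tending to $1$, as $r\to\infty$, $\limsup r<\infty$, $\liminf r>0$, and $r\to0$, respectively. By passing to subsequences it suffices to treat $K\to\infty$ and $K$ bounded separately; the bounded case is elementary, since for fixed $K$ the map $T\mapsto v_{K,T}$ is nonincreasing with $v_{K,1}=1$ and $v_{K,T}\to0$ by dominated convergence as $T\to\infty$, so I concentrate on $K\to\infty$.

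Two tail estimates drive everything. The easy one is the Chernoff bound, obtained by optimizing $\E[e^{\mp\lambda\chi^2_K}]=(1\pm 2\lambda)^{-K/2}$, which gives the exact inequalities $\Pr(\chi^2_K\le aK)\le e^{-KI(a)}$ for $a<1$ and $\Pr(\chi^2_K\ge aK)\le e^{-KI(a)}$ for $a>1$. For the lower bounds on $v_{K,T}$ (parts (ii) and (iv)) this is all I need: from $\bar F_K(aK)\ge 1-e^{-KI(a)}$ and Bernoulli's inequality, $(\bar F_K(aK))^T\ge 1-Te^{-KI(a)}$, and restricting the integral to $[0,a_\xi]$ with $a_\xi=I^{-1}(r+\xi/K)$ makes $Te^{-KI(a)}\le e^{-\xi}$ there (since $Te^{-Kr}=1$), so that $v_{K,T}\ge a_\xi(1-e^{-\xi})$. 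Part (iv) then follows by fixing $\xi$, noting $a_\xi=I^{-1}(r+\xi/K)\to 1$ when $r\to0$ and $K\to\infty$, and letting $\xi\to\infty$; part (ii) follows because $\limsup r<\infty$ keeps the argument $r+\xi/K$ bounded, hence $a_\xi$ bounded away from $0$.

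The substantial work is the matching lower bound on the lower tail, needed for the upper bounds in parts (i) and (iii). I will obtain it by an exponential change of measure: tilting $\chi^2_K$ by the parameter that shifts its mean to $aK$ turns the law into that of $a\chi^2_K$, and keeping only a one--standard--deviation window just below the tilted mean yields, after the exponential prefactors collapse to $e^{-KI(a)}$, a bound of the form $\Pr(\chi^2_K\le aK)\ge c_0\,e^{-KI(a)}e^{-\frac{1-a}{2}\sqrt{2K}}$ with a uniform constant $c_0>0$ for $K\ge K_0$. The crucial point is that the crude correction $e^{-\Theta(\sqrt K)}$ is already good enough: its logarithm is $o(K)$, so it perturbs the effective transition rate by only $O(1/\sqrt K)=o(1)$, and no sharp Bahadur--Rao local limit theorem is required. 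Feeding this into $(\bar F_K(aK))^T\le\exp\{-T\Pr(\chi^2_K\le aK)\}$ shows the integrand is uniformly negligible once $a$ exceeds a threshold $a_1$: for part (iii) I take $a_1=I^{-1}(\rho/4)$ with $\rho=\liminf r>0$ fixed, so that $K(r-I(a))\ge K\rho/4$ dominates the $O(\sqrt K)$ correction as $K\to\infty$; for part (i) I take the slowly varying $a_1=I^{-1}(r/2)\to0$ and use $K(r-I(a))\ge\tfrac12\log T\to\infty$. In both cases $\int_{a_1}^1(\bar F_K(aK))^T\,\deri a\to0$, the upper-tail piece $\int_1^\infty(\bar F_K(aK))^T\,\deri a$ is $O(1/\sqrt K)$ (resp.\ $O(1/\sqrt T)$) by the Chernoff bound, and $\int_0^{a_1}\le a_1$, giving $v_{K,T}\le a_1+o(1)$; this yields $\limsup v_{K,T}\le I^{-1}(\rho/4)<1$ for (iii) and $v_{K,T}\to0$ for (i).

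The main obstacle is precisely this tilting lower bound, where one must secure the prefactor uniformly over $a$ ranging through $(0,1)$ and over diverging $K$. What makes it tractable is that, after rescaling by $a$, the tilted window probability equals $\Pr(\chi^2_K\in[K-\sqrt{2K},K])$, which is free of $a$ and converges to a positive constant, so the uniform constant $c_0$ is available without any delicate local analysis. Once Steps 1--2 are in place, the four assertions follow by the elementary integral manipulations above, with the complementary pairings (i)/(ii) and (iii)/(iv) reflecting that the lower bounds use only the Chernoff direction while the upper bounds use the tilting direction.
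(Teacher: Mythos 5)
Your proof is correct, and at the top level it follows the same skeleton as the paper's: both start from the identity $v_{K,T}=K^{-1}\E(\chi^2_{K(1)})=K^{-1}\int_0^\infty \Pr(\chi_K^2>b)^T\,\deri b$ and split the integral at a threshold whose location relative to $K$ determines each regime (this is the content of Lemma \ref{lemma:varlbnd}). Where you genuinely diverge is in how the chi-squared behavior at the large-deviation scale is supplied. The paper takes the threshold to be the $p$-quantile of $\chi^2_K$ for $p\in\{T^{-1/2},(2T)^{-1},T^{-1},(KT)^{-1}\}$ and imports the needed quantile asymptotics (whether $a/K$ tends to $0$, stays bounded away from $0$, stays below $1$, or tends to $1$, according to $\log(p^{-1})/K$) from Lemmas A17 and A22--A24 of \citet{wang2022rerandomization}; you instead prove the equivalent tail statements from scratch via the explicit rate function $I(a)=\tfrac12(a-1-\log a)$, using Chernoff bounds plus Bernoulli's inequality for the lower bounds on $v_{K,T}$ (parts (ii) and (iv)) and a tilted lower bound $\Pr(\chi_K^2\le aK)\ge c_0e^{-KI(a)}e^{-\frac{1-a}{2}\sqrt{2K}}$ for the upper bounds (parts (i) and (iii)). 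That tilting bound is right, and your justification of uniformity in $a$ is the correct one: after rescaling, the window probability $\Pr(\chi^2_K\in[K-\sqrt{2K},K])$ no longer involves $a$, and the $e^{-O(\sqrt K)}$ slack only perturbs the exponential rate in $K$ by $o(1)$, which is all that is needed. What your route buys is a self-contained argument with explicit thresholds $I^{-1}(r/2)$, $I^{-1}(\rho/4)$ and $I^{-1}(r+\xi/K)$ in place of implicitly defined quantiles, and a direct quantitative bound $\limsup_n v_{K,T}\le I^{-1}(\rho/4)<1$ for part (iii), where the paper argues by contradiction through the second upper bound of Lemma \ref{lemma:varlbnd}; the cost is having to establish the uniform change-of-measure estimate, which the quantile formulation sidesteps. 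The delicate points---the reduction by subsequences to the cases $K\to\infty$ versus $K$ bounded, with the bounded case of (iv) forcing $T=1$ eventually and the bounded case of (iii) requiring $\sup_{K\le\bar K}v_{K,2}<1$---are all handled correctly.
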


Theorem \ref{thm:v_kt} has several implications regarding the impact of the relative magnitude of the number of tried complete randomizations $T$ and the number of covariates $K$. 
First, if $T$ grows at a super-exponential rate of $K$, in the sense that $T = \exp(cK)$ for $c\to \infty$, then the constrained Gaussian random variable $L_{K,T}$ becomes asymptotically negligible. 
This indicates that the best-choice rerandomization obtains its ideally optimal efficiency, under which the covariates are also asymptotically exactly balanced. 
We emphasize that this, however, does not mean we should use as large $T$ as possible, 
because the asymptotic theory in Section \ref{sec:asym_bcr} may fail when $T$ is too large; 
see the next subsection for more detailed discussion regarding this issue. 

Second, if $T$ grows at a sub-exponential rate of $K$, in the sense that $T = \exp(cK)$ for $c\to 0$, then the variance of the constrained Gaussian random variable becomes asymptotically the same as that of the unconstrained standard Gaussian random variable. 
From Corollary \ref{cor:reduce_var}, 
the best-choice rerandomization then provides no gain on the precision of the treatment effect estimation. 
This reminds us that we should not use too many covariates and should try an appropriate number of complete randomizations for the best-choice rerandomization.

Third, if $T$ grows at an exponential rate of $K$, in the sense that $T = \exp(cK)$ for $c$ bounded away from zero and infinity,  
then the variance of the constrained Gaussian random variable will be bounded strictly between $0$ and $1$. 
In this case, the best-choice rerandomization still provides precision gain compared to the complete randomization, although there is a gap from the ideally optimal one. 

\begin{remark}\label{rmk:LKT_LKa}
    The asymptotic behavior of the constrained Gaussian random variable $L_{K, T}$ is similar to the truncated variable $L_{K,a}$ studied in \citet[][Theorem 4]{wang2022rerandomization} with $a$ being the $1/T$th quantile of the chi-squared distribution with degrees of freedom $K$. 
    This is not surprising due to similar reasons as in Remark \ref{rmk:first_type}. 
    By their representation in Proposition \ref{prop:L_KT} and \citet[][Proposition 2]{LDR18}, 
    the difference in $L_{K, T}$ and $L_{K,a}$ comes mainly from the component of the constrained chi-squared random variable. 
    Specifically, $L_{K, T}$ involves the minimum order statistic from $T$ i.i.d.\  $\chi^2_K$ random variables, 
    whereas $L_{K,a}$ involves the $\chi^2_K$ random variable given that it is bounded by its $1/T$th quantile. 
    Intuitively, both of these constrained random variables are from the smallest $1/T$ proportion of $\chi^2_K$ distribution. 
    This intuition may help explain their similar asymptotic behavior. 
    However, an obvious difference between them is that $L_{K,a}$ always has a bounded support, while $L_{K,T}$ can take value on the whole real line. Moreover, the proof of Theorem~\ref{thm:v_kt} relies on the characterization of the order statistics of multiple chi-squared random variables, which is different from its analogue in \citet{wang2022rerandomization} that focuses on analyzing a single truncated $\chi_K^2$ random variable.
\end{remark}

\subsection{Optimal best-choice rerandomization with diverging number of tries}\label{sec:opt_bcr_subsec}

We now consider the question at the beginning of this section: Can we let $T$ increase at a proper rate of the sample size so that the best-choice rerandomization can achieve the ideally optimal precision asymptotically? 
From Theorems \ref{thm:asymp}, \ref{thm:asym_equiv} and \ref{thm:v_kt}, 
such an optimal rerandomization exists if we can find $T$ such that Condition \ref{cond:iterations} holds (i.e., $T \Delta_n  \to 0$) and the condition in Theorem \ref{thm:v_kt}(i) holds (i.e., $\log (T)/K \to \infty$), 
where the former guarantees the asymptotic approximation and the latter guarantees the optimal precision. 
We summarize the results below. 

\begin{condition}\label{cond:T_opt}
    As $n\to\infty$, $\log (T)/K \to \infty$. 
\end{condition}

\begin{theorem}\label{thm:opt_bcr}
Under the best-choice rerandomization using the Mahalanobis distance, 
if Conditions \ref{cond:gamma}, \ref{cond:iterations} and \ref{cond:T_opt} hold, 
and $\limsup_{n\rightarrow\infty} R^2 < 1$, 
then, as $n\rightarrow\infty$, 
\begin{align}\label{eq:asym_Gauss}
    \sup_{c\in \mathbb{R}} \Big| \Pr\big\{ V_{\tau\tau}^{-1/2} (\hat{\tau}_{(1)} - \tau) \leq c \big\}   - 
    \Pr
    \big( \sqrt{1-R^2}\ \varepsilon_0 \leq c\big) \Big|
    \to 0, 
\end{align}
recalling that $\varepsilon_0\sim \mathcal{N}(0,1)$, 
and 
$R^2$ is defined in \eqref{eq:R2}.  
\end{theorem}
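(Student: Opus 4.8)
The plan is to establish \eqref{eq:asym_Gauss} by a triangle-inequality decomposition that splits the Kolmogorov distance into two pieces already controlled by the machinery developed earlier: the Gaussian-approximation error from Theorem \ref{thm:asymp}, and the collapse of the constrained-Gaussian representation in Theorem \ref{thm:asym_equiv} to a pure Gaussian limit driven by Theorem \ref{thm:v_kt}(i). Concretely, writing $F_n$ for the distribution function of $V_{\tau\tau}^{-1/2}(\hat{\tau}_{(1)} - \tau)$, $\tilde{F}_n$ for that of $V_{\tau\tau}^{-1/2} \tilde{\tau}_{(1)}$, and $G_n$ for that of $\sqrt{1-R^2}\,\varepsilon_0 \sim \mathcal{N}(0, 1-R^2)$, I would bound $\sup_c |F_n(c) - G_n(c)| \le \sup_c |F_n(c) - \tilde{F}_n(c)| + \sup_c |\tilde{F}_n(c) - G_n(c)|$ and show that each term vanishes. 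For the first term, Theorem \ref{thm:asymp} directly gives $\sup_c |F_n(c) - \tilde{F}_n(c)| \le 2 T \Delta_n$, which tends to zero under Conditions \ref{cond:gamma} and \ref{cond:iterations}; no further work is needed there.

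The substantive work is in the second term. I would invoke Theorem \ref{thm:asym_equiv} to represent $V_{\tau\tau}^{-1/2} \tilde{\tau}_{(1)} = \sqrt{1-R^2}\,\varepsilon_0 + \sqrt{R^2}\, L_{K,T}$ with $\varepsilon_0$ and $L_{K,T}$ independent, so that $\tilde{F}_n$ is the distribution function of $\mathcal{N}(0, 1-R^2)$ perturbed by the independent term $W_n := \sqrt{R^2}\, L_{K,T}$. Condition \ref{cond:T_opt} (i.e.\ $\log(T)/K \to \infty$) together with Theorem \ref{thm:v_kt}(i) yields $v_{K,T} \to 0$; since $L_{K,T}$ is symmetric about zero by Corollary \ref{cor:sum}, it has mean zero and variance $v_{K,T}$, so Chebyshev's inequality and $0 \le R^2 \le 1$ give $\Pr(|W_n| > \delta) \le \Pr(|L_{K,T}| > \delta) \le v_{K,T}/\delta^2 \to 0$ for every fixed $\delta > 0$. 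Hence $W_n = o_{\Pr}(1)$.

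Next I would upgrade this convergence in probability to sup-norm convergence of the distribution functions by a standard smoothing/anti-concentration step. Conditioning on $W_n$ gives $\tilde{F}_n(c) = \E\{G_n(c - W_n)\}$, so $|\tilde{F}_n(c) - G_n(c)| \le \E|G_n(c - W_n) - G_n(c)|$. Because $\limsup_{n\to\infty} R^2 < 1$, there exist $c_0 > 0$ and $N$ with $1 - R^2 \ge c_0$ for all $n \ge N$, so the $\mathcal{N}(0, 1-R^2)$ density is bounded by $(2\pi(1-R^2))^{-1/2} \le (2\pi c_0)^{-1/2} =: M$ and $G_n$ is $M$-Lipschitz uniformly in $n \ge N$. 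Splitting on the events $\{|W_n| \le \delta\}$ and $\{|W_n| > \delta\}$ then yields, uniformly in $c$, the bound $|\tilde{F}_n(c) - G_n(c)| \le M\delta + \Pr(|W_n| > \delta)$. For each fixed $\delta > 0$ the last term vanishes as $n \to \infty$, so $\limsup_{n\to\infty} \sup_c |\tilde{F}_n(c) - G_n(c)| \le M\delta$; letting $\delta \downarrow 0$ forces this limit to zero. Combining the two pieces proves \eqref{eq:asym_Gauss}.

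The main obstacle, and the only place where the new hypothesis $\limsup R^2 < 1$ is used, is this anti-concentration step: convergence of $W_n$ to zero in probability does not by itself imply Kolmogorov convergence of the distribution functions, and one needs the limit law to have a uniformly bounded density (equivalently a uniformly Lipschitz distribution function) to turn $o_{\Pr}(1)$ into $o(1)$ in sup-norm. If $1 - R^2$ were allowed to degenerate to zero along the sequence, the Gaussian density would blow up and this bound would fail, which is precisely why $\limsup R^2 < 1$ cannot be dropped. Everything else is bookkeeping, since the two genuinely hard analytic inputs, namely the $2 T \Delta_n$ control and the variance asymptotics $v_{K,T} \to 0$, are delivered by Theorems \ref{thm:asymp} and \ref{thm:v_kt}, and assembling them through the triangle inequality completes the argument.
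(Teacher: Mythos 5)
Your proposal is correct and follows essentially the same route as the paper's proof: the identical triangle-inequality decomposition into the $2T\Delta_n$ term from Theorem \ref{thm:asymp} and the collapse of $\sqrt{R^2}\,L_{K,T} = o_{\Pr}(1)$ via Theorem \ref{thm:v_kt}(i) and Chebyshev's inequality, with $\limsup_{n\to\infty} R^2 < 1$ invoked exactly where the paper invokes it. The only difference is that where the paper cites \citet[Lemma A27]{wang2022rerandomization} to convert the $o_{\Pr}(\sqrt{1-R^2})$ perturbation into convergence in Kolmogorov distance, you supply an explicit (and correct) self-contained smoothing argument using the uniform Lipschitz bound on the $\mathcal{N}(0,1-R^2)$ distribution function.
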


In Theorem \ref{thm:opt_bcr}, 
we additionally assume that $R^2$ is bounded away from $1$, 
which is reasonable since in practice we generally do not expect the covariates to perfectly explain all the variability in potential outcomes. 
Importantly, from Theorem \ref{thm:opt_bcr}, 
the difference-in-means estimator under the best-choice rerandomization becomes asymptotically Gaussian distributed, 
and 
achieves the ideally optimal precision with remaining variation due solely to variability in potential outcomes that cannot be linearly explained by the covariates.
In addition, it has the same asymptotic distribution as the linearly regression-adjusted estimator under the CRE \citep{lin2013, fcltxlpd2016}. 
Therefore, the best-choice rerandomization is essentially a dual of covariate adjustment, where the former is at the design stage while the latter is at the analysis stage. 
Moreover, rerandomization has the advantage of being blind to outcomes and can thus avoid data snooping, and the difference-in-means estimator is a more intuitive and transparent estimator for the average treatment effect \citep{lin2013, Rosenbaum:2010, Cox2007, Freedman2008}.

{\rev 
\begin{remark}
    For coarsely stratified experiments with a fixed number of strata and large stratum sizes, 
    if the proportions of treated units are equal across strata, then 
    the asymptotic distribution of the difference-in-means estimator has the same form as that in \eqref{eq:asym_Gauss} for rerandomization 
    \citep[see, e.g.,][Appendix A4]{YQL2021}, 
    with covariates defined as the stratum indicators. 
    Although the technical derivations for these asymptotic approximations are rather different \citep[see, e.g.,][]{ShiLi2024}, 
    similarity in their forms is quite intuitive, since, in such a coarsely stratified experiment, the covariates are exactly balanced between treated and control units. 
    For finely stratified experiments such as matched pairs, under design-based asymptotic inference, stratification can either improve or hurt the precision of the difference-in-means estimator \citep{Imai08}. 
    Recently, \citet{Bai2022}, \citet{bai2023efficiency} and \citet{cytrynbaum2021optimal} showed that, under superpopulation inference with i.i.d.~sampling of units, 
    the difference-in-means estimator can have improved efficiency through fine stratification, and moreover it can achieve the nonparametric efficiency, in the sense that fine stratification can control covariates nonparametrically and nonlinearly. 
    This is in contrast to the linear adjustment of covariates under rerandomization. 
    Below we briefly discuss two extensions of rerandomization that can better accommodate nonlinear relation between potential outcomes and covariates. 
    First, we can include more transformations and interactions of basic covariates into rerandomization,      
    noting that our theory allows the number of covariates to increase with the sample size and can thus permit sieve-type methods \citep{grenander1981abstract}.
    Second, rerandomization can also be combined with coarse or fine stratification, which has been explored recently by \citet{WWL2021}, \citet{Krieger2023} and \citet{cytrynbaum2024finely}. 
\end{remark}}

From Theorem \ref{thm:opt_bcr} and the discussion before, 
a proper choice of $T$ such that Conditions \ref{cond:iterations} and \ref{cond:T_opt} hold is crucial for designing the optimal best-choice rerandomization. 
On the one hand, $T$ should be small in the sense that $T= o(\Delta_n^{-1})$ to ensure the asymptotic approximation; 
on the other hand, $T$ should be large in the sense that $T= \exp(cK)$ with $c\rightarrow  \infty$ to ensure the optimal efficiency. 
Below we investigate under what conditions such a choice of $T$ exists. 
We summarize the results in the following theorem. 

\begin{theorem}\label{thm:opt_rerand}
    Under the best-choice rerandomization, 
    assume Condition \ref{cond:gamma} and $\limsup_{n\to \infty} R^2 < 1$.  
    \begin{itemize}
		\item[(i)] If and only if $\log(\Delta_n^{-1}) / K \to \infty$, then there exists a sequence of $T$ such that Conditions \ref{cond:iterations} and \ref{cond:T_opt} hold, under which the best-choice rerandomization achieves its optimal efficiency with the asymptotic Gaussian approximation in \eqref{eq:asym_Gauss}. 
		\item[(ii)] If $\limsup_{n \to \infty} \log(\Delta_n^{-1}) / K < \infty$, then for any sequence of $T_n$ such that Condition \ref{cond:iterations} and consequently the asymptotic approximation in \eqref{eq:bound_diff_tilde} hold, $\liminf_{n\rightarrow \infty} v_{K, T} > 0$;
		\item[(iii)] If $\liminf_{n \to \infty} \log(\Delta_n^{-1}) / K > 0$, then there exists a sequence of $T$ such that Conditions \ref{cond:iterations} and consequently the asymptotic approximation in \eqref{eq:bound_diff_tilde} hold, under which $\limsup_{n\rightarrow \infty} v_{K, T} < 1$;
		\item[(iv)] If $\log(\Delta_n^{-1}) / K \to 0$, then for any sequence of $T$ such that Condition \ref{cond:iterations} and consequently the asymptotic approximation in \eqref{eq:bound_diff_tilde} hold, 
        $v_{K, T} \to 1$ as $n\to \infty$, under which the best-choice rerandomization loses efficiency gain compared to the CRE. 
	\end{itemize}
\end{theorem}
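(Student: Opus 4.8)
The plan is to reduce the entire statement to a bookkeeping exercise that couples Condition~\ref{cond:iterations} with the four-case dichotomy for $v_{K,T}$ established in Theorem~\ref{thm:v_kt}. The single observation that drives everything is that Condition~\ref{cond:iterations}, namely $T\Delta_n\to0$, caps the growth of $\log T$: writing $A_n \equiv \log(\Delta_n^{-1})$, the hypothesis $T\Delta_n\to0$ forces $T<\Delta_n^{-1}$ for all large $n$, hence $\log T \le A_n$ and therefore
\begin{align*}
\frac{\log T}{K} \le \frac{A_n}{K} = \frac{\log(\Delta_n^{-1})}{K}
\end{align*}
eventually. Since Condition~\ref{cond:gamma} gives $\Delta_n\to0$ (via $\Delta_n \le 174\gamma_n + 7\gamma_n^{1/3}$ and $\gamma_n\to0$), we also have $A_n\to\infty$, which is what makes the constructions below admissible. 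Because Theorem~\ref{thm:v_kt} shows that the limiting behavior of $v_{K,T}$ is governed entirely by $\log(T)/K$, the quantity $\log(\Delta_n^{-1})/K$ becomes the fundamental rate dictating what is achievable.

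For the two ``for any $T$'' parts, I would simply feed the cap into Theorem~\ref{thm:v_kt}. In part~(iv), if $\log(\Delta_n^{-1})/K\to0$ then the cap gives $\log(T)/K\to0$ for every $T$ satisfying Condition~\ref{cond:iterations}, so Theorem~\ref{thm:v_kt}(iv) yields $v_{K,T}\to1$; the vanishing variance reduction $(1-v_{K,T})R^2\to0$ from Corollary~\ref{cor:reduce_var} then gives the ``loses efficiency'' conclusion. In part~(ii), if $\limsup A_n/K<\infty$ then the cap gives $\limsup \log(T)/K<\infty$, and Theorem~\ref{thm:v_kt}(ii) delivers $\liminf v_{K,T}>0$.

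For the existence parts~(i) and (iii), I would exhibit an explicit $T$ that nearly saturates the cap, e.g.\ $T=\lceil \Delta_n^{-1/2}\rceil$ (equivalently $\log T\approx \tfrac12 A_n$). This choice satisfies Condition~\ref{cond:iterations} because $T\Delta_n \le \Delta_n^{1/2}+\Delta_n\to0$, while $\log(T)/K \ge A_n/(2K)$. In part~(i), where $A_n/K\to\infty$, the latter bound gives $\log(T)/K\to\infty$, so Condition~\ref{cond:T_opt} holds and Theorem~\ref{thm:opt_bcr} produces the Gaussian limit in \eqref{eq:asym_Gauss}. In part~(iii), where $\liminf A_n/K>0$, the same bound gives $\liminf \log(T)/K>0$, so Theorem~\ref{thm:v_kt}(iii) yields $\limsup v_{K,T}<1$ for this $T$. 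For the converse (``only if'') in part~(i), suppose some $T$ satisfies both Conditions~\ref{cond:iterations} and \ref{cond:T_opt}; then $\log(T)/K\to\infty$ together with the cap $\log(T)/K\le A_n/K$ forces $A_n/K\to\infty$, which is exactly $\log(\Delta_n^{-1})/K\to\infty$.

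I expect no single deep step here: once Theorem~\ref{thm:v_kt} is in hand, the argument is a translation between the index $T$ and the approximation rate $\Delta_n$ through the monotone map $T\mapsto \log T$. The only points requiring care are (a) confirming $\Delta_n\to0$ so that $A_n\to\infty$ and the rounded choice $T=\lceil\Delta_n^{-1/2}\rceil$ is a valid integer at least $1$, and (b) making sure the inequality $\log T\le \log(\Delta_n^{-1})$ coming from Condition~\ref{cond:iterations} is invoked only eventually, so that the limit-superior and limit-inferior statements transfer correctly across the cap. The genuine analytic content---the precise thresholds separating $v_{K,T}\to0$, $v_{K,T}\to1$, and the intermediate regimes---lives entirely in Theorem~\ref{thm:v_kt}, which this proof treats as a black box.
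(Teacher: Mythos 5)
Your proposal is correct and follows essentially the same route as the paper, whose own proof is just a one-line reference to the analogous argument in \citet[Proof of Theorem 6]{wang2022rerandomization} with $p_n^{-1}$ replaced by $T$: the entire content is the cap $\log T \le \log(\Delta_n^{-1})$ forced by Condition~\ref{cond:iterations}, the explicit choice $T=\lceil \Delta_n^{-1/2}\rceil$ for the existence parts, and the four-case dichotomy of Theorem~\ref{thm:v_kt} (plus Theorem~\ref{thm:opt_bcr} for the Gaussian limit in part (i)). Your added care about invoking the cap only eventually and checking $\Delta_n\to 0$ so that the rounded $T$ is admissible is exactly the right bookkeeping.
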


From Theorem \ref{thm:opt_rerand}, 
under our asymptotic theory, 
the feasibility of the optimal best-choice rerandomization depends crucially on whether 
the ratio between $\log(\Delta_n^{-1})$ and $K$ can diverge to infinity. This is similar to that for the first-type of rerandomization studied in \citet{wang2022rerandomization}; see Remark \ref{rmk:optimal}. 
From \citet[][Theorem 2]{wang2022rerandomization}, 
it is not difficult to see that a sufficient condition for $\log(\Delta_n^{-1}) \gg K$ is $\log(\gamma_n^{-1}) \gg K$. 
To get more intuition, 
similar to \citet{wang2022rerandomization}, 
we consider the asymptotic rate of $\gamma_n$ assuming that the $n$ experimental units are i.i.d.\ samples from a certain superpopulation. 
Specifically, 
we invoke the following regularity condition for the sequence of superpopulations that generate the finite populations. 
Recall that $\bs{u}_i \equiv (r_0 Y_i(1) + r_1 Y_i(0), \bs{X}_i^\top)^\top$ for $1\le i \le n$. 
\begin{condition}\label{cond:superpopulation}
    Each finite population consists of i.i.d.\ $\bs{u}_i$'s from some superpopulation distribution, with $\bs{\xi}_i \equiv \Cov(\bs{u}_i)^{-1/2} (\bs{u}_i - \E \bs{\mu}_i)$ being the corresponding standardized vector. 
    Moreover, 
    for some constant $\delta>2$ and all $n$, 
    the inner product of the standardized vector and any constant unit vector has its $\delta$-th absolute moment uniformly bounded by an absolute constant, i.e., 
    $\sup_{\bs{v}\in \mathbb{R}^{K+1}: \bs{v}^\top \bs{v} = 1}\E|\bs{v}^\top \bs{\xi}_i|^\delta = O(1)$. 
\end{condition}

From \citet[][Proposition F.1]{LeiD2020highdim}, a sufficient condition for Condition \ref{cond:superpopulation} is that all coordinates of $\bs{\xi}_i$ are mutually independent and their $\delta$-th absolute moment is uniformly bounded. 
From \citet{wang2022rerandomization}, under Condition \ref{cond:superpopulation}, we have 
\begin{align}\label{eq:bound_gamma_n}
    2^{-3/2} \frac{1}{\sqrt{r_1r_0}} 
    \frac{(K+1)^{7/4}}{n^{1/2}}
    \le 
    \gamma_n 
    = 
    O_{\Pr} \left( \frac{1}{\sqrt{r_1r_0}} 
    \frac{(K+1)^{7/4}}{n^{1/2-1/\delta}} \right). 
\end{align}
In \eqref{eq:bound_gamma_n}, the lower bound of $\gamma_n$ always hold regardless of Condition \ref{cond:superpopulation}, 
and it implies that the upper bound in \eqref{eq:bound_gamma_n} is precise up to an $n^{1/\delta}$ factor. 
Suppose that both $r_1$ and $r_0$ are strictly bounded away from zero and one, which is reasonable in practice so that there is nonnegligible proportions of units receiving both treatment arms. 
Below we will consider three cases for the rate of $K$ such that Condition \ref{cond:gamma} holds with high probability, and the resulting ``largest'' choice of $T$ 
(ignoring subpolynomial terms) such that Condition \ref{cond:iterations}, as well as the asymptotic approximation, for rerandomization holds with high probability. 
Let $\kappa=1/3$ (or $1$ if the conjecture in \citet{R15} holds). 
\begin{enumerate}[label=(\roman*)]
    \item[(i)] [$K = o(\log n)$.] In this case, 
    $\gamma_n = o_{\Pr}(n^{-(1/2-1/\delta)} (\log n)^{7/4})$,
    and we can choose $T \asymp n^{\beta}$ with $0 < \beta <\kappa/2 - \kappa/\delta$. 
    Consequently, $v_{K,T}=o(1)$ as implied by Theorem \ref{thm:v_kt}(i), and the best-choice rerandomization achieves the ideal optimal efficiency. 

    \item[(ii)] [$K \asymp \log n$.] In this case, $\gamma_n = O_{\Pr}(n^{-(1/2-1/\delta)} (\log n)^{7/4})$, and we can choose $T \asymp n^{\beta}$ with $0 < \beta < \kappa/2 - \kappa/\delta$. 
    Consequently, $0 < \liminf_{n\rightarrow \infty} v_{K,T} \le \limsup_{n\rightarrow \infty}v_{K,T} < 1$ as implied by Theorem \ref{thm:v_kt}(ii) and (iii), and the best-choice rerandomization has nonnegligible gain over the complete randomization, although there is still a gap from the ideally optimal precision. 

    \item[(iii)] [$K \asymp n^{\zeta}$ with $\zeta \in (0, 2/7-4/(7\delta))$.] 
    In this case, $\gamma_n = O_{\Pr}(n^{-(1/2-1/\delta-7\zeta/4)})$, and we can choose $T \asymp n^{\beta}$ with $0 < \beta < \kappa/2 - \kappa/\delta - 7\zeta \kappa/4 $. 
    Consequently, $v_{K,T} = 1 - o(1)$ as implied by Theorem \ref{thm:v_kt}(iv), and the best-choice rerandomization provides no gain over the 
    CRE. 
\end{enumerate}

The above theoretical results suggest that we should use at most $O(\log n)$ number of covariates in rerandomization. 
Importantly, we should use those covariates relevant for the potential outcomes (measured by the corresponding $R^2$ as in \eqref{eq:R2}).  
In addition, we can try multiple complete randomizations with the number of tries {\rev $T$} increasing polynomially with the sample size, in order to maximize the efficiency gain while maintaining the robustness; {\rev see also the next subsection for more discussion on the choice of $T$.}

\begin{remark}\label{rmk:optimal}
The conditions for achieving the optimal precision under both types of rerandomization are actually equivalent when $p=1/T$, where $p$ denotes the approximate acceptance probability for the first type and $T$ denote the number of tried complete randomizations for the second type. 
This is actually a direct consequence of the similarity discussed in Remarks \ref{rmk:first_type} and \ref{rmk:LKT_LKa}. 
\end{remark}

\subsection{\rev Guidance on the choice of $T$}\label{sec:prac}

{\rev 
From Section \ref{sec:opt_bcr_subsec}, when we increase the number of tried randomizations $T$ properly with the sample size, the best-choice rerandomization can achieve the ideal precision that one can expect even with perfectly balanced covariates.
Nevertheless, we emphasize that we generally do not want $T$ to be too large at any given finite sample size. 
Note that setting $T = \infty$ leads to an ``optimal'' design that uses only assignments minimizing the covariate imbalance and is thus almost deterministic in the presence of continuous covariates, which has been recommended by, e.g., \citet{Kasy2016} from a decision-theoretic perspective. 
Below we review and discuss arguments against the choice of a too large $T$, which also shed light on the choice of $T$ in practice. 

First, \citet{Banerjee20} recently studied the choice of $T$ in the best-choice rerandomization from an
ambiguity averse decision making perspective, 
and showed that rerandomization can involve a robustness loss of order $\log(T)/n$. In particular, they show that rerandomization with $K$ increasing exponentially with the sample size can lead to nonvanishing robustness loss. 
They therefore recommend $K\le n$, ensuring that the loss is at most of order $\sqrt{\log(N)/N}$. 
See also \citet[][Section VB]{Banerjee20} for a more detailed comparison of deterministic and randomized designs, including contexts in which each may be preferable.

Second, in our design-based inference, we require $T$ to not increase too fast with the sample size, in order for our asymptotic theory to work. 
In particular, our theory generally requires $T$ to increase at most polynomially with the sample size, which is more stringent than the requirement in \citet{Banerjee20} for a vanishing robustness loss. 
We do want to emphasize that our requirement on the growing rate of $T$ is only sufficient for the asymptotic design-based theory of rerandomization. 
Whether it is possible to relax such a requirement 
is still an open question. 
Nevertheless, our theory still shows that, when $T$ increases properly with the sample size, the best-choice rerandomization can still achieve the ideal precision we expect even with  perfectly balanced covariates as when $T=\infty$, while maintaining its design-based robustness property. 
Relatedly, \citet{morgan2012rerandomization} has proposed Fisher randomization test for sharp null hypotheses as an alternative design-based approach for rerandomization. 
If $T$ is too large or $T=\infty$, the number of possible randomizations may be too limited, rendering the randomization test powerless.

Third, we can follow \citet{wang2022rerandomization} to conduct the worst-case analysis for the best-choice rerandomization at any finite sample size. 
Specifically, consider any design $\mathcal{D}$ that assigns $r_1$ proportion of units into treatment and $r_0 = 1-r_1$ proportion of units into control. 
As shown in \citet[][Proposition A1]{wang2022rerandomization}, 
the worst-case bias and root mean squared error of the difference-in-means estimator $\hat{\tau}$ under the design $\mathcal{D}$, standardized by the corresponding root mean squared error (or equivalently standard deviation) under the CRE, has the following equivalent forms: 
\begin{align}
		\max_{\bs{y} \ne \bs{0}}V_{\tau\tau}^{-1/2} \left| \E_{\mathcal{D}}(\hat{\tau} - \tau) \right|
		& = 
		\sqrt{\frac{n-1}{n r_1 r_0}} \cdot \left\| \bs{\pi} - r_1 \bs{1}_n \right\|_2 \ge 0, 
        \nonumber
		\\
		\label{eq:max_mse}
		\max_{\bs{y} \ne \bs{0}}V_{\tau\tau}^{-1/2}  \sqrt{\E_{\mathcal{D}}\{(\hat{\tau} - \tau)^2\} }
		& = 
		\sqrt{\frac{n-1}{n r_1 r_0}} \cdot \lambda_{\max}^{1/2}\left(  \bs{\Omega} + (\bs{\pi}-r_1 \bs{1}_n) (\bs{\pi}-r_1 \bs{1}_n)^\top \right) \ge 1, 
	\end{align}
where $V_{\tau\tau}$ is defined as in \eqref{eq:V}, $\bs{y} = (y_1, \ldots, y_n)^\top$ with $y_i = r_0 \{ Y_i(1) - \bar{Y}(1) \} + r_1 \{ Y_i(0) - \bar{Y}(0) \}$ being a weighted average of unit $i$'s centered potential outcomes, 
$\bs{\pi} \equiv \E_{\mathcal{D}} (\bs{Z})$ and $\bs{\Omega} \equiv \Cov_{\mathcal{D}}(\bs{Z})$ denote the mean and covariance matrix of the treatment assignment vector under the design $\mathcal{D}$, and $\lambda_{\max}(\cdot)$ denotes the largest eigenvalue of a matrix.  
These then enable us to perform finite-sample diagnosis for various designs, including the best-choice rerandomization with various $T$ and covariates; see the simulations in Section \ref{sec:numeric}. 
In particular, from the simulation in Section \ref{sec:star_main} with details in the supplementary material, the worst-case mean squared error for rerandomization tends to increase with both number of tries $T$ and number of covariates $K$. As a side note, trimming the extreme values of covariates turns out to be quite effective for reducing the worst-case mean squared error.
See Appendices A1 and A2 for more discussion and numerical illustration on practical guidance for designing rerandomization, including the choice of both $T$ and $K$.

Fourth, in general, letting $T=\infty$ or finding the assignments with the minimum covariate imbalance is computationally challenging. 
The theoretical results discussed above suggest that a proper choice of $T$ not only reduces the computational burden, but also ensures the robustness and efficiency of the design. 
In addition, \citet{Kasy2016} has also suggested the best-choice rerandomization with a finite $T$, say $T=500$, as a practical way to implement the design that tries to minimize the covariate imbalance. 

}

\section{Statistical inference under the best-choice rerandomization}\label{sec:inf}

We now consider the statistical inference for the average treatment effect $\tau$ under the best-choice rerandomization. 
From the asymptotic approximation in Theorem \ref{thm:asymp} and \eqref{eq:asym_equiv}, 
it suffices to estimate $V_{\tau\tau}$ and $R^2$. 
By their definition in \eqref{eq:V} and \eqref{eq:R2}, we essentially need to estimate the finite population variances of potential outcomes and their linear projections on covariates. 
We estimate these quantities by their sample analogues. 
Specifically, for $z = 0, 1$, let $s^2_{z}$ be
the sample variance of the observed outcomes for units receiving treatment arm $z$, 
and $s_{z\bs{x}} = s_{\bs{x}z}^\top$ be the corresponding sample covariance between observed outcomes and covariates.
We further define 
$s^2_{z\mid \bs{x}} = s_{z\bs{x}} (\bs{S}_{\bs{x}}^2)^{-1} s_{\bs{x}, z}$
as a sample analogue of $S^2_{z\mid \bs{x}}$, 
and 
$s^2_{\tau \mid \bs{x}} = (s_{1\bs{x}} - s_{0\bs{x}}) (\bs{S}_{\bs{x}}^2)^{-1} (s_{\bs{x}1} - s_{\bs{x}0})$ as a sample analogue of $S^2_{\tau \mid \bs{x}}$. 
Let $s^2_{z\setminus \bs{x}} = s^2_z - s^2_{z\mid \bs{x}}$ for $z=0,1$. 
We can then estimate $V_{\tau\tau}$ and $R^2$ by: 
\begin{align}\label{eq:VR2_hat}
	\hat{V}_{\tau\tau} = n_1^{-1} s_{1}^2 + n_0^{-1} s_{0}^2 - n^{-1} s_{\tau \mid \bs{x}}^2, 
	\quad 
	\hat{R}^2 =  
 1 - 
	\hat{V}_{\tau\tau}^{-1} \big(n_1^{-1} s_{1 \setminus \bs{x}}^2 + n_0^{-1} s_{0 \setminus \bs{x}}^2 \big),
\end{align}
By the asymptotic approximation in Theorem \ref{thm:asymp} and \eqref{eq:asym_equiv}, 
for any $\alpha\in (0, 1)$, 
we then propose the following $1-\alpha$ confidence interval for the average treatment effect $\tau$:
\begin{align}\label{eq:C_hat}
\hat{\mathcal{C}}_{\alpha} =  \big[\hat\tau_{(1)} - \hat{V}_{\tau\tau}^{1/2} \cdot \nu_{1-\alpha / 2, K, T} (\hat{R}^2), \ \ \hat\tau_{(1)} + \hat{V}_{\tau\tau}^{1/2} \cdot \nu_{1-\alpha/2, K, T} (\hat{R}^2)\big].
\end{align}
Importantly, the quantile $\nu_{1-\alpha / 2, K, T} (\cdot)$ can be efficiently estimated by the Monte Carlo method using the representation in Proposition \ref{prop:L_KT}.

As demonstrated in the theorem below, 
under certain regularity conditions, we can derive the probability limits of the estimators in \eqref{eq:VR2_hat} and prove the asymptotic validity of the confidence interval in \eqref{eq:C_hat}. 
Let $S^2_{\tau\setminus \bs{x}} = S^2_{\tau} - S^2_{\tau\mid \bs{x}}$ denote the finite population variance of the residuals from the linear projection of individual effects on covariates. 
We then invoke the following 
condition.
\begin{condition}\label{cond:infer}
	As $n\rightarrow \infty$, 
	\begin{align}\label{eq:infer_cond}
		\frac{\max_{z\in \{0,1\}}\max_{1\le i \le n}\{Y_i(z) - \bar{Y}(z)\}^2}{r_0 S^2_{1\setminus \bs{x}} + r_1 S^2_{0\setminus \bs{x}}}
		\cdot 
		\frac{\max\{K, 1\}}{r_1r_0}
		\cdot 
		\sqrt{ \frac{\max\{1, \log K, \log T\} }{n} } 
		\rightarrow 0. 
	\end{align}
\end{condition}

\begin{theorem}\label{thm:inf}
	Under the best-choice rerandomization and Conditions~\ref{cond:gamma},~\ref{cond:iterations} and~\ref{cond:infer}, 
	\begin{itemize}
		\item[(i)] the estimators in \eqref{eq:VR2_hat} have the following probablility limits:
		\begin{align*}
        \max\big\{ 
        |\hat{V}_{\tau\tau}(1-\hat{R}^2) - 
        V_{\tau\tau}(1-R^2) - S^2_{\tau \setminus \bs{x}}/n|, 
        \ 
        |\hat{V}_{\tau\tau} \hat{R}^2 - 
        V_{\tau\tau} R^2|
        \big\}
        = 
        o_{\pr}\big( V_{\tau\tau}(1-R^2) + S^2_{\tau\setminus \bs{x}} /n \big); 
		\end{align*}
		\item[(ii)] for any $\alpha\in (0,1)$, 
		the $1-\alpha$ confidence interval in \eqref{eq:C_hat} is asymptotically conservative,  
		in the sense that
		$
		\liminf_{n\rightarrow \infty}
		\pr
		(
		\tau \in \hat{\mathcal{C}}_{\alpha})
		\ge 
		1 - \alpha; 
		$
		\item[(iii)] if further $S^2_{\tau \setminus \bs{x}} = n V_{\tau\tau}(1-R^2)\cdot o(1),$ the $1-\alpha$ confidence interval in \eqref{eq:C_hat} becomes asymptotically exact, 
		in the sense that 
		$
		\lim_{n\rightarrow \infty}
		\pr
		(
		\tau \in \hat{\mathcal{C}}_{\alpha})
		= 
		1 - \alpha. 
		$
	\end{itemize}
\end{theorem}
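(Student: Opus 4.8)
The strategy is to establish the estimator limits of part~(i) first, since parts~(ii) and~(iii) then follow by feeding these limits into the asymptotic distribution supplied by Theorems~\ref{thm:asymp} and~\ref{thm:asym_equiv}. I would open part~(i) with a purely algebraic reduction. Substituting $s^2_{z\setminus\bs{x}} = s^2_z - s^2_{z\mid\bs{x}}$ into \eqref{eq:VR2_hat} gives the clean forms $\hat{V}_{\tau\tau}(1-\hat{R}^2) = n_1^{-1} s^2_{1\setminus\bs{x}} + n_0^{-1} s^2_{0\setminus\bs{x}}$ and $\hat{V}_{\tau\tau}\hat{R}^2 = n_1^{-1} s^2_{1\mid\bs{x}} + n_0^{-1} s^2_{0\mid\bs{x}} - n^{-1} s^2_{\tau\mid\bs{x}}$. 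On the population side, expanding $V_{\tau\tau}(1-R^2) = V_{\tau\tau} - V_{\tau\tau}R^2$ through \eqref{eq:V} and \eqref{eq:R2} yields $V_{\tau\tau}(1-R^2) = n_1^{-1} S^2_{1\setminus\bs{x}} + n_0^{-1} S^2_{0\setminus\bs{x}} - n^{-1} S^2_{\tau\setminus\bs{x}}$, so the first target collapses to $V_{\tau\tau}(1-R^2) + S^2_{\tau\setminus\bs{x}}/n = n_1^{-1} S^2_{1\setminus\bs{x}} + n_0^{-1} S^2_{0\setminus\bs{x}}$ and the second to $V_{\tau\tau}R^2 = n_1^{-1} S^2_{1\mid\bs{x}} + n_0^{-1} S^2_{0\mid\bs{x}} - n^{-1} S^2_{\tau\mid\bs{x}}$. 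Hence part~(i) reduces entirely to the relative consistency of each within-group sample second moment for its finite-population analogue, at the rate dictated by Condition~\ref{cond:infer}.

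Establishing this convergence under the best-choice rerandomization with a possibly diverging $K$ is the step I expect to be the main obstacle. My approach is to whiten the covariates via $\tilde{\bs{x}}_i = (\bs{S}^2_{\bs{x}})^{-1/2}(\bs{x}_i - \bar{\bs{x}})$, so that every projection variance becomes the squared $\ell_2$-norm of a $K$-dimensional sample covariance vector $s_{z,\tilde{\bs{x}}}$, and the only random ingredients left to control are $s^2_z$ and $s_{z,\tilde{\bs{x}}}$. For a single CRE draw these are simple-random-sample moments, so a finite-population (sampling-without-replacement) concentration inequality yields a sub-Gaussian tail bound for each coordinate, with scale governed by $\max_z\max_i\{Y_i(z)-\bar{Y}(z)\}^2$. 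The decisive step is to transfer this from a single CRE to the selected assignment $\bs{Z}_{(1)}$: since $\bs{Z}_{(1)}$ is one of the $T$ independent CRE draws, the deviation it produces is dominated by the maximum over all $T$ draws, and a union bound over the $T$ draws and the $K$ coordinates inflates the single-draw rate by $\sqrt{\log T}$ and $\sqrt{\log K}$, respectively. Writing $s_{z,\tilde{\bs{x}}} = S_{z\tilde{\bs{x}}} + \bs{e}$, the norm deviation $2\langle S_{z\tilde{\bs{x}}},\bs{e}\rangle + \|\bs{e}\|_2^2$ splits into a linear and a quadratic piece whose worst-case sizes over the $T$ draws are governed by exactly the product in Condition~\ref{cond:infer}: the $K/(r_1 r_0)$ factor absorbs the dimension of $\bs{e}$, the outcome-boundedness ratio rescales by the residual variance $r_0 S^2_{1\setminus\bs{x}} + r_1 S^2_{0\setminus\bs{x}}$, and the $\sqrt{\max\{1,\log K,\log T\}/n}$ factor is the union-bounded concentration rate. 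Condition~\ref{cond:infer} tending to zero thus forces all deviations to be $o_{\Pr}$ of the stated scale, giving part~(i).

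For parts~(ii) and~(iii), I would invoke Theorems~\ref{thm:asymp} and~\ref{thm:asym_equiv}, under which $\hat\tau_{(1)}-\tau$ is asymptotically distributed as $\sqrt{V_{\tau\tau}(1-R^2)}\,\varepsilon_0 + \sqrt{V_{\tau\tau}R^2}\,L_{K,T}$. By the definition of $\nu_{1-\alpha/2,K,T}$, the half-width $\hat{V}_{\tau\tau}^{1/2}\nu_{1-\alpha/2,K,T}(\hat{R}^2)$ is precisely the $(1-\alpha/2)$-quantile of $\sqrt{\hat{V}_{\tau\tau}(1-\hat{R}^2)}\,\varepsilon_0 + \sqrt{\hat{V}_{\tau\tau}\hat{R}^2}\,L_{K,T}$. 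Part~(i) shows the coefficient on $L_{K,T}$ matches the truth, while the Gaussian coefficient $\hat{V}_{\tau\tau}(1-\hat{R}^2)$ exceeds $V_{\tau\tau}(1-R^2)$ by the nonnegative amount $S^2_{\tau\setminus\bs{x}}/n$. Inflating only the Gaussian component is equivalent to convolving the limiting law with an independent centered Gaussian, which by the symmetric unimodality in Corollary~\ref{cor:sum} (a peakedness argument: $\Pr(|X+\eta|\le c)\le\Pr(|X|\le c)$ for symmetric unimodal $X$ and independent symmetric $\eta$) can only enlarge every symmetric quantile; the half-width therefore overshoots its exact value and gives $\liminf_{n\to\infty}\Pr(\tau\in\hat{\mathcal{C}}_\alpha)\ge 1-\alpha$, proving part~(ii). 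For part~(iii), the extra assumption $S^2_{\tau\setminus\bs{x}} = nV_{\tau\tau}(1-R^2)\cdot o(1)$ renders this Gaussian-component inflation asymptotically negligible relative to $V_{\tau\tau}(1-R^2)$, so the quantile used in $\hat{\mathcal{C}}_\alpha$ converges to the true $(1-\alpha/2)$-quantile of the limiting law and the coverage converges to the nominal $1-\alpha$.
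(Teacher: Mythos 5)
Your proposal follows essentially the same route as the paper's proof: for part (i), the algebraic reduction to within-group sample second moments, finite-population concentration for a single complete randomization, and a union bound over the $T$ tried assignments (which is exactly where the $\log T$ and $\log K$ factors of Condition \ref{cond:infer} enter); for parts (ii) and (iii), the asymptotic distribution from Theorems \ref{thm:asymp}--\ref{thm:asym_equiv} combined with the peakedness/unimodality comparison showing that inflating only the Gaussian component enlarges every symmetric quantile. The one detail you gloss over is the quantile-stability step (the paper's Lemma \ref{lemma:non_Gaussian_quantile}), which converts the $o_{\pr}$ consistency of the plug-in coefficients into convergence of the random estimated quantile itself before the peakedness argument is applied, but this is a routine refinement of the same strategy.
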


Below we give several remarks regarding Theorem \ref{thm:inf}. 
First, by the same logic as Corollary \ref{cor:reduce_qr}, the confidence interval $\hat{\mathcal{C}}_{\alpha}$ is always shorter than or equal to \citet{Neyman:1923}'s one for the CRE, while still being asymptotically conservative. This demonstrates the gain in inference from rerandomization.

Second, in addition to Conditions \ref{cond:gamma} and \ref{cond:iterations}, 
the large-sample inference in 
Theorem \ref{thm:inf} 
requires additionally Condition \ref{cond:infer}. 
From \citet[][Corollary 2(iii)]{wang2022rerandomization}, 
this additional condition can be guaranteed by moderate conditions on the moments of the potential outcomes. 

Third, Theorem \ref{thm:inf}(i) implies that the estimators in \eqref{eq:VR2_hat} are consistent for their population analogues. 
Intuitively, $\hat{V}_{\tau\tau}\hat{R}^2$ is consistent for $V_{\tau\tau}R^2$, while $\hat{V}_{\tau\tau}$ itself is only conservative for $V$, due to the individual treatment effect heterogeneity $S^2_{\tau\setminus\bs{x}}$ that cannot be linearly explained by the covariates. 
This is a feature of the finite population inference \citep{Neyman:1923}. 

Fourth, Theorem \ref{thm:inf}(ii) shows the asymptotic validity of the confidence interval in \eqref{eq:C_hat}. 
The confidence intervals are generally conservative, due to the conservativeness in estimating $V_{\tau\tau}$ as discussed before. 
From Theorem \ref{thm:inf}(iii), 
when the effect heterogeneity $S^2_{\tau \setminus \bs{x}}$ is asymptotically negligible, the confidence intervals become asymptotically exact. 

Fifth, $s^2_{1\setminus \bs{x}}$ and $s^2_{0\setminus \bs{x}}$ are almost equivalent to the sample variances of the residuals from the linear regression of observed outcomes on covariates in treatment and control groups, respectively. Similar to \citet{LeiD2020highdim} and \citet{wang2022rerandomization}, we can consider rescaling the residuals as in usual regression analysis \citep{mackinnon2013thirty} to improve its finite sample performance; see the simulation studies in Section \ref{sec:star_main}.

If additionally Condition \ref{cond:T_opt} holds and $\limsup_{n\rightarrow\infty} R^2 < 1$, then, from Theorem \ref{thm:opt_bcr}, the difference-in-means estimator under 
rerandomization will become asymptotically Gaussian distributed. 
In this case, 
we can also
use the Wald-type confidence intervals. 
Let $z_{\alpha}$ denote the $\alpha$th quantile of the standard Gaussian distribution. 

\begin{theorem}\label{thm:inf_gaussian}
    Under the best-choice rerandomization, 
    if Conditions \ref{cond:gamma}, \ref{cond:iterations}, \ref{cond:T_opt} and \ref{cond:infer} hold, 
    and $\limsup_{n\rightarrow\infty} R^2 < 1$, 
    then 
    Theorem~\ref{thm:inf} still holds with $\hat{\mathcal{C}}_{\alpha}$ replaced by 
    the Wald-type confidence interval
    $\tilde{\cC}_{\alpha} = 
    \hat{\tau}_{(1)} \pm \hat{V}_{\tau\tau}^{1/2} (1 - \hat{R}^2)^{1/2} \cdot z_{1 - \alpha / 2}.
    $ 
\end{theorem}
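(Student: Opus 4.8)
The plan is to reduce the coverage of the Wald interval $\tilde{\cC}_{\alpha}$ to a studentized Gaussian statement, using as inputs the Gaussian limit already proved in Theorem~\ref{thm:opt_bcr} and the scale consistency from Theorem~\ref{thm:inf}(i). Part (i) of Theorem~\ref{thm:inf} concerns only the estimators $\hat{V}_{\tau\tau}$ and $\hat{R}^2$ in \eqref{eq:VR2_hat}, which do not depend on the choice of confidence interval; since the conditions of Theorem~\ref{thm:inf} (in particular Condition~\ref{cond:infer}) remain in force here, part (i) is inherited verbatim. The genuinely new content is the coverage of $\tilde{\cC}_{\alpha}$, i.e., the analogues of parts (ii) and (iii).

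I would first record the two key facts. From Theorem~\ref{thm:opt_bcr} and \eqref{eq:asym_Gauss}, rescaling the argument by the positive constant $\{1-R^2\}^{1/2}$ gives that the standardized estimator $W_n \equiv (\hat{\tau}_{(1)} - \tau)/\{V_{\tau\tau}(1-R^2)\}^{1/2}$ satisfies $\sup_{c\in\R}|\pr(W_n \le c) - \Phi(c)| \to 0$, where $\Phi$ is the standard Gaussian CDF; the assumption $\limsup R^2 < 1$ ensures the standardizing scale $V_{\tau\tau}(1-R^2)$ is bounded away from zero so this rescaling is legitimate. From Theorem~\ref{thm:inf}(i), the squared half-width scale of $\tilde{\cC}_{\alpha}$ obeys $\hat{V}_{\tau\tau}(1-\hat{R}^2) = V_{\tau\tau}(1-R^2) + S^2_{\tau\setminus \bs{x}}/n + o_{\pr}(V_{\tau\tau}(1-R^2) + S^2_{\tau\setminus \bs{x}}/n)$. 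Setting $r_n^2 \equiv \hat{V}_{\tau\tau}(1-\hat{R}^2)/\{V_{\tau\tau}(1-R^2)\}$ and using $S^2_{\tau\setminus\bs{x}} \ge 0$, this yields $r_n^2 \ge 1 - o_{\pr}(1)$ in general, and $r_n^2 = 1 + o_{\pr}(1)$ under the additional hypothesis $S^2_{\tau\setminus\bs{x}} = n V_{\tau\tau}(1-R^2)\cdot o(1)$ of part (iii).

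The coverage probability is $\pr(\tau \in \tilde{\cC}_{\alpha}) = \pr(|W_n| \le r_n z_{1-\alpha/2})$, and I would finish by a squeeze argument. For any $\delta>0$, the coverage event contains $\{|W_n| \le (1-\delta) z_{1-\alpha/2}\} \cap \{r_n \ge 1-\delta\}$; since $r_n^2 \ge 1 - o_{\pr}(1)$ forces $\pr(r_n < 1-\delta) \to 0$, and uniform Gaussian convergence of $W_n$ gives $\pr(|W_n| \le (1-\delta)z_{1-\alpha/2}) \to 2\Phi((1-\delta)z_{1-\alpha/2}) - 1$, letting $\delta \downarrow 0$ delivers $\liminf_n \pr(\tau \in \tilde{\cC}_{\alpha}) \ge 2\Phi(z_{1-\alpha/2}) - 1 = 1-\alpha$, which is part (ii). Under part (iii)'s extra condition, $r_n = 1 + o_{\pr}(1)$ lets me sandwich the coverage event between $\{|W_n| \le (1\mp\delta)z_{1-\alpha/2}\}$ (up to events of vanishing probability) and take $\delta\downarrow 0$, giving the exact limit $1-\alpha$.

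The main obstacle is that this is finite-population asymptotics, so the limiting Gaussian in \eqref{eq:asym_Gauss} carries an $n$-varying variance $1-R^2$ and the scale consistency in Theorem~\ref{thm:inf}(i) is only \emph{relative} (the $o_{\pr}$ is measured against $V_{\tau\tau}(1-R^2) + S^2_{\tau\setminus\bs{x}}/n$, not absolute). The care required is to route everything through the standardized pivot $W_n$, whose limit law is parameter-free ($\mathcal{N}(0,1)$) by the uniform-in-$c$ convergence, and through the dimensionless ratio $r_n$, so that Slutsky-type reasoning applies despite the moving target. In particular, the one-sided bound $r_n^2 \ge 1 - o_{\pr}(1)$, which is exactly what drives the conservativeness in part (ii), must be extracted from the relative-error statement using only $S^2_{\tau\setminus\bs{x}} \ge 0$.
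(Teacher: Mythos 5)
Your proof is correct, but it takes a different route from the paper's. The paper proves this theorem by re-running the quantile-comparison machinery of Theorem \ref{thm:inf}: it keeps the non-Gaussian approximation $\tilde{\theta}_n = \tilde{A}_n^{1/2}\varepsilon_0 + \tilde{B}_n^{1/2}L_{K,T}$ for the law of $\hat{\tau}_{(1)}-\tau$, introduces a new reference variable $\check{\theta}_n = \hat{V}_{\tau\tau}^{1/2}(1-\hat{R}^2)^{1/2}\varepsilon_0 + 0\cdot L_{K,T}$ whose $(1-\alpha/2)$th quantile is exactly the Wald half-width, and invokes a dedicated quantile-interlacing lemma (Lemma \ref{lemma:non_Gaussian_quantile_optimal}, which needs $L_{K,T}=o_{\Pr}(1)$ from Condition \ref{cond:T_opt} and $|\tilde{B}_n-\check{B}_n| = V_{\tau\tau}R^2 = O(A_n)$ from $\limsup R^2<1$) to show the two critical values are asymptotically interchangeable. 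You instead discard the non-Gaussian approximation entirely, route through the already-established Gaussian limit \eqref{eq:asym_Gauss} of Theorem \ref{thm:opt_bcr} to get a parameter-free pivot $W_n$, and finish with a direct Slutsky/squeeze argument on the studentized statistic. Both are valid; the paper's approach buys uniformity with the proof of Theorem \ref{thm:inf} (same lemmas, same skeleton), while yours is more elementary and self-contained given Theorem \ref{thm:opt_bcr}. The one place your route demands extra care is exactly the place you flag: since $S^2_{\tau\setminus\bs{x}}/n$ need not be $O(V_{\tau\tau}(1-R^2))$, the ratio $r_n^2=(1+\rho_n)(1+o_{\Pr}(1))$ with $\rho_n\ge 0$ possibly unbounded only yields the one-sided bound $\Pr(r_n^2<1-\epsilon)\to 0$, which is precisely what conservativeness in part (ii) requires, and your extraction of this bound using only $S^2_{\tau\setminus\bs{x}}\ge 0$ is sound; part (i) is indeed inherited verbatim since it concerns only the variance estimators.
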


In practice, we generally suggest using the confidence interval $\hat{\mathcal{C}}_{\alpha}$ in \eqref{eq:C_hat}, since it requires weaker regularity conditions.
In addition, when Condition \ref{cond:T_opt} holds and $\limsup_{n\rightarrow\infty} R^2 < 1$, the constrained Gaussian random variable $L_{K,T}$ is $o_{\Pr}(1)$, and, intuitively, the confidence interval $\hat{\mathcal{C}}_{\alpha}$ in \eqref{eq:C_hat} will be 
close 
to the Wald-type confidence interval $\tilde{\cC}_{\alpha}$.

{\rev 
\section{Regression adjustment under the best-choice rerandomization}\label{sec:regadj}

In this section, we study 
regression adjustment under the best-choice rerandomization. Suppose we observe covariate vector $\bs{w}_i \in \R^J$, for $1\le i \le n$, when analyzing the experiment, where $J$ detnotes the dimension of the covariate vector.
Similar to Sections \ref{sec:diff_out_cov_cre} and \ref{sec:inf}, 
let $\bs{S}_{\bs{w}}^2$ and $\bs{S}_{\bs{w}z}$
denote the finite population covariances for covariates and potential outcomes, and $\bs{s}_{\bs{w}z}$ denote the sample covariance between covariates and observed outcomes for units under treatment arm $z$, for $z=0,1$.
We consider the following linearly regression-adjusted estimator: 
\begin{equation}\label{eq:taureg}
\hat{\tau}_{(1)}(\hat{\bs{\beta}}_1, \hat{\bs{\beta}}_0) \equiv \frac{1}{n_1} \sum_{i=1}^n Z_{(1)i} \{Y_i - \hat{\bs\beta}_1^\top (\bs{w}_i - \bar{\bs{w}})\} - \frac{1}{n_0} \sum_{i=1}^n (1 - Z_{(1)i}) \{Y_i - \hat{\bs\beta}_0^\top (\bs{w}_i - \bar{\bs{w}})\},
\end{equation}
In \eqref{eq:taureg}, the subscripts $(1)$ emphasizes that the best-choice rerandomization uses the assignment with minimum covariate imbalance, and $\hat{\bs{\beta}}_z = (\bs{S}_{\bs{w}}^2)^{-1} \bs{s}_{\bs{w}z}$ is an estimated adjustment coefficient targeting for
\[
\tilde{\bs{\beta}}_z \equiv \argmin_{\bs{\beta}} \sum_{i=1}^n \{Y_i(z) - \bar{Y}(z) - \bs{\beta}^\top (\bs{w}_i - \bar{\bs{w}})\}^2 = (\bs{S}_{\bs{w}}^2)^{-1} \bs{S}_{\bs{w}z}, 
\quad (z=0,1)
\]
which is the preferred adjustment coefficient and enjoys certain optimality as suggested by~\citet{lin2013}, \citet{LD20reg} and \citet{wang2022rerandomization}. 

To facilitate the discussion on the asymptotic properties of the regression-adjusted estimator and the corresponding regularity conditions, we introduce some additional notation. 
Similar to ~\citet{wang2022rerandomization}, we introduce the residual potential outcomes $\tilde{e}_i(z) \equiv Y_i(z) - \bar{Y}(z) - \tilde{\bs{\beta}}_z^\top (\bs{w}_i - \bar{\bs{w}})$ for $1\le i \le n$ and $z=0,1$, and define $\tilde{\Delta}_n, \tilde{\gamma}_n$ and $\tilde{R}^2$ the same as $\Delta_n, \gamma_n$ and $R^2$ in \eqref{eq:R2}--\eqref{eq:gamma}, but with $Y_i(z)$ replaced by $\tilde{e}_i(z)$. 
Define further $\rho^2$ the same as $R^2$, but with $\bs{x}_i$ replaced by $\bs{w}_i$. 
We invoke the following regularity conditions.

\begin{condition}\label{cond:regrem_Delta}
    Conditions \ref{cond:gamma} and \ref{cond:iterations} hold with 
     $\gamma_n$ and $\Delta_n$ replaced by $\tilde{\gamma}_n$ and $\tilde{\Delta}_n$. 
\end{condition}

\begin{condition}\label{cond:regrem}
	As $n \to \infty$, 
	\begin{align*}
	    \frac{\max_{z \in \{0,1\}} \max_{1 \leq i \leq n} |Y_i(z) - \bar{Y}(z)|}{\sqrt{V_{\tau\tau} (1-\rho^2) \{ 1-\tilde{R}^2\}}}  \cdot J \cdot \frac{\max\{1, \ \log J, \  \log T\}}{n r_1^2 r_0^2} \to 0.
	\end{align*}
\end{condition}

\begin{theorem}\label{thm:regrem}
	Under the best choice rerandomization and Conditions \ref{cond:regrem_Delta} and \ref{cond:regrem}, as $n \to \infty$,
		\begin{align*}%
		\sup_{c\in \mathbb{R}} \left| \Pr \left\{ 
		\frac{\hat{\tau}_{(1)}(\hat{\bs{\beta}}_1, \hat{\bs{\beta}}_0) - \tau}{
		\sqrt{V_{\tau\tau}(1-\rho^2)}}
		\le c \right\} 
		- \Pr\left\{ \sqrt{1-\tilde{R}^2}\ \varepsilon_0  + \sqrt{\tilde{R}^2} \ L_{K, T} \le c \right\}
		\right| 
		\rightarrow 0; 
	\end{align*}
	If further 
    Condition \ref{cond:T_opt} holds and  $\limsup_{n \to \infty} \tilde{R}^2 < 1$, then 
	\begin{align*}
		\sup_{c\in \mathbb{R}} \left| \Pr \left\{ 
		\frac{\hat{\tau}_{(1)}(\hat{\bs{\beta}}_1, \hat{\bs{\beta}}_0) - \tau}{
		\sqrt{V_{\tau\tau}(1-\rho^2)}}
		\le c \right\} 
		- \Pr\left\{ \sqrt{1-\tilde{R}^2}\ \varepsilon_0  \le c \right\}
		\right| 
		\rightarrow 0. 
	\end{align*}
\end{theorem}

Theorem \ref{thm:regrem} allows diverging numbers of covariates in both design and analysis. Importantly, when the covariates in analysis contain those in design, in the sense that  $\bs{x}_i$'s correspond to subvectors of $ \bs{w}_i$'s, $\tilde{R}^2$ reduces to zero, and the regression-adjusted estimator will always have asymptotic Gaussian
distribution. 
That is, $V_{\tau\tau}^{-1/2} \{ \hat{\tau}_{(1)}(\hat{\bs{\beta}}_1, \hat{\bs{\beta}}_0) - \tau \}$ is asymptotically Gaussian with mean zero and variance $1-\rho^2$. 
When $\bs{w}_i = \bs{x}_i$, this is the same as the difference-in-means estimator under the optimal rerandomization. 
However, when we can observe more covariate information in analysis, regression adjustment can provide substantial gain by adjusting the imbalance of these additional covariates. 
In addition, we can estimate $\rho^2$ and $\tilde{R}^2$ in a similar way as that in \eqref{eq:VR2_hat}, which can further lead to confidence intervals for the average treatment effect based on the regression-adjusted estimator.

}

\section{Numerical illustration}\label{sec:numeric}

\subsection{The Student Achievement and Retention Project}\label{sec:star_main}

To illustrate the performance of the best-choice rerandomization 
using various 
numbers of covariates and tried complete randomizations, 
we conduct a simulation using the Student Achievement and Retention Project \citep{angrist2009incentives} 
similarly 
as in \citet{wang2022rerandomization}. 
This also facilitates the comparison between the best-choice rerandomization and the first type of rerandomization 
studied there. 
To avoid the paper being too lengthy, 
we relegate the simulation details and results to the supplementary material.

\subsection{Mobile Banking in Bangladesh}\label{sec:mobile_bank}

We now illustrate the gain of the best-choice rerandomization using an actual rerandomized experiment recently conducted by 
\citet{Lee2021}, which aims to study the effect of mobile technology on reducing inequality through the modern ways of money transfer. 
Specifically, the experiment involves rural household-urban migrant pairs at two connected sites: Gaibandha district in Northwest Bangladesh and Dhaka District at the capital of Bangladesh, where each pair can be viewed as an experimental unit. 
\citet{Lee2021} randomized the pairs into treatment and control following the rerandomization procedure as described in \citet{Bruhn:2009}, 
and the treated pairs would receive training about how to sign up for and use the mobile banking service provided by bKash, which is the largest provider of such services in Bangladesh. 
The actual rerandomization 
uses imbalance measure other than the Mahalanobis distance. 
Nevertheless, we can use the data from this experiment to illustrate the potential gain of rerandomization over the complete randomization, and to compare rerandomization with other designs.  
Similar to \citet{Bai2022}, we include six pretreatment covariates for migrants: household size, age, gender, whether completed primary school, and total remittance in the past seven months and expenditure in the past month (in $1000$ taka) from a baseline survey right before the intervention, 
where missing values are imputed in the same way as in \citet{Bai2022}. 
We focus on two post-treatment outcomes: total remittance in the past seven months and expenditure in the past month from an endline survey one year after the intervention. 
To make the simulation realistic, we compute the average treatment effect estimate using difference-in-means, 
and 
pretend that the treatment effects are constant across units and equal to the average treatment effect estimate. 
In this way, we are able to impute all the missing potential outcomes from the observed ones.

\begin{table}[ht]
\centering
\caption{Standardized mean squared errors (MSEs) for the difference-in-means of outcomes, pretreatment covariates, and the worst-case scenario, under various designs, including the optimal matched pair design (Pair), the Gram–Schmidt walk design (GSW), the finite selection model (FSM), the best-choice rerandomization with $T=10^3$ (BCR1) and $T=10^4$ (BCR2), and the complete randomization (CRE). 
These MSEs are standardized by the corresponding true MSEs under the CRE.
The first two rows are for the potential outcomes imputed from the observed data, which show the precision for treatment effect estimation. The next six rows are for the six pretreatment covariates, which show the balance of these covariates. 
The last row shows the mean squared errors in the worst-case scenario over all possible configurations of potential outcomes. 
}
\label{tab:table_multi_design}
\begin{tabular}{lcccccc}
\toprule
                        & Pair & GSW & FSM & BCR1 & BCR2 & CRE \\
\midrule
Expenditure     & 0.561 & 0.899 & 0.730 & 0.582 & 0.569 & 1.002 \\
Remittances     & 0.757 & 0.901 & 0.676 & 0.865 & 0.863 & 0.998 \\
\midrule
Baseline expenditure    & 0.079 & 0.537 & 0.005 & 0.057 & 0.026 & 1.001 \\
Baseline remittances    & 0.077 & 0.053 & 0.003 & 0.057 & 0.026 & 1.000 \\
Household size          & 0.069 & 0.863 & 0.006 & 0.057 & 0.026 & 1.005 \\
Age                & 0.076 & 0.946 & 0.006 & 0.057 & 0.026 & 1.000 \\
Female             & 0.018 & 0.918 & 0.006 & 0.057 & 0.026 & 1.003 \\
Completed primary school & 0.020 & 0.981 & 0.004 & 0.057 & 0.026 & 1.000 \\
\midrule
Worst case              & 2.080 & 1.065 & 35.936 & 1.138 & 1.144 & 1.056 \\
\bottomrule
\end{tabular}
\end{table}

{\rev 
We consider the following designs for this experiment: 
(i) the optimal matched pair design \citep{GLSR2004,Bai2022}, (ii) the Gram–Schmidt walk design \citep{Harshaw24},  (iii) the finite selection model \citep{chattopadhyay2022balanced}, (iv) the best-choice rerandomization, and (v) the benchmark complete randomization. We randomly remove one unit so that we can assign exactly half of the units into treatment. 
For the optimal matched pair design, we construct the pairs  by minimizing the sum of Mahalanobis distances between the pairs using  non-bipartite matching \citep{beck2016nbpmatching}. 
For the Gram–Schmidt walk design, we set the parameter $\phi$ for balance–robustness trade-off to be $0.5$. 
For the best-choice rerandomization, we use the Mahalanobis distance to measure covariate imbalance and consider  two choices of $T$, $10^3$ and $10^4$, for the tried randomizations. 
We simulate $10^6$ assignments from each of these designs. 
Table \ref{tab:table_multi_design} summarizes the main simulation results. 
The first two rows show the standardized mean squared errors (MSEs) of the difference-in-means estimator for the two outcomes under these designs. 
The MSEs are standardized by the corresponding true MSE under the CRE; 
obviously, the standardized MSEs for the CRE are $1$ up to some Monte Carlo errors. 
The third to the eighth rows show the standardized MSEs of the difference-in-means of all the six pretreatment covariates with respect to $0$, which measure the covariate balance under these designs. 
The standardized MSEs under the best-choice rerandomization are the same for all covariates, consistent with the ``equal percent variance reducing'' property discussed in Section \ref{sec:improve}. 
The last row shows the worst-case standardized MSEs for these designs over all possible configurations of potential outcomes; see \eqref{eq:max_mse}.
Figure \ref{fig:hist_all} supplements Table \ref{tab:table_multi_design} with the distributions of the difference-in-means estimator for the two outcomes under these designs. 

From Table \ref{tab:table_multi_design}, 
the finite section model has the best covariate balance, followed by the matched pair design and the best-choice rerandomization. The covariate balance for the Gram–Schmidt walk design can be improved by choosing a smaller parameter $\phi$\footnote{We set $\phi=0.5$ here. \citet{Harshaw24} heuristically suggested setting $\phi$ to a value no less than $0.5$.}. 
In terms of the two outcomes imputed from the real data, the matched pair design has the best overall precision, followed by the finite section model and the best-choice rerandomization, which exhibit comparable performance.
Note that rerandomization has notably higher improvement for the expenditure outcome. 
The reason is that the pre-treatment covariates have stronger linear association with the potential expenditures, and the corresponding $R^2$ measure defined as in \eqref{eq:R2} is about $0.45$.  In contrast, the $R^2$ measure is about $0.15$ for the remittance outcome; we could potentially improve the estimation precision for the remittance outcome by including transformations and interactions of the basic covariates in the rerandomization design. 
Lastly, in terms of the worst-case performance over all possible potential outcome configurations, 
the CRE is best due to its robustness and minimax optimal property \citep{wang2022rerandomization}. 
The Gram–Schmidt walk design and the best-choice randomization are also quite robust, with the worst-case standardized MSEs close to $1$, while the matched pair design doubles the worst-case MSE compared to the CRE and thus can be less robust than the former two.
The finite selection model has significantly higher worst-case MSE, which may not be surprising given that it has much more balanced pretreatment covariates. 
From the above, 
the best-choice rerandomization has  performance comparable to other designs. 
Moreover, it can be further incorporated into other designs. 
For example, we can use rerandomization to further improve covariate balance 
for matched pairs 
and consequently the precision of treatment effect estimation \citep{WWL2021, Krieger2023, cytrynbaum2024finely}. }

\begin{figure}[htb]
	\centering
    \begin{subfigure}[htbp]{\textwidth}
        \centering
        \includegraphics[width=0.8\textwidth]{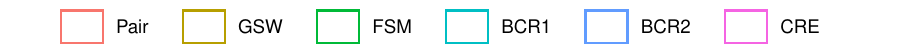}

    \end{subfigure}
    \begin{subfigure}[htbp]{0.5\textwidth}
        \centering
        \includegraphics[width=1\textwidth]{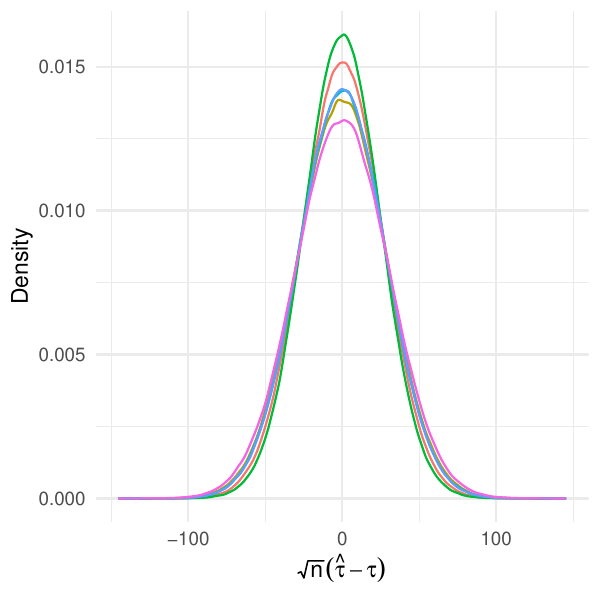}
        \caption{\centering Remittance}
    \end{subfigure}%
    \begin{subfigure}[htbp]{0.5\textwidth}
        \centering
        \includegraphics[width=1\textwidth]{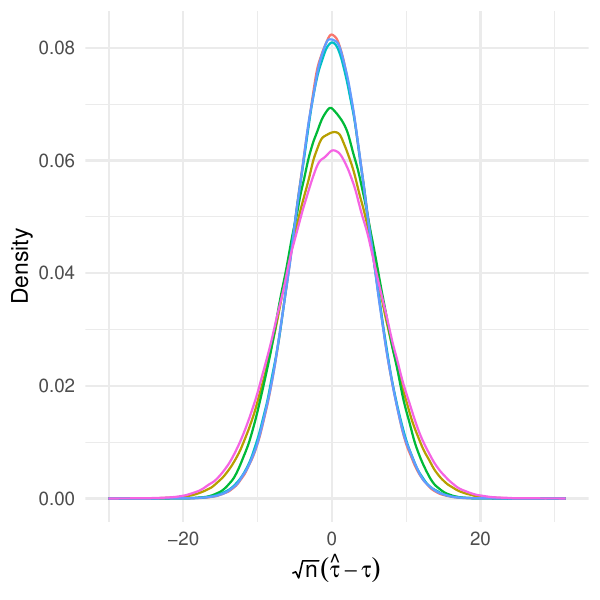}
        \caption{\centering Expenditure}
    \end{subfigure}
    \caption{
    Histograms of the difference-in-means estimator with respect to the two imputed potential outcomes under various designs, including the optimal matched pair design (Pair), the Gram–Schmidt walk design (GSW), the finite selection model (FSM), the best-choice rerandomization with $T=10^3$ (BCR1) and $T=10^4$ (BCR2), and the complete randomization (CRE).     
    } 
    \label{fig:hist_all}
\end{figure}

We then consider the performance of the proposed confidence confidence intervals. 
For the remittance analysis, 
averaging over $10^6$ simulated assignments, 
the coverage probabilities of our $95\%$ confidence interval for the two best-choice rerandomization designs (with $T=10^3$ and $T=10^4$) and \citet{Neyman:1923}'s confidence interval for the CRE are, respectively, $0.948$, $0.948$ and $0.949$. 
Analogously, for the expenditure analysis, the coverage probabilities of our and Neyman's confidence intervals are, respectively, 
$0.947$, $0.946$ and $0.948$. 
All of the coverage probabilities are close to the nominal level. 
This is coherent with our theory since the treatment effects are constant across all units in our simulation, under which these confidence intervals will be asymptotically exact. 
Moreover, our confidence intervals under the best-choice rerandomization is considerably shorter than Neyman's one under the CRE, which demonstrates the gain in inference efficiency from rerandomization. 
For example, the percentage reduction in average length of confidence intervals under the best-choice rerandomization with $T=1000$, compared to that under the CRE, 
is $7.3\%$ for the the remittance analysis 
and $24.1\%$ for the expenditure analysis. 
These essentially lead to $16.4\%$ and $73.7\%$ increase in effective sample size, respectively.

\section{Discussion}\label{sec:discuss}

The best-choice rerandomization design is an intuitive design and has already been widely applied in empirical studies~\citep{Bruhn:2009}. Nevertheless, a theoretical framework characterizing its asymptotic property remains absent. In this paper,
we studied 
the large-sample inference for the best-choice rerandomization using the Mahalanobis distance
and its optimality, allowing sample-size-dependent number of covariates and number of tried complete randomizations. 
We showed that (i) rerandomization can outperform usual complete randomization in terms of both estimation precision and length of confidence intervals, (ii) it should incorporate appropriate number of pretreatment covariates that are relevant for the potential outcomes, 
and (iii) the number of tried complete randomizations should be large but not overly large,  increasing at most at a polynomial rate with respect to the sample size.

In this paper we focus mainly on the best-choice rerandomization based on the completely randomized experiments and using the Mahalanobis distance measure. 
Like the existing literature on the first type of rerandomization, 
it will be interesting to further extend it to other covariate imbalance measure \citep[e.g.,][]{BS2021, ZYR2021, ZD2021rerand, liu2023bayesian} and other randomized experiments \citep[e.g.,][]{BDR2016, LDR20factorial, JS2022, WWL2021, Krieger2023, cytrynbaum2024finely}. It will also be interesting to study 
Fisher randomization test for the average treatment effect under the best-choice rerandomization \citep[][]{ZHAO2021278}. We leave these for future study. 

We also want to point out that 
the main purpose of this paper is not to compare the two types of rerandomization, but rather to provide large-sample inference tools for practitioners who design and analyze experiments using the second type of rerandomization. 
From Remarks \ref{rmk:first_type}--\ref{rmk:optimal} and Section \ref{sec:star_main}, the two types of rerandomization share similarity in both theory and practice. 
However, a comprehensive comparison between them is beyond the scope of this paper and needs further investigation.

\bibliographystyle{plainnat}
\bibliography{reference}

\newpage

\begin{center}
	\bf \LARGE 
	Supplementary Material 
\end{center}

\setcounter{equation}{0}
\setcounter{section}{0}
\setcounter{figure}{0}
\setcounter{example}{0}
\setcounter{proposition}{0}
\setcounter{corollary}{0}
\setcounter{theorem}{0}
\setcounter{lemma}{0}
\setcounter{table}{0}
\setcounter{condition}{0}

\renewcommand {\theproposition} {A\arabic{proposition}}
\renewcommand {\theexample} {A\arabic{example}}
\renewcommand {\thefigure} {A\arabic{figure}}
\renewcommand {\thetable} {A\arabic{table}}
\renewcommand {\theequation} {A\arabic{equation}}
\renewcommand {\thelemma} {A\arabic{lemma}}
\renewcommand {\thesection} {A\arabic{section}}
\renewcommand {\thetheorem} {A\arabic{theorem}}
\renewcommand {\thecorollary} {A\arabic{corollary}}
\renewcommand {\thecondition} {A\arabic{condition}}
\renewcommand {\theremark} {A\arabic{remark}}

\renewcommand {\thepage} {A\arabic{page}}
\setcounter{page}{1}

\section{Practical guidance for designing the best-choice rerandomization}\label{sec:practical}

In this section, we present the details for Section \ref{sec:prac} regarding the practical guidance for designing the best-choice rerandomization with a given finite set of experimental units. 
In general, the asymptotic gain from the best-choice rerandomization increases with $T$, but the additional gain from increasing $T$ typically decreases with $T$; see, e.g., Figure \ref{fig:vKT} below. 
Thus, we suggest using large but not overly large $T$, say, $T=1000$. 
In addition, we should choose moderate number of covariates $K$, trying to make their association with potential outcomes, measured by $R^2$, large. 
We should avoid excessively large value of $K$ that provides little increment on $R^2$ but substantially increase the variance $v_{K,T}$ of the constrained Gaussian random variable, which will ultimately diminish the gain from rerandomization, as implied by Corollary \ref{cor:reduce_var}. 
Below we provide some useful practical guidance when designing a best-choice rerandomization.

First, we can check the efficiency improvement from a specific choice of $T$ for the best-choice rerandomization. 
From Corollary \ref{cor:reduce_var} and Theorem \ref{thm:opt_bcr}, 
the difference between a best-choice rerandomization with a particular choice of $T$ and the optimal one is 
$R^2 - (1-v_{K,T}) R^2 = v_{K,T} R^2 \le v_{K,T}$. 
Thus, the variance of the constrained Gaussian random variable, $v_{K,T}$, actually gives an upper bound on the gap between a particular rerandomization design and the optimal one. 
In practice, we can choose $T$ such that this gap is reasonably small, say, below $0.05$ or $0.1$. 
Figure \ref{fig:vKT} shows the value of $v_{K,T}$ when $K$ ranges from $1$ to $40$ and $T$ ranges from $10$ to $10^4$. 
Obviously, the value of $v_{K,T}$ increases with $K$ and decreases with $T$. 
Below we investigate the choice of $T$ such that $v_{K,T}$ is bounded by $0.1$. 
When $K=1$, $v_{K, T}$ is about $2.5\%$ when $T=10$; 
when $K=5$, $v_{K, T}$ is about $0.1$ when $T=100$; 
when $K=10$, $v_{K,T}$ is about $0.1$ when $T$ is about $3000$; when $K\ge 20$, we need $T$ to be greater than $10^4$ in order to make $v_{K,T}$ bounded by $0.1$. 
These indicate that in practice we should not include too many covariates into rerandomization, since they would require much larger $T$ in order to make the best-choice rerandomization close to the optimal one; 
intuitively, this will also lead to higher computation cost 
and 
less accurate asymptotic approximation.  

\begin{figure}[htbp]
    \centering
    \includegraphics[width=0.3\textwidth]{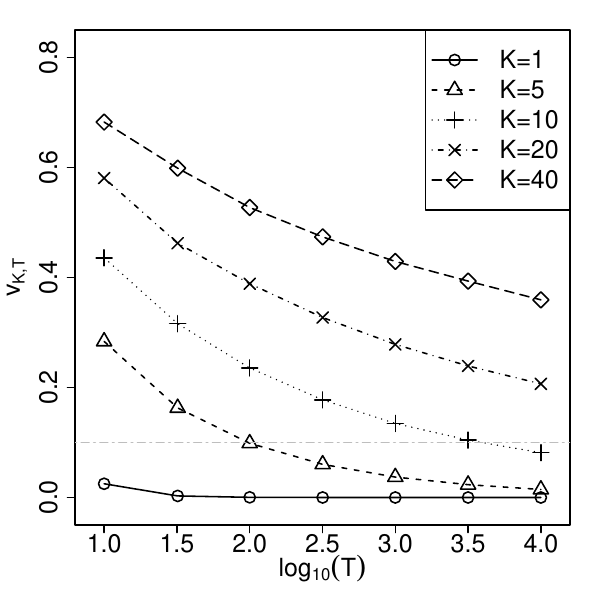}
    \caption{The variance $v_{K,T}$ of the constrained Gaussian random variable for various $(K,T)$.}
    \label{fig:vKT}
\end{figure}

Second, we can evaluate the trade-off between the potential gain and the worst-case loss from the best-choice rerandomization, as suggested in \citet{wang2022rerandomization}. 
As demonstrated before, 
the best-choice rerandomization can asymptotically improve the precision of the difference-in-means estimator compared to the CRE. 
With finite sample size, 
the CRE is actually the minimax optimal in terms of the mean squared error of the difference-in-means estimator, when we considering the worst-case scenario over all possible configurations of the potential outcomes; see the discussion in Section \ref{sec:prac} and \citet[][Proposition A1]{wang2022rerandomization}. 
Note that this does not contradict with our asymptotic theory, since, in the worst case with $R^2=0$, the asymptotic distribution of the difference-in-means estimator under the best-choice rerandomization is the same as that under the CRE. 
These two observations then provide a quantitative way to characterize the trade-off when designing rerandomization. 
Intuitively, 
we can use Corollary \ref{cor:reduce_var} to characterize the potential gain, and the worst-case mean squared error, which can be estimated by the Monte Carlo method, to characterize the worst-case loss. 
We can then consider these trade-offs when comparing multiple rerandomization designs. 
We will discuss this in more detail in Section \ref{sec:star}.

\section{Simulation using the student achievement and retention project}\label{sec:star}

In this section, we provide the details for Section \ref{sec:star_main}. 
The Student Achievement and Retention (STAR) project aims to evaluate the effect of academic services and incentives for college students. 
We focus on the comparison between two treatment arms, 
where the treated students would receive an array of  support services and cash awards for meeting a target GPA and the control students received only standard support services.

We preprocess the data in the same way as \citet{wang2022rerandomization}.  
We remove units with missing covariates, resulting in $n_1 = 118$ treated units and $n_0 = 856$ control units, 
and generate $200$ pretreatment covariates for these $n=974$ units. 
Specifically, the first $5$ covariates are from the STAR project, including high-school GPA, age, gender and indicators for whether lives at home and whether rarely puts off studying for tests, 
and the remaining $195$ covariates are generated independently from the $t$ distribution with degrees of freedom $2$. 
We use the first year GPA from the STAR project as the outcome and let both treatment and control potential outcomes be the same as the observed ones, so that we are able to evaluate the repeated-sampling properties of 
various designs. 
Once generated, the potential outcomes and covariates will be kept fixed during the simulation, mimicking the finite population inference. 
Table \ref{tab:star_simu} shows the empirical bias, root mean squared error and coverage probabilities based on $10^5$ draws from the best-choice rerandomization using the first $K$ covariates and $T$ number of tried complete randomizations, for various choices of $(K,T)$. 
Surprisingly, we find that the best-choice rerandomization performs relatively well even when $K$ and $T$ are large, although the confidence intervals become slightly under-covered when $K$ is large. 
These show the robustness of the best-choice rerandomization design in the presence of high-dimensional covariates and large number of tried complete randomizations. 
Note that these do not contradict with the intuition from our theory, which implies the potential danger of large $K$ and $T$. 
As shown below, for some potential outcome configurations, the performance of rerandomization can deteriorate significantly as $K$ and $T$ increase.

\begin{table}[]
    \centering
    \caption{
    Simulations under the best-choice rerandomization with various choices of $(K,T)$. 
    The first two columns show the values of $K$ and $T$. 
    The 2nd-7th columns show the bias, root mean squared error, and coverage probabilities of confidence intervals using various finite-sample adjustments as discussed in \citet[][Section 6]{wang2022rerandomization} (see also \citet{LeiD2020highdim} and \citet{mackinnon2013thirty}), 
    when the potential outcomes are imputed based on the observed ones. 
    The 8th-13th columns show analogous quantities when the potential outcomes are quantile transformations of the average propensity scores across all the considered rerandomization designs. 
    }
    \label{tab:star_simu}
    \resizebox{0.95\columnwidth}{!}{%
    \begin{tabular}{ccrcccccrccccc}
\toprule
$K$ & $T$ & Bias & RMSE & HC0 & HC1 & HC2 & HC3 & Bias & RMSE & HC0 & HC1 & HC2 & HC3
\\
\midrule
$5$&$10$&$0.012$&$0.889$&$0.950$&$0.955$&$0.955$&$0.957$&$-0.022$&$0.965$&$0.958$&$0.958$&$0.958$&$0.960$\\
$5$&$100$&$0.003$&$0.910$&$0.939$&$0.943$&$0.944$&$0.952$&$0.033$&$0.977$&$0.950$&$0.952$&$0.952$&$0.952$\\
$5$&$1000$&$-0.023$&$0.874$&$0.950$&$0.955$&$0.955$&$0.961$&$0.027$&$1.022$&$0.941$&$0.942$&$0.942$&$0.942$\\
$5$&$10000$&$0.022$&$0.918$&$0.937$&$0.943$&$0.944$&$0.948$&$0.000$&$0.966$&$0.960$&$0.963$&$0.963$&$0.964$\\
\midrule
$10$&$10$&$0.026$&$0.977$&$0.929$&$0.934$&$0.934$&$0.941$&$0.163$&$1.037$&$0.925$&$0.931$&$0.931$&$0.933$\\
$10$&$100$&$0.007$&$0.891$&$0.946$&$0.950$&$0.950$&$0.955$&$0.240$&$1.021$&$0.925$&$0.940$&$0.940$&$0.946$\\
$10$&$1000$&$-0.004$&$0.890$&$0.938$&$0.947$&$0.947$&$0.955$&$0.308$&$0.991$&$0.931$&$0.941$&$0.941$&$0.944$\\
$10$&$10000$&$-0.051$&$0.907$&$0.931$&$0.943$&$0.943$&$0.952$&$0.289$&$1.035$&$0.917$&$0.930$&$0.932$&$0.936$\\
\midrule
$50$&$10$&$0.054$&$1.008$&$0.932$&$0.945$&$0.946$&$0.952$&$0.440$&$1.063$&$0.893$&$0.909$&$0.912$&$0.914$\\
$50$&$100$&$0.006$&$0.983$&$0.921$&$0.940$&$0.941$&$0.947$&$0.704$&$1.183$&$0.837$&$0.869$&$0.870$&$0.874$\\
$50$&$1000$&$-0.039$&$0.890$&$0.943$&$0.961$&$0.962$&$0.969$&$0.911$&$1.334$&$0.762$&$0.812$&$0.815$&$0.822$\\
$50$&$10000$&$0.008$&$0.894$&$0.936$&$0.963$&$0.964$&$0.970$&$1.104$&$1.451$&$0.685$&$0.752$&$0.757$&$0.770$\\
\midrule
$100$&$10$&$0.014$&$0.953$&$0.921$&$0.946$&$0.946$&$0.950$&$0.535$&$1.072$&$0.870$&$0.898$&$0.897$&$0.908$\\
$100$&$100$&$0.063$&$0.944$&$0.918$&$0.951$&$0.951$&$0.954$&$0.978$&$1.342$&$0.722$&$0.784$&$0.789$&$0.806$\\
$100$&$1000$&$-0.011$&$0.965$&$0.892$&$0.942$&$0.944$&$0.952$&$1.126$&$1.488$&$0.614$&$0.704$&$0.710$&$0.728$\\
$100$&$10000$&$0.060$&$0.948$&$0.894$&$0.952$&$0.952$&$0.962$&$1.361$&$1.632$&$0.530$&$0.619$&$0.626$&$0.652$\\
\midrule
$200$&$10$&$0.048$&$0.960$&$0.901$&$0.902$&$0.902$&$0.904$&$0.630$&$1.184$&$0.782$&$0.783$&$0.783$&$0.803$\\
$200$&$100$&$0.021$&$0.973$&$0.889$&$0.889$&$0.889$&$0.896$&$0.937$&$1.320$&$0.714$&$0.720$&$0.721$&$0.749$\\
$200$&$1000$&$0.022$&$0.936$&$0.884$&$0.884$&$0.884$&$0.890$&$1.188$&$1.515$&$0.608$&$0.615$&$0.621$&$0.655$\\
$200$&$10000$&$-0.021$&$0.978$&$0.871$&$0.877$&$0.877$&$0.888$&$1.384$&$1.677$&$0.489$&$0.496$&$0.502$&$0.541$\\
\bottomrule
    \end{tabular}%
    }
\end{table}

We now consider another way of generating potential outcomes. 
We first use Monte Carlo method to  estimate the propensity scores of each unit averaging over the best-choice rerandomization designs under investigation, 
and then take a Gaussian quantile transformation to generate both potential outcomes, where the treatment and control potential outcomes for each unit are the same.
With this potential outcome configuration, 
Table \ref{tab:star_simu} shows analogously the empirical bias, root mean squared error and coverage probabilities under the best-choice rerandomization designs with various choices of $(K,T)$. 
Obviously, as $K$ and $T$ increases, 
both the bias and mean squared error tend to be larger, 
and the coverage probabilities become much smaller than the nominal level. 
Thus, 
the performance of the best-choice rerandomization can be sensitive to the potential outcome configuration when $K$ and $T$ are large. 
Below we present two practical ways to tackle this issue.

First, we can check the worst-case behavior of the best-choice rerandomization using the formula derived in \citet[][Proposition A1]{wang2022rerandomization}. 
As shown in Table \ref{tab:worst_case}, 
both 
the worst-case 
bias and 
root mean squared error increase notably as $K$ and $T$ increases. 
It is worth mentioning that there is considerable Monte Carlo error even when we use $10^5$ simulated assignments to estimate the worst-case root mean squared error; 
for example, the estimated worst-case bias and root mean squared error for the CRE is about $0.1$ and $1.1$, whose true values are known to be $0$ and $1$. 
We can increase the number of simulated assignments to make the Monte Carlo estimation more accurate, but the trend in Table \ref{tab:worst_case} already illustrates the potential drawback of large $K$ and $T$.
As discussed in Section \ref{sec:practical}, 
in practice, we can also combine this with the potential gain that rerandomization can bring as shown in Corollary \ref{cor:reduce_var} to guide our design of rerandomization. 
For example, similar to \citet{wang2022rerandomization}, we can use the geometric mean of the worst-case mean squared error and the ideal-case mean squared error implied by the asymptotic theory as a measure for comparing different best-choice rerandomization designs. 
Note that the asymptotic mean squared error (or equivalently the asymptotic variance since $\hat{\tau}$ is asymptotically unbiased under rerandomization) depends on the association between potential outcomes and covariates, measured by $R^2$ in \eqref{eq:R2}, which needs to be determined by domain knowledge or prior studies. 

\begin{table}[]
    \centering
    \small
    \caption{Worst-case 
    bias (top half) and root mean squared error (bottom half)
    of the best-choice rerandomization with various choices of $(K,T)$.
    The $2$nd-$5$th columns use the original covariates for rerandomization,
    and the $6$th-$9$th columns use the trimmed covariates. 
    }
    \label{tab:worst_case}
    \begin{tabular}{cccccccccccc}
    \toprule
    & & \multicolumn{4}{c}{Original covariates} & & \multicolumn{4}{c}{Trimmed covariates}
    \\
    \diagbox[height=0.7cm]{$K$}{$T$} &&$10$&$100$&$1000$&$10000$&&$10$&$100$&$1000$&$10000$\\
     \midrule
    $5$&&$0.117$&$0.126$&$0.134$&$0.131$&&$0.109$&$0.113$&$0.115$&$0.112$\\
    $10$&&$0.512$&$0.682$&$0.793$&$0.850$&&$0.137$&$0.162$&$0.166$&$0.174$\\
    $50$&&$0.823$&$1.204$&$1.444$&$1.620$&&$0.160$&$0.214$&$0.254$&$0.277$\\
    $100$&&$0.853$&$1.305$&$1.600$&$1.822$&&$0.161$&$0.213$&$0.268$&$0.309$\\
    $200$&&$0.798$&$1.244$&$1.566$&$1.819$&&$0.167$&$0.236$&$0.288$&$0.331$\\
    \midrule
    $5$&&$1.099$&$1.101$&$1.099$&$1.100$&&$1.100$&$1.100$&$1.102$&$1.099$\\
    $10$&&$1.102$&$1.140$&$1.193$&$1.227$&&$1.100$&$1.102$&$1.102$&$1.103$\\
    $50$&&$1.157$&$1.411$&$1.615$&$1.773$&&$1.104$&$1.108$&$1.112$&$1.115$\\
    $100$&&$1.186$&$1.504$&$1.754$&$1.955$&&$1.107$&$1.113$&$1.119$&$1.122$\\
    $200$&&$1.185$&$1.483$&$1.753$&$1.974$&&$1.109$&$1.121$&$1.128$&$1.133$\\
    \bottomrule
    \end{tabular}
\end{table}

Second, we can perform trimming to effectively improve the worst-case performance of the best-choice rerandomization, a strategy that has also been used by \citet{LeiD2020highdim} and \citet{wang2022rerandomization}. 
Note that under our randomization-based inference without any model assumptions, we have the flexibility to conduct arbitrary pre-processing of the covariates.
The intuition of trimming is similar to that discussed in \citet{morgan2012rerandomization} with small sample size. 
When a unit has extreme covariates, it is more likely to be allocated to the group of larger size under rerandomization, which can help balance the covariates between the two groups. 
The extremeness of covariates also appears on the Berry-Esseen bound in \eqref{eq:gamma}, which depends crucially on the leverages scores of the covariate matrix \citep{wang2022rerandomization}. 
Trimming can help us mitigate the extreme covariates, thereby improving the robustness of the best-choice rerandomization. 
For example, we trim each covariate at its $2.5\%$ and $97.5\%$ quantiles, and analogously calculate the worst-case performance as shown in Table \ref{tab:worst_case}. 
Obviously, the performance of the best-choice rerandomization enhances after trimming. 
Indeed,  the propensity scores of units under these rerandomization designs with the original covariates range from $0.00032$ to $0.146$, 
while that with trimmed covariates range from $0.109$ to $0.131$, which are much more stable and concentrate around $n_1/n = 0.121$.

\begin{remark}
We finally comment on the comparison between the best-choice rerandomization and rerandomization based on a prespecified imbalance threshold. 
Comparing Table \ref{tab:worst_case} and Table A1 in \citet{wang2022rerandomization}, 
given the same number of covariates,  
the worst-case mean squared errors under the best-choice rerandomization trying $T$ complete randomizations are comparable to the first-type rerandomization with acceptance probability $1/T$. 
This echos the intuitive and theoretical comparisons in Remarks \ref{rmk:first_type}--\ref{rmk:optimal}. 
However, the best-choice rerandomization can be more stable in practice, because it can always produce ``acceptable'' randomizations. 
Besides, its implementation is intuitive, convenient, and has already been used frequently in practice. 
\end{remark}

\section{Proof for the asymptotic distribution of the difference-in-means estimator under the best-choice rerandomization}

To prove Theorem \ref{thm:asymp}, we need the following three lemmas. 

\begin{lemma}\label{lemma:berry_psi}
Let $(\hat{\tau}, \hat{\bs{\tau}}_{\bs{x}}^\top)^\top$ be the difference-in-means of outcomes and covariates under the CRE, and $(\tilde{\tau}, \tilde{\bs{\tau}}_{\bs{x}}^\top)^\top$ be a Gaussian random vector with mean zero and covariance matrix $\bs{V}$ in \eqref{eq:V}. 
Let $M =  \hat{\bs{\tau}}_{\bs{x}}^\top \bs{V}_{\bs{xx}}^{-1} \hat{\bs{\tau}}_{\bs{x}}$ denote the Mahalanobis distance as in \eqref{eq:M_dist}, 
and define analogously $M =  \tilde{\bs{\tau}}_{\bs{x}}^\top \bs{V}_{\bs{xx}}^{-1} \tilde{\bs{\tau}}_{\bs{x}}$. 
Then for any $c_1, c_2\in \mathbb{R}$, with $\Delta_n$ defined as in \eqref{eq:Delta}, 
\begin{align*}
    \sup_{c_1, c_2 \in \R} \big| \Pr(
    \hat{\tau}-\tau \le c_1, 
    M \le c_2
    )
    - 
    \Pr(
    \tilde{\tau} \le c_1, 
    \tilde{M} \le c_2
    ) \big| & \le \Delta_n, \\
    \sup_{c_1, c_2 \in \R} \big| \Pr(
    \hat{\tau}-\tau \le c_1, 
    M < c_2
    )
    - 
    \Pr(
    \tilde{\tau} \le c_1, 
    \tilde{M} < c_2
    ) \big| & \le \Delta_n, \\
    \sup_{c\in \mathbb{R}} 
    \big| \Pr(
    M \ge c
    )
    - 
    \Pr(
    \tilde{M} \ge c
    ) \big| & \le \Delta_n, \\
    \sup_{c\in \mathbb{R}} 
    \big| \Pr(
    M > c
    )
    - 
    \Pr(
    \tilde{M} > c
    ) \big| & \le \Delta_n. 
\end{align*}
\end{lemma}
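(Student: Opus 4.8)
The plan is to rewrite each probability as the probability that the standardized vector $\bs{V}^{-1/2}(\hat{\tau}-\tau, \hat{\bs{\tau}}_{\bs{x}}^\top)^\top$ lands in a suitable \emph{convex} subset of $\R^{K+1}$, and then to invoke the definition of $\Delta_n$ in \eqref{eq:Delta}, which bounds the discrepancy between this vector and the standard Gaussian $\bs{\varepsilon}$ uniformly over all measurable convex sets. Write $\bs{W} = (\hat{\tau}-\tau, \hat{\bs{\tau}}_{\bs{x}}^\top)^\top$ and $\tilde{\bs{W}} = (\tilde{\tau}, \tilde{\bs{\tau}}_{\bs{x}}^\top)^\top$, and recall that $\bs{V}$ is assumed positive definite. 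Since $\tilde{\bs{W}}\sim\mathcal{N}(\bs{0},\bs{V})$, the transformed vector $\bs{V}^{-1/2}\tilde{\bs{W}}$ has exactly the law of $\bs{\varepsilon}$, so for every convex $Q$ the quantity $|\Pr(\bs{V}^{-1/2}\bs{W}\in Q) - \Pr(\bs{V}^{-1/2}\tilde{\bs{W}}\in Q)|$ equals $|\Pr(\bs{V}^{-1/2}\bs{W}\in Q) - \Pr(\bs{\varepsilon}\in Q)|$ and is therefore at most $\Delta_n$.

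For the first inequality, the event $\{\hat{\tau}-\tau \le c_1,\ M \le c_2\}$ equals $\{\bs{W}\in A\}$, where $A = \{(w_0,\bs{w}^\top)^\top : w_0 \le c_1,\ \bs{w}^\top \bs{V}_{\bs{xx}}^{-1} \bs{w} \le c_2\}$ is the intersection of a closed half-space with a solid closed ellipsoidal cylinder, hence convex. Because $\bs{V}^{-1/2}$ is an invertible linear map, $Q = \bs{V}^{-1/2}A$ is convex and measurable, and $\{\bs{W}\in A\}=\{\bs{V}^{-1/2}\bs{W}\in Q\}$, with the analogous identity for $\tilde{\bs{W}}$. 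The bound above then gives the result, and taking the supremum over $c_1,c_2$ is harmless since each choice yields a convex $Q$. The second inequality is identical upon replacing the closed ellipsoid $\{\bs{w}^\top \bs{V}_{\bs{xx}}^{-1}\bs{w}\le c_2\}$ by its open version, which remains convex.

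For the third and fourth inequalities the relevant region $\{M \ge c\}$ is the complement of an ellipsoidal cylinder and is not convex, so I would first complement. Writing $\Pr(M \ge c) = 1 - \Pr(M < c)$ and likewise for $\tilde{M}$, the constants cancel and $|\Pr(M\ge c)-\Pr(\tilde{M}\ge c)| = |\Pr(M < c)-\Pr(\tilde{M} < c)|$. Here $\{M < c\} = \{\bs{W}\in A_0\}$ with $A_0 = \{(w_0,\bs{w}^\top)^\top : \bs{w}^\top \bs{V}_{\bs{xx}}^{-1}\bs{w} < c\}$ an open ellipsoidal cylinder unconstrained in $w_0$, hence convex; passing through $\bs{V}^{-1/2}$ and applying the bound again yields $\Delta_n$. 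The fourth inequality follows the same way via $\Pr(M > c) = 1 - \Pr(M \le c)$ and the closed cylinder $\{\bs{w}^\top \bs{V}_{\bs{xx}}^{-1}\bs{w}\le c\}$.

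The only genuine subtlety, and the step I would write out most carefully, is the geometric bookkeeping: checking that each event is, directly or after complementing, a convex region in the raw coordinates; that convexity and measurability are preserved under the bijection $\bs{V}^{-1/2}$; and that the open versus closed boundaries are tracked consistently so that both the $\le/<$ and the $\ge/>$ forms are covered. Beyond this, the argument is a one-line appeal to \eqref{eq:Delta}, requiring no probabilistic estimate beyond the Berry--Esseen bound already in force.
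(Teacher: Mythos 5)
Your proposal is correct and follows essentially the same route as the paper: the paper's proof is a one-line observation that each event is (after complementing in the $\ge$/$>$ cases) the event that the standardized vector lies in a convex set, so the bound follows directly from the definition of $\Delta_n$ in \eqref{eq:Delta}. Your write-up simply makes explicit the convexity of the half-space/ellipsoidal-cylinder intersections, the preservation of convexity under the linear map $\bs{V}^{-1/2}$, and the cancellation of constants under complementation, all of which the paper leaves implicit.
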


\begin{proof}[Proof of Lemma \ref{lemma:berry_psi}]
    Lemma \ref{lemma:berry_psi} follows directly from the definition of $\Delta_n$ and the fact that the sets $\{\bs{w}: \bs{w}^\top \bs{V}_{\bs{xx}}^{-1} \bs{w}\le c\}$ and $\{\bs{w}: \bs{w}^\top \bs{V}_{\bs{xx}}^{-1} \bs{w} < c\}$ are convex for any $c\in \mathbb{R}$. 
\end{proof}

\begin{lemma}\label{lemma:upper_bound}
Under the same setting as Lemma \ref{lemma:berry_psi}, 
let $(A_1, B_1), \ldots, (A_J, B_J)$ be any random vectors that are independent of $(\hat{\tau}, \hat{\bs{\tau}}_{\bs{x}}^\top)^\top$ and $(\tilde{\tau}, \tilde{\bs{\tau}}_{\bs{x}}^\top)^\top$;
Define 
\begin{align*}
    (A_0, B_0) \equiv (\hat{\tau}-\tau, M), 
    \quad 
    (\tilde{A}_0, \tilde{B}_0) \equiv (\tilde{\tau}, \tilde{M}), 
    \quad 
    (\tilde{A}_j, \tilde{B}_j) \equiv
    (A_j, B_j) \text{ for } 1\le j\le J. 
\end{align*}
Then  for any $c\in \mathbb{R}$, 
\begin{align*}
    \Pr\left(\bigcup_{j=0}^{J} \Big\{ A_j \leq c, B_j \leq \min_{k \neq j} B_{k}\Big\}\right)
    -
    \Pr\left(\bigcup_{j=0}^{J} \Big\{ \tilde{A}_j \leq c, \tilde{B}_j \leq \min_{k \neq j} \tilde{B}_{k}\Big\}\right)
    \le 2 \Delta_n. 
\end{align*}
\end{lemma}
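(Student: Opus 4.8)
The plan is to condition on everything except the index-$0$ pair and thereby reduce the statement to the three Berry--Esseen inequalities already supplied by Lemma~\ref{lemma:berry_psi}. Concretely, let $\mathcal{E}$ denote the $\sigma$-field generated by the auxiliary vectors $(A_1,B_1),\dots,(A_J,B_J)$, and set $m \equiv \min_{1\le k\le J} B_k$ together with $g \equiv \one\{A_j \le c \text{ for some } j \text{ attaining } \min_{1\le k\le J} B_k\}$. Both $m$ and $g$ are $\mathcal{E}$-measurable, and since $(A_0,B_0)=(\hat\tau-\tau,M)$ and $(\tilde A_0,\tilde B_0)=(\tilde\tau,\tilde M)$ are independent of $\mathcal{E}$, after conditioning I may treat $m,g,c$ as fixed constants while $(A_0,B_0)$ and $(\tilde A_0,\tilde B_0)$ retain their unconditional laws.

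The first key step is to observe that the condition $B_j \le \min_{k\ne j}B_k$ simply says that index $j$ attains the global minimum of $B_0,\dots,B_J$. Writing that global minimum as $\min(B_0,m)$, a short case check on the sign of $B_0-m$ shows the union event can be expressed purely through the index-$0$ pair as
\begin{equation*}
U \;=\; \{A_0\le c,\ B_0\le m\}\ \cup\ \bigl(\{g=1\}\cap\{B_0\ge m\}\bigr),
\end{equation*}
and likewise for the tilde version with $(A_0,B_0)$ replaced by $(\tilde A_0,\tilde B_0)$. I would then split on the $\mathcal{E}$-measurable value of $g$: when $g=0$ one has $\Pr(U\mid\mathcal{E})=\Pr(A_0\le c,\ B_0\le m\mid\mathcal{E})$, whereas when $g=1$ the crucial disjoint decomposition $\{A_0\le c,B_0\le m\}\cup\{B_0\ge m\}=\{B_0\ge m\}\sqcup\{A_0\le c,B_0<m\}$ gives $\Pr(U\mid\mathcal{E})=\Pr(B_0\ge m\mid\mathcal{E})+\Pr(A_0\le c,\ B_0<m\mid\mathcal{E})$, and analogously for $\tilde U$.

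At this point each conditional probability is of exactly the form bounded by Lemma~\ref{lemma:berry_psi}: the $\{B_0\le m\}$ term by its first inequality, the strict $\{B_0<m\}$ term by its second, and the $\{B_0\ge m\}$ term by its third. Hence $|\Pr(U\mid\mathcal{E})-\Pr(\tilde U\mid\mathcal{E})|\le\Delta_n$ when $g=0$ and $\le 2\Delta_n$ when $g=1$; in either case it is at most $2\Delta_n$. Taking expectations over $\mathcal{E}$ and using $|\E[\,\cdot\,]|\le\E|\cdot|$ then yields $\Pr(U)-\Pr(\tilde U)\le 2\Delta_n$, which is the claim (indeed the two-sided bound).

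I expect the only genuine subtlety to be the treatment of ties at the boundary $\{B_0=m\}$: because $M$ is the Mahalanobis distance under the CRE it is discrete and may place positive mass on $m$, so a naive decomposition would force a two-point probability difference that Lemma~\ref{lemma:berry_psi} does not control. The disjoint splitting in the $g=1$ case is exactly what sidesteps this—by routing the boundary into the closed event $\{B_0\ge m\}$ and using the strict event $\{B_0<m\}$ for the remainder, every term that appears is one for which both a non-strict and a strict Berry--Esseen bound are available. This is precisely why Lemma~\ref{lemma:berry_psi} was stated with both its closed and open versions.
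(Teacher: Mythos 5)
Your proposal is correct and follows essentially the same route as the paper's proof: condition on the auxiliary pairs $(A_1,B_1),\dots,(A_J,B_J)$, rewrite the union as $\{A_0\le c,\,B_0\le m\}\cup(\{g=1\}\cap\{B_0\ge m\})$ with $m=\min_{1\le k\le J}B_k$ (your $\{g=1\}$ is exactly the paper's event $\cE_0$), and apply Lemma~\ref{lemma:berry_psi} twice to the index-$0$ pair. The one genuine refinement is at the reassembly step: the paper bounds the union by the sum via the union bound and then recovers equality for the tilde version using the continuity of $\tilde{M}$, whereas your exact disjoint splitting $\{B_0\ge m\}\sqcup\{A_0\le c,\,B_0<m\}$ (invoking both the closed and open versions of Lemma~\ref{lemma:berry_psi}) avoids any continuity argument and yields the two-sided bound $|\Pr(U)-\Pr(\tilde U)|\le 2\Delta_n$ directly, which in particular implies the stated one-sided inequality.
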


\begin{proof}[Proof of Lemma \ref{lemma:upper_bound}]
By the law of iterated expectation, 
\begin{align}\label{eq:prob_union_cond_exp}
& \quad \ 
\Pr\left(\bigcup_{j=0}^{J} \Big\{ A_j \leq c, B_j \leq \min_{k \neq j} B_{k}\Big\}\right) = \E\left[\Pr\left(
\bigcup_{j=0}^{T} \Big\{ A_j \leq c, B_j \leq \min_{k \neq j} B_{k}\Big\}
\mid A_{1:J}, B_{1:J}\right)\right] 
\nonumber
\\
& = \E\left[\Pr\left(\Big\{ A_0 \leq c, B_0 \leq \min_{k>0} B_k\Big\} \bigcup \Big\{\cE_{0} \cap \{B_0 \geq  \min_{k>0} B_k \} \Big\} \mid A_{1:J}, B_{1:J}\right)\right],
\end{align}
where 
\[
\cE_{0} \equiv
\bigcup_{j=1}^{J} \Big\{ A_j \leq c, B_j \leq \min_{k \neq 0, j} B_{k}\Big\}
\]
is an event that becomes deterministic once conditioning on $A_{1:J}\equiv (A_1, \ldots, A_J)$ and $B_{1:J} \equiv (B_1, \ldots, B_J)$. 
By the union bound, 
\begin{align*}
	& \quad \ \Pr\left(\bigcup_{j=0}^{J} \Big\{ A_j \leq c, B_j \leq \min_{k \neq j} B_{k}\Big\}\right) \\
	& \leq \E\left[\Pr\Big(A_0 \leq c, B_0 \leq \min_{k > 0} B_k \mid A_{1:J}, B_{1:J}\Big) + \Pr\Big(B_0 \geq  \min_{k > 0} B_k \mid A_{1:J}, B_{1:J}\Big) \one(\cE_{0})\right]. 
\end{align*}
Note that both $(A_0, B_0)$ and $(\tilde{A}_0, \tilde{B}_0)$ are independent of $(A_{1:J}, B_{1:J})$. 
From Lemma \ref{lemma:berry_psi}, we then have 
\begin{align*}
 & \quad \ \Pr\left(\bigcup_{j=0}^{J} \Big\{ A_j \leq c, B_j \leq \min_{k \neq j} B_{k}\Big\}\right) \\
 & \leq \E\left[\Pr\Big(\tilde{A}_0 \leq c, \tilde{B}_0 \leq \min_{k > 0} B_k \mid A_{1:J}, B_{1:J}\Big) + \Pr\Big(\tilde{B}_0 \geq  \min_{k > 0} B_k \mid A_{1:J}, B_{1:J}\Big) \one(\cE_{0})\right] + 2 \Delta_n\\
 & =
 \E\left[\Pr\left(\Big\{ \tilde{A}_0 \leq c, \tilde{B}_0 \leq \min_{k>0} B_k\Big\} \bigcup \Big\{\cE_{0} \cap \{\tilde{B}_0 \geq  \min_{k>0} B_k \} \Big\} \mid A_{1:J}, B_{1:J}\right)\right] + 2 \Delta_n
 \\
& = \Pr\left(\bigcup_{j=0}^{J} \Big\{ \tilde{A}_j \leq c, \tilde{B}_j \leq \min_{k \neq j} \tilde{B}_{k}\Big\}\right) + 2 \Delta_n,
\end{align*}
where the second last equality holds because $\tilde{B}_0$ is a continuous random variable and consequently the 
measure of the intersection of the two events there is zero, 
and the last equality holds by the same logic as \eqref{eq:prob_union_cond_exp}. 
Therefore, Lemma \ref{lemma:upper_bound} holds. 
\end{proof}

\begin{lemma}\label{lemma:lower_bound}
    Under the same setting as Lemma \ref{lemma:upper_bound}, 
    for any $c\in \mathbb{R}$, 
    \begin{align*}
        \Pr\left(\bigcup_{j=0}^{J} \Big\{ A_j \leq c, B_j < \min_{k \neq j} B_{k}\Big\}\right)
        -
        \Pr\left(\bigcup_{j=0}^{J} \Big\{ \tilde{A}_j \leq c, \tilde{B}_j < \min_{k \neq j} \tilde{B}_{k}\Big\}\right)
        \ge - 2 \Delta_n. 
    \end{align*}
\end{lemma}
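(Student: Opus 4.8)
The plan is to mirror the proof of Lemma~\ref{lemma:upper_bound}, but with the roles of the ``hat'' and ``tilde'' variables interchanged and with every non-strict inequality in the minima replaced by a strict one. The claimed bound is equivalent to the statement that the Gaussian (tilde) union probability exceeds its hat counterpart by at most $2\Delta_n$, so I would bound the tilde union probability from above. First I would condition on $(A_{1:J}, B_{1:J})$ through the law of iterated expectation exactly as in \eqref{eq:prob_union_cond_exp}, using that $\tilde A_j = A_j$ and $\tilde B_j = B_j$ for $j\ge 1$ (so the conditioning $\sigma$-field is common to both probabilities) and that $(A_0,B_0)$ and $(\tilde A_0,\tilde B_0)$ are each independent of it.

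The algebraic core is the strict-inequality analogue of the decomposition behind Lemma~\ref{lemma:upper_bound}. Writing $b \equiv \min_{k>0} B_k$, which is a constant once $(A_{1:J},B_{1:J})$ is fixed, and $\cE_0' \equiv \bigcup_{j=1}^{J}\{A_j\le c,\ B_j < \min_{k\neq 0,j}B_k\}$, I would verify the identity
\[
\bigcup_{j=0}^{J}\{A_j\le c,\ B_j < \min_{k\neq j}B_k\}
= \{A_0\le c,\ B_0 < b\}\ \cup\ \big(\cE_0'\cap\{B_0 > b\}\big),
\]
together with its verbatim tilde version (note $\cE_0'$ involves only indices $j\ge 1$, hence is unchanged). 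The identity holds because any index $j\ge 1$ activating the left-hand union forces $B_j$ to be the strict minimum of $\{B_k\}_{k>0}$, i.e.\ $B_j = b$, and simultaneously $B_0 > B_j = b$. Applying the union bound to the tilde form of the right-hand side and then invoking the second and fourth displays of Lemma~\ref{lemma:berry_psi} (with $c_1 = c$, $c_2 = b$) to pass from $\Pr(\tilde A_0\le c,\tilde B_0 < b\mid\cdot)$ and $\Pr(\tilde B_0 > b\mid\cdot)$ to the corresponding hat probabilities, each at a cost of $\Delta_n$, bounds the conditional tilde union probability by $\Pr(A_0\le c, B_0 < b\mid\cdot) + \Pr(B_0 > b\mid\cdot)\,\one(\cE_0') + 2\Delta_n$.

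The step I expect to be the crux is the reassembly. In Lemma~\ref{lemma:upper_bound} the two pieces are glued back together using the continuity of $\tilde M$, whose level set $\{\tilde B_0 = b\}$ carries zero conditional mass. Here the reassembly instead takes place on the \emph{hat} side, where $M = B_0$ need not be continuous and that argument is unavailable. This is precisely the reason the lemma is stated with strict inequalities: the events $\{A_0\le c, B_0 < b\}$ and $\cE_0'\cap\{B_0 > b\}$ lie in the disjoint sets $\{B_0 < b\}$ and $\{B_0 > b\}$, so the union bound above is in fact an equality, and the decomposition identity folds the two terms back into $\Pr\big(\bigcup_{j=0}^J\{A_j\le c, B_j < \min_{k\neq j}B_k\}\mid\cdot\big)$ with no continuity hypothesis on $M$. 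Taking the expectation over $(A_{1:J},B_{1:J})$ then delivers the bound. The only routine items left are the measurability of $\cE_0'$ with respect to the conditioning and the observation that ties among $\{B_k\}_{k\ge 1}$ contribute to neither side under strict inequalities, so they may be ignored.
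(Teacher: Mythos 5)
Your proposal is correct and follows essentially the same route as the paper's proof: the identical decomposition of the union into the disjoint events $\{A_0\le c,\ B_0<\min_{k>0}B_k\}$ and $\cE_0'\cap\{B_0>\min_{k>0}B_k\}$ after conditioning on $(A_{1:J},B_{1:J})$, the same appeal to the strict-inequality bounds in Lemma~\ref{lemma:berry_psi}, and the same observation that disjointness (rather than continuity) justifies reassembling the two pieces. The only difference is that you bound the tilde union from above while the paper bounds the hat union from below, which is a trivial reformulation of the same argument.
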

\begin{proof}[Proof of Lemma \ref{lemma:lower_bound}]
By the law of iterated expectation, 
\begin{align}\label{eq:prob_union_strict_cond_exp}
    & \quad \ \Pr\left(\bigcup_{j=0}^{J} \Big\{ A_j \leq c, B_j < \min_{k \neq j} B_{k}\Big\}\right)
    \nonumber
    \\
    & = 
    \E\left[\Pr\left(\Big\{ A_0 \leq c, B_0 < \min_{k>0} B_k\Big\} \bigcup \Big\{\cE_{0}' \cap \{ B_0 > \min_{k >0} B_k \} \Big\} \mid A_{1:J}, B_{1:J}\right)\right]
    \nonumber
    \\
    & = 
    \E\left[\Pr\left( A_0 \leq c, B_0 < \min_{k>0} B_k \mid A_{1:J}, B_{1:J}\right)\right]
    + 
    \E\left[
    \I(\cE_{0}')
    \Pr\left( B_0 > \min_{k >0} B_k\mid A_{1:J}, B_{1:J}\right)\right],
\end{align}
where 
\[
\cE_{0}' \equiv  \bigcup_{j=1}^J \Big\{A_j \le c,  B_j < \min_{k \neq 0, j} B_k \Big\}
\]
is an event that becomes deterministic once conditioning on $A_{1:J}\equiv (A_1, \ldots, A_J)$ and $B_{1:J} \equiv (B_1, \ldots, B_J)$, 
and the last equality holds because the two events there are disjoint. 
Note that both $(A_0, B_0)$ and $(\tilde{A}_0, \tilde{B}_0)$ are independent of $(A_{1:J}, B_{1:J})$. 
From Lemma \ref{lemma:berry_psi}, we then have 
\begin{align*}
    & \quad \ \Pr\left(\bigcup_{j=0}^{J} \Big\{ A_j \leq c, B_j < \min_{k \neq j} B_{k}\Big\}\right)\\
    & \ge 
    \E\left[\Pr\left( \tilde{A}_0 \leq c, \tilde{B}_0 < \min_{k>0} B_k \mid A_{1:J}, B_{1:J}\right)\right]
    + 
    \E\left[
    \I(\cE_{0}')
    \Pr\left( \tilde{B}_0 > \min_{k >0} B_k\mid A_{1:J}, B_{1:J}\right)\right]-2\Delta_n\\
    & = \Pr\left(\bigcup_{j=0}^{J} \Big\{ \tilde{A}_j \leq c, \tilde{B}_j < \min_{k \neq j} \tilde{B}_{k}\Big\}\right)-2\Delta_n,  
\end{align*}
where the last equality holds by the same logic as \eqref{eq:prob_union_strict_cond_exp}.  
\end{proof}

\begin{proof}[\bf Proof of Theorem \ref{thm:asymp}]
Recall the definition of $\bs{Z}_{[t]}$'s, $M_{[t]}$'s and $\hat{\tau}_{(1)}$ in Section \ref{sec:asym_bcr}. 
Define further $\hat{\tau}_{[t]}$ as the difference-in-means estimator under the treatment assignment $\bs{Z}_{[t]}$, for $1\le t\le T$. 
By the construction of the best-choice rerandomization, for any $c\in \mathbb{R}$, 
\begin{align}\label{eq:bound_bc_union}
    \Pr\left(\bigcup \limits_{t=1}^{T} \big\{ \hat{\tau}_{[t]} - \tau \leq c, M_{[t]} < \min_{j \neq t} M_{[j]}\big\}\right)
    \le 
    \Pr(\hat{\tau}_{(1)} - \tau \leq c) 
    \leq \Pr\left(\bigcup \limits_{t=1}^{T} \big\{ \hat{\tau}_{[t]} - \tau \leq c, M_{[t]} \leq \min_{j \neq t} M_{[j]}\big\}\right), 
\end{align}
where the inequality in \eqref{eq:bound_bc_union} comes mainly from the fact that there may be multiple treatment assignments achieving the minimum covariate imbalance. 
Recall also the definition of $\tilde{\tau}_{[t]}$'s and $\tilde{M}_{[t]}$'s in Section \ref{sec:dim_bcr}. 
Furthermore, without loss of generality, we assume that $\tilde{\tau}_{[t]}$'s and $\tilde{M}_{[t]}$'s are independent of $\hat{\tau}_{[t]}$'s and $M_{[t]}$'s. 
Obviously, for all $t$, $\tilde{M}_{[t]}$ follows the chi-squared distribution with degrees of freedom $K$. 

First, we consider the upper bound of $\Pr(\hat{\tau}_{(1)} - \tau \leq c)$. 
For $0 \le j\le T$, define  
\begin{align*}
    ( A_t^{(j)}, B_t^{(j)} ) & = 
    \begin{cases}
    (\hat{\tau}_{[t]} - \tau, M_{[t]}), & \textup{if } t>j, \\
    (\tilde{\tau}_{[t]}, M_{[t]}), & \textup{if } t\le j, 
    \end{cases}
    \quad (1\le t \le T).
\end{align*}
Obviously, $\{(A_t^{(j)}, B_t^{(j)}): 1\le t\le T\}$ and $\{(A_t^{(j+1)}, B_t^{(j+1)}): 1\le t\le T\}$ differ only in the $(j+1)$th element, for $0\le j \le T-1$. This allows us to apply Lemma~\ref{lemma:upper_bound} to get that, for any $0\le j \le T-1$, 
\[
\Pr\left(\bigcup \limits_{t=1}^{T} \big\{ A_t^{(j)} \leq c, B_t^{(j)} \leq \min_{j \neq t} B_{j}^{(j)}\big\}\right) \le \Pr\left(\bigcup \limits_{t=1}^{T} \big\{ A_t^{(j + 1)} \leq c, B_t^{(j + 1)} \leq \min_{j \neq t} B_{j}^{(j + 1)}\big\}\right) + 2 \Delta_n,
\]
where in the above inequality we take $(A_{j + 1}^{(j)}, B_{j + 1}^{(j)})$ and $(A_{j + 1}^{(j + 1)}, B_{j + 1}^{(j + 1)})$ as $(A_0, B_0)$ and $(\tilde{A}_0, \tilde{B}_0)$; and take $\{(A_t^{(j)}, B_t^{(j)}), 0 \le t \le T \;\&\; t \neq j + 1\}$ and $\{(A_t^{(j + 1)}, B_t^{(j + 1)}), 0 \le t \le T \;\&\; t \neq j + 1\}$ as $\{(A_j, B_j), 1 \le j \le J\}$ and $\{(\tilde{A}_j, \tilde{B}_j), 1 \le j \le J\}$.
Armed with the above inequality, we have 
\begin{align*}
    \Pr(\hat{\tau}_{(1)} - \tau \leq c) 
    & \leq \Pr\left(\bigcup \limits_{t=1}^{T} \big\{ \hat{\tau}_{[t]} - \tau \leq c, M_{[t]} \leq \min_{j \neq t} M_{[j]}\big\}\right)
    = \Pr\left(\bigcup \limits_{t=1}^{T} \big\{ A_t^{(0)} \leq c, B_t^{(0)} \leq \min_{j \neq t} B_{j}^{(0)}\big\}\right)
    \\
    & \le \Pr\left(\bigcup \limits_{t=1}^{T} \big\{ A_t^{(1)} \leq c, B_t^{(1)} \leq \min_{j \neq t} B_{j}^{(1)}\big\}\right) + 2\Delta_n\\
    & \le \Pr\left(\bigcup \limits_{t=1}^{T} \big\{ A_t^{(2)} \leq c, B_t^{(2)} \leq \min_{j \neq t} B_{j}^{(2)}\big\}\right) + 2\times 2\Delta_n\\
    & \le \ldots 
    \le \Pr\left(\bigcup \limits_{t=1}^{T} \big\{ A_t^{(T)} \leq c, B_t^{(T)} \leq \min_{j \neq t} B_{j}^{(T)}\big\}\right) + T\times 2\Delta_n\\
    & = \Pr\left(\bigcup \limits_{t=1}^{T} \big\{ \tilde{\tau}_{[t]} \leq c, \tilde{M}_{[t]} \leq \min_{j \neq t} \tilde{M}_{[j]} \big\}\right) + 2T\Delta_n. 
\end{align*}

Second, we consider the lower bound of $\Pr(\hat{\tau}_{(1)} - \tau \leq c)$. By the same logic as the proof of the upper bound and applying Lemma \ref{lemma:lower_bound}, we have 
\begin{align*}
    \Pr(\hat{\tau}_{(1)} - \tau \leq c)  
    \ge 
    \Pr\left(\bigcup \limits_{t=1}^{T} \big\{ \tilde{\tau}_{[t]} \leq c, \tilde{M}_{[t]} < \min_{j \neq t} \tilde{M}_{[j]} \big\}\right)
    - 2T \Delta_n. 
\end{align*}

Third, we prove that, for any $c\in \mathbb{R}$, 
\begin{align}\label{eq:bound_bc_union_tilde}
    \Pr\left(\bigcup \limits_{t=1}^{T} \big\{ \tilde{\tau}_{[t]} \leq c, \tilde{M}_{[t]} < \min_{j \neq t} \tilde{M}_{[j]}\big\}\right)
    = 
    \Pr(\tilde{\tau}_{(1)} \leq c) 
    = \Pr\left(\bigcup \limits_{t=1}^{T} \big\{ \tilde{\tau}_{[t]} \leq c, \tilde{M}_{[t]} \leq \min_{j \neq t} \tilde{M}_{[j]}\big\}\right). 
\end{align}
By the same logic as \eqref{eq:bound_bc_union}, the left-hand side of \eqref{eq:bound_bc_union_tilde} is bounded from above by $\Pr(\tilde{\tau}_{(1)} \leq c) $, which is further bounded from above by the right-hand side of \eqref{eq:bound_bc_union_tilde}. 
Furthermore,  the right-hand side of \eqref{eq:bound_bc_union_tilde} is also bounded from above by the left-hand side: 
\begin{align*}
    & \quad \ \Pr\left(\bigcup \limits_{t=1}^{T} \big\{ \tilde{\tau}_{[t]} \leq c, \tilde{M}_{[t]} \leq \min_{j \neq t} \tilde{M}_{[j]}\big\}\right)
    - 
    \Pr\left(\bigcup \limits_{t=1}^{T} \big\{ \tilde{\tau}_{[t]} \leq c, \tilde{M}_{[t]} < \min_{j \neq t} \tilde{M}_{[j]}\big\}\right)\\
    & \le 
    \Pr\left(\bigcup \limits_{t=1}^{T} \bigcup \limits_{j\ne t} \big\{ \tilde{M}_{[t]} = \tilde{M}_{[j]} \big\}\right) 
    = 0, 
\end{align*}
where the last equality holds because $\tilde{M}_{[j]}$'s are mutually independent continuous random variables. 
These facts then imply that \eqref{eq:bound_bc_union_tilde} must hold. 

From the above, we then have that
$| \Pr(\hat{\tau}_{(1)} - \tau \leq c)   - 
    \Pr(\tilde{\tau}_{(1)} \leq c) |
    \le 2 T \Delta_n$ for any $c\in \mathbb{R}$. 
Equivalently, the inequality in \eqref{eq:bound_diff_tilde} holds. 
If Conditions \ref{cond:gamma} and \ref{cond:iterations} hold, then $T\Delta_n = o(1)$, and consequently 
the supremum in \eqref{eq:bound_diff_tilde} converges to zero as $n\rightarrow \infty$. 
Therefore, Theorem \ref{thm:asymp} holds. 
\end{proof}

To prove Theorem \ref{thm:asym_equiv}, we need the following lemma. 

\begin{lemma}\label{lemma:unit_vec_inv}
    Let $\bs{D}_1, \ldots, \bs{D}_T\in \mathbb{R}^K$ be $T$ mutually independent $K$-dimensional standard Gaussian random vectors. 
    Then, for any constant unit vector $\bs{c}\in \mathbb{R}^K$, 
    \begin{align*}
        \bs{c}^\top \bs{D}_1 \mid \|\bs{D}_1\|_2^2 \le \min_{1\le j \le T} \|\bs{D}_j\|_2^2
        \ \sim \  
        D_{11} \mid \|\bs{D}_1\|_2^2 \le \min_{1\le j \le T} \|\bs{D}_j\|_2^2,
    \end{align*}
    where $D_{11}$ is the first coordinate of $\bs{D}_1$. 
\end{lemma}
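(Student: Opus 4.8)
The plan is to exploit the rotational invariance of the standard multivariate Gaussian distribution. First I would construct an orthogonal matrix $\bs{Q}\in\mathbb{R}^{K\times K}$ whose first row equals $\bs{c}^\top$; this is possible because $\bs{c}$ is a unit vector and can therefore be extended to an orthonormal basis of $\mathbb{R}^K$, whose vectors serve as the rows of $\bs{Q}$. With this choice, the first coordinate of $\bs{Q}\bs{D}_1$ is exactly the inner product of the first row of $\bs{Q}$ with $\bs{D}_1$, namely $\bs{c}^\top\bs{D}_1$; and since $\bs{Q}$ is orthogonal, $\|\bs{Q}\bs{D}_j\|_2^2 = \|\bs{D}_j\|_2^2$ for every $1\le j\le T$.

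Next I would invoke the rotational invariance of the standard Gaussian. Because each $\bs{D}_j\sim\mathcal{N}(\bs{0},\bs{I}_K)$ and the $\bs{D}_j$'s are mutually independent, the transformed collection $(\bs{Q}\bs{D}_1,\ldots,\bs{Q}\bs{D}_T)$ has exactly the same joint distribution as $(\bs{D}_1,\ldots,\bs{D}_T)$. Applying the measurable map that records the first coordinate of the first vector together with all the squared norms, and using the two identities from the previous step, I would conclude
\[
\bigl(\bs{c}^\top\bs{D}_1,\ \|\bs{D}_1\|_2^2,\ \ldots,\ \|\bs{D}_T\|_2^2\bigr)
\ \sim\
\bigl(D_{11},\ \|\bs{D}_1\|_2^2,\ \ldots,\ \|\bs{D}_T\|_2^2\bigr).
\]

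Finally, since the conditioning event $\{\|\bs{D}_1\|_2^2\le\min_{1\le j\le T}\|\bs{D}_j\|_2^2\}$ is a measurable function of the squared norms alone, and has positive probability (indeed $1/T$, since ties occur with probability zero), the equality of these joint laws transfers directly to the conditional laws of the first components given the event. This yields the claimed distributional identity.

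The argument is essentially routine; the only point requiring care is to bundle $\bs{c}^\top\bs{D}_1$ together with all the squared norms into a single joint distribution, so that the conditioning event, which lives entirely in the norm coordinates, is handled simultaneously with the quantity of interest. The main obstacle, if any, is bookkeeping: ensuring that the one orthogonal rotation $\bs{Q}$ acts on every $\bs{D}_j$ so that all norms are preserved while only the first vector's linear functional $\bs{c}^\top\bs{D}_1$ is rotated into its first coordinate.
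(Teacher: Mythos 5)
Your proposal is correct and follows essentially the same route as the paper's proof: both construct an orthogonal matrix with first row $\bs{c}^\top$, apply it to every $\bs{D}_j$, and invoke the rotational invariance of the i.i.d.\ standard Gaussian collection to equate the joint laws before conditioning on the norm event. Your write-up is merely more explicit about why the equality of joint laws transfers to the conditional distributions.
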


\begin{proof}[Proof of Lemma \ref{lemma:unit_vec_inv}]
    For any given unit vector $\bs{c}\in \mathbb{R}^K$, we can always construct an orthogonal matrix $\bs{C}$ whose first row is $\bs{c}^\top$. 
    Then $\bs{c}^\top \bs{D}_1$ will be the first coordinate of $\bs{C}\bs{D}_1$, 
    and $\|\bs{C}\bs{D}_j\|_2^2 = \|\bs{D}_j\|_2^2$ for all $1\le j \le T$.
    By the property of standard Gaussian distributions, 
    $(\bs{C}\bs{D}_1, \ldots, \bs{C}\bs{D}_T)$ follows the same distribution as $(\bs{D}_1, \ldots, \bs{D}_T)$. 
    This then implies Lemma \ref{lemma:unit_vec_inv}. 
\end{proof}

\begin{proof}[\bf Proof of Theorem \ref{thm:asym_equiv}]
    From equation \eqref{eq:bound_bc_union_tilde} in the proof of Theorem \ref{thm:asymp}, for any $c\in \mathbb{R}$, 
    \begin{align*}
        \Pr(\tilde{\tau}_{(1)} \leq c) 
        & = \Pr\left(\bigcup \limits_{t=1}^{T} \big\{ \tilde{\tau}_{[t]} \leq c, \tilde{M}_{[t]} \le \min_{j \neq t} \tilde{M}_{[j]}\big\}\right)
        = 
        \sum_{t=1}^T \Pr\Big( \tilde{\tau}_{[t]} \leq c, \tilde{M}_{[t]} \le \min_{j \neq t} \tilde{M}_{[j]} \Big),
    \end{align*}
    where the last equality holds because $\tilde{M}_{[t]}$'s are mutually independent continuous random variables. 
    Because $(\tilde{\tau}_{[t]}, \tilde{M}_{[t]})$'s are i.i.d.\  across all $t$, and $\tilde{M}_{[t]}$'s are continuous random variables,  
    we then have $\Pr( \tilde{M}_{[1]} \le \min_{1\le j \le T} \tilde{M}_{[j]} ) = 1/T$, and 
\begin{align*}
    & \quad \ \Pr(\tilde{\tau}_{(1)} \leq c)
    \\
    & = 
    T  \cdot \Pr\Big( \tilde{\tau}_{[1]} \leq c, \tilde{M}_{[1]} \le \min_{1\le j \le T} \tilde{M}_{[j]} \Big)
    = 
    T \cdot 
    \Pr\Big( \tilde{M}_{[1]} \le \min_{1\le j \le T} \tilde{M}_{[j]} \Big)
    \cdot 
    \Pr\Big( \tilde{\tau}_{[1]} \leq c \mid \tilde{M}_{[1]} \le \min_{1\le j \le T} \tilde{M}_{[j]} \Big)
    \\
    & = \Pr\Big( \tilde{\tau}_{[1]} \leq c \mid \tilde{M}_{[1]} \le \min_{1\le j \le T} \tilde{M}_{[j]} \Big). 
\end{align*}
Let 
$\tilde{\tau}_{[1]}^\perp = \tilde{\tau}_{[1]} - V_{\tau \bs{x}} V_{\bs{xx}}^{-1} \tilde{\bs{\tau}}_{\bs{x}[1]}$. 
We can verify that $\tilde{\tau}_{[1]}^\perp\sim \mathcal{N}(0, V_{\tau\tau}(1-R^2))$ and it is independent from all the  $\tilde{\bs{\tau}}_{\bs{x}[t]}$'s. 
Let $\varepsilon_0 \sim \mathcal{N}(0,1)$ be a standard Gaussian random variable independent of all the  $\tilde{\bs{\tau}}_{\bs{x}[t]}$'s, 
and $\bs{D}_{t} = V_{\bs{xx}}^{-1/2} \tilde{\bs{\tau}}_{\bs{x}[t]}$ for $1\le t\le T$. 
We then have, for any $c\in \mathbb{R}$, 
\begin{align*}
    \Pr(\tilde{\tau}_{(1)} \leq c)
    & = 
    \Pr\Big( \tilde{\tau}_{[1]} \leq c \mid \tilde{M}_{[1]} \le \min_{1\le j \le T} \tilde{M}_{[j]} \Big)
    = 
    \Pr\Big( \tilde{\tau}_{[1]}^\perp + V_{\tau \bs{x}} V_{\bs{xx}}^{-1} \tilde{\bs{\tau}}_{\bs{x}[1]} \leq c \mid \tilde{M}_{[1]} \le \min_{1\le j \le T} \tilde{M}_{[j]} \Big)\\
    & = 
    \Pr\Big( \sqrt{V_{\tau\tau}(1-R^2)} \ \varepsilon_0 + V_{\tau \bs{x}} V_{\bs{xx}}^{-1/2} \bs{D}_1 \leq c \mid 
    \|\bs{D}_1\|_2^2 \le \min_{1\le j \le T} \|\bs{D}_j\|_2^2
    \Big)\\
    & = 
    \Pr\Big( \sqrt{V_{\tau\tau}(1-R^2)} \ \varepsilon_0 + 
    \sqrt{V_{\tau\tau}R^2}\  \bs{h}^\top \bs{D}_1 \leq c \mid 
    \|\bs{D}_1\|_2^2 \le \min_{1\le j \le T} \|\bs{D}_j\|_2^2
    \Big), 
\end{align*}
where $\bs{h} = (V_{\tau\tau}R^2)^{-1/2} V_{\bs{xx}}^{-1/2} V_{\bs{x}\tau}$ is a unit vector of length $1$ by the definition of $R^2$ in \eqref{eq:R2}. 
From Lemma \ref{lemma:unit_vec_inv}, this further implies that  
$
    \Pr(\tilde{\tau}_{(1)} \leq c)
    = 
    \Pr( \sqrt{V_{\tau\tau}(1-R^2)} \ \varepsilon_0 + 
    \sqrt{V_{\tau\tau}R^2}\  L_{K,T} \leq c
    )
$
for all $c\in \mathbb{R}$. 
We can then immediately derive Theorem \ref{thm:asym_equiv}. 
\end{proof}

\section{Proof for the properties of the asymptotic distribution}

\begin{proof}[\bf Proof of Proposition \ref{prop:L_KT} and the equivalence in \eqref{eq:chi2_KT}]
    We first prove Proposition \ref{prop:L_KT}. 
    From Lemma \ref{lemma:unit_vec_inv}, 
    $L_{K, T} \sim \bs{c}^\top \bs{D}_1 \mid \|\bs{D}_1\|_2^2 \le \min_{1 \le t \le T} \|\bs{D}_t\|_2^2$ for any constant unit vector $c\in \mathbb{R}^K$.
    Moreover, from \citet[Lemma A2]{LDR18}, 
    it suffices to prove that $L_{K,T} \sim \chi_{K,T} U_K$. 
    For all $1\le t\le T$, 
    define $\xi_t = \|\bs{D}_t\|_2$. 
    By the property of the multivariate standard Gaussian distribution, 
    $\xi_t^2$ follows chi-squared distribution with degrees of freedom $K$, 
    $\bs{D}_t/\xi_t$ follows the uniform distribution on the $K-1$ dimensional unit sphere, and they are mutually independent.
    These imply that, with $D_{11}$ and $[\bs{D}_1/\xi_1]_1$ being the first coordinates of $\bs{D}_1$ and $\bs{D}_1/\xi_1$, respectively,  
    \begin{equation}\label{eq:L_KT_equiv_proof}
        L_{K,T} 
        \ \sim \ D_{11} \mid \|\bs{D}_1\|_2^2 \le \min_{1 \le t \le T} \|\bs{D}_j\|_2^2
        \ \sim \  [\bs{D}_t/\xi_t]_{1} \xi_1 \mid \xi_1^2 \le \min_{1 \le t \le T} \xi_t^2
        \ \sim \  U_K \xi_1 \mid \xi_1 \le \min_{1 \le t \le T} \xi_t. 
    \end{equation}
    Consequently, Proposition \ref{prop:L_KT} holds.

    We then prove the equivalence in \eqref{eq:chi2_KT}. 
    The fact that $\chi^2_{K(1)} \sim F_K^{-1}(\text{Beta}(1,T))$ follows from the property of order statistics. 
    It then suffices to prove that $\xi_1 \mid \xi_1 \le \min_{1 \le t \le T} \xi_t \sim \xi_{(1)}$, where $\xi_{(1)} = \min_{1\le t \le T} \xi_{(t)}$. 
    This is true because, for any $c\in \mathbb{R}$, 
    \begin{align*}
        \pr(\xi_1 \le c \mid \xi_1 \le \min_{1 \le t \le T} \xi_t)
        & = 
        \frac{\pr(\xi_1 \le c, \xi_1 \le \min_{1 \le t \le T} \xi_t)}{\pr(\xi_1 \le \min_{1 \le t \le T} \xi_t)}
        = T \cdot \pr(\xi_1 \le c, \xi_1 \le \min_{1 \le t \le T} \xi_t)
        \\
        & =
        \sum_{j=1}^T \pr(\xi_j \le c, \xi_j \le \min_{1 \le t \le T} \xi_t)
        = 
        \pr(\xi_{(1)} \le c),
    \end{align*}
    where the equalities hold by symmetry and the fact that $\xi_t$'s are continuous random variables. 
\end{proof}

To prove Corollary \ref{cor:sum}, we need the following two lemmas. 

\begin{lemma}\label{lemma:lktunimodal}
    $L_{K, T}$ is a continuous random variable, and is symmetric and unimodal around zero.
\end{lemma}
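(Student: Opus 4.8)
The plan is to derive an explicit closed-form density for $L_{K,T}$ and read off all three properties from it. Writing $\phi$ and $\Phi$ for the standard Gaussian density and CDF, and $F_K$ for the CDF of the chi-squared distribution with $K$ degrees of freedom (so that $F_K^{-1}$ is the quantile function in \eqref{eq:chi2_KT}), I claim that $L_{K,T}$ is absolutely continuous with density
\begin{align*}
    f_{L}(x) = T\, \phi(x)\, G(x^2),
    \qquad
    G(u) \equiv \E_{Q}\big[\, (1 - F_K(u + Q))^{T-1} \,\big],
\end{align*}
where $Q \sim \chi^2_{K-1}$ (degenerate at $0$ when $K=1$). Granting this formula, continuity is immediate because $f_L$ is a genuine continuous function, so $L_{K,T}$ is absolutely continuous with no atoms; symmetry follows because $\phi$ is even and $G(x^2)$ depends on $x$ only through $x^2$, whence $f_L(-x)=f_L(x)$; and unimodality follows because both $\phi(x)$ and $G(x^2)$ are nonincreasing in $x$ on $[0,\infty)$, so their product is nonincreasing there, which together with symmetry is precisely unimodality around zero.

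To establish the density formula, I would start from \eqref{eq:L_KT}, namely $L_{K,T} \sim D_{11} \mid \|\bs{D}_1\|_2^2 \le \min_{1\le t\le T}\|\bs{D}_t\|_2^2$ with $\bs{D}_1, \ldots, \bs{D}_T \iidsim \mathcal{N}(\bs{0}, \bs{I}_K)$, and condition on $D_{11} = x$. Given $D_{11}=x$, the squared norm decomposes as $\|\bs{D}_1\|_2^2 = x^2 + Q$ with $Q = \sum_{j=2}^K D_{1j}^2 \sim \chi^2_{K-1}$, independent of the competing minimum $W \equiv \min_{2\le t \le T}\|\bs{D}_t\|_2^2$. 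Since the conditioning event reduces to $\{\|\bs{D}_1\|_2^2 \le W\}$ and $W$ has survival function $[1-F_K(\cdot)]^{T-1}$, the conditional probability of that event given $D_{11}=x$ equals $\E_Q[(1-F_K(x^2+Q))^{T-1}] = G(x^2)$. By Bayes' rule the conditional density of $D_{11}$ is proportional to $\phi(x)\,G(x^2)$, and the normalizing constant is exactly $T$ because, by exchangeability of the $\bs{D}_t$'s and continuity of the $\|\bs{D}_t\|_2^2$'s, the unconditional probability of the conditioning event is $1/T$.

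The only real work is the monotonicity of $G$, which is the step I would treat most carefully: since $F_K$ is nondecreasing, $u \mapsto (1-F_K(u+Q))^{T-1}$ is nonincreasing for every fixed $Q \ge 0$, and taking the expectation over $Q$ preserves this, so $G$ is nonincreasing on $[0,\infty)$; composing with the nonincreasing map $x \mapsto x^2$ on $[0,\infty)$ keeps $x \mapsto G(x^2)$ nonincreasing there. I would also check the degenerate case $K=1$ separately, where $Q \equiv 0$ and the density collapses to $f_L(x) \propto \phi(x)[1-F_1(x^2)]^{T-1} \propto \phi(x)[1-\Phi(|x|)]^{T-1}$, confirming both the formula and its monotonicity. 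Thus the main obstacle is not any single sharp estimate but getting the conditional-density bookkeeping exactly right---correctly identifying the $\chi^2_{K-1}$ law of the residual norm and the survival function of $W$---after which continuity, symmetry, and unimodality all fall out of the product structure of $f_L$.
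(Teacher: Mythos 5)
Your proof is correct, but it takes a different route from the paper's. You condition on $D_{11}=x$, decompose $\|\bs{D}_1\|_2^2 = x^2 + Q$ with $Q\sim\chi^2_{K-1}$ independent of the competing minimum $W=\min_{2\le t\le T}\|\bs{D}_t\|_2^2$, and obtain the explicit density $f_L(x)=T\,\phi(x)\,G(x^2)$ with $G$ nonincreasing; all three claimed properties then fall out of this product form. The paper instead conditions on the competing minimum, writing $\Pr(L_{K,T}\le c)=T\int_0^\infty \Pr(L_{K,a}'\le c)\,\Pr(\chi^2_K\le a)\,g(a)\,\deri a$ where $L_{K,a}'$ is the truncated Gaussian of \citet[Proposition 2]{LDR18} and $g$ is the density of the minimum of $T-1$ i.i.d.\ $\chi^2_K$ variables; it then represents the density of $L_{K,T}$ as a mixture $\int_0^\infty f_{K,a}'(x)\,g'(a)\,\deri a$ of densities that are each symmetric and unimodal at zero, and invokes the closure of these properties under mixing. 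The two arguments are dual decompositions of the same event: yours is more self-contained (it does not rely on the prior result that $L_{K,a}'$ is symmetric and unimodal) and yields a usable closed-form density, while the paper's leverages existing machinery from the first-type rerandomization literature and avoids redoing the unimodality calculation. Two cosmetic points: the map $x\mapsto x^2$ is non\emph{de}creasing on $[0,\infty)$ (you wrote ``nonincreasing''), though your conclusion that $x\mapsto G(x^2)$ is nonincreasing there is still right since a nonincreasing function composed with a nondecreasing one is nonincreasing; and you should dispose of the case $T=1$ separately (empty competing minimum, $G\equiv 1$, $L_{K,1}\sim\mathcal{N}(0,1)$), as the paper does.
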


\begin{proof}[Proof of Lemma~\ref{lemma:lktunimodal}]
    When $T=1$, $L_{K, T}\sim \mathcal{N}(0,1)$, and Lemma \ref{lemma:lktunimodal} holds obviously. Below we consider the case where $T\ge 2$. 
    By the same logic as in the proof of Lemma \ref{lemma:prob_R2}, 
    \begin{align*}
        \pr( L_{K,T} \le c)= 
        T \cdot \int_0^{\infty} \pr( L_{K,a}' \le c) 
        \pr(\chi_{K}^2 \le a )
        g(a) \deri a,
    \end{align*}
    where $L_{K,a}'$ is defined as in \citet[Proposition 2]{LDR18}, $\chi_{K}^2$ follows the chi-squared distribution with degrees of freedom $K$, and $g(\cdot)$ denote the density of the minimum of $T-1$ i.i.d.\ chi-squared random variables with degrees of freedom $K$. 
    For any $a>0$, 
    let $f_{K,a}'(x)$ denote the density of $L_{K, a}'$ as derived in \citet[][Proof of Proposition 2]{LDR18}. 
    We then have 
    \begin{align*}
        \pr(L_{K, T} \leq c) & = 
        T\cdot \int_0^{\infty}
        \pr( L_{K,a}' \le c )
        \pr( \chi^2_K \leq a ) g(a) \deri a
        = 
        \int_0^{\infty}
        \pr( L_{K,a}' \le c )
        g'(a) \deri a\\
        & = 
        \int_0^{\infty} \int_{-\infty}^{c}
        f_{K,a}'(x) \deri x 
        g'(a) \deri a
        = 
        \int_{-\infty}^{c} \int_0^{\infty} 
        f_{K,a}'(x) g'(a) \deri a \deri x, 
    \end{align*}
    where $g'(a) = T\cdot \pr( \chi^2_K \leq a ) g(a)$. 
    This then implies that $L_{K, a}$ is a continuous random variable, and its density has the following form:
    \begin{align*}
        f_{K,T}(x) = \int_0^{\infty} 
        f_{K,a}'(x) g'(a) \deri a. 
    \end{align*}
    Because $L_{K,a}'$ is symmetric and unimodal around zero \citep[][Proposition 2]{LDR18}, 
    we must have, for any $a>0$,  $f_{K,a}'(x) = f_{K,a}'(-x)$ and 
    $f_{K,a}'(x_1) \ge f_{K,a}'(x_2)$ for any $x_2 \ge x_1 \ge 0$. 
    These then imply that 
    \begin{align*}
        f_{K,T}(-x) = \int_0^{\infty} 
        f_{K,a}'(-x) g'(a) \deri a = \int_0^{\infty} 
        f_{K,a}'(x) g'(a) \deri a = f_{K,T}(x), 
    \end{align*}
    and, for any any $x_2 \ge x_1 \ge 0$,
    \begin{align*}
        f_{K,T}(x_1) = \int_0^{\infty} 
        f_{K,a}'(x_1) g'(a) \deri a 
        \ge \int_0^{\infty} 
        f_{K,a}'(x_2) g'(a) \deri a = f_{K,T}(x_2). 
    \end{align*}
    Thus, $L_{K,T}$ is also symmetric and unimodal around zero. 
    From the above, Lemma~\ref{lemma:lktunimodal} holds. 
\end{proof}

\begin{lemma}\label{lemma:sum_symmetric_unimodal}
    If both $\xi_1$ and $\xi_2$ are symmetric and unimodal around zero, and they are mutually independent, 
    then $\xi_1+\xi_2$ are also symmetric and unimodal around zero. 
\end{lemma}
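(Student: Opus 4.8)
The plan is to separate the two claims—symmetry and unimodality—dispatching symmetry by a one-line distributional identity and establishing unimodality through the Khinchin-type scale-mixture representation of symmetric unimodal laws, which reduces everything to convolutions of uniforms.

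First I would handle symmetry. Since each $\xi_i$ is symmetric about zero we have $\xi_i \sim -\xi_i$, and by independence $(\xi_1, \xi_2) \sim (-\xi_1, -\xi_2)$, so that $\xi_1 + \xi_2 \sim -(\xi_1 + \xi_2)$. This is exactly symmetry of the sum about zero, and it holds with no extra work.

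For unimodality I would invoke the classical characterization (Khinchin's theorem, in its symmetric form) that a law is symmetric and unimodal about zero if and only if it is a scale mixture of symmetric uniforms: there exist nonnegative $R_1, R_2$, independent of each other and of $U_1, U_2 \iidsim \text{Uniform}[-1,1]$, with $\xi_i \sim R_i U_i$. Conditioning on $(R_1, R_2) = (a,b)$, the conditional law of $\xi_1 + \xi_2$ is the convolution of $\text{Uniform}[-a,a]$ with $\text{Uniform}[-b,b]$, whose density is the (possibly degenerate) symmetric trapezoid: flat on $[0, |b-a|]$ and then decreasing to zero on $[|b-a|, a+b]$. This conditional density is manifestly symmetric and nonincreasing in $|x|$, hence unimodal about zero; the degenerate cases $a = 0$ or $b = 0$ reduce to a single symmetric uniform (or a point mass at zero) and are checked directly.

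The remaining step, which is also where the only real care is needed, is to pass from the conditional laws to the marginal law by mixing over $(R_1, R_2)$. Because every conditional density $f_{a,b}$ is symmetric and nonincreasing in $|x|$ about the \emph{same} point zero, averaging them over the law of $(R_1, R_2)$ preserves both features: symmetry is immediate, and from $f_{a,b}(x_1) \ge f_{a,b}(x_2)$ for $0 \le x_1 \le x_2$ one gets $\int f_{a,b}(x_1)\, dP \ge \int f_{a,b}(x_2)\, dP$, so the mixture is again nonincreasing on $[0,\infty)$. The subtlety is precisely that unimodality survives mixing only because all components share the common mode at zero, which symmetry guarantees; if one prefers not to assume densities exist, the identical conclusion follows from the distribution-function characterization, since convexity on $(-\infty, 0)$ and concavity on $(0,\infty)$ of the conditional CDFs are each preserved under the mixture (an integral of convex functions is convex). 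Combining the symmetry established above with this monotonicity yields that $\xi_1 + \xi_2$ is symmetric and unimodal about zero.
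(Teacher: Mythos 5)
Your proof is correct, but it takes a genuinely different route from the paper. The paper disposes of this lemma in one line by citing the classical result of Wintner (1938/1936) that the convolution of two distributions that are symmetric and unimodal about zero is again symmetric and unimodal about zero. You instead reprove that classical theorem from scratch: symmetry by the distributional identity $(\xi_1,\xi_2)\sim(-\xi_1,-\xi_2)$, and unimodality via the Khinchin/Shepp characterization of symmetric unimodal laws as scale mixtures of symmetric uniforms, reducing the problem to the explicit trapezoidal convolution of $\textup{Uniform}[-a,a]$ with $\textup{Uniform}[-b,b]$ and then observing that mixing preserves both symmetry and monotone decrease in $|x|$ because all conditional laws share the mode at zero. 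The one place requiring care---that unimodality is \emph{not} in general preserved under mixtures unless the modes coincide, and that atoms at zero (the degenerate $a=0$ or $b=0$ cases) force one to argue via convexity/concavity of the conditional CDFs rather than densities---is exactly the place you flag and handle correctly. The paper's approach buys brevity at the cost of an external reference; yours buys a self-contained and elementary argument at the cost of about a page, and also makes transparent why the common mode at zero (i.e., the symmetry hypothesis) is essential.
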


\begin{proof}[Proof of Lemma \ref{lemma:sum_symmetric_unimodal}]
    Lemma \ref{lemma:sum_symmetric_unimodal} follows directly from \citet{wintner1936class}. 
\end{proof}

\begin{proof}[\bf Proof of Corollary \ref{cor:sum}]
    It is not difficult to see that 
    the standard Gaussian random variable $\varepsilon_0$ is symmetric and unimodal around zero. 
    From Lemma \ref{lemma:lktunimodal}, the constrained Gaussian random variable $L_{K,T}$ is also symmetric and unimodal around zero. 
    Consequently, 
    from Theorems \ref{thm:asymp} and \ref{thm:asym_equiv}, 
    we can derive that the asymptotic distribution of the difference-in-means estimator under the best-choice rerandomization is symmetric and unimodal around zero, i.e., Corollary \ref{cor:sum} holds. 
\end{proof}

To prove Corollary \ref{cor:reduce_var}, we need the following lemma.

\begin{lemma}\label{lemma:nonincreasing}
    For any fixed $K\ge 1$, $\Var(L_{K, T})$ is nonincreasing in $T$, 
    and 
    $\Var(L_{K, T}) < 1$ for for any $T \geq 2$. 
\end{lemma}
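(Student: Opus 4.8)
The plan is to reduce both claims to a single fact about the minimum order statistic of i.i.d.\ chi-squared variables, using the variance formula $v_{K,T} = \Var(L_{K,T}) = K^{-1}\E(\chi^2_{K(1)})$ that follows from the representation in Proposition \ref{prop:L_KT} together with \eqref{eq:chi2_KT} and \eqref{eq:L_KT_rep} (recall $L_{K,T}$ is mean zero, $S^2=1$, $\E(\beta_K)=1/K$, and $\chi^2_{K,T}\sim\chi^2_{K(1)}$). Here $\chi^2_{K(1)} = \min_{1\le t\le T}\chi^2_{K[t]}$ is the minimum of $T$ i.i.d.\ $\chi^2_K$ random variables. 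Everything then becomes a coupling argument on a single probability space.

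First I would fix $K$ and construct an infinite sequence $\chi^2_{K[1]}, \chi^2_{K[2]}, \ldots$ of i.i.d.\ $\chi^2_K$ random variables on a common probability space, writing $\chi^2_{K(1)}(T) \equiv \min_{1\le t\le T}\chi^2_{K[t]}$ to make the dependence on $T$ explicit. For the monotonicity, I would simply note the recursion $\chi^2_{K(1)}(T+1) = \min\{\chi^2_{K(1)}(T),\, \chi^2_{K[T+1]}\} \le \chi^2_{K(1)}(T)$ almost surely; taking expectations and multiplying by $K^{-1}$ gives $v_{K,T+1}\le v_{K,T}$, so $\Var(L_{K,T})$ is nonincreasing in $T$.

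For the strict bound when $T\ge 2$, I would first record the base case $v_{K,1} = K^{-1}\E(\chi^2_K) = 1$. For $T\ge 2$ we still have $\chi^2_{K(1)}(T) \le \chi^2_{K[1]}$ almost surely, but now the inequality is strict on the event $\{\min_{2\le t\le T}\chi^2_{K[t]} < \chi^2_{K[1]}\}$, which has positive probability because the $\chi^2_{K[t]}$'s are continuous and i.i.d.\ (for instance $\Pr(\chi^2_{K[2]} < \chi^2_{K[1]}) = 1/2$). An almost-sure domination together with strict inequality on a set of positive probability and integrability forces a strict inequality in expectation, so $\E(\chi^2_{K(1)}(T)) < \E(\chi^2_{K[1]}) = K$, whence $v_{K,T} = K^{-1}\E(\chi^2_{K(1)}(T)) < 1$.

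The argument is elementary once the variance representation is in place, and I do not expect a genuine obstacle. The only point meriting a careful sentence is the passage from the almost-sure strict inequality (on a positive-probability event) to the strict inequality of expectations, which I would justify by the standard fact that $X\le Y$ a.s.\ with $\Pr(X<Y)>0$ and $\E|Y|<\infty$ implies $\E X < \E Y$; integrability of $\chi^2_{K[1]}$ is immediate.
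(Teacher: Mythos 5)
Your proposal is correct and follows essentially the same route as the paper: both reduce to the representation $\Var(L_{K,T}) = K^{-1}\E\big(\min_{1\le t\le T}\chi^2_{K[t]}\big)$, obtain monotonicity by coupling the minima on a common probability space, and obtain strictness from the fact that $\Pr(\chi^2_{K[2]} < \chi^2_{K[1]}) = 1/2 > 0$. The only cosmetic difference is that the paper first reduces to $T=2$ via monotonicity and then argues by contradiction (a nonnegative variable with zero mean is a.s.\ zero), whereas you state the contrapositive directly for all $T\ge 2$; the underlying fact is identical.
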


\begin{proof}[Proof of Lemma \ref{lemma:nonincreasing}]  
    For any $K,T\ge 1$, let $U_K$ be the first coordinate of a $K$-dimensional random vector uniformly distributed on the $(K-1)$-dimensional unit sphere and
    $\chi^2_{K[1]}, \chi^2_{K[2]}, \ldots, \chi^2_{K[T+1]}$ be i.i.d.\ chi-squared random variables with degrees of freedom $K$, 
    and assume that they are all mutually independent. 
    Let $\chi_{K,T}^2 = \min_{1\le t \le T} \chi^2_{K[t]}$ and $\chi_{K,T+1}^2 = \min_{1\le t \le T+1} \chi^2_{K[t]}$. 
    From Proposition \ref{prop:L_KT} and Lemma \ref{lemma:lktunimodal}, we can know that
    \begin{align*}
        \Var(L_{K, T}) = \E(\chi_{K,T}^2) \cdot \E(U_K^2)
        \ge 
        \E(\chi_{K,T+1}^2) \cdot \E(U_K^2) = \Var(L_{K, T+1}).
    \end{align*}    
    Thus, $\Var(L_{K, T})$ is nonincreasing in $T$. 

    We then prove that $\Var(L_{K, T}) < 1$ for $T\ge 2$. Because $\Var(L_{K, T})$ is nonincreasing in $T$,  it suffices to prove that $\Var(L_{K, 2}) < 1$. 
    Define $U_K, \chi^2_{K[1]}, \chi^2_{K[2]}$ the same as before, and assume that they are mutually independent. 
    From the proof of Proposition \ref{prop:L_KT}, we can know that $U_K \sim D_{11}/\|\bs{D}_1\|_2$, where $\bs{D}_1$ follows $K$-dimensional standard Gaussian distribution and $D_{11}$ is the first coordinate of $\bs{D}_1$. By symmetry, we then have $\E(U_k^2) = 1/K$. 
    Consequently, from Proposition \ref{prop:L_KT} and Lemma \ref{lemma:lktunimodal}, we have 
    \begin{align*}%
        1 - \Var(L_{K, 2})  
        & = 1 - \frac{1}{K} \E\big(\min_{t = 1,2} \chi^2_{K[t]}\big) = 
        \frac{1}{K} \Big\{ \E(\chi^2_{K[1]}) - \E\big(\min_{t = 1,2} \chi^2_{K[t]}\big) \Big\}
        \nonumber
        \\
        & 
        = \frac{1}{K} \Big\{ \E(\chi^2_{K[1]}) - \E\big(\min_{t = 1,2} \chi^2_{K[t]}\big) \Big\}
        = \frac{1}{K} \E\big(\chi^2_{K,1} - \min_{t = 1,2} \chi^2_{K[t]}\big). 
    \end{align*}
    Thus, to prove that $\Var(L_{K, 2}) < 1$, it suffices to show that $\E(\chi^2_{K[1]} - \min_{t = 1,2} \chi^2_{K[t]}) > 0$. 
    We prove this by contradiction. 
    Suppose that $\E(\chi^2_{K[1]} - \min_{t = 1,2} \chi^2_{K[t]}) = 0$. 
    Because $\chi^2_{K[1]} - \min_{t = 1,2} \chi^2_{K[t]}$ is a nonnegative random variable, the zero mean then implies that 
    $\chi^2_{K[1]} - \min_{t = 1,2} \chi^2_{K[t]}=0$ almost surely, or equivalently $\chi^2_{K[1]} \le \chi^2_{K[2]}$ almost surely. 
    However, $\Pr(\chi^2_{K[1]} \le \chi^2_{K[2]}) = 1/2$ by symmetry, leading to a contradiction. 

    From the above, Lemma \ref{lemma:nonincreasing} holds. 
\end{proof}

\begin{proof}[\bf Proof of Corollary \ref{cor:reduce_var}]
    From Theorems \ref{thm:asymp} and \ref{thm:asym_equiv}, 
    the asymptotic variance of the difference-in-means estimator scaled by $V_{\tau\tau}^{-1/2}$ under the best-choice rerandomization is 
    $(1-R^2) + R^2 v_{K,T} = 1 - (1-v_{K,T})R^2$. 
    The asymptotic variance of the difference-in-means estimator scaled by $V_{\tau\tau}^{-1/2}$ under the CRE, which can also be viewed as a special case of the best-choice rerandomization with $T=1$, is $1$. 
    Consequently, the percentage reduction in asymptotic variance is $(1-v_{K,T})R^2$. 
    From Lemma \ref{lemma:nonincreasing}, 
    the percentage reduction is nonnegative and is nondecreasing in $T$. 
    By its expression, the percentage reduction is obviously nondecreasing in $R^2$. 
    From the above, Corollary \ref{cor:reduce_var} holds. 
\end{proof}

To prove Corollary \ref{cor:reduce_var}, we need the following four lemmas.

\begin{lemma}\label{lemma:prob_T}
    For any given $K\ge 1$ and $c\ge 0$, the probability $\pr(L_{K,T} \ge c)$ is nonincreasing in $T\ge 1$. 
\end{lemma}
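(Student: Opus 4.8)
The plan is to work with the multiplicative representation $L_{K,T} \sim \chi_{K,T} U_K$ from Proposition \ref{prop:L_KT}, where $\chi_{K,T} = \sqrt{\chi^2_{K,T}}$ is the square root of the minimum of $T$ i.i.d.\ chi-squared random variables with $K$ degrees of freedom, $U_K$ is the first coordinate of a random vector uniform on the unit sphere, and $\chi_{K,T} \ind U_K$. Since $L_{K,T}$ factors as a product of a nonnegative factor whose law depends on $T$ and a fixed, $T$-free factor $U_K$, the monotonicity claim reduces to a statement about how a tail probability behaves under a monotone transformation applied to a family that is stochastically ordered in $T$.

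Concretely, I would write $\pr(L_{K,T}\ge c)=\E[g(\chi_{K,T})]$, where for fixed $c\ge 0$ I set $g(x) = \pr(xU_K \ge c)$. The first step is to check that $g$ is nondecreasing on $(0,\infty)$: for $x>0$ one has $g(x)=\pr(U_K \ge c/x)$, and since $c\ge 0$ the threshold $c/x$ is nonincreasing in $x$, so $g$ is the composition of a nonincreasing map with the (always nonincreasing) survival function of $U_K$, hence nondecreasing in $x$. This is precisely where the hypothesis $c\ge 0$ enters; for $c<0$ the monotonicity would reverse. Moreover $\chi_{K,T}>0$ almost surely, so $g$ is never evaluated at $0$ and the division is harmless.

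The second step is the stochastic ordering of $\chi_{K,T}$ in $T$. I would realize $\chi^2_{K,T}$ and $\chi^2_{K,T+1}$ on a common probability space from the same i.i.d.\ sequence $\chi^2_{K[1]}, \chi^2_{K[2]}, \ldots$, so that $\chi^2_{K,T+1}=\min\{\chi^2_{K,T},\chi^2_{K[T+1]}\}\le \chi^2_{K,T}$ pointwise, and hence $\chi_{K,T+1}\le \chi_{K,T}$ pointwise. Combining this with Step 1 and an independent copy of $U_K$ shared by both, the pointwise domination and monotonicity of $g$ give $\E[g(\chi_{K,T})]\ge \E[g(\chi_{K,T+1})]$, that is $\pr(L_{K,T}\ge c)\ge \pr(L_{K,T+1}\ge c)$, which is the desired claim.

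I do not expect a serious obstacle here, since the argument is essentially "a nondecreasing function of a stochastically nonincreasing family." The only points requiring care are verifying that $g$ is nondecreasing and recognizing that the sign condition $c\ge 0$ is exactly what makes this hold, together with the choice to use the factorized representation of Proposition \ref{prop:L_KT} rather than the conditional definition in \eqref{eq:L_KT}, so that the entire $T$-dependence is isolated in the chi-squared order statistic. The coupling and the resulting pointwise domination are then routine.
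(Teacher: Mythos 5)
Your proof is correct and follows essentially the same route as the paper's: both rest on the factorization $L_{K,T}\sim \chi_{K,T}U_K$ from Proposition \ref{prop:L_KT} together with the coupling $\chi^2_{K,T+1}=\min\{\chi^2_{K,T},\chi^2_{K[T+1]}\}\le \chi^2_{K,T}$ built from a common i.i.d.\ sequence. The only cosmetic difference is how the sign of $U_K$ is handled: the paper invokes the symmetry of $L_{K,T}$ to write $2\pr(L_{K,T}\ge c)=\pr(\chi_{K,T}|U_K|\ge c)$ and uses pointwise domination directly, whereas you condition on $\chi_{K,T}$ and check that $x\mapsto \pr(xU_K\ge c)$ is nondecreasing for $c\ge 0$, which reaches the same conclusion without needing the symmetry lemma.
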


\begin{proof}[Proof of Lemma \ref{lemma:prob_T}]
    For any $K,T\ge 1$, let $U_K$ be the first coordinate of a $K$-dimensional random vector uniformly distributed on the $(K-1)$-dimensional unit sphere and
    $\chi^2_{K[1]}, \chi^2_{K[2]}, \ldots, \chi^2_{K[T+1]}$ be i.i.d.\ chi-squared random variables with degrees of freedom $K$, 
    and assume that they are all mutually independent. 
    Let $\chi_{K,T}^2 = \min_{1\le t \le T} \chi^2_{K[t]}$ and $\chi_{K,T+1}^2 = \min_{1\le t \le T+1} \chi^2_{K[t]}$. 
    From Lemma \ref{lemma:sum_symmetric_unimodal} and Proposition \ref{prop:L_KT}, for any $c\ge 0$, 
    \begin{align*}
        2\pr(L_{K,T} \ge c) 
        & = \pr(|L_{K,T}| \ge c) 
        = \pr(|\chi_{K,T}| |U_K| \ge c) 
        \ge \pr(|\chi_{K,T+1}| |U_K| \ge c) 
        = 
        \pr(|\chi_{K,T+1}| |U_K| \ge c)
        \\
        & = 2\pr(L_{K,T+1} \ge c). 
    \end{align*}
    Therefore, for any $c\ge 0$, $\pr(L_{K,T} \ge c)$ is nonincreasing in $T$, i.e., Lemma \ref{lemma:prob_T} holds. 
\end{proof}

\begin{lemma}\label{lemma:sum}
    Let $\zeta_0, \zeta_1$ and $\zeta_2$ be three random variables, where $\zeta_0 \ind \zeta_1$ and $\zeta_0 \ind \zeta_2$. If
    \begin{enumerate}
        \item[(1)] $\zeta_0$ is continuous and symmetric and unimodal around zero, or $\zeta_0 = 0$,
        \item[(2)] $\zeta_1$ and $\zeta_2$ are symmetric and unimodal around zero,
        \item[(3)] $\pr(\zeta_1 \geq c) \leq \pr(\zeta_2 \geq c)$ for any $c > 0$,
    \end{enumerate}
    then $\pr(\zeta_0 + \zeta_1 \geq c) \leq \pr(\zeta_0 + \zeta_2 \geq c)$ for any $c > 0$.
\end{lemma}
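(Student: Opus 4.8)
The plan is to condition on $\zeta_0$ so as to reduce the claim to a single one-dimensional integral inequality, and then to exploit the parity of the two factors appearing in that integral. First I dispose of the degenerate case $\zeta_0=0$, where the asserted inequality is exactly hypothesis (3). Otherwise $\zeta_0$ is continuous, symmetric and unimodal around zero, hence admits a density $f_0$ that is symmetric, $f_0(y)=f_0(-y)$, and nonincreasing in $|y|$; this is the only place where I will use the unimodality of $\zeta_0$.

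Conditioning on $\zeta_0$ and using the independence of $\zeta_0$ and $\zeta_i$, I write, for $i=1,2$ and any $c>0$,
\[
\pr(\zeta_0+\zeta_i\ge c)=\int_{-\infty}^{\infty}\bar G_i(c-y)\,f_0(y)\,\deri y,\qquad \bar G_i(t)\equiv\pr(\zeta_i\ge t).
\]
Subtracting the two expressions and setting $D(t)\equiv\bar G_2(t)-\bar G_1(t)$, then substituting $t=c-y$, gives
\[
\pr(\zeta_0+\zeta_2\ge c)-\pr(\zeta_0+\zeta_1\ge c)=\int_{-\infty}^{\infty}D(c-y)f_0(y)\,\deri y=\int_{-\infty}^{\infty}D(t)\,f_0(c-t)\,\deri t.
\]

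The key step is a parity fold. Because $\zeta_1,\zeta_2$ are symmetric (and continuous, so that no atom away from the origin interferes), $D$ is an odd function, $D(-t)=-D(t)$, while hypothesis (3) says $D(t)\ge 0$ for $t>0$. Pairing the contributions of $t$ and $-t$ and using $f_0\bigl(c-(-t)\bigr)=f_0(c+t)$ collapses the integral to
\[
\pr(\zeta_0+\zeta_2\ge c)-\pr(\zeta_0+\zeta_1\ge c)=\int_0^{\infty}D(t)\,\bigl[f_0(c-t)-f_0(c+t)\bigr]\,\deri t.
\]
It then remains only to verify that the integrand is nonnegative for every $t>0$. The factor $D(t)\ge 0$ holds by (3). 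For the bracket, the triangle inequality gives $|c-t|\le c+t=|c+t|$ whenever $c,t>0$, so that, $f_0$ being nonincreasing in $|\cdot|$, one has $f_0(c-t)=f_0(|c-t|)\ge f_0(|c+t|)=f_0(c+t)$; hence the bracket is nonnegative. Thus the whole integrand is nonnegative and the difference is $\ge 0$, which is exactly the assertion.

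The main obstacle I expect is the sign bookkeeping in the fold: the argument hinges on using, simultaneously, the oddness of $D$ (from the symmetry of $\zeta_1,\zeta_2$) and the evenness of $f_0$ (from the symmetry of $\zeta_0$), and it is precisely the pairing $|c-t|\le|c+t|$ together with $f_0$ being nonincreasing that produces the favorable sign of the bracket. It is worth noting that unimodality enters essentially only through $\zeta_0$; the symmetry of $\zeta_1,\zeta_2$ is what makes $D$ odd, while their unimodality (and Lemma~\ref{lemma:sum_symmetric_unimodal}) serves mainly to guarantee the continuity needed for the fold to be exact. A minor point to dispatch is a possible atom of $\zeta_i$ at the origin, which contributes only on a $\deri t$-null set and is therefore harmless.
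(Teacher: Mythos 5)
Your proof is correct. The conditioning identity, the oddness of $D(t)=\Pr(\zeta_2\ge t)-\Pr(\zeta_1\ge t)$ away from a null set (symmetry gives $\bar G_i(-t)=1-\Pr(\zeta_i>t)$, and unimodality about zero rules out atoms at any $t\neq 0$, so the only possible discrepancy is at $t=0$, which is Lebesgue-null), the fold to $\int_0^\infty D(t)\,[f_0(c-t)-f_0(c+t)]\,\mathrm{d}t$, and the sign of the bracket via $|c-t|\le c+t$ and the monotonicity of $f_0$ in $|\cdot|$ all check out; the degenerate case $\zeta_0=0$ is handled separately and correctly. However, your route differs from the paper's: the paper does not prove the lemma at all but simply cites Lemma~A7 of Li, Ding and Rubin (2018) and Theorem~7.5 of Dharmadhikari and Joag-Dev (1988), i.e.\ it outsources the statement to the known fact that convolution with a symmetric unimodal law preserves the peakedness (stochastic) ordering of symmetric unimodal random variables. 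What you have written is essentially a self-contained proof of that cited result (in the spirit of Birnbaum's peakedness comparison), which buys independence from the external references at the cost of a page of computation; the one point worth making explicit if this were to replace the citation is that a continuous, symmetric, unimodal-about-zero law admits a density version that is symmetric and nonincreasing in $|y|$ (via Khinchine's representation), which is the only property of $\zeta_0$ your bracket argument uses.
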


\begin{proof}[Proof of Lemma \ref{lemma:sum}]
    Lemma \ref{lemma:sum} follows directly from \citet[lemma~A7]{LDR18}; 
    see also \citet[Theorem 7.5]{DJ88}. 
\end{proof}

\begin{lemma}\label{lemma:prob_sum_T}
    For any given $K\ge 1$, $R^2\in [0,1]$ and $c\ge 0$, the probability $\pr(\sqrt{1-R^2} \varepsilon_0 + \sqrt{R^2} L_{K,T} \ge c)$ is nonincreasing in $T\ge 1$, 
    where $\varepsilon_0\sim \mathcal{N}(0,1)$ and is independent of $L_{K,T}$. 
\end{lemma}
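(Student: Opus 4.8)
The plan is to derive the monotonicity from Lemma~\ref{lemma:prob_T} by a single application of Lemma~\ref{lemma:sum}. Fix $K \ge 1$, $R^2 \in [0,1]$, and an arbitrary $T \ge 1$; it suffices to show that for every $c > 0$,
\[
\pr\big(\sqrt{1-R^2}\,\varepsilon_0 + \sqrt{R^2}\,L_{K,T+1} \ge c\big)
\le
\pr\big(\sqrt{1-R^2}\,\varepsilon_0 + \sqrt{R^2}\,L_{K,T} \ge c\big),
\]
and to handle the endpoint $c = 0$ separately. The case $R^2 = 0$ is immediate, since then both sides equal $\pr(\varepsilon_0 \ge c)$, which does not depend on $T$; so I would assume $R^2 \in (0,1]$ from now on.

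With $R^2 \in (0,1]$ fixed, I would set $\zeta_0 = \sqrt{1-R^2}\,\varepsilon_0$, $\zeta_1 = \sqrt{R^2}\,L_{K,T+1}$, and $\zeta_2 = \sqrt{R^2}\,L_{K,T}$, taking $\varepsilon_0$ independent of both $L_{K,T}$ and $L_{K,T+1}$ (which is permissible, since the claimed inequality involves only the marginal laws of $\zeta_0+\zeta_1$ and $\zeta_0+\zeta_2$). I would then verify the three hypotheses of Lemma~\ref{lemma:sum}. For (1), $\zeta_0 \sim \mathcal{N}(0, 1-R^2)$ is continuous and symmetric and unimodal around zero when $R^2 < 1$, and $\zeta_0 = 0$ when $R^2 = 1$, exactly covering the two allowed alternatives. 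For (2), Lemma~\ref{lemma:lktunimodal} gives that $L_{K,T}$ and $L_{K,T+1}$ are symmetric and unimodal around zero, and scaling by the positive constant $\sqrt{R^2}$ preserves both properties, so $\zeta_1$ and $\zeta_2$ qualify. For (3), the substitution $c \mapsto c/\sqrt{R^2}$ reduces the requirement $\pr(\zeta_1 \ge c) \le \pr(\zeta_2 \ge c)$ to $\pr(L_{K,T+1} \ge c/\sqrt{R^2}) \le \pr(L_{K,T} \ge c/\sqrt{R^2})$, which holds for every $c > 0$ by Lemma~\ref{lemma:prob_T} applied at the threshold $c/\sqrt{R^2} > 0$. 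Lemma~\ref{lemma:sum} then yields the displayed inequality for all $c > 0$.

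It remains to treat $c = 0$. Here I would argue that $\pr(\sqrt{1-R^2}\,\varepsilon_0 + \sqrt{R^2}\,L_{K,T} \ge 0) = 1/2$ for every $T$: the sum is a convolution of independent random variables each symmetric around zero, hence itself symmetric around zero, and its law has no atom (because either $\varepsilon_0$ is a nondegenerate Gaussian or, when $R^2 = 1$, $L_{K,T}$ is continuous by Lemma~\ref{lemma:lktunimodal}); therefore $\pr(\cdot \ge 0) = \pr(\cdot \le 0)$ together with $\pr(\cdot = 0) = 0$ forces the value $1/2$, which is constant in $T$ and so trivially nonincreasing. Combining this with the previous paragraph establishes that $\pr(\sqrt{1-R^2}\,\varepsilon_0 + \sqrt{R^2}\,L_{K,T} \ge c)$ is nonincreasing in $T$ for every $c \ge 0$, proving the lemma. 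There is no genuine obstacle beyond bookkeeping: the analytic content---monotonicity of the tail of $L_{K,T}$ and the principle that convolving with a symmetric unimodal variable preserves stochastic ordering---is already supplied by Lemmas~\ref{lemma:prob_T} and~\ref{lemma:sum}, so the only care needed is in the degenerate parameter values $R^2 \in \{0,1\}$ and the boundary threshold $c = 0$.
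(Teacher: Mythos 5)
Your proposal is correct and follows essentially the same route as the paper: the paper also dispatches $c=0$ by symmetry of the sum around zero and handles $c>0$ by combining Lemma~\ref{lemma:prob_T} (tail monotonicity of $L_{K,T}$) with Lemma~\ref{lemma:sum} and the symmetry/unimodality facts from Lemmas~\ref{lemma:lktunimodal} and~\ref{lemma:sum_symmetric_unimodal}. Your extra care about the degenerate cases $R^2\in\{0,1\}$ and the absence of an atom at zero only makes the bookkeeping more explicit than the paper's one-line argument.
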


\begin{proof}[Proof of Lemma \ref{lemma:prob_sum_T}]
    Lemma \ref{lemma:prob_sum_T} holds obivously when $c=0$, because $\sqrt{1-R^2} \varepsilon_0 + \sqrt{R^2} L_{K,T}$ is symmetric around zero. 
    When $c>0$, 
    Lemma \ref{lemma:prob_sum_T} follows immediately from Lemmas \ref{lemma:sum_symmetric_unimodal}, \ref{lemma:prob_T} and \ref{lemma:sum}. 
\end{proof}

\begin{lemma}\label{lemma:prob_R2}
    For any given $K, T\ge 1$ and $c\ge 0$, the probability $\pr(\sqrt{1-R^2} \varepsilon_0 + \sqrt{R^2} L_{K,T} \ge c)$ is nonincreasing in $R^2\in [0,1]$, 
    where $\varepsilon_0\sim \mathcal{N}(0,1)$ and is independent of $L_{K,T}$. 
\end{lemma}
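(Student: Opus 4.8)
The plan is to read the statement as a ``more peaked'' comparison between two members of the family $W(r) := \sqrt{1-r}\,\varepsilon_0 + \sqrt{r}\,L_{K,T}$, $r=R^2\in[0,1]$. For $c\ge 0$ the quantity $\pr(W(r)\ge c)$ is half of $\pr(|W(r)|\ge c)$, since $W(r)$ is symmetric about zero by Lemmas \ref{lemma:sum_symmetric_unimodal} and \ref{lemma:lktunimodal}; hence monotonicity in $r$ is equivalent to showing that $W(r_2)$ is more peaked than $W(r_1)$ whenever $r_1<r_2$. The one input I would record at the outset is that $L_{K,T}$ is itself more peaked than a standard Gaussian: this is immediate from Lemma \ref{lemma:prob_T}, because $\pr(L_{K,T}\ge c)$ is nonincreasing in $T$ and $L_{K,1}\sim\mathcal N(0,1)$, so $\pr(L_{K,T}\ge c)\le\pr(\varepsilon_0\ge c)$ for every $c\ge 0$, and the same domination survives multiplication of both $L_{K,T}$ and $\varepsilon_0$ by any common nonnegative constant.

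First I would use the infinite divisibility of the Gaussian to manufacture a common summand. Splitting the Gaussian part of $W(r_1)$ as $\sqrt{1-r_1}\,\varepsilon_0 \overset{d}{=}\sqrt{1-r_2}\,\varepsilon_0' + \sqrt{r_2-r_1}\,\varepsilon_0''$ with $\varepsilon_0',\varepsilon_0''$ independent standard Gaussians yields
\[
W(r_2)\overset{d}{=}\sqrt{1-r_2}\,\varepsilon_0' + \sqrt{r_2}\,L_{K,T},\qquad W(r_1)\overset{d}{=}\sqrt{1-r_2}\,\varepsilon_0' + \big(\sqrt{r_2-r_1}\,\varepsilon_0'' + \sqrt{r_1}\,L_{K,T}\big),
\]
which share the independent, symmetric, unimodal term $\zeta_0=\sqrt{1-r_2}\,\varepsilon_0'$. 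Both bracketed remainders are symmetric and unimodal by Lemma \ref{lemma:sum_symmetric_unimodal}, so by the peakedness-under-convolution Lemma \ref{lemma:sum} it suffices to compare the remainders, that is, to prove
\begin{align*}
\pr\big(\sqrt{r_2}\,L_{K,T}\ge c\big)\ \le\ \pr\big(\sqrt{r_1}\,L_{K,T}+\sqrt{r_2-r_1}\,\varepsilon_0''\ge c\big),\qquad c\ge 0. \tag{$\star$}
\end{align*}
Dividing by $\sqrt{r_2}$ and setting $\lambda=r_1/r_2\in[0,1)$, $(\star)$ says exactly that $L_{K,T}$ is more peaked than the blend $\sqrt{\lambda}\,L_{K,T}'+\sqrt{1-\lambda}\,\varepsilon_0''$ (for an independent copy $L_{K,T}'$), i.e.\ that diluting $L_{K,T}$ with an independent Gaussian un-peaks it.

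The hard part will be $(\star)$, and the competing scalings are exactly why a pure peakedness-lemma chain does not close: relative to $W(r_1)$ the target $W(r_2)$ carries a \emph{larger} (hence less peaked) multiple of $L_{K,T}$ but a \emph{smaller} (hence more peaked) Gaussian, so the whole conclusion must rest on the excess peakedness of $L_{K,T}$ over the Gaussian and cannot be gotten by replacing $\varepsilon_0''$ in $(\star)$ with an independent copy of $L_{K,T}$ — that detour overshoots, since a variance-matched sum of two independent copies can be strictly more peaked at the center than a single copy. The route I would take is to condition on $L_{K,T}'$, so that the blend becomes $\mathcal N(\sqrt{\lambda}\,L_{K,T}',\,1-\lambda)$ and $(\star)$ turns into $\E[g(L_{K,T}')]\ge \pr(L_{K,T}\ge c)$ with $g(\ell)=\pr(\mathcal N(\sqrt{\lambda}\,\ell,1-\lambda)\ge c)$; symmetrizing $g$ into an even function, I would then integrate it against the symmetric unimodal law of $L_{K,T}$ using the Khinchine representation $L_{K,T}\overset{d}{=}A\cdot\mathrm{Unif}[-1,1]$ together with the pointwise domination $\pr(|L_{K,T}|\ge t)\le\pr(|\varepsilon_0|\ge t)$ to settle the resulting scalar monotonicity. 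An equivalent and possibly cleaner finish is to differentiate $r\mapsto\pr(W(r)\ge c)$ and show the derivative is nonpositive by an integration by parts that moves the derivative onto the density of $L_{K,T}$, which Lemma \ref{lemma:lktunimodal} guarantees is nonincreasing on $[0,\infty)$. Either way $(\star)$ is the crux; the rest is bookkeeping resting on Lemmas \ref{lemma:sum_symmetric_unimodal}, \ref{lemma:lktunimodal}, \ref{lemma:sum}, \ref{lemma:prob_T} and the representation in Proposition \ref{prop:L_KT}.
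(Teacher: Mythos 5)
Your reduction is sound as far as it goes: splitting the Gaussian part as $\sqrt{1-r_1}\,\varepsilon_0 \overset{d}{=}\sqrt{1-r_2}\,\varepsilon_0'+\sqrt{r_2-r_1}\,\varepsilon_0''$ and invoking Lemma \ref{lemma:sum} correctly reduces the claim to your inequality $(\star)$, which after rescaling by $\sqrt{r_2}$ is precisely the special case $r_2=1$ of the lemma. But that special case is where all the content lives, and neither of your sketched completions closes it. For the conditioning route: the symmetrized $\tilde g(\ell)=\tfrac12\{g(\ell)+g(-\ell)\}$ is nondecreasing in $|\ell|$, so the pointwise domination $\pr(|L_{K,T}|\ge t)\le\pr(|\varepsilon_0|\ge t)$ yields only $\E[\tilde g(L_{K,T})]\le\E[\tilde g(\varepsilon_0)]=\pr(\varepsilon_0\ge c)$ --- an \emph{upper} bound, while $(\star)$ requires the \emph{lower} bound $\E[\tilde g(L_{K,T})]\ge\tfrac12\pr(|L_{K,T}|\ge c)$; and the difference $\tilde g(\ell)-\tfrac12\one(|\ell|\ge c)$ is not monotone in $|\ell|$ (it jumps down at $|\ell|=c$), so stochastic domination of $|L_{K,T}|$ by $|\varepsilon_0|$ decides nothing. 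For the differentiation route: writing $f$ for the density of $L_{K,T}$, computing $\tfrac{\deri}{\deri r}\pr(W(r)\ge c)$ and integrating by parts leaves you needing $f'(\ell)+\ell f(\ell)\le 0$ on $[0,\infty)$, i.e.\ that $f/\phi$ is nonincreasing on $[0,\infty)$ --- a monotone-likelihood-ratio property relative to the Gaussian. Lemma \ref{lemma:lktunimodal} gives only $f'\le 0$ there, which is strictly weaker, so the sign you need does not follow from the lemmas you cite. In short, the crux is identified but not proved, and it is not clear it can be proved from ``symmetric, unimodal, more peaked than Gaussian'' alone.

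The paper's proof avoids this entirely by conditioning on $a=\min_{2\le t\le T}\chi^2_{K[t]}$. Using the representation behind \eqref{eq:L_KT} and \eqref{eq:L_KT_rep}, it writes $\pr(\sqrt{1-R^2}\,\varepsilon_0+\sqrt{R^2}\,L_{K,T}\ge c)$ as $T\int_0^\infty \pr\big(\sqrt{1-R^2}\,\varepsilon_0+\sqrt{R^2}\,L_{K,a}'\ge c\big)\,\pr(\chi^2_K\le a)\,g(a)\,\deri a$, a mixture --- with weights free of $R^2$ --- of the corresponding probabilities for the threshold-truncated variable $L_{K,a}'$ of \citet{LDR18}, and then applies their Lemma A4 pointwise in $a$. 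If you wish to keep your decomposition, you would still need to import that result (or directly establish the likelihood-ratio property of $L_{K,T}$ relative to the Gaussian) to prove $(\star)$.
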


\begin{proof}[Proof of Lemma \ref{lemma:prob_R2}]
    Because $L_{K,1} \sim \mathcal{N}(0,1)$, Lemma \ref{lemma:prob_R2} holds obviously when $T=1$. 
    For any $K\ge 1$ and $T \ge 2$, let $U_K$ be the first coordinate of a $K$-dimensional random vector unformly distributed on the $(K-1)$-dimensional unit sphere and
    $\chi^2_{K[1]}, \chi^2_{K[2]}, \ldots, \chi^2_{K[T]}$ be i.i.d.\ chi-squared random variables with degrees of freedom $K$, 
    and assume that they are all mutually independent and are independent of $\varepsilon_0$. 
    From \eqref{eq:L_KT_equiv_proof}, for any $R^2\in [0,1]$ and $c\ge 0$, 
    \begin{align*}
        \pr(\sqrt{1-R^2} \varepsilon_0 + \sqrt{R^2} L_{K,T} \ge c)
        & = 
        \pr\big( \sqrt{1-R^2} \varepsilon_0 + \sqrt{R^2} U_K \chi_{K[1]} \ge c \mid \chi_{K[1]}^2 \le \min_{2\le t\le T} \chi^2_{K[t]} \big)\\
        & = 
        \frac{\pr\big( \sqrt{1-R^2} \varepsilon_0 + \sqrt{R^2} U_K \chi_{K[1]} \ge c, \chi_{K[1]}^2 \le \min_{2\le t\le T} \chi^2_{K[t]} \big)}{
        \pr\big( \chi_{K[1]}^2 \le \min_{2\le t\le T} \chi^2_{K[t]} \big)
        }\\
        & = 
        T \cdot \pr\big( \sqrt{1-R^2} \varepsilon_0 + \sqrt{R^2} U_K \chi_{K[1]} \ge c, \chi_{K[1]}^2 \le \min_{2\le t\le T} \chi^2_{K[t]} \big).
    \end{align*}
    Let $g(a)$ denote the density of $\min_{2\le t\le T} \chi^2_{K[t]}$. We then have 
    \begin{align*}
        & \quad \ \pr(\sqrt{1-R^2} \varepsilon_0 + \sqrt{R^2} L_{K,T} \ge c)\\
        & = 
        T \cdot \int_0^{\infty} \pr\big( \sqrt{1-R^2} \varepsilon_0 + \sqrt{R^2} U_K \chi_{K[1]} \ge c, \chi_{K[1]}^2 \le a  \big) g(a) \deri a
        \\
        & = 
        T \cdot \int_0^{\infty} \pr\big( \sqrt{1-R^2} \varepsilon_0 + \sqrt{R^2} U_K \chi_{K[1]} \ge c \mid  \chi_{K[1]}^2 \le a  \big) 
        \pr(\chi_{K[1]}^2 \le a )
        g(a) \deri a\\
        & = 
        T \cdot \int_0^{\infty} \pr( \sqrt{1-R^2} \varepsilon_0 + \sqrt{R^2} L_{K,a}' \ge c) 
        \pr(\chi_{K[1]}^2 \le a )
        g(a) \deri a,
    \end{align*}
    where $L_{K,a}'$ is defined as in \citet[Proposition 2]{LDR18}. 
    From \citet[Lemma A4]{LDR18}, for any $0\le R_1^2 \le R_2^2 \le 1$, 
    $\pr( \sqrt{1-R_1^2} \varepsilon_0 + \sqrt{R_1^2} L_{K,a}' \ge c) \ge \pr( \sqrt{1-R_2^2} \varepsilon_0 + \sqrt{R_2^2} L_{K,a}' \ge c)$ for any $a>0$, and thus 
    \begin{align*}
        \pr(\sqrt{1-R_1^2} \varepsilon_0 + \sqrt{R_1^2} L_{K,T} \ge c)
        & = 
        T \cdot \int_0^{\infty} \pr( \sqrt{1-R_1^2} \varepsilon_0 + \sqrt{R_1^2} L_{K,a}' \ge c) 
        \pr(\chi_{K[1]}^2 \le a )
        g(a) \deri a
        \\
        & \ge 
        T \cdot \int_0^{\infty} \pr( \sqrt{1-R_2^2} \varepsilon_0 + \sqrt{R_2^2} L_{K,a}' \ge c) 
        \pr(\chi_{K[1]}^2 \le a )
        g(a) \deri a\\
        & = \pr(\sqrt{1-R_2^2} \varepsilon_0 + \sqrt{R_2^2} L_{K,T} \ge c). 
    \end{align*}
    From the above, Lemma \ref{lemma:prob_R2} holds. 
\end{proof}

\begin{proof}[\bf Proof of Corollary \ref{cor:reduce_qr}]
    Note that when $R^2=0$ or $T=1$, the asymptotic distribution of $V_{\tau\tau}^{-1/2} (\hat{\tau}_{(1)} - \tau)$ under the best-choice rerandomization reduces to that under the CRE, i.e., a standard Gaussian distribution. 
    From Lemmas \ref{lemma:prob_sum_T} and \ref{lemma:prob_R2},
    the asymptotic symmetric quantile ranges under the best-choice rerandomization will be shorter than that under the CRE, 
    and, moreover, the percentage reduction is nondecreasing in $R^2$ and $T$. 
    Therefore, Corollary \ref{cor:reduce_qr} holds. 
\end{proof}

\section{Proof for the asymptotic behavior of the constrained Gaussian random variable}

Below we first show that, for any sequence of positive integers $\{K_n: n\ge 1\}$ and $\{T_n: n\ge 1\}$, 
$L_{K_n, T_n} = o_{\Pr}(1)$ if and only if $\Var(L_{K_n, T_n}) = o(1)$. 
By the same logic as \citet[][Proposition A2]{wang2022rerandomization}, 
it suffices to show that $\{L_{K_n, T_n}^2: n\ge 1\}$ is uniformly integrable. 
From Lemma \ref{lemma:prob_T}, 
for any $K,T\ge 1$, $L_{K,T}$ is stochastically smaller than a standard Gaussian random variable $\varepsilon_0^2 \in \mathcal{N}(0,1)$. 
Similar to the proof of \citet[][Proposition A2]{wang2022rerandomization}, 
This then implies that, for any $c > 0$, 
$\sup_{n\ge 1} \E\{ L_{K_n, T_n}^2 \I(L_{K_n, T_n}^2> c) \} \le \E\{ \varepsilon_0^2 \I(\varepsilon_0^2> c) \}$. 
Letting $c\rightarrow \infty$ and applying the dominated convergence theorem, we can know that $\{L_{K_n, T_n}^2: n\ge 1\}$ must be uniformly integrable. 

In the remaining of this section, we will focus on the asymptotic behavior of the variance $v_{K,T}$ of the constrained Gaussian random variable $L_{K,T}$.

\subsection{Technical lemmas and their proofs}

\begin{lemma}\label{lemma:varlbnd}
	For any $a > 0$, we have that
        \begin{align*}
            \Var(L_{K, T}) \ge \pr(\chi_K^2 > a)^T \cdot \frac{a}{K}, 
        \end{align*}
        and 
	\begin{align*}
	& \quad \ \Var(L_{K, T}) \\
	& \leq \min\left\{\frac{a}{K} + \frac{1}{K} \int_{a}^{\infty} \pr(\chi_K^2 > b)^T \deri b, \ 
        1 - \frac{1 - \pr(\chi_K^2 > a)^{T-1}}{K} \int_{a}^{\infty} \pr(\chi_K^2 > b) \deri b \right\}. 
	\end{align*}
\end{lemma}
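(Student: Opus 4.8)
The plan is to collapse the variance into a single scalar expectation and then estimate a tail integral. By Proposition \ref{prop:L_KT} we have $L_{K,T}\sim \chi_{K,T}U_K$ with $\chi_{K,T}\ind U_K$; since $L_{K,T}$ has mean zero and $\E(U_K^2)=1/K$, and since $\chi^2_{K,T}\sim \chi^2_{K(1)}=\min_{1\le t\le T}\chi^2_{K[t]}$ by \eqref{eq:chi2_KT}, the first step records the identity
\[
\Var(L_{K,T}) \;=\; \frac{1}{K}\,\E\big(\chi^2_{K(1)}\big) \;=\; \frac{1}{K}\int_0^\infty \pr\big(\chi^2_{K(1)}>b\big)\,\deri b \;=\; \frac{1}{K}\int_0^\infty \pr(\chi_K^2>b)^T\,\deri b,
\]
where the last equality uses independence of the $\chi^2_{K[t]}$'s, so that $\pr(\chi^2_{K(1)}>b)=\pr(\chi_K^2>b)^T$. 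Everything else is elementary manipulation of this representation.

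For the lower bound I would restrict the integral to $[0,a]$ and use that $b\mapsto\pr(\chi_K^2>b)$ is nonincreasing, bounding the integrand below by its value at $b=a$:
\[
\int_0^\infty \pr(\chi_K^2>b)^T\,\deri b \;\ge\; \int_0^a \pr(\chi_K^2>b)^T\,\deri b \;\ge\; a\,\pr(\chi_K^2>a)^T,
\]
which after dividing by $K$ gives $\Var(L_{K,T})\ge \frac{a}{K}\pr(\chi_K^2>a)^T$. For the first term of the upper bound I would split the tail integral at $a$, bound $\pr(\chi_K^2>b)^T\le 1$ on $[0,a]$ so that $\int_0^a\le a$, and keep $\int_a^\infty \pr(\chi_K^2>b)^T\,\deri b$ untouched, yielding $\Var(L_{K,T})\le \frac{a}{K}+\frac{1}{K}\int_a^\infty\pr(\chi_K^2>b)^T\,\deri b$.

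The second term of the upper bound is cleanest when phrased through the complement. Since $\E(\chi_K^2)=K$ (equivalently $v_{K,1}=1$), the identity $K=\int_0^\infty\pr(\chi_K^2>b)\,\deri b$ together with the factorization $\pr(\chi_K^2>b)-\pr(\chi_K^2>b)^T=\pr(\chi_K^2>b)\big[1-\pr(\chi_K^2>b)^{T-1}\big]$ gives
\[
K-\E\big(\chi^2_{K(1)}\big) \;=\; \int_0^\infty \pr(\chi_K^2>b)\big[1-\pr(\chi_K^2>b)^{T-1}\big]\,\deri b.
\]
Discarding the contribution of $[0,a]$ and, on $[a,\infty)$, using monotonicity to replace the bracket by the constant lower bound $1-\pr(\chi_K^2>a)^{T-1}$ yields $K-\E(\chi^2_{K(1)})\ge \big[1-\pr(\chi_K^2>a)^{T-1}\big]\int_a^\infty\pr(\chi_K^2>b)\,\deri b$; dividing by $K$ and rearranging produces $\Var(L_{K,T})\le 1-\frac{1-\pr(\chi_K^2>a)^{T-1}}{K}\int_a^\infty\pr(\chi_K^2>b)\,\deri b$. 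Taking the minimum of the two upper bounds finishes the argument.

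All the estimates are routine integral comparisons, so there is no genuine analytic difficulty; the only thing requiring care—and what I regard as the crux—is choosing the right representation for each bound. The first upper bound flows directly from $\E(\chi^2_{K(1)})=\int_0^\infty\pr(\chi_K^2>b)^T\,\deri b$, whereas the second becomes transparent only after passing to $1-v_{K,T}$ and factoring the integrand; recognizing that these two complementary viewpoints are precisely the two arguments of the stated minimum is the main (mild) obstacle.
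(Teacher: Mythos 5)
Your proposal is correct and follows essentially the same route as the paper: both reduce $\Var(L_{K,T})$ to $K^{-1}\int_0^\infty \pr(\chi_K^2>b)^T\,\deri b$ via Proposition \ref{prop:L_KT} and $\E(U_K^2)=1/K$, and both bounds come from splitting the integral at $a$ and using monotonicity of the tail. Your derivation of the second upper bound via the complement $K-\E(\chi^2_{K(1)})$ is just an algebraic rearrangement of the paper's direct split (which bounds the integrand by $\pr(\chi_K^2>a)^{T-1}\pr(\chi_K^2>b)$ on $[a,\infty)$ and then uses $\int_0^\infty\pr(\chi_K^2>b)\,\deri b=K$), so the two arguments are equivalent.
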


\begin{proof}[Proof of Lemma \ref{lemma:varlbnd}]
    Let $U_K$ be the first coordinate of a $K$-dimensional random vector uniformly distributed on the $(K-1)$-dimensional unit sphere and
    $\chi^2_{K[1]}, \chi^2_{K[2]}, \ldots, \chi^2_{K[T+1]}$ be i.i.d.\ chi-squared random variables with degrees of freedom $K$, 
    and assume that they are all mutually independent. 
    Let $\chi_{K,T}^2 = \min_{1\le t \le T} \chi^2_{K[t]}$. 
    From the proof of Lemma \ref{lemma:nonincreasing}, 
    \begin{align*}
        \Var(L_{K,T}) & = \E(\chi_{K,T}^2) \E(U_K^2) = \frac{\E(\chi_{K,T}^2)}{K} 
        = 
        \frac{1}{K} \int_0^\infty \pr\big(\chi_{K,T}^2 > b \big) \deri b =  
        \frac{1}{K} \int_0^\infty \pr\big(\min_{1\le t \le T} \chi^2_{K[t]} > b \big) \deri b 
        \\
        & = 
        \frac{1}{K} \int_0^\infty \pr(\chi_K^2 > b)^T \deri b, 
    \end{align*}
where $\chi_K^2$ denotes a chi-squared random variable with degrees of freedom $K$. 
Consequently, 
for any fixed $a>0$, we have
\begin{align*}
  \Var(L_{K, T}) = \frac{1}{K} \int_0^a \pr(\chi_K^2 > b)^T \deri b + \frac{1}{K} \int_a^\infty \pr(\chi_K^2 > b)^T \deri b \leq \frac{a}{K} + \frac{1}{K} \int_a^\infty \pr(\chi_K^2 > b)^T \deri b  
\end{align*}
and that
\begin{align*}
    \Var(L_{K, T}) &  \leq \frac{1}{K} \int_0^a \pr(\chi_K^2 > b) \deri b + \frac{\pr(\chi_K^2 > a)^{T-1}}{K} \int_a^\infty \pr(\chi_K^2 > b) \deri b \\
    & = \frac{1}{K} \int_0^\infty \pr(\chi_K^2 > b) \deri b - \frac{1 - \pr(\chi_K^2 > a)^{T-1}}{K} \int_{a}^{\infty} \pr(\chi_K^2 > b) \deri b \\
	& = 1 - \frac{1 - \pr(\chi_K^2 > a)^{T-1}}{K} \int_{a}^{\infty} \pr(\chi_K^2 > b) \deri b, 
\end{align*}
where the last equality holds because $\int_0^\infty \pr(\chi_K^2 > b) \deri b=\E(\chi_K^2) = K$. 
These then imply the upper bound of $\Var(L_{K,T})$ in Lemma \ref{lemma:varlbnd}. 
We then derive the lower bound of $\Var(L_{K,T})$ in Lemma \ref{lemma:varlbnd}: 
\[
\Var(L_{K, T}) \geq \frac{1}{K} \int_0^a \pr(\chi_K^2 > b)^T \deri b \geq \frac{\pr(\chi_K^2 > a)^T}{K} \int_0^a \deri b = \pr(\chi_K^2 > a)^T \cdot \frac{a}{K}.
\]
From the above, Lemma \ref{lemma:varlbnd} holds. 
\end{proof}

\begin{lemma}\label{lemma:expconv}
	There exists a constant $c > 0$ such that, for any integer $T \geq 2$,
	$
	(1 - 1/T)^T \geq c.
	$
\end{lemma}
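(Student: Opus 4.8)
The plan is to show that the sequence $a_T = (1 - 1/T)^T$ is monotonically increasing for $T \ge 2$, so that its infimum over $T \ge 2$ is simply its value at $T = 2$, namely $a_2 = (1/2)^2 = 1/4$, and then take $c = 1/4$. First I would recall that $a_T \to e^{-1}$ as $T \to \infty$, which already signals that the sequence stays bounded away from zero; the task is merely to turn this intuition into a uniform lower bound valid for every $T \ge 2$.

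The cleanest route to monotonicity is the weighted arithmetic–geometric mean inequality. I would apply AM--GM to the $T+1$ positive numbers consisting of $T$ copies of $(T-1)/T$ together with a single copy of $1$. This yields
\begin{align*}
    \left[ \left( \frac{T-1}{T} \right)^{T} \cdot 1 \right]^{1/(T+1)}
    \le
    \frac{ T \cdot \frac{T-1}{T} + 1 }{ T + 1 }
    = \frac{T}{T+1}
    = 1 - \frac{1}{T+1}.
\end{align*}
Raising both sides to the power $T+1$ gives $(1 - 1/T)^{T} \le (1 - 1/(T+1))^{T+1}$, i.e.\ $a_T \le a_{T+1}$, so the sequence is nondecreasing in $T$. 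Consequently $a_T \ge a_2 = 1/4$ for every integer $T \ge 2$, and the lemma holds with $c = 1/4$.

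There is essentially no serious obstacle here; the only step requiring a moment of care is setting up the AM--GM application with the correct multiset of terms so that the arithmetic mean collapses to $1 - 1/(T+1)$. If one prefers to avoid an explicit constant, an alternative argument would simply note that each $a_T$ is strictly positive and $\lim_{T \to \infty} a_T = e^{-1} > 0$, so for large $T$ the terms exceed $e^{-1}/2$ while the finitely many remaining terms are each positive; taking the minimum of these finitely many positive values and $e^{-1}/2$ then furnishes a valid $c > 0$. I would favor the monotonicity argument, since it delivers the sharp constant $c = 1/4$ directly.
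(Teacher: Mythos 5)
Your proof is correct, and it takes a genuinely different route from the paper. The paper argues softly: since $(1-1/T)^T \to e^{-1}$, there is a $T_0$ beyond which the terms exceed $e^{-1}/2$, and one then takes the minimum of $e^{-1}/2$ with the finitely many positive values $(1-1/T)^T$ for $2 \le T \le T_0$ --- which is exactly the ``alternative argument'' you sketch at the end. Your main argument instead establishes monotonicity of $a_T = (1-1/T)^T$ via AM--GM applied to $T$ copies of $(T-1)/T$ and one copy of $1$; the computation of the arithmetic mean as $T/(T+1)$ is right, and raising both sides of the resulting inequality to the power $T+1$ is legitimate since both sides are nonnegative. This buys you the explicit and sharp constant $c = a_2 = 1/4$, whereas the paper's argument is nonconstructive about the value of $c$ (which is all the downstream application in Lemma A8 needs, since only positivity of $c$ is used there). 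Both proofs are valid; yours is more informative at essentially no extra cost.
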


\begin{proof}[Proof of Lemma \ref{lemma:expconv}]
    Note that 
    $
    \lim_{T \to \infty} (1 - 1/T )^T = e^{-1}.
    $
    Thus, there must exist an integer $T_0\ge 2$ such that $
    (1 - 1/T )^T \ge e^{-1}/2
    $ for all $T \geq T_0$.
    We can the derive Lemma \ref{lemma:expconv} 
    by 
    letting 
    $
    c = \min\big\{\min_{2 \leq T \leq T_0} (1 - 1/T)^T, \ e^{-1} / 2\big\} > 0.
    $
\end{proof}

Equipped with these lemmas, we now give a proof of Theorem~\ref{thm:v_kt} by proving the cases (i) -- (iv) separately. In the rest of the proof of Theorem~\ref{thm:v_kt} we keep the subscript $n$ (e.g., writing ``$K$'' as ``$K_n$'') to emphasize their dependence on sample size $n$ more explicitly.

\subsection{Limiting behaviour when $\lim_{n\rightarrow \infty} \log (T_n) / K_n = \infty$}

\begin{lemma}\label{lem:dimlka}
	As $n \to \infty$, if $\log (T_n) / K_n \to \infty$, then there exists a positive sequence $\{a_n\}$ such that $a_n / K_n \to 0$ and $K_n^{-1} \int_{a_n}^\infty \pr(\chi_{K_n}^2 > b)^{T_n} \deri b \to 0$, which implies that $\Var(L_{K_n, T_n}) \to 0$.
\end{lemma}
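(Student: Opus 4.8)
The plan is to feed a well-chosen cutoff $a_n$ into the upper bound of Lemma~\ref{lemma:varlbnd}. That bound reads $\Var(L_{K,T}) \le a/K + K^{-1}\int_a^\infty \pr(\chi_K^2 > b)^T \deri b$ for every $a>0$, so once I exhibit a sequence $\{a_n\}$ with $a_n/K_n \to 0$ and with the tail integral $K_n^{-1}\int_{a_n}^\infty \pr(\chi_{K_n}^2>b)^{T_n}\deri b \to 0$, both the displayed claim of the lemma and the conclusion $\Var(L_{K_n,T_n}) \to 0$ follow at once. Thus the whole argument reduces to constructing such an $a_n$ and controlling the tail integral.

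For the tail integral I would first peel off one factor: for $b \ge a_n$ one has $\pr(\chi_K^2 > b)^T \le \pr(\chi_K^2 > a_n)^{T-1}\,\pr(\chi_K^2 > b)$, and since $\int_{a_n}^\infty \pr(\chi_K^2 > b)\deri b \le \int_0^\infty \pr(\chi_K^2 > b)\deri b = \E(\chi_K^2) = K$, this gives $K_n^{-1}\int_{a_n}^\infty \pr(\chi_{K_n}^2>b)^{T_n}\deri b \le \pr(\chi_{K_n}^2 > a_n)^{T_n-1}$. Hence it suffices to show $\pr(\chi_{K_n}^2 > a_n)^{T_n-1}\to 0$. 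Writing $\pr(\chi_{K_n}^2 > a_n) = 1 - \pr(\chi_{K_n}^2 \le a_n)$ and using $\log(1-x)\le -x$, this is implied by $(T_n-1)\,\pr(\chi_{K_n}^2 \le a_n)\to\infty$; because $\log(T_n)/K_n\to\infty$ forces $T_n\to\infty$, eventually $T_n-1\ge T_n/2$, so it is enough to prove $T_n\,\pr(\chi_{K_n}^2\le a_n)\to\infty$.

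Now I would calibrate the cutoff. Set $\omega_n = \log(T_n)/K_n\to\infty$ and take $a_n = K_n e^{-\omega_n} = K_n T_n^{-1/K_n}$, so that $a_n/K_n = e^{-\omega_n}\to 0$, which is the first required property. For the lower tail I would lower-bound the chi-squared distribution by the leading term of its density integral, $\pr(\chi_K^2\le b)\ge (b/2)^{K/2}e^{-b/2}/\Gamma(K/2+1)$, and control $\Gamma(K/2+1)$ by Stirling's formula. Substituting $b=a_n$ and taking logarithms, the dominant contribution to $\log\big(T_n\,\pr(\chi_{K_n}^2\le a_n)\big)$ is $\log T_n + \tfrac{K_n}{2}\big(\log(a_n/K_n)+1-a_n/K_n\big) = \tfrac12 K_n\omega_n + \tfrac12 K_n(1-e^{-\omega_n})$, up to an $O(\log K_n)$ Stirling correction. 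Since $K_n\omega_n$ dominates any $\log K_n$ term (as $\omega_n\to\infty$, and $K_n/\log K_n\to\infty$ whenever $K_n\to\infty$, while $\log K_n$ is bounded when $K_n$ is bounded), this diverges to $+\infty$, yielding $T_n\,\pr(\chi_{K_n}^2\le a_n)\to\infty$ as needed.

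The delicate point, and the main obstacle, is the simultaneous balancing hidden in the choice of $a_n$: forcing $a_n/K_n\to 0$ pushes the lower-tail mass $\pr(\chi_{K_n}^2\le a_n)$ to be exponentially small in $K_n$, while the tail estimate demands that $T_n$ overwhelm this mass. The hypothesis $\log(T_n)/K_n\to\infty$, i.e.\ super-exponential growth of $T_n$ in $K_n$, is precisely what reconciles the two, and the calibration $a_n = K_n T_n^{-1/K_n}$ is what makes the logarithm split into a leading term $\tfrac12 K_n\omega_n$ that is positive and diverging. A secondary technical care is the Stirling bound for $\Gamma(K_n/2+1)$, which must be uniform over all $K_n\ge 1$ (the crude inequality $\Gamma(m+1)\le m^m$ fails for $m<1$, so a genuine Stirling estimate carrying the $\sqrt{K_n}$ factor is required), but the resulting $O(\log K_n)$ correction is harmless against the $K_n\omega_n$ term.
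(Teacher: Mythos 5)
Your proof is correct, and it shares the paper's overall skeleton --- plug a cutoff $a_n$ into the upper bound of Lemma~\ref{lemma:varlbnd}, peel off one factor to reduce the tail integral to $\pr(\chi_{K_n}^2 > a_n)^{T_n-1}$, and then show this power tends to zero while $a_n/K_n \to 0$ --- but it diverges on the one genuinely technical ingredient. The paper defines $a_n$ implicitly as the $T_n^{-1/2}$-quantile of $\chi^2_{K_n}$, so that $\pr(\chi^2_{K_n}\le a_n)=T_n^{-1/2}$ makes $(T_n-1)\pr(\chi^2_{K_n}\le a_n)\to\infty$ trivial, and then outsources the claim $a_n/K_n\to 0$ to an external quantile lemma from \citet{wang2022rerandomization}. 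You instead take the explicit calibration $a_n = K_n T_n^{-1/K_n}$, which makes $a_n/K_n\to 0$ immediate, and prove the required lower-tail mass estimate directly from the density bound $\pr(\chi_K^2\le b)\ge (b/2)^{K/2}e^{-b/2}/\Gamma(K/2+1)$ plus Stirling; your bookkeeping of the leading term $\tfrac12 K_n\omega_n$ against the $O(\log K_n)$ correction is sound (and the uniformity over oscillating $K_n$ can be settled cleanly by noting $\log K_n\le K_n$, so $\tfrac12 K_n\omega_n - C\log K_n \ge K_n(\omega_n/2 - C)\to\infty$). What your route buys is self-containedness --- no appeal to the external quantile asymptotics --- at the cost of carrying the Gamma-function estimates explicitly; what the paper's route buys is brevity by reusing machinery already established for the first type of rerandomization.
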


\begin{proof}[Proof of Lemma \ref{lem:dimlka}]
    For all $n$, define $p_n = T_n^{-1/2}$, and $a_n$ as the $p_n$-th quantile of the chi-squared distribution with degree of freedom $K_n$, 
    i.e., $p_n = \pr(\chi_{K_n}^2 \leq a_n)$. 
    As $n\to \infty$, 
    because $\log (T_n) / K_n \to \infty$, we must have $T_n \to \infty$. 
    We can then verify that 
    \[
	\lim_{n\rightarrow \infty} (T_n - 1) p_n = \infty, \quad \lim_{n\rightarrow \infty} \log (p_n^{-1}) / K_n = \infty, \quad  \lim_{n \to \infty} p_n = 0. 
    \]
    From \citet[Lemma A17]{wang2022rerandomization},  $a_n / K_n \to 0$. 
    In addition, 
    \begin{align*}
        \frac{1}{K_n} \int_{a_n}^\infty \pr(\chi_{K_n}^2 > b)^{T_n} \deri b 
        & \leq \pr(\chi_{K_n}^2 > a_n)^{T_n - 1} \cdot \frac{1}{K_n}  \int_{a_n}^\infty \pr(\chi_{K_n}^2 > b) \deri b 
        \leq (1 - p_n)^{T_n - 1} \cdot \frac{\E \chi_{K_n}^2}{K_n} \\
        & 
        = (1 - p_n)^{T_n - 1}
        = 
        \{(1 - p_n)^{1/p_n}\}^{(T_n - 1)p_n}
        \to 0.
    \end{align*}
    From Lemma~\ref{lemma:varlbnd}, we then have $\Var(L_{K_n, T_n}) \to 0$ as $n\rightarrow \infty$. 
    Therefore, Lemma \ref{lem:dimlka} holds. 
    \end{proof}

\subsection{Limiting behaviour when $\limsup_{n\rightarrow \infty} \log (T_n) / K_n < \infty$}

\begin{lemma}\label{lem:nodim}
	If $\limsup \log(T_n) / K_n < \infty$, then $\liminf_{n \to \infty} \Var( L_{K_n, T_n} ) > 0$.
\end{lemma}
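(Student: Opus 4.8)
The plan is to lower-bound $\Var(L_{K_n,T_n})$ using the first inequality in Lemma \ref{lemma:varlbnd}, namely $\Var(L_{K,T}) \ge \pr(\chi_K^2 > a)^{T}\cdot a/K$ valid for every $a>0$, by making a judicious choice of $a=a_n$. Specifically, I would take $a_n$ to be the $(1/T_n)$-th quantile of the chi-squared distribution with $K_n$ degrees of freedom, so that $\pr(\chi_{K_n}^2 \le a_n)=1/T_n$ and hence $\pr(\chi_{K_n}^2 > a_n)^{T_n}=(1-1/T_n)^{T_n}$. By Lemma \ref{lemma:expconv} this first factor is bounded below by an absolute constant $c>0$ whenever $T_n\ge 2$. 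The whole problem therefore reduces to showing that the second factor $a_n/K_n$ stays bounded away from zero along the sequence.

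To control $a_n/K_n$, I would invoke the standard Chernoff bound for the lower tail of a chi-squared variable: for every $K\ge 1$ and $\alpha\in(0,1)$, $\pr(\chi_K^2 \le \alpha K)\le \exp\{-K I(\alpha)\}$, where $I(\alpha)=\tfrac12(\alpha-1-\log\alpha)$ is the associated rate function (this follows by minimizing $e^{\lambda\alpha K}(1+2\lambda)^{-K/2}$ over $\lambda>0$, the optimum being $\lambda=\tfrac12(1/\alpha-1)$). Writing $\alpha_n=a_n/K_n$ and supposing $\alpha_n<1$, the defining relation $1/T_n=\pr(\chi_{K_n}^2 \le a_n)\le \exp\{-K_n I(\alpha_n)\}$ rearranges to $I(\alpha_n)\le \log(T_n)/K_n$. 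By hypothesis $\limsup_{n\to\infty}\log(T_n)/K_n<\infty$, so $I(\alpha_n)\le c_0$ for some finite $c_0$ and all large $n$. Since $I$ is continuous, strictly decreasing on $(0,1]$ with $I(1)=0$ and $I(\alpha)\to\infty$ as $\alpha\to 0^+$, the bound $I(\alpha_n)\le c_0$ forces $\alpha_n\ge \alpha_*$, where $\alpha_*\in(0,1]$ is the unique solution of $I(\alpha_*)=c_0$; the complementary case $\alpha_n\ge 1$ gives $a_n/K_n\ge 1$ directly. In either case $a_n/K_n\ge \alpha_*>0$.

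Combining the two pieces, for all large $n$ with $T_n\ge 2$ I obtain $\Var(L_{K_n,T_n})\ge c\,\alpha_*>0$, while for the (harmless) indices with $T_n=1$ the variable $L_{K_n,1}$ is standard Gaussian and $\Var(L_{K_n,1})=1$. Hence $\liminf_{n\to\infty}\Var(L_{K_n,T_n})\ge \min\{1,c\,\alpha_*\}>0$, as claimed. The main obstacle is the quantile estimate in the second paragraph, that is, converting the growth constraint $\log(T_n)\lesssim K_n$ into a uniform lower bound on the relative quantile $a_n/K_n$; the large-deviation rate function makes this quantitative, and one may equivalently appeal to the companion of \citet[Lemma A17]{wang2022rerandomization} already used in the proof of Lemma \ref{lem:dimlka} in place of re-deriving the Chernoff bound.
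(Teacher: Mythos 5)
Your proof is correct and follows essentially the same route as the paper: both apply the lower bound of Lemma \ref{lemma:varlbnd} at a quantile $a_n$ chosen at the $\Theta(1/T_n)$ level, control the factor $\pr(\chi_{K_n}^2>a_n)^{T_n}$ via Lemma \ref{lemma:expconv}, and reduce the problem to showing $a_n/K_n$ is bounded away from zero. The only differences are cosmetic: the paper takes the $(2T_n)^{-1}$-quantile (so $T_n=1$ needs no separate treatment) and cites \citet[Lemma A22]{wang2022rerandomization} for the quantile bound, whereas you take the $T_n^{-1}$-quantile, handle $T_n=1$ directly, and derive the quantile bound from the chi-squared Chernoff/rate-function inequality, which is a valid self-contained substitute.
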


\begin{proof}[Proof of Lemma \ref{lem:nodim}]
    First, 
    for all $n$, 
    let $p_n = (2T_n)^{-1}$ and $a_n$ be the $p_n$th quantile of the chi-squared distribution with degrees of freedom $K_n$, 
    i.e., 
    $\pr(\chi_{K_n}^2 \leq a_n) = p_n =  (2T_n)^{-1}$. 
    We then have 
    $\limsup \log(p_n^{-1}) / K_n < \infty$. From~\citet[Lemma A22]{wang2022rerandomization}, this implies that 
    \begin{equation}\label{eq:liminfan}
    \liminf_{n \to \infty} a_n / K_n > 0.
    \end{equation}
    From Lemmas~\ref{lemma:varlbnd} and \ref{lemma:expconv}, this further implies that 
    \begin{align*}
        \liminf_{n \to \infty} \Var(L_{K_n, T_n}) 
        & \geq 
        \liminf_{n \to \infty} \Big\{ \pr(\chi_{K_n}^2 > a_n)^{T_n} \cdot \frac{a_n}{K_n} \Big\}
        \ge 
        \liminf_{n \to \infty} \Big( \big[ \{1 - 1/(2T_n)\}^{2T_n} \big]^{1/2} \cdot \frac{a_n}{K_n} \Big)
        \\
        & \ge c^{1/2} \cdot \liminf_{n \to \infty} a_n / K_n > 0, 
    \end{align*}
    where $c>0$ is the constant from Lemma \ref{lemma:expconv}. 
    Therefore, Lemma \ref{lem:nodim} holds. 
\end{proof}

\subsection{Limiting behaviour when $\liminf_{n\rightarrow \infty} \log (T_n) / K_n > 0$}

\begin{lemma}\label{lem:varub}
	If $\liminf_{n \to \infty} \log(T_n) / K_n > 0$, then $\limsup_{n \to \infty} \Var(L_{K_n, T_n}) < 1$.
\end{lemma}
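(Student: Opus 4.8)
The plan is to control $v_{K_n,T_n}=\Var(L_{K_n,T_n})$ through the upper bound recorded in Lemma~\ref{lemma:varlbnd},
\[
\Var(L_{K,T})\le 1-\frac{1-\pr(\chi_K^2>a)^{T-1}}{K}\int_a^\infty \pr(\chi_K^2>b)\,\deri b ,
\]
choosing the cut-off $a=a_n$ so that both factors of the subtracted term stay bounded away from zero. Write $\rho_0=\liminf_{n\to\infty}\log(T_n)/K_n>0$. First I would dispose of the part of the sequence on which $K_n$ stays bounded: there $\log(T_n)\ge (\rho_0-o(1))K_n$ forces $T_n\ge 2$ eventually, so Lemma~\ref{lemma:nonincreasing} gives $\Var(L_{K_n,T_n})\le \Var(L_{K_n,2})$, a maximum over finitely many fixed values of $K$ and hence bounded away from $1$. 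It therefore suffices to treat the regime $K_n\to\infty$.

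In that regime I would set $p_n=1/(T_n-1)$ and let $a_n$ be the $p_n$-th quantile of $\chi^2_{K_n}$, i.e.\ $\pr(\chi^2_{K_n}\le a_n)=p_n$. Then $\pr(\chi^2_{K_n}>a_n)=1-p_n$, and since $(1-1/m)^m\le e^{-1}$ for every $m\ge1$,
\[
1-\pr(\chi^2_{K_n}>a_n)^{T_n-1}=1-\Big(1-\tfrac{1}{T_n-1}\Big)^{T_n-1}\ge 1-e^{-1}.
\]
Because $\tfrac1{K_n}\int_0^\infty\pr(\chi_{K_n}^2>b)\,\deri b=\E(\chi_{K_n}^2)/K_n=1$ and $\tfrac1{K_n}\int_0^{a_n}\pr(\chi_{K_n}^2>b)\,\deri b\le a_n/K_n$, the integral factor is at least $1-a_n/K_n$. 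Substituting yields $\Var(L_{K_n,T_n})\le 1-(1-e^{-1})(1-a_n/K_n)$, so everything reduces to showing $\limsup_{n\to\infty} a_n/K_n<1$.

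The crux is this chi-squared quantile estimate. Observe $\log(p_n^{-1})=\log(T_n-1)$, so $\liminf_{n\to\infty}\log(p_n^{-1})/K_n=\rho_0>0$. Let $c_\lambda=\tfrac12(\lambda-1-\log\lambda)$ be the large-deviation rate of $\chi^2_1$; since $c_\lambda\downarrow 0$ as $\lambda\uparrow 1$, fix $\lambda^\ast\in(0,1)$ with $c_{\lambda^\ast}<\rho_0$. By Cram\'er's theorem applied to the i.i.d.\ $\chi^2_1$ summands of $\chi^2_{K_n}$, $\tfrac1{K_n}\log\pr(\chi^2_{K_n}\le \lambda^\ast K_n)\to -c_{\lambda^\ast}$; hence $\pr(\chi^2_{K_n}\le \lambda^\ast K_n)\ge \exp(-c_{\lambda^\ast}K_n(1+o(1)))\ge (T_n-1)^{-1}=p_n$ for all large $n$, the last step because $\log(T_n-1)\ge(\rho_0-o(1))K_n>c_{\lambda^\ast}K_n(1+o(1))$. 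Monotonicity of the chi-squared distribution function then turns $p_n=\pr(\chi^2_{K_n}\le a_n)\le \pr(\chi^2_{K_n}\le \lambda^\ast K_n)$ into $a_n\le \lambda^\ast K_n$ eventually, so $\limsup_n a_n/K_n\le \lambda^\ast<1$. Combining the two regimes gives $\limsup_{n\to\infty}\Var(L_{K_n,T_n})\le 1-(1-e^{-1})(1-\lambda^\ast)<1$. (Alternatively the same quantile bound can be quoted in the spirit of \citet[Lemmas A17 and A22]{wang2022rerandomization}, used for the neighbouring cases of Theorem~\ref{thm:v_kt}.)

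The main obstacle is exactly this quantile estimate. The easy Chernoff bound $\pr(\chi^2_K\le\lambda K)\le e^{-c_\lambda K}$ controls the lower tail from above but is useless here; what is needed is a matching exponential \emph{lower} bound on the lower tail whose rate $c_{\lambda^\ast}$ can be made arbitrarily small (the naive coordinate-wise bound $\pr(\chi^2_K\le\lambda K)\ge(2\Phi(\sqrt\lambda)-1)^K$ has rate bounded below by roughly $0.38$ and so would only handle $\rho_0>0.38$). Supplying this sharp lower bound, via exponential tilting / Cram\'er or by invoking the existing chi-squared quantile lemmas, is the one genuinely non-routine ingredient; the remaining steps are the monotonicity and tail identities already available in Lemmas~\ref{lemma:varlbnd}, \ref{lemma:nonincreasing} and \ref{lemma:expconv}.
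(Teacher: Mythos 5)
Your proof is correct and follows the same skeleton as the paper's: both rest on the second upper bound in Lemma~\ref{lemma:varlbnd} evaluated at $a_n$ equal to (roughly) the $1/T_n$-th quantile of $\chi^2_{K_n}$, both reduce the problem to showing $\limsup_n a_n/K_n<1$, and both handle the bounded-$K_n$ indices separately via the monotonicity in Lemma~\ref{lemma:nonincreasing}. The one substantive difference is where the quantile estimate comes from: the paper argues by contradiction along subsequences and simply cites \citet[Lemma A23(i)]{wang2022rerandomization} for $\limsup a_j/K_{m_j}<1$, whereas you prove it from scratch by exponential tilting, identifying the lower-tail rate $c_\lambda=\tfrac12(\lambda-1-\log\lambda)$ and exploiting that it vanishes as $\lambda\uparrow 1$ so that $\lambda^\ast$ can be chosen with $c_{\lambda^\ast}<\rho_0$ for \emph{any} $\rho_0>0$. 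That self-contained derivation is the genuinely non-routine ingredient, and your diagnosis of why a crude coordinate-wise bound fails is apt. Your choice $p_n=1/(T_n-1)$ is also a small improvement, giving the clean non-asymptotic bound $1-(1-1/(T_n-1))^{T_n-1}\ge 1-e^{-1}$ where the paper passes to the limit $\to e^{-1}$. One presentational caveat: ``the part of the sequence on which $K_n$ stays bounded'' should be made precise, e.g.\ by fixing a threshold $K_0$ below which Lemma~\ref{lemma:nonincreasing} is applied and above which a uniform (non-asymptotic) form of the Cram\'er lower bound $\pr(\chi^2_K\le\lambda^\ast K)\ge e^{-(c_{\lambda^\ast}+\epsilon)K}$ for $K\ge K_0(\epsilon)$ is used; alternatively, adopt the paper's subsequence/contradiction framing. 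This is a matter of bookkeeping, not a gap.
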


\begin{proof}[Proof of Lemma \ref{lem:varub}]
    We prove Lemma \ref{lem:varub} by contradiction. Suppose that $\liminf_{n \to \infty} \log(T_n) / K_n > 0$, 
    and there exists a subsequence $\{n_j, j = 1, 2, \cdots\}$ such that $\Var(L_{K_{n_j}, T_{n_j}}) \to 1$ as $j \to \infty$. 
    Below we consider two cases, depending on whether $\limsup_{j \to \infty} K_{n_j}$ is finite. 

    We first consider the case in which $\limsup_{j \to \infty} K_{n_j} = \infty$. 
    Thus, there must exist a further ubsequence $\{m_j, j = 1, 2, \cdots\} \subset \{n_j, j = 1, 2, \cdots\}$ such that $K_{m_j} \to \infty$ 
    as $j \to \infty$. 
    For any $j\ge 1$, define $p_j = T_{m_j}^{-1}$ and $a_j$ as the $p_j$th quantile of the chi-squared random variable with degrees of freedom $T_{m_j}$, 
    i.e., $\pr(\chi_{K_{m_j}}^2  \leq a_j) = p_j = T_{m_j}^{-1}$. 
    Then we must have that $\liminf_{j \to \infty} \log(p_j^{-1}) / K_{m_j} > 0$. 
    From \citet[Lemma A23(i)]{wang2022rerandomization},  
    we must have $\limsup_{j \to \infty} a_j/K_{m_j} < 1$. 
    Using Lemma~\ref{lemma:varlbnd} with $T=1$, 
    we can know that the limit inferior of 
    \begin{align*}
        \frac{1}{K_{m_j}} \int_{a_j}^\infty \pr(\chi_{K_{m_j}}^2 > b ) \deri b \geq \Var(L_{K_{m_j}, 1}) - \frac{a_j}{K_{m_j}} 
        = 
        1 - \frac{a_j}{K_{m_j}}
    \end{align*}
    must be positive, 
    where the last equality holds because $L_{K,T}\sim \mathcal{N}(0,1)$ when $T=1$. 
    In addition, 
    because $\liminf_{n \to \infty} \log(T_n) / K_n > 0$, 
    we must have $T_{m_j}\to \infty$ as $j\to \infty$, and consequently 
    \[
    \lim_{j \to \infty} \pr(\chi_{K_{m_j}}^2 > a_j)^{T_{m_j} - 1} = \lim_{j \to \infty} \big(1 - T_{m_j}^{-1} \big)^{T_{m_j} - 1}= e^{-1}. 
    \]
    From Lemma~\ref{lemma:varlbnd}, these imply that 
    \begin{align*}
        \limsup_{j \to \infty} \Var(L_{K_{m_j}, T_{m_j}}) & \leq 1 - \liminf_{j \to \infty} \Big[ \big\{1 - \pr(\chi_{K_{m_j}}^2 > a_j)^{T_{m_j} - 1}\big\} \cdot \frac{1}{K_{m_j}} \int_{a_j}^\infty \pr(\chi_{K_{m_j}}^2 > b) \deri b \Big] \\
    	& = 1 - (1-e^{-1}) \cdot \liminf_{j \to \infty} \frac{1}{K_{m_j}} \int_{a_j}^\infty \pr(\chi_{K_{m_j}}^2 > b) \deri b < 1.
    \end{align*}
    However, this contradicts with that $\lim_{j \to \infty}  \Var(L_{K_{m_j}, T_{m_j}}) = 1$.

    We then consider the case in which $\limsup_{j \to \infty} K_{n_j} < \infty$.  Then there exists a $\bar{K}$ such that $K_{n_j} \leq \bar{K}$ for all $j$. Note that $\liminf_{n\rightarrow \infty} \log (T_n) / K_n > 0$. This immediately implies that there exists a positive constant $c$ and a further subsequence $\{m_j, j =1, 2, \cdots\} \subset \{n_j, j =1, 2, \cdots\}$ such that $ \log (T_{m_j}) / K_{m_j} > c$ for all $j$. 
    Consequently, there must exist a constant $\bar{T} \geq 2$ such that $T_{m_j} \geq \bar{T}$ for all $j$. From Lemma~\ref{lemma:nonincreasing}, we then have
    \[
    1 = \lim_{j \to \infty} \Var(L_{K_{m_j}, T_{m_j}}) \leq 
    \sup_{1 \leq K \leq \bar{K}} \nu_{K, \bar{T}} < 1,
    \]
    which leads to a contradiction.

    From the above, Lemma \ref{lem:varub} holds. 
\end{proof}

\subsection{Limiting behaviour when $\lim_{n\rightarrow \infty} \log (T_n) / K_n = 0$}

\begin{lemma}\label{lem:infpn2}
    If $\lim_{n\rightarrow \infty} \log(T_n) / K_n = 0$, then $\lim_{n\rightarrow \infty} \Var(L_{K_n, T_n}) = 1$.
\end{lemma}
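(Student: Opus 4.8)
Recall from the proof of Lemma~\ref{lemma:varlbnd} the exact expression $\Var(L_{K,T}) = K^{-1}\int_0^\infty \pr(\chi_K^2 > b)^T\,\deri b$ together with the lower bound $\Var(L_{K,T}) \ge \pr(\chi_K^2 > a)^T \cdot a/K$, valid for every $a>0$. Since $\Var(L_{K,T})$ is nonincreasing in $T$ (Lemma~\ref{lemma:nonincreasing}) and equals $1$ at $T=1$ (where $L_{K,1}\sim\mathcal N(0,1)$), we always have $\Var(L_{K_n,T_n})\le 1$, so it suffices to prove $\liminf_{n\to\infty}\Var(L_{K_n,T_n})\ge 1$. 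The guiding idea is to apply the lower bound with a threshold $a_n$ placed just below the mean $K_n$ of $\chi_{K_n}^2$, chosen so that simultaneously (i) $a_n/K_n\to 1$ and (ii) $\pr(\chi_{K_n}^2>a_n)^{T_n}\to 1$; the hypothesis $\log(T_n)/K_n\to 0$ is exactly what makes both requirements compatible.

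To make this precise I would let $p_n$ be a lower-tail level and $a_n$ the $p_n$-th quantile of $\chi_{K_n}^2$, so that $\pr(\chi_{K_n}^2>a_n)=1-p_n$ by continuity. Condition (ii) then follows from Bernoulli's inequality $(1-p_n)^{T_n}\ge 1-T_np_n$ whenever $T_np_n\to 0$. For condition (i) I would bound the quantile from below: taking $\eta_n=\sqrt{8\log(p_n^{-1})/K_n}$, the Laurent--Massart lower-tail inequality $\pr(\chi_K^2\le K(1-\eta))\le e^{-K\eta^2/4}$ gives $\pr(\chi_{K_n}^2\le K_n(1-\eta_n))\le e^{-2\log(p_n^{-1})}=p_n^2\le p_n$, whence $a_n\ge K_n(1-\eta_n)$ and $a_n/K_n\ge 1-\eta_n\to 1$ as soon as $\log(p_n^{-1})/K_n\to 0$; alternatively one may quote the chi-squared quantile estimates of \citet{wang2022rerandomization} used in the earlier cases. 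Plugging $a_n$ into the lower bound then yields $\Var(L_{K_n,T_n})\ge (1-T_np_n)(1-\eta_n)\to 1$.

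The remaining point is that a single choice of $p_n$ reconciling $T_np_n\to 0$ with $\log(p_n^{-1})/K_n\to 0$ exists only when $K_n\to\infty$, so I would finish with a subsequence argument mirroring the proof of Lemma~\ref{lem:varub}. Suppose for contradiction that $\Var(L_{K_n,T_n})\not\to 1$; since the variance is at most $1$, there is $\varepsilon>0$ and a subsequence on which $\Var(L_{K_n,T_n})\le 1-\varepsilon$. Pass to a further subsequence $\{m_j\}$ on which either $K_{m_j}\to\infty$ or $K_{m_j}$ is bounded. In the former case take $p_j=\delta_j/T_{m_j}$ with $\delta_j\to 0$ slowly enough (e.g.\ $\delta_j=1/\log(K_{m_j}+2)$) that $\log(p_j^{-1})/K_{m_j}=\log(T_{m_j})/K_{m_j}+\log(\delta_j^{-1})/K_{m_j}\to 0$, and run the construction above to obtain $\Var(L_{K_{m_j},T_{m_j}})\to 1$, a contradiction. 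In the latter case $K_{m_j}\le\bar K$ forces $\log(T_{m_j})=K_{m_j}\cdot\log(T_{m_j})/K_{m_j}\to 0$, hence $T_{m_j}=1$ eventually and $\Var(L_{K_{m_j},T_{m_j}})=1$, again a contradiction. This gives $\liminf_{n\to\infty}\Var(L_{K_n,T_n})\ge 1$ and therefore the claimed limit.

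The main obstacle is condition (i): controlling how far below its mean the lower quantile of $\chi_{K_n}^2$ can sit when $\log(p_n^{-1})/K_n\to 0$, i.e.\ establishing $a_n/K_n\to 1$ along the sequence. Everything else — the monotone upper bound $\Var\le 1$, the Bernoulli estimate for (ii), and the degenerate bounded-$K$ regime where $T_n$ collapses to $1$ — is routine; the quantile estimate is where either the cited chi-squared lemma or the concentration inequality must be invoked carefully, and where the two competing constraints on $p_n$ must be balanced so that both (i) and (ii) hold at once.
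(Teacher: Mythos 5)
Your proof is correct and follows essentially the same route as the paper's: both argue by contradiction along subsequences, apply the lower bound of Lemma~\ref{lemma:varlbnd} at a lower-tail quantile $a_n$ of $\chi^2_{K_n}$ chosen at a level $p_n$ with $T_np_n\to 0$ and $\log(p_n^{-1})/K_n\to 0$, and show $a_n/K_n\to 1$ and $\pr(\chi^2_{K_n}>a_n)^{T_n}\to 1$. The only substantive difference is that you make the quantile estimate self-contained via the Laurent--Massart lower-tail bound (and handle the bounded-$K$ regime by forcing $T_n=1$), whereas the paper takes $p_j=(K_{n_j}T_{n_j})^{-1}$, cites \citet[Lemma A24]{wang2022rerandomization} for $\liminf a_j/K_{n_j}\ge 1$, and rules out bounded $K$ by noting $T_{n_j}\ge 2$ on the contradiction subsequence; both versions are sound.
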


\begin{proof}[Proof of Lemma \ref{lem:infpn2}]
    Note that $\Var(L_{K_n, T_n})\le 1$ as implied by Lemma \ref{lemma:nonincreasing}. 
    It suffices to prove that, when $\lim_{n\rightarrow \infty} \log(T_n) / K_n = 0$, $\liminf_{n\rightarrow \infty} \Var(L_{K_n, T_n}) = 1$. 
    We prove this by contradiction. 
    Suppose that $\lim_{n\rightarrow \infty} \log(T_n) / K_n = 0$ and $\liminf_{n\rightarrow \infty} \Var(L_{K_n, T_n}) < 1$. 
    Then there exists a subsequence $\{n_j, j = 1, 2, \cdots\}$ such that 
    $\Var(L_{K_{n_j}, T_{n_j}})<1$ for all $j$ and 
    $\lim_{j \to \infty} \Var(L_{K_{n_j}, T_{n_j}}) < 1$. 
    From Lemma \ref{lemma:nonincreasing},  we must have $T_{n_j} \geq 2$ for all $j$. 
    Because $\log(T_{n_j}) / K_{n_j} \to 0$ as $j\to \infty$, this then implies that $K_{n_j} \to \infty$ as $j\to \infty$. 
    
    Define $p_j = (K_{n_j} T_{n_j})^{-1}$ and $a_j$ as the $p_j$th quantile of the chi-squared distribution with degrees of freedom $K_{n_j}$, 
    i.e., $\pr(\chi^2_{K_{n_j}}) = p_j = (K_{n_j} T_{n_j})^{-1}$. 
    We can verify that, as $j\to \infty$, 
    $\log(p_j^{-1}) / K_{n_j} \to 0, p_j \to 0$ and $p_j T_{n_j} \to 0$.
    From~\citet[Lemma A24]{wang2022rerandomization}, 
    these imply that 
    $\liminf_{j \to \infty} a_j / K_{n_j} \geq 1$. 
    In addition, 
    \[
    \lim_{j \to \infty} \pr(\chi_{K_{n_j}}^2 > a_j)^{T_{n_j}} = \lim_{j \to \infty} (1 - p_j)^{T_{n_j}} = \lim_{j \to \infty} \big\{ (1 - p_j)^{p_j^{-1}}\big\}^{p_j T_{n_j}} = 1.
    \]
    From Lemma~\ref{lemma:varlbnd}, we then have 
    \begin{align*}
        \liminf_{j \to \infty} \Var(L_{K_{n_j}, T_{n_j}}) & \geq \liminf_{j \to \infty} \big[ \pr(\chi_{K_{n_j}}^2 > a_j)^{T_{n_j}} \cdot a_j / K_{n_j} \big] 
        = \lim_{j \to \infty} \pr(\chi_{K_{n_j}}^2 > a_j)^{T_{n_j}} \cdot \liminf_{j \to \infty} \frac{a_j}{K_{n_j}}\\
        \geq 1,
    \end{align*}
    which contradicts with the assumption that $\lim_{j \to \infty} \Var(L_{K_{n_j}, T_{n_j}}) < 1$. 
    
    From the above, Lemma \ref{lem:infpn2} holds. 
\end{proof}

\subsection{Proof of Theorem~\ref{thm:v_kt}}

\begin{proof}[\bf Proof of Theorem~\ref{thm:v_kt}]
    Theorem~\ref{thm:v_kt}(i)--(iv) are direct consequences of Lemmas \ref{lem:dimlka}--\ref{lem:infpn2}.    
\end{proof}

\section{Proof for the optimal best-choice rerandomization}

\begin{proof}[\bf Proof of Theorem \ref{thm:opt_bcr}]
    Let $\psi_n = \sqrt{1-R^2} \ \varepsilon_0$ and $\psi_n' = \sqrt{1-R^2} \ \varepsilon_0 + \sqrt{R^2} \ L_{K, T}$. 
    From Theorem \ref{thm:v_kt}, 
    under Condition \ref{cond:T_opt} and by Chebyshev's inequality, 
    $L_{K,T} = O_{\pr}(\sqrt{v_{K,T}}) = o_{\pr}(1)$. 
    Because $\limsup_{n\rightarrow\infty} R^2 < 1$, this then implies that 
    $
        \psi_n' - \psi_n = \sqrt{R^2} \ L_{K, T} 
        = O(\sqrt{1-R^2}) \cdot o_{\pr}(1) = o_{\pr}(\sqrt{1-R^2}).
    $
    From \citet[Lemma A27]{wang2022rerandomization}, 
    this further implies that, as $n\to \infty$, 
    $\sup_{c\in \mathbb{R}}|\pr(\psi_n \le c) - \pr(\psi_n' \le c)| \to 0$. 
    From Theorems \ref{thm:asymp} and \ref{thm:asym_equiv}, we then have, as $n\to \infty$, 
    \begin{align*}
        & \quad \ \sup_{c\in \mathbb{R}} \Big| \Pr\big\{ V_{\tau\tau}^{-1/2} (\hat{\tau}_{(1)} - \tau) \leq c \big\}   - 
        \Pr
        \big( \psi_n \leq c\big) \Big|
        \\
        & \le 
        \sup_{c\in \mathbb{R}} \Big| \Pr\big\{ V_{\tau\tau}^{-1/2} (\hat{\tau}_{(1)} - \tau) \leq c \big\}   - 
        \Pr
        \big( \psi_n' \leq c\big) \Big|
        + 
        \sup_{c\in \mathbb{R}} \Big| \Pr\big( \psi_n' \leq c \big)   - 
        \Pr
        \big( \psi_n \leq c\big) \Big|
        \\
        & \to 0. 
    \end{align*}
    Therefore, Theorem \ref{thm:opt_bcr} holds. 
\end{proof}

\begin{proof}[\bf Proof of Theorem \ref{thm:opt_rerand}]
    Using Theorem \ref{thm:v_kt} and 
    following the same analysis as in \citet[Proof of Theorem 6]{wang2022rerandomization} but with $p_n^{-1}$ replaced by $T$, 
    we can immediately derive Theorem \ref{thm:opt_rerand}. 
\end{proof}

\section{Proof for the large-sample inference under the best-choice rerandomization}

\subsection{Technical lemmas}

\begin{lemma}\label{lemma:s_uw_vector}
    Let $\{ (u_i, \bs{w}_i^\top)\in \mathbb{R}^{1+K}: i = 1, 2, \ldots, N \}$ be a finite population of $N\ge 2$ units, 
    with 
    $\bs{w}_i = (w_{1i}, w_{2i}, \ldots w_{Ki})^\top$ and 
    finite population averages and covariance 
    $\bar{u} \equiv N^{-1} \sum_{i=1}^N u_i$, 
    $\bar{\bs{w}} = (\bar{w}_1, \ldots, \bar{w}_K)^\top = N^{-1} \sum_{i=1}^N \bs{w}_i$ and 
    $\bs{S}_{u\bs{w}} = (S_{uw_1}, \ldots, S_{uw_K})^\top =  (N-1)^{-1} \sum_{i=1}^N (u_i - \bar{u}) (\bs{w}_i - \bar{\bs{w}})$. 
    Let $(Z_1, \cdots, Z_N)$ denote a sampling indicator vector for a simple random sample of size $m\ge 2$, 
    with corresponding sample averages and covariance
    $\hat{u} = m^{-1} \sum_{i=1}^N Z_i u_i$, 
    $\hat{\bs{w}} = m^{-1} \sum_{i=1}^N Z_i \bs{w}_i$ and 
    $\bs{s}_{u\bs{w}} = (s_{uw_1}, \ldots, s_{uw_K})^\top = (m-1)^{-1} \sum_{i=1}^N Z_i (u_i - \hat{u}) (\bs{w}_i - \hat{\bs{w}})$. 
    Let $f=m/N$, and for $1\le k\le K$, define 
    \begin{align*}
        \Delta_{u} = \hat{u} - \bar{u}, 
        \quad
        \Delta_{w_k} = \hat{w}_k - \bar{w}_k,
        \quad
        \Delta_{uw_k} = 
        \frac{1}{m}\sum_{i=1}^N Z_i (u_i - \bar{u}) (w_{ki} - \bar{w}_k)
        - 
        \frac{N-1}{N} S_{uw_k}, 
    \end{align*}
    and 
    \begin{align*}
        \sigma^2_u = \frac{1}{N} \sum_{i=1}^N(u_i-\bar{u})^2, \ \ 
        \sigma^2_{w_k} = \frac{1}{N} \sum_{i=1}^N(w_{ki}-\bar{w}_k)^2,
        \ \ 
        \sigma^2_{u\times w_k} = \frac{1}{N} \sum_{i=1}^N\left\{ 
        (u_i - \bar{u}) (w_{ki} - \bar{w}_k)- \frac{N-1}{N}S_{uw_k}
        \right\}^2. 
    \end{align*}
    Then 
    \begin{align*}
        \left\| \bs{s}_{u\bs{w}} - \bs{S}_{u\bs{w}}  \right\|^2_2
        \le 
        12 \sum_{k=1}^K \Delta_{u\times w_k}^2 + 12 \Delta_u^2 \sum_{k=1}^K \Delta_{w_k}^2 
        + 
        \frac{12(1-f)^2}{m^2} \sum_{k=1}^K  S_{uw_k}^2, 
    \end{align*}
    and for any $t > 0$, 
    \begin{align*}
        \pr\left( \Delta_u^2 \ge t \right)
        & \le 
        2 
        \exp\left(
        - \frac{70^2}{71^2} \frac{N f^2 t}{\sigma^2_u}
        \right), 
        \qquad \quad
        \pr\left( \sum_{k=1}^K \Delta_{w_k}^2 \ge t \right)
        \le 
        2 K 
        \exp\left(
        - \frac{70^2}{71^2} \frac{N f^2  t}{\sum_{k=1}^K \sigma^2_{w_k}}
        \right), \\
        \pr\left( \sum_{k=1}^K \Delta_{u\times w_k}^2 \ge t \right)
        & 
        \le 
        2 K 
        \exp\left(
        - \frac{70^2}{71^2} \frac{N f^2  t}{\sum_{k=1}^K \sigma^2_{u\times w_k}}
        \right). 
    \end{align*}
\end{lemma}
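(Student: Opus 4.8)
The plan is to split the claim into its deterministic algebraic part (the bound on $\|\bs{s}_{u\bs{w}} - \bs{S}_{u\bs{w}}\|_2^2$) and its probabilistic part (the three tail bounds), and to observe at the outset that each error term is a \emph{centered sample mean of a single finite-population variable} under simple random sampling. Writing $\Delta_u = \hat{u}-\bar{u}$, $\Delta_{w_k}=\hat{w}_k-\bar{w}_k$, and noting that $\Delta_{u\times w_k}=\Delta_{uw_k}=\hat{p}_k-\bar{p}_k$ where $p_{ki}=(u_i-\bar u)(w_{ki}-\bar w_k)$ has population mean $\bar p_k=\frac{N-1}{N}S_{uw_k}$ and population variance $\sigma^2_{u\times w_k}$, a single scalar concentration inequality for $\hat a-\bar a$ will drive all three tail bounds.

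For the deterministic bound I would first re-center the sample covariance at the population means. Expanding $\sum_i Z_i(u_i-\hat u)(w_{ki}-\hat w_k)=\sum_i Z_i(u_i-\bar u)(w_{ki}-\bar w_k)-m\Delta_u\Delta_{w_k}$ and substituting the definition of $\Delta_{u\times w_k}$ yields the exact identity
\[
 s_{uw_k}-S_{uw_k}=\frac{m}{m-1}\Delta_{u\times w_k}-\frac{m}{m-1}\Delta_u\Delta_{w_k}+\frac{1-f}{m-1}S_{uw_k},
\]
where the coefficient of $S_{uw_k}$ comes from simplifying $\frac{m}{m-1}\frac{N-1}{N}-1=\frac{N-m}{N(m-1)}=\frac{1-f}{m-1}$. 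Applying $(a+b+c)^2\le 3(a^2+b^2+c^2)$ together with the elementary bounds $m/(m-1)\le 2$ and $(m-1)^{-2}\le 4m^{-2}$ valid for $m\ge 2$ gives $(s_{uw_k}-S_{uw_k})^2\le 12\Delta_{u\times w_k}^2+12\Delta_u^2\Delta_{w_k}^2+12(1-f)^2 m^{-2}S_{uw_k}^2$; summing over $k$ produces the stated inequality.

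For the tail bounds I would invoke the scalar sub-Gaussian concentration for the centered mean of a finite-population variable under simple random sampling (the source of the constant $70^2/71^2$, established exactly as the analogous concentration step in \citet{wang2022rerandomization}): for a variable $a$ with population variance $\sigma_a^2$ and any $t>0$, $\pr((\hat a-\bar a)^2\ge t)\le 2\exp(-\frac{70^2}{71^2}Nf^2 t/\sigma_a^2)$. Applying this to $a=u$ gives the first bound directly. For the two vector bounds I would allocate the threshold proportionally to the coordinate variances: setting $t_k = t\,\sigma^2_{w_k}/\sum_j\sigma^2_{w_j}$ (and analogously with $\sigma^2_{u\times w_j}$), the inclusion $\{\sum_k\Delta_{w_k}^2\ge t\}\subseteq\bigcup_k\{\Delta_{w_k}^2\ge t_k\}$ holds because $\sum_k t_k=t$; a union bound over $k$ together with the scalar inequality then gives $K$ identical summands $2\exp(-\frac{70^2}{71^2}Nf^2 t/\sum_j\sigma^2_{w_j})$, since $t_k/\sigma^2_{w_k}=t/\sum_j\sigma^2_{w_j}$ does not depend on $k$. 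The same argument applied to $p_{ki}$ yields the third bound.

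The main obstacle is the scalar concentration inequality with the precise constant $70^2/71^2$: because the statement controls the tail using only the population variance $\sigma_a^2$, this requires the sharp without-replacement sub-Gaussian bound (e.g., via the Hoeffding reduction to sampling with replacement and the corresponding moment-generating-function estimate) rather than a crude range-based Hoeffding bound; everything downstream is elementary algebra and a proportional-threshold union bound. A secondary point requiring care is the bookkeeping in the deterministic identity, in particular tracking the $(N-1)/N$ bias factor that converts the population-mean centering into the $\frac{1-f}{m-1}S_{uw_k}$ term.
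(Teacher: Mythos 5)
Your proposal does substantially more than the paper, whose entire proof of this lemma is the single sentence that it ``follows directly from \citet[][Lemma A26]{wang2022rerandomization}.'' Your deterministic part is correct and complete: the identity $s_{uw_k}-S_{uw_k}=\frac{m}{m-1}\Delta_{u\times w_k}-\frac{m}{m-1}\Delta_u\Delta_{w_k}+\frac{1-f}{m-1}S_{uw_k}$ checks out (including the simplification $\frac{m}{m-1}\cdot\frac{N-1}{N}-1=\frac{1-f}{m-1}$), and the constants $12$ come out exactly as you say from $(a+b+c)^2\le 3(a^2+b^2+c^2)$, $m/(m-1)\le 2$ and $(m-1)^{-2}\le 4m^{-2}$ for $m\ge 2$. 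The proportional-threshold union bound is also exactly the right device for the two vector tail bounds: it is what produces the aggregated variances $\sum_k\sigma^2_{w_k}$ and $\sum_k\sigma^2_{u\times w_k}$ in the denominators and the factor $K$ in front. And you correctly reduce everything to a single scalar inequality by recognizing $\Delta_{u\times w_k}$ as the centered sample mean of $p_{ki}=(u_i-\bar u)(w_{ki}-\bar w_k)$, whose population variance is $\sigma^2_{u\times w_k}$ by definition.

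The one place I would push back is your parenthetical route to the scalar inequality $\pr\big((\hat a-\bar a)^2\ge t\big)\le 2\exp\big(-\tfrac{70^2}{71^2}Nf^2t/\sigma_a^2\big)$. Hoeffding's convex-order reduction to sampling with replacement, followed by an MGF estimate for the i.i.d.\ draws, will not deliver this: a single with-replacement draw $a_J-\bar a$ has variance $\sigma_a^2$ but range possibly as large as $2\sqrt{N}\sigma_a$, so a Bennett--Bernstein bound is sub-Gaussian with proxy $\sigma_a^2$ only for deviations $t\lesssim\sigma_a^2/N$, where the target bound is already trivial; in the relevant range it degrades to a linear-in-$\sqrt{t}$ exponent, and the pure range-based version loses a factor of order $m$. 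The $Nf^2=m^2/N$ scaling with a variance-only (rather than range-based) denominator is the signature of a per-population-unit argument: write $m(\hat a-\bar a)=\sum_{i=1}^N(Z_i-f)(a_i-\bar a)$, use the negative association of the sampling indicators (or an equivalent martingale construction) to factor the MGF, and apply Hoeffding's lemma to each term, whose range is $|a_i-\bar a|$, giving total variance proxy $\tfrac14\sum_i(a_i-\bar a)^2=\tfrac14 N\sigma_a^2$ and hence exactly the $Nf^2t/\sigma_a^2$ exponent. Since you ultimately defer this ingredient to \citet{wang2022rerandomization} --- the same outsourcing the paper performs for the whole lemma --- this is not a fatal gap, but if you carried out the route you sketched you would not recover the stated rate.
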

\begin{proof}[Proof of Lemma \ref{lemma:s_uw_vector}]
    Lemma \ref{lemma:s_uw_vector} follows directly from \citet[][Lemma A26]{wang2022rerandomization}. 
\end{proof}

\begin{lemma}\label{lemma:s_uw_re}
    Consider the same setting as in Lemma \ref{lemma:s_uw_vector}. 
    For any integer $T\ge 1$, let $\bs{Z}_{[1]}, \ldots, \bs{Z}_{[T]}$ be $T$ mutually independent vectors of sampling indicators for a simple random of size $m$ from the finite population of $N$ units, 
    and define $\bs{s}_{[t]u\bs{w}}$ analogously as in Lemma \ref{lemma:s_uw_vector} for each sampling indicator vector $\bs{Z}_{[t]}$, for $1\le t\le T$. 
    Define further 
    \begin{align*}
        \xi
        & = 
        \frac{\max\{1, \log K, \log T\}}{Nf^2} \sum_{k=1}^K \sigma^2_{u\times w_k} 
        +
        \frac{ \max\{1, \log T\} \cdot \max\{1, \log K, \log T\} }{N^2f^4} \sigma^2_u \sum_{k=1}^K \sigma^2_{w_k}
        \\
        & \quad \ + 
        \frac{(1-f)^2}{N^2 f^2} \sum_{k=1}^K  S_{uw_k}^2. 
    \end{align*}
    Then 
    for any $t \ge 3 \cdot 71^2/70^2$, 
    \begin{align*}
        \pr
        \big( 
        \max_{1\le t\le T}\left\| \bs{s}_{[t]u\bs{w}} - \bs{S}_{u\bs{w}}  \right\|^2_2 > 36 t^2 \xi
        \big)
        & \le 
        6
        \exp\left(
        - \frac{1}{3} \frac{70^2}{71^2} t
        \right). 
    \end{align*}
\end{lemma}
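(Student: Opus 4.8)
The plan is to combine, for a single simple random sample, the deterministic decomposition and the three sub-Gaussian tail bounds of Lemma~\ref{lemma:s_uw_vector}, and then take a union bound over the $T$ independent samples. Throughout write $\rho = 70^2/71^2$, $L = \max\{1,\log K,\log T\}$ and $L' = \max\{1,\log T\}$, and decompose $\xi = \xi_1+\xi_2+\xi_3$ with $\xi_1 = L(Nf^2)^{-1}\sum_k\sigma^2_{u\times w_k}$, $\xi_2 = LL'(N^2f^4)^{-1}\sigma_u^2\sum_k\sigma^2_{w_k}$ and $\xi_3 = (1-f)^2(N^2f^2)^{-1}\sum_k S_{uw_k}^2$. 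Note that the third term in the deterministic bound is nonrandom and, after substituting $m=Nf$, equals $12\xi_3$, which is at most $12t^2\xi$ since $t\ge 1$.

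First I would treat a single sample and control its two random terms by introducing the good events $\{\sum_k\Delta_{u\times w_k}^2 \le t^2\xi\}$, $\{\Delta_u^2 \le tL'(Nf^2)^{-1}\sigma_u^2\}$ and $\{\sum_k\Delta_{w_k}^2 \le tL(Nf^2)^{-1}\sum_k\sigma^2_{w_k}\}$. On the intersection of these events the first random term $12\sum_k\Delta_{u\times w_k}^2$ is at most $12t^2\xi$, while for the product term I use the inclusion $\{XY>ab\}\subseteq\{X>a\}\cup\{Y>b\}$ with $X=\Delta_u^2$, $Y=\sum_k\Delta_{w_k}^2$ and $ab = t^2\xi_2 \le t^2\xi$, so that $12\Delta_u^2\sum_k\Delta_{w_k}^2 \le 12t^2\xi$ as well. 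Adding the three terms gives $\|\bs{s}_{u\bs{w}}-\bs{S}_{u\bs{w}}\|_2^2 \le 36t^2\xi$ on the good event.

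Next I would bound the failure probabilities using the three tail bounds of Lemma~\ref{lemma:s_uw_vector}, with the thresholds chosen so that each exponent collapses to a multiple of $L$ or $L'$: the threshold $t^2\xi\ge t^2\xi_1$ yields exponent $\rho t^2 L$ for $\sum_k\Delta_{u\times w_k}^2$, the $\Delta_u^2$ threshold yields $\rho tL'$, and the $\sum_k\Delta_{w_k}^2$ threshold yields $\rho tL$. Multiplying by the prefactors $2K$ (resp.\ $2$) and by the union-bound factor $T$, and using $\log K\le L$, $\log T\le L$, $\log T\le L'$ together with $L,L'\ge 1$, each resulting probability is at most $2\exp(-\tfrac13\rho t)$; summing over the three events and the $T$ samples gives the claimed $6\exp(-\tfrac13\rho t)$.

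The main obstacle is the $\sum_k\Delta_{w_k}^2$ term, which is binding: its prefactor $2K$ contributes $\log K$ and the union bound contributes $\log T$, so absorbing both requires $\rho tL - 2L \ge \tfrac13\rho t$, i.e.\ $\rho t - 2 \ge \tfrac13\rho t$ (using $L\ge 1$), which is exactly $t \ge 3/\rho = 3\cdot 71^2/70^2$. This pins down the stated threshold on $t$; the other two events are slacker (the product's $\Delta_u^2$ factor needs only $t\ge \tfrac{3}{2\rho}$, and the quadratic-in-$t$ exponent for the $\Delta_{u\times w}$ term is easier still). The only routine care is verifying that the chosen thresholds give $ab\le t^2\xi$ and that the deterministic third term fits, both immediate from $\xi_i\le\xi$ and $t\ge 1$.
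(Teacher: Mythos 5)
Your proposal is correct and follows essentially the same route as the paper: a union bound over the $T$ independent assignments reduces the claim to a single simple random sample, which is then handled via the deterministic decomposition and the three exponential tail bounds of Lemma \ref{lemma:s_uw_vector} with thresholds chosen so that each exponent absorbs the $2K$ prefactor and the factor $T$, pinning down $t\ge 3\cdot 71^2/70^2$. The paper outsources the single-sample step to an external lemma of \citet{wang2022rerandomization} (with $p^{-1}$ replaced by $T$), whereas you carry out those details explicitly; the calculations check out.
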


\begin{proof}[Proof of Lemma~\ref{lemma:s_uw_re}]
    For any $t > 0$, by the union bound,
    \begin{align*}
        \pr
        \big( 
        \max_{1\le t\le T}\left\| \bs{s}_{[t]u\bs{w}} - \bs{S}_{u\bs{w}}  \right\|^2_2 > 36 t^2 \xi
        \big)
        \le 
        T \cdot \pr
        \big( 
        \left\| \bs{s}_{[1]u\bs{w}} - \bs{S}_{u\bs{w}}  \right\|^2_2 > 36 t^2 \xi
        \big),
    \end{align*}
    where we use the fact that $\bs{s}_{[t]u\bs{w}}$'s follows the same distribution as $\bs{s}_{u\bs{w}}$ defined in Lemma \ref{lemma:s_uw_vector}. 
    Following the same analysis as in \citet[Proof of Lemma~A31]{wang2022rerandomization} but with $p^{-1}$ replaced by $T$, 
    we can directly obtain Lemma~\ref{lemma:s_uw_re}. 
\end{proof}

\begin{lemma}\label{lemma:V_R2_hat_bound}
	Under the best-choice rerandomization, 
	along the sequence of finite populations with increasing sample size $n$, 
	if $\min\{n_1, n_0\} \ge 2$ when $n$ is sufficiently large, then 
	the estimators $\hat{V}_{\tau\tau}$ and $\hat{R}^2$ satisfy that 
	\begin{align*}
		\hat{V}_{\tau\tau} - V_{\tau\tau} - n^{-1} S_{\tau\setminus \bs{X}}^2
		& = 
		O_{\pr}\left(
		\frac{\xi_{11}^{1/2}}{n_1} + 
		\frac{\xi_{00}^{1/2}}{n_0} + 
		\frac{\xi_{1\bs{w}} + \xi_{0\bs{w}}}{n} + \left\| S_{1\bs{w}} - S_{0\bs{w}} \right\|_2\frac{\xi_{1\bs{w}}^{1/2} + \xi_{0\bs{w}}^{1/2}}{n} 
		\right), 
	\end{align*}
	and 
	\begin{align*}
		\hat{V}_{\tau\tau} \hat{R}^2_{n} - V_{\tau\tau} R^2_n
		& = 
		O_{\pr}\left(
		\frac{\xi_{1\bs{w}}}{n_1} 
		+ 
		\frac{\xi_{0\bs{w}}}{n_0}
		+ \left\| S_{1\bs{w}} \right\|_2 \frac{\xi_{1\bs{w}}^{1/2}}{n_1}
		+ \left\| S_{0\bs{w}} \right\|_2 \frac{\xi_{0\bs{w}}^{1/2}}{n_1} 
		+ 
		\left\| S_{1\bs{w}} - S_{0\bs{w}} \right\|_2 \frac{\xi_{1\bs{w}}^{1/2} + \xi_{0\bs{w}}^{1/2}}{n}
		\right), 
	\end{align*}
	where 
	$\bs{w}_i = (w_{1i}, \ldots, w_{K_n i})^\top = \bs{S}_{\bs{X}}^{-1}(\bs{X}_i - \bar{\bs{X}})$ is the standardized covariates, 
	$S_{z\bs{w}} = (S_{zw_1}, \ldots, S_{zw_{K}})$ is the finite population covariance between $Y(z)$ and $\bs{w}$, 
	\begin{align*}
		\xi_{zz}
		& = 
		\frac{\max\{1, \log T \}}{n r_z^2} \sigma^2_{z\times z} 
		+
		\frac{ \max\{1, (\log T)^2 \} }{n^2 r_z^4} \sigma^4_z 
		+ 
		\frac{(1-r_z)^2}{n^2 r_z^2} S_{z}^4, 
		\\
		\xi_{z \bs{w}}
		& = 
		\frac{\max\{1, \log K_n, \log T\}}{n r_z^2} \sum_{k=1}^K \sigma^2_{z\times w_k} 
		+
		\frac{ \max\{1, \log T \} \cdot \max\{1, \log K, \log T \} }{n^2 r_z^4} \sigma^2_u \sum_{k=1}^{K} \sigma^2_{w_k}
		\\
		& \quad \ + 
		\frac{(1-r_z)^2}{n^2 r_z^2} \sum_{k=1}^{K} S_{z w_k}^2,  
	\end{align*}
	and 
	\begin{align*}
		\sigma^2_z & = \frac{1}{n} \sum_{i=1}^n \{Y_i(z) - \bar{Y}(z)\}^2 = \frac{n-1}{n} S_z^2, 
		\qquad
		\sigma_{w_k}^2 
		= \frac{1}{n} \sum_{i=1}^n ( w_{ki} - \bar{w}_k )^2  = \frac{n-1}{n}, 
		\\
		\sigma^2_{z\times z} 
		& = 
		\frac{1}{n} \sum_{i=1}^n\Big[ 
		\{Y_i(z) - \bar{Y}(z) \}^2 - \sigma^2_z
		\Big]^2,
		\ \ 
		\sigma^2_{z\times w_k} 
		= 
		\frac{1}{n} \sum_{i=1}^n\Big[ 
		\{Y_i(z) - \bar{Y}(z)\} (w_{ki} - \bar{w}_k)- \frac{n-1}{n} S_{z w_k}
		\Big]^2. 
	\end{align*}
\end{lemma}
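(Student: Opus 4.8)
The plan is to reduce both estimates to the uniform concentration inequality of Lemma~\ref{lemma:s_uw_re}, exploiting the key structural fact that the implemented assignment $\bs{Z}_{(1)}$ is one of the $T$ i.i.d.\ candidate assignments $\bs{Z}_{[1]}, \ldots, \bs{Z}_{[T]}$. Hence any sample variance or covariance evaluated under $\bs{Z}_{(1)}$ differs from its finite-population analogue by no more than the maximum deviation over all $T$ candidates, and Lemma~\ref{lemma:s_uw_re} bounds precisely that maximum. Writing $\bs{w}_i = \bs{S}_{\bs{x}}^{-1}(\bs{x}_i - \bar{\bs{x}})$ for the standardized covariates, I would invoke Lemma~\ref{lemma:s_uw_re} four times: with $(u,\bs{w}) = (Y(z), Y(z))$ and $K=1$, under treatment (resp.\ control) sampling, to get $|s_z^2 - S_z^2| = O_{\pr}(\xi_{zz}^{1/2})$; and with $u = Y(z)$ and $\bs{w}$ the standardized covariates to get $\|s_{z,\bs{w}} - S_{z\bs{w}}\|_2 = O_{\pr}(\xi_{z\bs{w}}^{1/2})$, for $z=0,1$. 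The passage from the exponential tail $6\exp(-ct)$ to these $O_{\pr}$ statements is immediate by taking $t$ a sufficiently large constant, and the identifications $N=n$, $m=n_z$, $f=r_z$ match the variance symbols in $\xi_{zz},\xi_{z\bs{w}}$ to those in the lemma.

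The first bound then rests on the algebraic identity, obtained by substituting $S^2_{\tau} = S^2_{\tau\mid\bs{x}} + S^2_{\tau\setminus\bs{x}}$ into the definitions in \eqref{eq:V} and \eqref{eq:VR2_hat},
\begin{align*}
\hat{V}_{\tau\tau} - V_{\tau\tau} - n^{-1} S^2_{\tau\setminus\bs{x}}
= n_1^{-1}(s_1^2 - S_1^2) + n_0^{-1}(s_0^2 - S_0^2) - n^{-1}(s_{\tau\mid\bs{x}}^2 - S^2_{\tau\mid\bs{x}}),
\end{align*}
where the non-estimable bias $n^{-1}S^2_{\tau\setminus\bs{x}}$ is exactly the piece absorbed when $S^2_\tau$ is replaced by its projection $S^2_{\tau\mid\bs{x}}$. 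In standardized coordinates $s_{\tau\mid\bs{x}}^2 = \|s_{1,\bs{w}} - s_{0,\bs{w}}\|_2^2$ and $S^2_{\tau\mid\bs{x}} = \|S_{1\bs{w}} - S_{0\bs{w}}\|_2^2$, so the identity $\|a\|_2^2 - \|b\|_2^2 = \|a-b\|_2^2 + 2\langle b, a-b\rangle$ together with Cauchy--Schwarz gives, with $\delta_z = s_{z,\bs{w}} - S_{z\bs{w}}$,
\begin{align*}
|s_{\tau\mid\bs{x}}^2 - S^2_{\tau\mid\bs{x}}|
\le (\|\delta_1\|_2 + \|\delta_0\|_2)^2 + 2\|S_{1\bs{w}} - S_{0\bs{w}}\|_2(\|\delta_1\|_2 + \|\delta_0\|_2).
\end{align*}
Inserting $\|\delta_z\|_2 = O_{\pr}(\xi_{z\bs{w}}^{1/2})$ and the variance bounds, then matching rates term by term (using $(a+b)^2 \le 2a^2 + 2b^2$ and $n^{-1}\le n_z^{-1}$), yields the first claim.

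For the second bound I would use the parallel decomposition $\hat{V}_{\tau\tau}\hat{R}^2 = n_1^{-1}s_{1\mid\bs{x}}^2 + n_0^{-1}s_{0\mid\bs{x}}^2 - n^{-1}s_{\tau\mid\bs{x}}^2$, coming from $s_{z\setminus\bs{x}}^2 = s_z^2 - s_{z\mid\bs{x}}^2$, alongside the population identity $V_{\tau\tau}R^2 = n_1^{-1}S^2_{1\mid\bs{x}} + n_0^{-1}S^2_{0\mid\bs{x}} - n^{-1}S^2_{\tau\mid\bs{x}}$ read off from \eqref{eq:R2}, so that
\begin{align*}
\hat{V}_{\tau\tau}\hat{R}^2 - V_{\tau\tau}R^2 = n_1^{-1}(s_{1\mid\bs{x}}^2 - S^2_{1\mid\bs{x}}) + n_0^{-1}(s_{0\mid\bs{x}}^2 - S^2_{0\mid\bs{x}}) - n^{-1}(s_{\tau\mid\bs{x}}^2 - S^2_{\tau\mid\bs{x}}).
\end{align*}
Since $s_{z\mid\bs{x}}^2 = \|s_{z,\bs{w}}\|_2^2$ and $S^2_{z\mid\bs{x}} = \|S_{z\bs{w}}\|_2^2$, the same norm-difference identity gives $|s_{z\mid\bs{x}}^2 - S^2_{z\mid\bs{x}}| \le \|\delta_z\|_2^2 + 2\|S_{z\bs{w}}\|_2\|\delta_z\|_2 = O_{\pr}(\xi_{z\bs{w}} + \|S_{z\bs{w}}\|_2\xi_{z\bs{w}}^{1/2})$, while the $s_{\tau\mid\bs{x}}^2$ term is handled exactly as before; collecting the rates produces the second claim.

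The genuinely delicate point is the selection step: because $\bs{Z}_{(1)}$ is chosen \emph{after} the covariate imbalances are seen, its group moments are not those of a plain simple random sample, and the whole argument hinges on Lemma~\ref{lemma:s_uw_re} controlling the \emph{maximum} over the $T$ candidates, which dominates the value at the data-dependent index $(1)$. Everything else is bookkeeping: verifying that the crude Cauchy--Schwarz splits introduce no rate worse than stated, and confirming that $\sigma^2_{z\times z}$, $\sigma^2_{z\times w_k}$ in $\xi_{zz},\xi_{z\bs{w}}$ coincide with the corresponding quantities in Lemma~\ref{lemma:s_uw_re} under the substitutions above. The hypothesis $\min\{n_1,n_0\}\ge 2$ is used only to guarantee that the sample variances and covariances are well defined.
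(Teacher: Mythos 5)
Your proposal is correct and follows essentially the same route as the paper: the paper's proof likewise bounds $|s_z^2-S_z^2|$ and $\|s_{z,\bs{w}}-S_{z\bs{w}}\|_2$ by the maximum deviation over the $T$ independent candidate assignments via Lemma~\ref{lemma:s_uw_re}, and then delegates the remaining algebraic decomposition to the proof of Lemma~A32 in \citet{wang2022rerandomization} — which is exactly the norm-difference/Cauchy--Schwarz bookkeeping you spell out explicitly. The only difference is that you make that cited algebra explicit rather than referencing it.
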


\begin{proof}[Proof of Lemma~\ref{lemma:V_R2_hat_bound}]
    Define analogously $s_{[t]z}^2$ and $s_{[t]z\bs{w}}$ for the $t$-th completely randomized treatment assignment under the best-choice rerandomization, for $1\le t\le T$. 
    From Lemma~\ref{lemma:s_uw_re} and by the Markov inequality, we can know that,  under the best-choice rerandomization, 
    \begin{align*}
        | s_{z}^2 - S_z^2 | \le \max_{1\le t\le T} | s_{[t]z}^2 - S_z^2 | = O_{\pr}\left( \xi_{zz}^{1/2}\right), 
        \quad
        \| s_{z\bs{w}} - S_{z\bs{w}} \|_2 \le \max_{1\le t\le T} \| s_{[t]z\bs{w}} - S_{z\bs{w}} \|_2 = O_{\pr}\left( \xi_{z\bs{w}}^{1/2}\right).  
    \end{align*}
    Then 
    following the same analysis as in \citet[Proof of Lemma~A32]{wang2022rerandomization}, 
    we can directly obtain Lemma~\ref{lemma:V_R2_hat_bound}.
\end{proof}

\begin{lemma}\label{lemma:V_R2_hat_bound_simp}
	Under the same setting as Lemma \ref{lemma:V_R2_hat_bound}, 
	if $\max\{1, \log K, \log T \} = O(nr_1^2 r_0^2)$, 
	then 
	\begin{align*}
		& \quad \ \max\left\{ \big| \hat{V}_{\tau\tau} - V_{\tau\tau} - n^{-1} S_{\tau\setminus \bs{X}}^2 \big|, \ \ \big| \hat{V}_{\tau\tau} \hat{R}^2 - V_{\tau\tau} R^2 \big| \right\}
		\\
		& 
		= 
		\max_{z\in \{0,1\}}\max_{1\le i \le n}\{Y_i(z) - \bar{Y}(z)\}^2 \cdot 
		O_{\pr}\left( 
		\max\{K, 1\} \cdot \frac{\sqrt{ \max\{1, \log K, \log T\} }}{n^{3/2} r_1^2r_0^2} \right). 
	\end{align*}
\end{lemma}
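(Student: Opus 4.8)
The plan is to substitute the explicit expressions for $\xi_{zz}$ and $\xi_{z\bs{w}}$ from Lemma~\ref{lemma:V_R2_hat_bound} into the two $O_{\pr}$ bounds stated there and then simplify under the extra assumption, writing $\Lambda = \max\{1,\log K,\log T\}$ and $M_\star = \max_{z\in\{0,1\}}\max_{1\le i\le n}\{Y_i(z)-\bar{Y}(z)\}^2$. First I would record elementary deterministic bounds on the finite-population moments appearing in $\xi_{zz},\xi_{z\bs{w}}$, all controlled by $M_\star$: since each $\{Y_i(z)-\bar{Y}(z)\}^2\le M_\star$, we get $\sigma_z^2\le M_\star$ and $\sigma_{z\times z}^2\le M_\star\sigma_z^2\le M_\star^2$; since the $\bs{w}_i$ are standardized, $\sigma_{w_k}^2=(n-1)/n$ so $\sum_k\sigma_{w_k}^2\le K$ and $\sum_k\sigma_{z\times w_k}^2\le \frac1n\sum_i\{Y_i(z)-\bar{Y}(z)\}^2\|\bs{w}_i-\bar{\bs{w}}\|_2^2\le M_\star K$; and $\|\bs{S}_{z\bs{w}}\|_2^2=\bs{S}_{z\bs{x}}(\bs{S}_{\bs{x}}^2)^{-1}\bs{S}_{\bs{x}z}=S_{z\mid\bs{x}}^2\le S_z^2=O(M_\star)$, so $\|\bs{S}_{z\bs{w}}\|_2=O(\sqrt{M_\star})$ and $\|\bs{S}_{1\bs{w}}-\bs{S}_{0\bs{w}}\|_2=O(\sqrt{M_\star})$.

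Next, inserting these into the definitions, I would exploit that $\Lambda=O(nr_1^2r_0^2)$ forces $\Lambda/(nr_z^2)=O(1)$ (because $r_1^2r_0^2\le r_z^2$) and $\Lambda=O(n)$, which makes the first, linear-in-$\Lambda$, summand dominate in both $\xi_{zz}$ and $\xi_{z\bs{w}}$: the quadratic-in-$\Lambda$ term carries an extra factor $\Lambda/(nr_z^2)=O(1)$, and the $(1-r_z)^2n^{-2}$ term is smaller still. This yields the compact bounds $\xi_{zz}=O\big(\Lambda M_\star^2/(nr_z^2)\big)$ and $\xi_{z\bs{w}}=O\big(\Lambda M_\star K/(nr_z^2)\big)$, hence $\xi_{zz}^{1/2}=O\big(\sqrt{\Lambda}\,M_\star/(\sqrt{n}\,r_z)\big)$ and $\xi_{z\bs{w}}^{1/2}=O\big(\sqrt{\Lambda K}\,\sqrt{M_\star}/(\sqrt{n}\,r_z)\big)$.

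Finally I would feed these into the two bounds of Lemma~\ref{lemma:V_R2_hat_bound}, using $n_z=nr_z$, the inequalities $r_1^{-2}+r_0^{-2}\le(r_1r_0)^{-2}$ and $r_1^{-1}+r_0^{-1}=(r_1r_0)^{-1}$, and the crucial consequence $\sqrt{\Lambda/n}=O(r_1r_0)=O(1)$ of the assumption, then compare each resulting summand against the target rate $M_\star\max\{K,1\}\sqrt{\Lambda}/(n^{3/2}r_1^2r_0^2)$. Each checks out: the $\xi_{zz}^{1/2}/n_z$ terms are smaller by a factor $r_{1-z}^2/K\le1$; the terms $\|\bs{S}_{z\bs{w}}\|_2\,\xi_{z\bs{w}}^{1/2}/n_z$ and $\|\bs{S}_{1\bs{w}}-\bs{S}_{0\bs{w}}\|_2(\xi_{1\bs{w}}^{1/2}+\xi_{0\bs{w}}^{1/2})/n$ are smaller by an $O(1/\sqrt{K})$ factor (times bounded powers of $r_1,r_0$); and the purely $n^{-2}$ terms $\xi_{z\bs{w}}/n$ and $\xi_{z\bs{w}}/n_z$ are reduced to the advertised $n^{-3/2}$ order precisely by the factor $\sqrt{\Lambda/n}=O(r_1r_0)$. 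Taking the maximum over the two displays then gives the claim. The main obstacle is purely the term-by-term bookkeeping in this last step: one must track the exact powers of $r_1,r_0$ and of $K$ in each summand, and verify that $\Lambda=O(nr_1^2r_0^2)$ is invoked in exactly the two roles needed — to suppress the higher-order $\Lambda$ contributions inside $\xi_{zz},\xi_{z\bs{w}}$, and to convert the residual $n^{-2}$ rates into the stated $n^{-3/2}$ rate via $\sqrt{\Lambda/n}=O(r_1r_0)$.
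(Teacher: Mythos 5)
Your proposal is correct and follows essentially the same route as the paper, whose proof of this lemma simply defers to the analogous term-by-term analysis in \citet{wang2022rerandomization} with $b_n$ and $c_n$ redefined; you have just carried out that bookkeeping explicitly. All of your intermediate bounds check out: the moment bounds $\sigma_{z\times z}^2\le M_\star\sigma_z^2$, $\sum_k\sigma_{z\times w_k}^2\le M_\star K$, $\|\bs{S}_{z\bs{w}}\|_2^2=S^2_{z\mid\bs{x}}\le S_z^2$, the use of $\Lambda=O(nr_1^2r_0^2)$ both to make the first summand dominate inside $\xi_{zz}$ and $\xi_{z\bs{w}}$ and to convert the $n^{-2}$ terms to the $n^{-3/2}$ rate via $\sqrt{\Lambda/n}=O(r_1r_0)$, and the final comparison of each summand against the target rate are all valid.
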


\begin{proof}[Proof of Lemma~\ref{lemma:V_R2_hat_bound_simp}]
    This follows from the same analysis as in \citet[Proof of Lemma A33]{wang2022rerandomization} but with $b_n$ and $c_n$ there redefined as $b_n = \max\{1, \log T\}$ and $c_n = \max\{1, \log K, \log T\}$.
\end{proof}

\begin{lemma}\label{lemma:cond_infer}
	Under the same setting as Lemmas \ref{lemma:V_R2_hat_bound} and \ref{lemma:V_R2_hat_bound_simp}, 
	\begin{itemize}
		\item[(i)]
		$\max_{z\in \{0,1\}}\max_{1\le i \le n}\{Y_i(z) - \bar{Y}(z)\}^2/
		(r_0 S^2_{1\setminus \bs{x}} + r_1 S^2_{0\setminus \bs{x}})
		\ge 1/2$; 
		\item[(ii)] 
		if Conditions \ref{cond:iterations} and \ref{cond:infer} hold, 
		then,  $\max\{1, \log K, \log T \} = o(nr_1^2 r_0^2)$. 
	\end{itemize}
\end{lemma}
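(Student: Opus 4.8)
The plan is to treat parts (i) and (ii) separately, since part (i) supplies precisely the ingredient that drives part (ii).

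For part (i), I would bound the denominator from above by the numerator. The first step is the elementary projection inequality $S^2_{z\setminus\bs{x}} = S^2_z - S^2_{z\mid\bs{x}} \le S^2_z$ for $z=0,1$: this holds because $S^2_{z\mid\bs{x}} = \bs{S}_{z\bs{x}}(\bs{S}^2_{\bs{x}})^{-1}\bs{S}_{\bs{x}z}$ is a quadratic form in the positive definite matrix $(\bs{S}^2_{\bs{x}})^{-1}$ and is therefore nonnegative. The second step controls each total variance by the maximal squared deviation: since $\sum_{i=1}^n\{Y_i(z)-\bar{Y}(z)\}^2 \le n\max_{1\le i\le n}\{Y_i(z)-\bar{Y}(z)\}^2 \le nM$, where $M \equiv \max_{z\in\{0,1\}}\max_{1\le i\le n}\{Y_i(z)-\bar{Y}(z)\}^2$, we get $S^2_z \le \tfrac{n}{n-1}M$. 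Combining the two steps with $r_0+r_1=1$ gives $r_0 S^2_{1\setminus\bs{x}} + r_1 S^2_{0\setminus\bs{x}} \le \tfrac{n}{n-1}M$, so the ratio in part (i) is at least $(n-1)/n \ge 1/2$, where the last inequality uses $n\ge 2$ (guaranteed since $\min\{n_1,n_0\}\ge 2$ in the assumed setting forces $n\ge 4$).

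For part (ii), I would substitute the lower bound from part (i) into Condition \ref{cond:infer}. Writing $c_n \equiv \max\{1,\log K,\log T\}$, Condition \ref{cond:infer} asserts that a product of three nonnegative factors tends to zero; since part (i) shows the first factor is bounded below by $1/2$, the product of the remaining two factors must also vanish, i.e. $\tfrac{\max\{K,1\}}{r_1r_0}\sqrt{c_n/n}\to 0$. Squaring yields $\tfrac{\max\{K,1\}^2\,c_n}{n\,r_1^2 r_0^2}\to 0$, and because $\max\{K,1\}^2\ge 1$ we conclude $\tfrac{c_n}{n\,r_1^2 r_0^2} \le \tfrac{\max\{K,1\}^2\,c_n}{n\,r_1^2 r_0^2}\to 0$, which is exactly $\max\{1,\log K,\log T\} = o(n r_1^2 r_0^2)$.

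The argument is largely routine once part (i) is established, so the main conceptual point is the nonnegativity of the projected variance $S^2_{z\mid\bs{x}}$, which justifies the bound $S^2_{z\setminus\bs{x}}\le S^2_z$ and anchors the entire lemma; everything else is a squaring argument. It is worth noting that the conclusion of part (ii) flows purely from Condition \ref{cond:infer} together with part (i); Condition \ref{cond:iterations} is available in the ambient setting of Theorem \ref{thm:inf} but is not itself invoked here.
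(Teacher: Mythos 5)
Your proof is correct and matches the intended argument: the paper itself simply defers to Lemma A34 of \citet{wang2022rerandomization}, and the elementary chain $S^2_{z\setminus\bs{x}}\le S^2_z\le \tfrac{n}{n-1}\max_{z,i}\{Y_i(z)-\bar Y(z)\}^2$ combined with $r_0+r_1=1$, followed by the squaring step for part (ii), is exactly what that citation encodes. Your observation that Condition \ref{cond:iterations} is not actually used in part (ii) is also accurate.
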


\begin{proof}[Proof of Lemma~\ref{lemma:cond_infer}]
    (i) follows directly from \citet[Lemma~A34(ii)]{wang2022rerandomization}, 
    and (ii) follows from the same analysis as in the \citet[Proof of Lemma~A34(iii)]{wang2022rerandomization} with $-\log \tilde{p}_n$ there replaced by $\log T$. 
\end{proof}

\begin{lemma}\label{lemma:non_Gaussian_quantile}
Let $\varepsilon_0 \sim \mathcal{N}(0,1)$, and define $L_{K_n, T_n}$ as in \eqref{eq:L_KT} for all $n$,
where $\{K_n\}$ and $\{T_n\}$ are sequences of positive integers,
and $\varepsilon_0$ is independent of $L_{K_n, T_n}$ for all $n$. 
Let $\{A_n\}$,  $\{B_n\}$, $\{\tilde{A}_n\}$ and $\{\tilde{B}_n\}$  be sequences of nonnegative constants, 
and for each $n$, 
define $\psi_n = A_n^{1/2} \cdot \varepsilon_0 + B_n^{1/2} \cdot L_{K_n, T_n}$ and $\tilde{\psi}_n = \tilde{A}_n^{1/2} \cdot \varepsilon_0 + \tilde{B}_n^{1/2} \cdot L_{K_n, T_n}$. 
For each $n$ and $\alpha \in (0,1)$, let $q_{n}(\alpha)$ and $\tilde{q}_{n}(\alpha)$ be the $\alpha$th quantile of $\psi_n$ and $\tilde{\psi}_n$, respectively. 
If $\max\{ | \tilde{A}_n - A_n|, |\tilde{B}_n - B_n|\} = o(A_n)$, 
then for any $0 < \alpha < \beta < 1$, as $n \rightarrow \infty$, 
$\I \{ \tilde{q}_n(\beta) \le q_n(\alpha) \} \rightarrow 0$ 
and 
$\I \{ q_n(\beta) \le \tilde{q}_n(\alpha) \} \rightarrow 0$. 
\end{lemma}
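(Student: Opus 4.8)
The plan is to strip off the common scale and reduce the statement to a uniform convergence of distribution functions. Since the hypothesis $\max\{|\tilde A_n - A_n|,|\tilde B_n - B_n|\} = o(A_n)$ forces $A_n>0$ for all large $n$, I would divide through by $A_n^{1/2}$ and work with the normalized variables $W_n = \varepsilon_0 + r_n^{1/2} L_{K_n,T_n}$ and $\tilde W_n = a_n^{1/2}\varepsilon_0 + b_n^{1/2} L_{K_n,T_n}$, where $r_n = B_n/A_n$, $a_n = \tilde A_n/A_n \to 1$ and $b_n = \tilde B_n/A_n = r_n + o(1)$. Because $A_n^{1/2}>0$, the quantiles scale as $q_n(\alpha) = A_n^{1/2} Q_{W_n}(\alpha)$ and $\tilde q_n(\alpha) = A_n^{1/2} Q_{\tilde W_n}(\alpha)$, so the two indicators to be controlled become $\I\{Q_{\tilde W_n}(\beta)\le Q_{W_n}(\alpha)\}$ and $\I\{Q_{W_n}(\beta)\le Q_{\tilde W_n}(\alpha)\}$, and the factor $A_n$ disappears.

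Next I would couple $W_n$ and $\tilde W_n$ using the same $\varepsilon_0$ and $L_{K_n,T_n}$, so that $\tilde W_n - W_n = (a_n^{1/2}-1)\varepsilon_0 + (b_n^{1/2}-r_n^{1/2}) L_{K_n,T_n}$. The first term is $o_{\pr}(1)$ since $a_n\to1$. For the second term I would invoke the elementary inequality $|b_n^{1/2}-r_n^{1/2}|\le |b_n-r_n|^{1/2}\to 0$ together with the fact, read off from Lemma \ref{lemma:prob_T} (which shows $|L_{K_n,T_n}|$ is stochastically dominated by $|\varepsilon_0|$), that $L_{K_n,T_n}=O_{\pr}(1)$ uniformly in $n$; their product is then $o_{\pr}(1)$. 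This yields $\tilde W_n - W_n = o_{\pr}(1)$ uniformly across all regimes of $r_n$, including $r_n\to 0$ and $r_n\to\infty$.

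I would then upgrade this to closeness of distribution functions. The retained standard Gaussian component of $W_n$ makes its density $\E[\phi(\cdot - r_n^{1/2}L_{K_n,T_n})]$ bounded by $(2\pi)^{-1/2}$, so $F_{W_n}$ is Lipschitz with a constant independent of $n$. Combining this uniform Lipschitz property with $\tilde W_n - W_n = o_{\pr}(1)$ through the standard sandwich $F_{\tilde W_n}(x)\le F_{W_n}(x+\delta)+\pr(|\tilde W_n - W_n|>\delta)$ and its lower counterpart, then sending $n\to\infty$ followed by $\delta\to0$, gives $\epsilon_n := \sup_{x}|F_{\tilde W_n}(x)-F_{W_n}(x)|\to 0$; alternatively this step follows directly from \citet[Lemma A27]{wang2022rerandomization} applied with base variable $W_n$. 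Finally, for large $n$ both $W_n$ and $\tilde W_n$ carry a nondegenerate Gaussian component (as $a_n>1/2$), so their CDFs are continuous and strictly increasing with $F_{W_n}(Q_{W_n}(\alpha))=\alpha$ and $F_{\tilde W_n}(Q_{\tilde W_n}(\beta))=\beta$; hence $F_{\tilde W_n}(Q_{W_n}(\alpha))\le \alpha+\epsilon_n<\beta$ once $\epsilon_n<\beta-\alpha$, and strict monotonicity of $F_{\tilde W_n}$ forces $Q_{W_n}(\alpha)<Q_{\tilde W_n}(\beta)$, so $\I\{Q_{\tilde W_n}(\beta)\le Q_{W_n}(\alpha)\}$ is eventually $0$. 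The symmetric argument with the roles of $W_n$ and $\tilde W_n$ interchanged disposes of the other indicator.

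The step I expect to be the main obstacle is the distribution-function closeness: one must verify that the coupling difference is $o_{\pr}(1)$ uniformly in the possibly vanishing or diverging ratio $r_n = B_n/A_n$, and simultaneously exploit the preserved unit-variance Gaussian floor to convert this into uniform convergence of the CDFs. The inequality $|b_n^{1/2}-r_n^{1/2}|\le|b_n-r_n|^{1/2}$ and the uniform stochastic domination of $L_{K_n,T_n}$ by $|\varepsilon_0|$ are precisely what make this robust to the scale of $r_n$; once $\epsilon_n\to0$ is in hand, the quantile separation is immediate from continuity and strict monotonicity and requires no further estimates.
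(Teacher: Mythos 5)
Your proof is correct and follows essentially the same route as the paper: the paper's own proof simply notes that $\Var(L_{K_n,T_n})\le 1$ gives $L_{K_n,T_n}=O_{\pr}(1)$ and then defers to the analysis of \citet[Lemma A37]{wang2022rerandomization}, and your argument supplies exactly that analysis in full --- rescaling by $A_n^{1/2}$, coupling via $|b_n^{1/2}-r_n^{1/2}|\le |b_n-r_n|^{1/2}$ together with the uniform tightness of $L_{K_n,T_n}$, converting the $o_{\pr}(1)$ coupling error into uniform CDF convergence through the retained Gaussian component (the step the paper elsewhere handles with \citet[Lemma A27]{wang2022rerandomization}), and concluding by strict monotonicity of the distribution functions. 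The only cosmetic difference is that you obtain $L_{K_n,T_n}=O_{\pr}(1)$ from the stochastic domination in Lemma \ref{lemma:prob_T} rather than from the variance bound in Lemma \ref{lemma:nonincreasing}; both are available in the paper and equivalent for this purpose.
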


\begin{proof}[Proof of Lemma~\ref{lemma:non_Gaussian_quantile}]
    From Lemma \ref{lemma:nonincreasing}, $\Var(L_{K_n, T_n}) \le 1$ for all $n$, 
    and thus $L_{K_n, T_n}=O_{\pr}(1)$. 
    Lemma~\ref{lemma:non_Gaussian_quantile} then follows from the same analysis as in \citet[Lemma A37]{wang2022rerandomization} with $L_{K_n, a_n}$ there replaced by $L_{K_n, T_n}$.
\end{proof}

\subsection{Proof of Theorem~\ref{thm:inf}}

\begin{proof}[\bf Proof of Theorem~\ref{thm:inf}(i)]
    Theorem~\ref{thm:inf}(i) follows from the same analysis as in \citet[Proof of Theorem 7(i)]{wang2022rerandomization}, but with Lemmas A33 and A34 there replaced by Lemmas~\ref{lemma:V_R2_hat_bound_simp} and~\ref{lemma:cond_infer}.
\end{proof}

\begin{proof}[\bf Proof of Theorem~\ref{thm:inf}(ii)]
    Let $\varepsilon_0\sim \mathcal{N}(0,1)$ and $L_{K, T}$ be the constrained Gaussian random variable as in Proposition \ref{prop:L_KT}, 
    and assume that they are mutually independent and also independent of the $T$ completely randomized treatment assignments for the best-choice rerandomization. 
    Define 
    \begin{align*}
        \theta_n & = \sqrt{V_{\tau\tau} (1-R^2)} \cdot \varepsilon_0 + \sqrt{V_{\tau\tau} R^2} \cdot L_{K, T}
        \equiv A_n^{1/2} \cdot \varepsilon_0 + B_n^{1/2} \cdot L_{K, T}, 
        \\ 
        \tilde{\theta}_n & = \sqrt{V_{\tau\tau} (1-R^2)+n^{-1} S^2_{\tau \setminus \bs{x}}} \cdot \varepsilon_0 + \sqrt{V_{\tau\tau} R^2} \cdot L_{K, T}
        \equiv \tilde{A}_n^{1/2} \cdot \varepsilon_0 + \tilde{B}_n^{1/2} \cdot L_{K, T}, 
        \\
        \hat{\theta}_n & = \sqrt{\hat{V}_{\tau\tau} (1-\hat{R}^2)} \cdot \varepsilon_0 + \sqrt{\hat{V}_{\tau\tau} \hat{R}^2} \cdot L_{K, T}
        \equiv \hat{A}_n^{1/2} \cdot \varepsilon_0 + \hat{B}_n^{1/2} \cdot L_{K, T}.
    \end{align*}
    Define further $q_{\alpha}(A, B, K, T)$ as the $\alpha$th quantile of $A^{1/2} \varepsilon_0 + B^{1/2} L_{K,T}$, and let $q_{n,\alpha} = q_{\alpha}(A_n, B_n, K, T)$, 
    $\tilde{q}_{n,\alpha} = q_{\alpha}(\tilde{A}_n, \tilde{B}_n, K, T)$, 
    and 
    $\hat{q}_{n,\alpha} = q_{\alpha}(\hat{A}_n, \hat{B}_n, K, T)$. 
    From Theorem \ref{thm:inf}(i), 
    under the best-choice rerandomization, 
    $\max\{|\hat{A}_n - \tilde{A}_n|, |\hat{B}_n - \tilde{B}_n|\} = o_{\pr}(\tilde{A}_n)$. 
    Applying Lemma~\ref{lemma:non_Gaussian_quantile} and following the same analysis as in \citet[proof of Theorem 7(ii)]{wang2022rerandomization}, 
    we can derive that, under the best-choice rerandomization, 
    for any $0<\alpha < \beta < 1$, 
    $\pr( \hat{q}_{n, \beta} \leq \tilde{q}_{n, \alpha} ) \to 0$ as $n \to \infty$. 

    For any $\alpha \in (0,1)$ and $\eta \in (0, (1 - \alpha) / 2)$, 
    following the same analysis as in \citet[Proof of Theorem 7(ii)]{wang2022rerandomization}, 
    the coverage probability of the confidence interval $\hat{\mathcal{C}}_{\alpha}$ can be bounded by 
    \begin{align*}
        \pr(\tau \in \hat{\mathcal{C}}_{\alpha}) 
        & = 
        \pr(| \hat{\tau}_{(1)} - \tau|  \le \hat{q}_{n, 1-\alpha/2}) 
        \ge 
        \pr\{ | \hat{\tau}_{(1)} - \tau|  \le  \tilde{q}_{n, 1-\alpha/2-\eta}\}
        - 
        \pr\{  \hat{q}_{n, 1-\alpha/2} < \tilde{q}_{n, 1-\alpha/2-\eta}\}.
    \end{align*}
    From Theorems \ref{thm:asymp} and \ref{thm:asym_equiv}, 
    $\pr\{ | \hat{\tau}_{(1)} - \tau|  \le  \tilde{q}_{n, 1-\alpha/2-\eta}\} = \pr\{ | \theta_n |  \le  \tilde{q}_{n, 1-\alpha/2-\eta}\} + o(1)$, and 
    from the discussion before, $\pr\{  \hat{q}_{n, 1-\alpha/2} < \tilde{q}_{n, 1-\alpha/2-\eta}\} = o(1)$. 
    These then imply that 
    \begin{align*}
        \pr(\tau \in \hat{\mathcal{C}}_{\alpha}) 
        & \ge 
        \pr\{ | \theta_n |  \le  \tilde{q}_{n, 1-\alpha/2-\eta}\}  + o(1).
    \end{align*}
    From Lemma \ref{lemma:lktunimodal}, $L_{K,T}$ is continuous, and is also symmetric and unimodal around zero. 
    Applying Lemma \ref{lemma:sum} with $\zeta_0 = B_n^{1/2} L_{K,T}$, $\zeta_1 = A_{n}^{1/2} \varepsilon_0$ and $\zeta_2 = \tilde{A}_{n}^{1/2} \varepsilon_0$, we then have 
    $\pr\{ | \theta_n |  \le  \tilde{q}_{n, 1-\alpha/2-\eta}\}\ge \pr\{ | \tilde{\theta}_n |  \le  \tilde{q}_{n, 1-\alpha/2-\eta}\} = 1 - \alpha - 2\eta$. 
    Consequently, 
    $\pr(\tau \in \hat{\mathcal{C}}_{\alpha}) \ge 1 - \alpha - 2\eta + o(1)$, and thus 
    $\liminf_{n\to \infty} \pr(\tau \in \hat{\mathcal{C}}_{\alpha}) \ge 1 - \alpha - 2\eta$. 
    Because this inequality holds for any $\eta \in (0, (1 - \alpha) / 2)$, 
    we must have 
    $
    \liminf_{n\to \infty}\pr(\tau \in \hat{\mathcal{C}}_{\alpha}) \ge 1 - \alpha. 
    $

    From the above, Theorem \ref{thm:inf}(ii) holds. 
\end{proof}

\begin{proof}[\bf Proof of Theorem \ref{thm:inf}(iii)]
    We adopt the notation 
    Following the same analysis as in \citet[Proof of Theorem 7(ii)]{wang2022rerandomization}, 
    for any $\alpha\in (0,1)$ and $\eta \in (0,\alpha/2)$, 
    \begin{align*}
        \pr(\tau \in \hat{\mathcal{C}}_{\alpha})
        & \le 
        \pr( |\hat{\tau}_{(1)} - \tau | \le \tilde{q}_{n, 1-\alpha/2+\eta} ) 
        + 
        \pr(\hat{q}_{n, 1-\alpha/2}> \tilde{q}_{n, 1-\alpha/2+\eta})\\
        & = \pr( |\theta_n | \le \tilde{q}_{n, 1-\alpha/2+\eta} ) 
        + o(1) + 
        \pr(\hat{q}_{n, 1-\alpha/2}> \tilde{q}_{n, 1-\alpha/2+\eta})\\
        & = \pr( |\theta_n | \le \tilde{q}_{n, 1-\alpha/2+\eta} ) 
        + o(1),
    \end{align*}
    where the second last equality follows from Theorems \ref{thm:asymp} and \ref{thm:asym_equiv}, 
    and 
    the last equality follows from the same logic as in the proof of Theorem \ref{thm:inf}(ii) and Lemma \ref{lemma:non_Gaussian_quantile}.

    Under the condition in Theorem \ref{thm:inf}(iii) and following the same analysis as in \citet[Proof of Theorem 7(ii)]{wang2022rerandomization}, we can derive that $\tilde{\theta}_n - \theta_n = \sqrt{V_{\tau\tau}(1-R^2)} \cdot o_{\pr}(1)$, which, by \citet[Lemma A27]{wang2022rerandomization}, implies that 
    $\sup_{c\in \mathbb{R}} |\pr( \theta_n  \le c )-\pr( \tilde{\theta}_n  \le c )|\to 0$ as $n\to \infty$. 
    Consequently, we have 
    \begin{align*}
        \pr(\tau \in \hat{\mathcal{C}}_{\alpha})
        & \le 
        \pr( |\tilde{\theta}_n | \le \tilde{q}_{n, 1-\alpha/2+\eta} ) 
        + o(1)
        = 
        1 - \alpha + 2 \eta + o(1). 
    \end{align*}
    Because this inequality holds for any $\eta \in (0,\alpha/2)$, 
    we can derive that $\limsup_{n\to \infty} \pr(\tau \in \hat{\mathcal{C}}_{\alpha}) \le  1 - \alpha$. 
    From Theorem \ref{thm:inf}(ii), 
    we then have $\lim_{n\to \infty} \pr(\tau \in \hat{\mathcal{C}}_{\alpha}) =  1 - \alpha$.
    Therefore, Theorem \ref{thm:inf}(iii) holds. 
\end{proof}

\begin{lemma}\label{lemma:non_Gaussian_quantile_optimal}
    Let $\varepsilon_0 \sim \mathcal{N}(0,1)$, and define $L_{K_n, T_n}$ as in \eqref{eq:L_KT} for all $n$,
    where $\{K_n\}$ and $\{T_n\}$ are sequences of positive integers,
    and $\varepsilon_0$ is independent of $L_{K_n, T_n}$ for all $n$. 
    Let $\{A_n\}$,  $\{B_n\}$, $\{\tilde{A}_n\}$ and $\{\tilde{B}_n\}$  be sequences of nonnegative constants, 
    and for each $n$, 
    define $\psi_n = A_n^{1/2} \cdot \varepsilon_0 + B_n^{1/2} \cdot L_{K_n, T_n}$ and $\tilde{\psi}_n = \tilde{A}_n^{1/2} \cdot \varepsilon_0 + \tilde{B}_n^{1/2} \cdot L_{K_n, T_n}$. 
    For each $n$ and $\alpha \in (0,1)$, let $q_{n}(\alpha)$ and $\tilde{q}_{n}(\alpha)$ be the $\alpha$th quantile of $\psi_n$ and $\tilde{\psi}_n$, respectively. 
    If $L_{K_n, T_n}=o_{\pr}(1)$, $\tilde{A}_n - A_n = o(A_n)$ and $|\tilde{B}_n - B_n| = O(A_n)$, 
    then for any $0 < \alpha < \beta < 1$, as $n \rightarrow \infty$, 
    $\I \{ \tilde{q}_n(\beta) \le q_n(\alpha) \} \rightarrow 0$ 
    and 
    $\I \{ q_n(\beta) \le \tilde{q}_n(\alpha) \} \rightarrow 0$. 
\end{lemma}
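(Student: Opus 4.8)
The plan is to establish both indicator statements by showing that the Kolmogorov distance between the laws of $\psi_n$ and $\tilde\psi_n$ tends to zero, and then converting this into an ordering of quantiles. Throughout I would work on a common probability space where $\psi_n$ and $\tilde\psi_n$ are built from the \emph{same} pair $(\varepsilon_0, L_{K_n,T_n})$; this coupling is legitimate since both quantile functions depend only on the respective marginal laws. The first step is to bound the coupled difference $R_n := \tilde\psi_n - \psi_n = (\tilde A_n^{1/2} - A_n^{1/2})\varepsilon_0 + (\tilde B_n^{1/2} - B_n^{1/2}) L_{K_n,T_n}$ and show that $R_n = o_{\pr}(A_n^{1/2})$ (assuming, as in the intended application, that $A_n>0$).

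For the $\varepsilon_0$-term, the elementary inequality $|\sqrt a - \sqrt b| \le \sqrt{|a-b|}$ (valid for $a,b\ge 0$) together with $\tilde A_n - A_n = o(A_n)$ gives $|\tilde A_n^{1/2} - A_n^{1/2}| \le |\tilde A_n - A_n|^{1/2} = o(A_n^{1/2})$, so $(\tilde A_n^{1/2}-A_n^{1/2})\varepsilon_0 = o_{\pr}(A_n^{1/2})$. The $L$-term is the crux and is exactly where the hypothesis $L_{K_n,T_n} = o_{\pr}(1)$ enters. Since here $|\tilde B_n - B_n|$ is only $O(A_n)$ rather than $o(A_n)$, the same inequality yields merely $|\tilde B_n^{1/2} - B_n^{1/2}| \le |\tilde B_n - B_n|^{1/2} = O(A_n^{1/2})$, which is \emph{not} negligible on its own. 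Multiplying this $O(A_n^{1/2})$ factor by $L_{K_n,T_n} = o_{\pr}(1)$, however, gives $(\tilde B_n^{1/2}-B_n^{1/2}) L_{K_n,T_n} = O(A_n^{1/2})\cdot o_{\pr}(1) = o_{\pr}(A_n^{1/2})$. Combining the two terms, $R_n = o_{\pr}(A_n^{1/2})$. This is precisely the step that distinguishes the present lemma from Lemma~\ref{lemma:non_Gaussian_quantile}, where $|\tilde B_n - B_n| = o(A_n)$ lets one use only the universal bound $\Var(L_{K_n,T_n})\le 1$ from Lemma~\ref{lemma:nonincreasing}, i.e.\ $L_{K_n,T_n} = O_{\pr}(1)$.

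Next I would exploit the Gaussian component to anti-concentrate $\psi_n$. Conditioning on $L_{K_n,T_n}$, the law of $\psi_n$ is a mixture of $\mathcal N(B_n^{1/2} L_{K_n,T_n}, A_n)$ distributions, so its density is bounded by $(2\pi A_n)^{-1/2}$ uniformly; equivalently $F_{\psi_n}(c+h) - F_{\psi_n}(c) \le h\,(2\pi A_n)^{-1/2}$ for $h>0$. Feeding $R_n = o_{\pr}(A_n^{1/2})$ into the perturbation bound $F_{\tilde\psi_n}(c) \le F_{\psi_n}(c + \delta A_n^{1/2}) + \pr(R_n < -\delta A_n^{1/2})$ and applying this Lipschitz estimate gives $\sup_c\{F_{\tilde\psi_n}(c) - F_{\psi_n}(c)\} \le \delta/\sqrt{2\pi} + o(1)$ for every fixed $\delta>0$; the symmetric inequality holds as well, so letting $\delta\downarrow 0$ yields $d_n := \sup_{c}|F_{\tilde\psi_n}(c) - F_{\psi_n}(c)| \to 0$. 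This is the same mechanism underlying \citet[Lemma A27]{wang2022rerandomization}, applied here to the non-Gaussian but bounded-density variable $\psi_n$.

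Finally I would convert $d_n \to 0$ into the quantile statement. Both $\psi_n$ and $\tilde\psi_n$ have everywhere-positive densities (being convolutions with a nondegenerate Gaussian), hence continuous and strictly increasing distribution functions, so $F_{\psi_n}(q_n(\alpha)) = \alpha$. For $0<\alpha<\beta<1$ and all $n$ large enough that $d_n < \beta-\alpha$, we get $F_{\tilde\psi_n}(q_n(\alpha)) \le \alpha + d_n < \beta$, which forces $q_n(\alpha) < \tilde q_n(\beta)$ and hence $\I\{\tilde q_n(\beta)\le q_n(\alpha)\} = 0$; the symmetric computation with the roles of $\psi_n$ and $\tilde\psi_n$ interchanged gives $\I\{q_n(\beta) \le \tilde q_n(\alpha)\} = 0$. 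I expect the main obstacle to be the control of the $L$-term in the coupling: under the relaxed condition $|\tilde B_n - B_n| = O(A_n)$ the factor $\tilde B_n^{1/2}-B_n^{1/2}$ is genuinely of order $A_n^{1/2}$, so the argument must rely on the negligibility of $L_{K_n,T_n}$ itself (which holds by the uniform-integrability reduction $L_{K_n,T_n}=o_{\pr}(1) \iff v_{K_n,T_n}=o(1)$ recorded earlier, together with the regime in which the lemma is invoked), rather than on a separation of the scaling coefficients as in Lemma~\ref{lemma:non_Gaussian_quantile}.
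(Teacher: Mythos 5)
Your proof is correct and takes essentially the same route as the paper, which simply defers to the cited Lemma A37/A27 of \citet{wang2022rerandomization}: couple the two variables, bound $\tilde\psi_n-\psi_n$ by $o_{\pr}(A_n^{1/2})$ --- using $L_{K_n,T_n}=o_{\pr}(1)$ to absorb the merely $O(A_n^{1/2})$ coefficient on the $L$-term, which is exactly the point distinguishing this lemma from Lemma \ref{lemma:non_Gaussian_quantile} --- and then exploit the Gaussian component's bounded density to pass to Kolmogorov convergence and the quantile ordering. The only caveat is the implicit assumption $A_n>0$, which you correctly flag and which the paper's statement also leaves unstated.
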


\begin{proof}[Proof of Lemma \ref{lemma:non_Gaussian_quantile_optimal}]
    Lemma \ref{lemma:non_Gaussian_quantile_optimal} follows by the same logic as \citet[Lemma A37]{wang2022rerandomization}.
\end{proof}

\begin{proof}[\bf Proof Theorem~\ref{thm:inf_gaussian}]
    Adopting the notation from the proof of Theorem \ref{thm:inf}, define further 
    \begin{align*}
        \check{\theta}_n = \sqrt{\hat{V}_{\tau\tau}(1-\hat{R}^2)} \cdot \varepsilon_0 + 0 \cdot L_{K,T} 
        \equiv 
        \check{A}_n^{1/2} \cdot \varepsilon_0 + \check{B}_n^{1/2} \cdot L_{K,T},
    \end{align*}
    and $\check{q}_{n, \alpha} = q_{\alpha}(\check{A}_n, \check{B}_n, K, T)$. 
    We can then prove Theorem~\ref{thm:inf_gaussian} by the same logic as Theorem \ref{thm:inf}, by replacing $\hat{q}_{n, \alpha}$ with  $\check{q}_{n, \alpha}$ and applying Lemma \ref{lemma:non_Gaussian_quantile_optimal}. 
    We omit the detailed proof for conciseness. 
\end{proof}

{\rev 
\section{Proof for rerandomization with regression adjustment}

Throughout this section, we define $\hat{\tau}_{(1)}(\tilde{\bs{\beta}}_1, \tilde{\bs{\beta}}_0)$ in the same way as $\hat{\tau}_{(1)}(\hat{\bs{\beta}}_1, \hat{\bs{\beta}}_0)$ in~\eqref{eq:taureg}, but with $\hat{\bs{\beta}}_z$ replaced by $\tilde{\bs{\beta}}_z$. Additionally, we let $\hat{\bs{\tau}}_{[t]\bs{w}}$ denote the difference-in-means of the $\bs{w}_i$'s under treatment assignment $\bs{Z}_{[t]}$, and define $\hat{\bs{\tau}}_{(1)\bs{w}}$ analogously. Armed with $\hat{\bs{\tau}}_{(1)\bs{w}}$, we can rewrite~\eqref{eq:taureg} as
\[
\hat{\tau}_{(1)}(\hat{\bs{\beta}}_1, \hat{\bs{\beta}}_0) = \hat{\tau}_{(1)} - (r_0 \hat{\bs\beta}_1 + r_1 \hat{\bs\beta}_0)^\top \hat{\bs{\tau}}_{(1)\bs{w}}.
\]
We now invoke the following lemmas for the proof of Theorem~\ref{thm:regrem}.

\begin{lemma}\label{lem:regrem}
	Under the best choice rerandomization with Mahalanobis distance, 
    if Condition \ref{cond:regrem_Delta} holds, then
    as $n \to \infty$,
	\begin{align*}%
		\sup_{c\in \mathbb{R}} & \bigg| \Pr \big\{
		V_{\tau\tau}^{-1/2} (1-\rho^2)^{-1/2}
		\{ \hat{\tau}_{(1)}(\tilde{\bs{\beta}}_1, \tilde{\bs{\beta}}_0) - \tau\} \le c \big\} \\
		&  - \Pr\left( \sqrt{1-\tilde{R}^2}\ \varepsilon_0  + \sqrt{\tilde{R}^2} \ L_{K, T} \le c \right)
		\bigg| 
		\rightarrow 0.
	\end{align*}
\end{lemma}

\begin{proof}[Proof of Lemma \ref{lem:regrem}]
This follows from exactly the same proof as~\citet[Lemma~A42]{wang2022rerandomization}, except that we replace \citet[Theorem~3]{wang2022rerandomization} by Theorems \ref{thm:asymp} and \ref{thm:asym_equiv}.
\end{proof}

\begin{lemma}\label{lemma:beta_tauw}
	Under the best-choice rerandomization, 
	if $\min\{n_1,n_0\}\ge 2$ when $n$ is sufficiently large, and 
	$\max\{1, \log J, \log T \} = O(n r_1^2 r_0^2)$, 
	then
	\begin{align*}
	& \quad \ \big\{ r_0 (\hat{\bs{\beta}}_1 - \tilde{\bs{\beta}}_1) + r_1 (\hat{\bs{\beta}}_0 - \tilde{\bs{\beta}}_0)\big\}^\top \hat{\bs{\tau}}_{(1)\bs{w}} \\
	& = 
    O_{\Pr}\left( \max_{z \in \{0,1\}} \max_{1 \leq i \leq n} |Y_i(z) - \bar{Y}(z)| \cdot J \frac{\max\{1, \log J, \log T\}}{nr_1^2 r_0^2} \right). 
	\end{align*}
\end{lemma}

\begin{proof}[Proof of Lemma \ref{lemma:beta_tauw}]
Following the same logic as the proof of~\citet[Lemma~A44]{wang2022rerandomization}, it remains to prove that for $z = 0, 1$,
\begin{align*}
        \| \bs{s}_{z\bs{w}} - \bs{S}_{z\bs{w}} \|_2^2 
        & = 
        \max_{z \in \{0,1\}} \max_{1 \leq i \leq n} \{Y_i(z) - \bar{Y}(z)\}^2 \cdot J \frac{\max\{1, \log J, \log T\}}{nr_z^2} \cdot O_{\Pr}(1)
\end{align*}
and that
\begin{align*}
        \| \bar{\bs{w}}_{z} - \bar{\bs{w}} \|_2^2 
        & = 
        J \frac{\max\{1, \log J, \log T\}}{n r_z^2} \cdot O_{\Pr}(1),
    \end{align*}
where, without the loss of generality, the $\bs{w}_i$'s are assumed to have an identity finite population covariance matrix, as implied by the proof of~\citet[Lemma~A44]{wang2022rerandomization}.
For the first result, from Lemma~\ref{lemma:s_uw_re} and by the same logic as the proof of \citet[Lemma~A33]{wang2022rerandomization}, we have
\begin{align*}
 \| \bs{s}_{z\bs{w}} - \bs{S}_{z\bs{w}} \|_2^2 & \le \max_{1 \le t \le T} \| \bs{s}_{[t]z\bs{w}} - \bs{S}_{z\bs{w}} \|_2^2 \\
 & = \max_{z \in \{0,1\}} \max_{1 \leq i \leq n} \{Y_i(z) - \bar{Y}(z)\}^2 \cdot J \frac{\max\{1, \log J, \log T\}}{n r_z^2} \cdot O_{\Pr}(1).
\end{align*}
For the second result, noting again that 
\[
 \| \bar{\bs{w}}_{z} - \bar{\bs{w}} \|_2^2 \le \max_{1 \le t \le T} \| \bar{\bs{w}}_{[t]z} - \bar{\bs{w}} \|_2^2,
\]
and the fact that for any $c > 0$, 
\[
\pr\left(\max_{1 \le t \le T} \| \bar{\bs{w}}_{[t]z} - \bar{\bs{w}} \|_2^2 > c\right) \le T \cdot \pr\left(\| \bar{\bs{w}}_{[1]z} - \bar{\bs{w}} \|_2^2 > c\right),
\]
the desired result then follows from the same analysis as in \citet[Lemma~A43]{wang2022rerandomization}, but with $p^{-1}$ replaced by $T$ (note that the same trick has also been used in the proof of Lemma~\ref{lemma:s_uw_re}).
\end{proof}

\begin{lemma}\label{lemma:beta_tauw_cond}
	Under the best-choice rerandomization, if Conditions \ref{cond:regrem_Delta} and \ref{cond:regrem} hold, then 
    \[
    \max\{1, \log K, \log T \} = o(nr_1^2 r_0^2).
    \]
\end{lemma}

\begin{proof}
    This follows from exactly the same proof as \citet[Lemma~A45]{wang2022rerandomization}, but with $\tilde{p}_n^{-1}$ replaced by $T$.
\end{proof}

\begin{proof}[\bf Proof of Theorem~\ref{thm:regrem}(i)]
    Similar to 
    the proof of \citet[Theorem~A1]{wang2022rerandomization}, we define 
\begin{align*}
    \tilde{\psi}_n & = V_{\tau\tau}^{-1/2} (1-\rho^2)^{-1/2}
	\{ \hat{\tau}_{(1)}(\tilde{\bs{\beta}}_1, \tilde{\bs{\beta}}_0) - \tau\}, 
	\qquad 
	\hat{\psi}_n = V_{\tau\tau}^{-1/2} (1-\rho^2)^{-1/2}
	\{ \hat{\tau}_{(1)}(\hat{\bs{\beta}}_1, \hat{\bs{\beta}}_0) - \tau\}, \\
	\psi_n & = \sqrt{1-\tilde{R}^2}\ \varepsilon_0  + \sqrt{\tilde{R}^2} \ L_{K, T}.
\end{align*}
Following the same logic as the proof of~\citet[Theorem~A1(i)]{wang2022rerandomization}, but with \citet[Lemmas~A42, A44-A45]{wang2022rerandomization} replaced by Lemmas~\ref{lem:regrem}--\ref{lemma:beta_tauw_cond}, and \citet[Conditions~A1 and~A2]{wang2022rerandomization} replaced by Conditions~\ref{cond:regrem_Delta} and~\ref{cond:regrem}, it remains to prove that for any $\eta > 0$, with $\delta_n \equiv  \sqrt{1 - \tilde{R}^2} \ \eta$,
\begin{align*} 
	\limsup_{n\rightarrow \infty}
	\sup_{b\in \mathbb{R}}\Pr(b < \psi_n \le b+\delta_n) \le \eta / \sqrt{2\pi}.
\end{align*}
For any $b$, by the same analysis as in \citet[Theorem~A1(i)]{wang2022rerandomization}, but with $L_{K, a}$ replaced by $L_{K,T}$, we have 
\begin{align*}
    \Pr(b < \psi_n \le b+\delta_n) \le \eta / \sqrt{2\pi},
\end{align*}
thereby proving the desired result.
\end{proof}

\begin{proof}[\bf Proof of Theorem~\ref{thm:regrem}(ii)] In light of Theorem~\ref{thm:opt_bcr}, the desired result follows by the same logic as the proof of~\citet[Theorem~A1(ii)]{wang2022rerandomization}.
\end{proof}}

\end{document}